\newif\ifllncs
\spnewtheorem{myclaim}[theorem]{Claim}{\bfseries}{\itshape}
\Crefname{myclaim}{Claim}{Claims}
\renewenvironment{proof}[1][Proof]
{\par\noindent\textit{#1. }}
{\hfill$\square$\par}
\newtheorem{theorem}{Theorem}
\newtheorem{definition}[theorem]{Definition}
\newtheorem{lemma}[theorem]{Lemma}
\newtheorem{myclaim}[theorem]{Claim}
\newtheorem{corollary}[theorem]{Corollary}
\Crefname{fact}{Fact}{Facts}
\newcommand{\ie}{i.e.,\ }
\newcommand{\eg}{e.g.,\ }
\newcommand{\N}{\mathbb{N}}
\renewcommand{\bra}[1]{\langle#1\rvert}
\renewcommand{\braket}[2]{\langle #1 \mid #2 \rangle}
\renewcommand{\ket}[1]{\lvert#1\rangle}
\newcommand{\set}[1]{\{ #1 \}}
\newcommand{\bit}{\{0,1\}}
\newcommand{\cC}{{\mathcal C}}
\newcommand{\cP}{{\mathcal P}}
\newcommand{\cR}{{\mathcal R}}
\newcommand{\bfa}{\mathbf{a}}
\newcommand{\bfb}{\mathbf{b}}
\newcommand{\eps}{\varepsilon}
\newcommand{\secp}{{\lambda}}
\newcommand{\poly}{\mathsf{poly}}
\newcommand{\negl}{\mathsf{negl}}
\newcommand{\Image}{\operatorname{Im}}
\newcommand{\TD}{\mathsf{TD}}
\newcommand{\Haar}{\mathcal{H}}
\renewcommand{\ketbra}[2]{\ket{#1}\bra{#2}}
\renewcommand{\braket}[2]{\langle #1 | #2 \rangle}
\newcommand{\good}{\mathsf{Good}}
\newcommand{\wt}{\widetilde}
\renewcommand{\bra}[1]{\langle#1\rvert}
\renewcommand{\braket}[2]{\langle #1 | #2 \rangle}
\renewcommand{\ket}[1]{\lvert#1\rangle}
\renewcommand{\ketbra}[2]{\ket{#1}\!\bra{#2}}
\newcommand{\proj}[1]{\ket{#1}\!\bra{#1}}
\newcommand{\reg}[1]{\mathsf{#1}}
\newcommand{\secparam}{\lambda}
\newcommand{\id}{\mathrm{id}}
\newcommand{\random}{\overset{{\scriptscriptstyle\$}}{\leftarrow}}
\newcommand{\hybrid}{\mathsf{Hybrid}}
\newcommand{\haarstates}{\Haar}
\newcommand{\haarunitaries}{\mu}
\newcommand{\expect}{\mathbb{E}}
\newcommand{\Adversary}{{\cal A}}
\newcommand{\setsofsets}{\mathfrak{S}}
\newcommand{\setsofrels}{\mathfrak{R}}
\newcommand{\rel}{\mathfrak{R}}
\newcommand{\inj}{\mathsf{inj}}
\newcommand{\cfree}{\mathsf{cf}}
\newcommand{\cfreeset}{\mathsf{CF}}
\newcommand{\pcfpr}[2]{\mathsf{pcf}_{{#1},{#2}}\mathsf{PR}}
\newcommand{\pr}{\mathsf{PR}}
\newcommand{\Vpart}{{V^{\mathsf{part}}}}
\newcommand{\Vpair}{{V^{\mathsf{pair}}}}
\newcommand{\Vfunc}[1]{{V^{\mathsf{func},{#1}}}}
\newcommand{\dist}{\mathrm{dist}}
\newcommand{\CorX}{\mathrm{CorX}}
\newcommand{\oracle}{\mathcal{O}}
\title{Pseudorandomness~in~the~(Inverseless)~Haar~Random~Oracle~Model}
    \author{Prabhanjan Ananth\inst{1} \and John Bostanci\inst{2} \and Aditya Gulati\inst{1} \and Yao-Ting Lin\inst{1}}
    \institute{University of California, Santa Barbara \and Columbia University}
    \author{Prabhanjan Ananth\thanks{\texttt{prabhanjan@cs.ucsb.edu}}\\ \small{UCSB} \and John Bostanci\thanks{\texttt{johnb@cs.columbia.edu}}\\ \small{Columbia} \and Aditya Gulati\thanks{\texttt{adityagulati@ucsb.edu}}\\ \small{UCSB} \and Yao-Ting Lin\thanks{\texttt{yao-ting\_lin@ucsb.edu}}\\ \small{UCSB}}
    \date{}
\begin{document}

\maketitle

\begin{abstract}
\noindent We study the (in)feasibility of quantum pseudorandom notions in a quantum analog of the random oracle model, where all the parties, including the adversary, have oracle access to the same Haar random unitary. In this model, we show the following: 
\begin{itemize}
    \item (Unbounded-query secure) pseudorandom unitaries (PRU) exist. Moreover, the PRU construction makes two calls to the Haar oracle. 
    \item We consider constructions of PRUs making a single call to the Haar oracle. In this setting, we show that unbounded-query security is impossible to achieve. We complement this result by showing that bounded-query secure PRUs do exist with a single query to the Haar oracle. 
    \item We show that multi-copy pseudorandom state generators and function-like state generators (with classical query access), making a single call to the Haar oracle, exist.  
\end{itemize}
\noindent Our results have two consequences: (a) when the Haar random unitary is instantiated suitably, our results present viable approaches for building quantum pseudorandom objects without relying upon one-way functions and, (b) for the first time, we show that the key length in pseudorandom unitaries can be generically shrunk (relative to the output length). Our results are also some of the first usecases of the new ``path recording'' formalism for Haar random unitaries, introduced in the recent breakthrough work of Ma and Huang.  
\end{abstract}

\ifllncs
\else
\newpage 
    \tableofcontents
\newpage 
\fi

\section{Introduction}
Pseudorandomness is a powerful concept that is integral to not only cryptography but to the broader area of theoretical computer science. In  the recent years, there have been an exciting line of works on designing pseudorandom primitives in the quantum world. Quantum pseudorandom primitives already have had a major impact with applications in areas including quantum gravity theory~\cite{BFV19,ABFGVZZ24}, quantum machine learning~\cite{huang2022quantum}, quantum complexity theory~\cite{Kretschmer21,chia2024quantum} and more importantly, in quantum cryptography~\cite{AQY21,MY21}. From a cryptographic standpoint, thanks to Kretschmer's result~\cite{Kretschmer21} (see also~\cite{KQST23}), there is some evidence to believe that many of the recently introduced quantum pseudorandom primitives could be a weaker assumption than one-way functions. This has led to a plethora of new results that suggest that commitments~\cite{AQY21,MY21,AGQY22,Yan22,BCQ23,HMY23,BEMPQY23,ananth2023pseudorandom,KT24,BJ24}, encryption schemes~\cite{AQY21,HMY23}, digital signatures~\cite{MY21,BBOSS24} could be based on assumptions plausibly weaker than one-way functions. 
\par One major criticism on this line of work is the fact that all the quantum pseudorandom primitives proposed so far~\cite{JLS18,BS19,BrakerskiS20,AQY21,AGQY22,BBSS23,LQSYZ23,ABFGVZZ24,AGKL,MPSY24,brakerski2024real} rely upon the existence of one-way functions. In order for us to gain more confidence that the quantum pseudorandom primitives are weaker than one-way functions, it is imperative we need to look for candidate constructions that do not rely upon the existence of one-way functions. 

\paragraph{On Pseudorandomness from Random Quantum Circuits.} Indeed,~\cite{AQY21} suggest using local random circuits, that are quantum circuits with local Haar random gates, to design pseudorandom state generators (PRSGs). A pseudorandom state generator~\cite{JLS18}, one of the first quantum pseudorandom primitives, is an efficient quantum circuit $G$ that takes as input a key $k \in \{0,1\}^{\secparam}$ and produces an $n$-qubit quantum state such that the output distribution of $G(k)^{\otimes t}$, where $t$ is a polynomial in $\secparam$, is computationally indistinguishable from $t$ copies of an $n$-qubit Haar random state.
\par It is natural to consider random quantum circuits to build pseudorandom state generators. In fact,~\cite{AQY21} was not the only one to suggest using random quantum circuits to instantiate PRSGs. Couple of other works~\cite{BCQ23,KT24b} also suggested using random circuits in the design of quantum cryptographic primitives. Random quantum circuits are extensively studied, notably in quantum supremacy experiments~\cite{arute2019quantum} and in the unitary design constructions~\cite{BHH16}. In some ways, random quantum circuits share similar properties with Haar random unitaries. The seminal work of~\cite{BHH16} (see also~\cite{haferkamp2022random}) show that random circuits with polynomial (in $t$) depth are unitary $t$-designs; in other words, random circuits with sufficient depth agree with Haar random unitaries upto the $t^{th}$ moment. Recently,~\cite{schuster2024random} showed that local random quantum circuits, where the local gates act on sufficiently many qubits, are close to Haar random unitaries\footnote{Concretely, in order to have the diamond norm between the random circuits and Haar random unitaries to be negligible in $\secparam$, the local gates need to act upon $\omega(\log(\secparam))$ qubits.}, and~\cite{bostanci2024efficient} showed that states and unitaries generated by local random diagonal circuits, where local diagonal gates act alternate with Hadamard gates, also have similar properties to Haar random states and unitaries. 
\par The PRSG candidate posited by~\cite{AQY21} roughly states that the output of polynomial-sized random quantum circuits is pseudorandom. Interestingly, this candidate does not seem to rely upon one-way functions at all. Unfortunately, they do not provide any evidence on the security of their candidate. Concretely identifying cryptographic assumptions underlying this candidate is quite challenging. 
\par This is reminiscent of many cryptographic constructions that use real-world hash functions, which work well in practice although formally proving security of these constructions has remained elusive. To gain more confidence in such constructions, we have often resorted to proving security in idealized models, such as the random oracle model~\cite{BR93}. Although proving the security of constructions in the random oracle model does not outright
guarantee the security of their implementations in the real world (indeed, there are counterexamples~\cite{CG24}), so far, random oracles have been proven to be a useful heuristic and widely adopted in theoretical and practical cryptography~\cite{Green20}. Studying similar idealized models in the quantum world could prove to be impactful in quantum cryptography. 

\paragraph{Our Work: Pseudorandomness in the Quantum Haar Random Oracle Model.} We consider the quantum Haar random oracle model (QHROM) introduced by Chen and Movassagh~\cite{CM24} as a quantum analog of the random oracle model. In this model, all the parties have access to a Haar random unitary $U$ and its inverse. This model is especially useful to consider for cryptographic applications that use random circuits as a building block. Since it is challenging to base certain properties of random quantum circuits on cryptographic assumptions, we can instead consider an idealized model, where the adversary has access to a Haar random oracle. Arguing security in this model would then provide insight into the security of the construction where the Haar random oracle is instantiated with random quantum circuits. Beyond random quantum circuits, other phenomena could also be modeled in this framework. As an example, Bouland, Fefferman and Vazirani~\cite{BFV19} presented a candidate construction of pseudorandom state generators in the conformal field theory. Towards proving security of this candidate, they modeled the time evolution operators as a Haar random unitary and analyzed its behavior.~\cite{CM24} also proposed a construction of succinct quantum commitments in the QHROM. Unfortunately, they were unable to prove its security and they attribute the difficulty of proving security to the lack of tools for analyzing QHROM. 
\par Indeed, proving security in the QHROM is much more challenging than its classical counterpart which could also partially explain the reason why it took so long to establish the feasibility of pseudorandom unitaries, which are efficiently computable unitaries that are computationally indistinguishable from Haar random unitaries.  
\par Towards making progress in this direction, we consider a relaxation of the QHROM, that we refer to as the {\em inverseless} QHROM (${\sf iQHROM}$).  In this relaxation, all the parties have access to a Haar random unitary $U$ but not its inverse. The focus of our work is to study quantum pseudorandom primitives in the ${\sf iQHROM}$. As we will see later, proving security in the ${\sf iQHROM}$ is already quite challenging and it involves developing new techniques. Another reason to study the ${\sf iQHROM}$ is that showing feasibility results in the ${\sf iQHROM}$ would serve as a stepping stone towards investigating the feasibility in the QHROM (with inverses).

\subsection{Our Contributions}
We initiate a research direction on understanding the feasibility of quantum pseudorandomness using Haar random oracles. We focus on two pseudorandom primitives: namely pseudorandom state generators and pseudorandom unitaries, both first defined in~\cite{JLS18}. We briefly discuss their definitions in the {\sf iQHROM} before stating our results: 
\begin{itemize}
    \item Pseudorandom state generators (PRSG) in the {\sf iQHROM}: a PRSG is an efficient quantum circuit $G$, with oracle access to a Haar random unitary $U$, that takes as input a key $k \in \{0,1\}^{\secparam}$\footnote{$\secp$ is the security parameter.} and produces an $n$-qubit quantum state. In terms of security, we require that any query-bounded adversary\footnote{Typically, PRSGs guarantee security only against quantum polynomial-time adversaries. In this work, similar to the setting of state designs, we allow the adversary to be computationally unbounded. However, we restrict the number of queries made by $\Adversary$ to the Haar random oracle to be any polynomial in $\secp$.} $\Adversary$, with oracle access to $U$, should not be able to distinguish $G(k)^{\otimes t}$ from $\ket{\psi}^{\otimes t}$, where $\ket{\psi}$ is an $n$-qubit Haar random state. If $t$ is an arbitrary polynomial then we call this {\em multi-copy} PRSGs and if $t$ is fixed ahead of time (similar to state designs), we call this {\em bounded-copy} PRSGs.

    \item Pseudorandom function-like state generators (PRFS) in the $\mathsf{iQHROM}$: a PRFS is a keyed polynomial-sized circuit $G^{U}(k, \cdot)$ that produces $2^{m(\secp)}$ many $n$-qubit states $\ket{\psi_x} = G^{U}(k, w)$.  For PRFS security, we require that any $t$ query-bounded adversary $\Adversary^U$ that can adaptively request copies of $\ket{\psi_{w}}$ can not distinguish between the outputs of the PRFS generator and a family of $2^{m(\secp)}$ many i.i.d states sampled from the Haar measure. If $t$ is an arbitrary polynomial then we call this {\em multi-copy} PRFSs and if $t$ is fixed ahead of time (similar to state designs), we call this {\em bounded-copy} PRFSs.
    
    \item Pseudorandom unitaries (PRU) in the {\sf iQHROM}: a PRU is a keyed polynomial-sized quantum circuit $G_k^U$ that is functionally equivalent to an $m$-qubit unitary. In terms of security, we require that any query-bounded $\Adversary$, with oracle access to ${\cal O}$ and $U$, should not be able to distinguish whether ${\cal O}=G_k^U$ or whether ${\cal O}=V$, where $V$ is a freshly sampled $m$-qubit Haar random unitary. If the number of adversarial calls to $G_k^U$ is an arbitrary polynomial then we call $G_k^U$ an {\em unbounded-query} secure PRU and if the number of calls is fixed ahead of time, we call it a {\em bounded-query} secure PRUs.

\end{itemize}
It should be emphasized at this point that in both the definitions, the adversary does have oracle access to $U$.\footnote{We assume that the number of queries to $U$ is an arbitrary polynomial.} This is what makes the design of both these primitives challenging. This is akin to the random oracle model, where the adversary also has access to the random function. 

\paragraph{Unbounded-query secure PRUs.} Our main result is showing that pseudorandom unitaries exist in the ${\sf iQHROM}$. 

\begin{theorem}[Informal]
\label{thm:intro:prus:iqhrom}
PRUs exist in the inverseless quantum Haar random oracle model.     
\end{theorem}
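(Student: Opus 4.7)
The plan is to construct a PRU of the form $G_k^U = V_3(k) \cdot U \cdot V_2(k) \cdot U \cdot V_1(k)$, making exactly two calls to the Haar oracle $U$ interleaved with efficient key-dependent unitaries. A natural candidate, inspired by $\pfc$-style constructions from the recent PRU literature, is to take $V_1(k) = P_k$ a keyed pseudorandom permutation on basis states, $V_2(k) = F_k$ a keyed binary phase oracle, and $V_3(k)$ either the identity or a keyed Clifford. The intuition is that $U$ supplies the ``global'' Haar randomness, while the $V_i(k)$'s hide the specific input/output locations at which $G_k^U$ queries $U$ from an adversary that itself has direct oracle access to $U$.

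First I would set up a sequence of hybrids. In $\Hyb_0$ the adversary interacts with $(G_k^U, U)$ for a uniform key $k$. In $\Hyb_1$, I would replace the shared Haar oracle by the \emph{path-recording oracle} of Ma--Huang: both the construction's two internal calls per query and the adversary's polynomially many direct calls are answered by the same path-recording register. By the Ma--Huang main theorem, this transition is $O(\poly(\secp)/2^n)$-close in trace distance. In $\Hyb_2$, I would argue that, with overwhelming probability, the ``internal'' path-recording entries created by $G_k^U$ and the ``external'' entries created by the adversary remain disjoint; this is where $P_k$ and $F_k$ do their work, by making the inputs and outputs that the construction feeds into $U$ look uniformly random from the adversary's point of view. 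Conditioned on this disjoint event, the construction behaves exactly as a standalone $\pfc$-style PRU with its own private Haar oracle, and in the final hybrid I would replace $G_k^U$ by a fresh Haar random unitary $V$ by invoking the standard path-recording analysis for this ensemble.

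The main obstacle I anticipate is making the decoupling step $\Hyb_1 \to \Hyb_2$ rigorous. Unlike the standard PRU setting, here the adversary has a second oracle ($U$ itself) that shares every bit of its randomness with the construction, so ``bad events''---adversarial queries that coincide with internal inputs or outputs of $U$---can in principle leak information about the key. Handling this requires careful bookkeeping in the path-recording picture: I would define a ``good'' subset of transcripts on which the two databases remain disjoint, show that on good transcripts the joint state of the purifying registers is indistinguishable from the ideal world, and then bound the probability of bad transcripts by a union bound, using the pseudorandomness of $P_k$ (so internal inputs to $U$ look nearly uniform) and the phase scrambling of $F_k$ (so internal outputs look random unless independently queried). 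The inverseless restriction is essential to this last step: with $U^\dagger$ access, the adversary could attempt to ``undo'' the outer call to $U$ and strip off the phase mask $F_k$, defeating the randomization and breaking decoupling; this is precisely why we can only hope to prove the theorem in $\mathsf{iQHROM}$ rather than full $\mathsf{QHROM}$.
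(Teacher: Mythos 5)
Your high-level plan (replace the shared Haar oracle by a single Ma--Huang path-recording oracle, isolate a ``good'' event on which the purified relation register can be split into a PRU part and a direct-query part, and bound the bad event) is the same skeleton the paper uses. But there is a concrete flaw in where you put the key-dependent randomization, and it sits exactly at the step you yourself flag as the hard one. The paper's construction is $G_k^U = U\,(X^k\otimes\id)\,U$: the secret permutation sits \emph{between} the two calls to $U$. In the path-recording picture a single PRU query deposits the pair of tuples $(x,z)$ and $(z\oplus k,y)$, and the ``Split'' isometry identifies PRU tuples as \emph{$k$-correlated pairs} --- pairs $((u,v),(u',v'))$ with $v\oplus u'=k$. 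The adversary cannot forge such a pair without guessing $k$, and the whole quantitative bound is that at most $O(t^2)$ keys out of $2^{|k|}$ are ``bad,'' giving error $O(\sqrt{t^2/2^{|k|}})$ and hence security for any $|k|=\omega(\log\secp)$. In your construction $V_2(k)=F_k$ is a \emph{diagonal} phase, so the two internal tuples are $(P_k(x),z)$ and $(z,y)$: they are linked by plain equality of output and input, i.e.\ they are $0$-correlated regardless of $k$. An adversary can manufacture identical chains for free by querying $U$ directly on $a$, receiving $b$, and then querying $U$ on $b$. Your proposed good event (``internal and external database entries remain disjoint'' / ``genuine chains are identifiable'') therefore cannot be established by a union bound over the key: there is no secret distinguishing a genuine internal chain from a forged one. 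Putting the permutation $P_k$ \emph{before} the first call randomizes the adversary's input to the first $U$ but does nothing to hide the link between the two $U$ calls, which is where the decoupling must happen.

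Two further remarks. First, your $V_1=P_k$ a ``keyed pseudorandom permutation'' and $V_2=F_k$ a keyed phase oracle are computational objects; if instantiated from PRPs/PRFs they reintroduce one-way functions, which defeats the stated motivation of building pseudorandomness in the ${\sf iQHROM}$ without them (and if instantiated information-theoretically they inflate the key). The paper shows that a bare $\omega(\log\secp)$-bit Pauli-$X$ mask suffices and nothing else is needed. Second, even setting aside the placement issue, the proposal leaves the entire technical content --- the definition of the good projector via correlated-pair counting, the explicit $\mathsf{Split}$/$\mathsf{Augment}$ isometries built from the partition/pair/apply operations on multiset states, and the matching ``augmentation'' of the ideal two-oracle state with a superposition over dummy $z$'s and keys --- as ``careful bookkeeping.'' That bookkeeping is the theorem; a correct writeup has to supply it, and with your choice of $V_2$ it cannot be supplied.
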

We remark that our construction is quite simple: if $U$ is the $n$-qubit Haar unitary and if $k \in \{0,1\}^{\secp}$, where $\secp \leq n$, is the PRU key then $G_k^U=U(X^{k} \otimes \id_{n-\secp})U$. It is important to note that $G_k$ makes two sequential calls to $U$.  

\paragraph{Implications to Pseudorandom Unitaries in the plain model.} In the plain model, we can substitute the Haar random oracle with sampling a single PRU key.  Instantiating the PRU in the iQHROM with this single sample of a pseudorandom unitary yields a new unitary that looks pseudorandom relative to the first sample, at the cost of only sampling $\omega(\log(\secp))$ more bits.  Plugging these independent looking unitaries into the construction of \cite{schuster2024random} yields a PRU with a much larger output size, at the cost of only a small amount of additional randomness. And as a corollary of our result, we show how to stretch the output length of \emph{any} PRU that exists in the plain model. 
\begin{theorem}[Informal]
\label{thm:intro:stretching_pru}
    If any pseudorandom unitary family exists, there is a construction of pseudorandom unitaries with keys of size $O(\secp)$ and output size $O(\secp^c)$ for all constants $c$.
\end{theorem}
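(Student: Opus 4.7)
The plan is to bootstrap any PRU into one with much larger output length by combining the main iQHROM PRU construction (Theorem~\ref{thm:intro:prus:iqhrom}) with the random-circuit theorem of \cite{schuster2024random}, which states that $\ell$-local random quantum circuits of $\polylog(\secp)$ depth on $n$ qubits are diamond-close to Haar random whenever $\ell = \omega(\log \secp)$.

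First, I would derandomize the Haar oracle appearing in the main theorem. Let $F = \{F_{k_F}\}$ be any PRU; after padding or tensoring with identities, take its output length $\ell$ to satisfy $\ell = \omega(\log \secp)$ and $\ell \geq \secp$. Plugging $U := F_{k_F}$ into the construction $G_k^U = U (X^k \otimes I_{\ell - \secp}) U$ yields a plain-model unitary $H_{k_F, k}$. A single hybrid step replaces $F_{k_F}$ with a true Haar-random unitary using the PRU security of $F$, after which Theorem~\ref{thm:intro:prus:iqhrom} guarantees that $H$ itself is a PRU. Crucially, the same theorem provides multi-instance security: for polynomially many fresh $k_1, \ldots, k_N \in \{0,1\}^{\secp}$, the tuple $(H_{k_F, k_1}, \ldots, H_{k_F, k_N})$ is computationally indistinguishable from $N$ independent Haar unitaries on $\ell$ qubits, even when the adversary has oracle access to $F_{k_F}$ itself.

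Next, I would set $n = \secp^c$ and invoke the circuit of \cite{schuster2024random} with $N = \poly(\secp^c)$ many $\ell$-local Haar gates in a brickwork pattern, instantiating the $i$-th gate by $H_{k_F, k_i}$. By the multi-instance security above, no query-bounded adversary can distinguish this from using truly independent Haar gates; the latter yields a Haar-close $n$-qubit unitary by \cite{schuster2024random}, so the overall construction is a PRU with output length $\secp^c$. The naive key is $(k_F, k_1, \ldots, k_N)$ of total size $\kappa_F + N\secp = \poly(\secp^c)$; to shrink this to $O(\secp)$, I would derive $k_1, \ldots, k_N$ from an $O(\secp)$-bit seed via a classical pseudorandom function and apply one further hybrid replacing truly random $k_i$'s with PRF outputs.

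The main obstacle is the key-compression step: extracting a classical PRF (or at least a PRG with polynomial stretch) from the sole assumption of PRU existence, which is not immediate. The natural route is via a PRFS with classical queries, which follows from any PRU by setting $\ket{\psi_{k,w}} := F_k \ket{w}$ and is also established in this paper; computational-basis measurements of PRFS outputs on short classical inputs yield pseudorandom strings that can be extracted to uniform bits via standard hashing. Verifying that the composed construction remains pseudorandom under this key derivation, and that the final key length lands at exactly $O(\secp)$ (with the hidden constant depending on $c$), is the technically delicate part of the argument.
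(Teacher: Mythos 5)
Your first two steps (instantiating the Haar oracle with a single PRU sample $F_{k_F}$, and using $U X^{k} U$ to mint unitaries that look independent of $U$) match the paper's strategy. The genuine gap is in your key-compression step. Your one-shot brickwork uses $N = \poly(\secp^c)$ gates, hence $N$ fresh Pauli keys, and you propose to derive them from an $O(\secp)$-bit seed via a classical PRF. But a classical PRF (or even a PRG with polynomial stretch) secure against quantum adversaries implies a one-way function, and one-way functions provably cannot be obtained from PRUs in a black-box way (this is exactly the content of the oracle separation of \cite{Kretschmer21}, relative to which PRUs exist but one-way functions do not). Your proposed workaround---measuring PRFS outputs on classical inputs and hashing---does not repair this: a computational-basis measurement of $F_k\ket{w}$ is a fresh random outcome, not a deterministic function of $(k,w)$ recomputable from the seed, so it cannot serve as a key-derivation function. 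Assuming one-way functions would make the theorem vacuous in the sense the paper cares about, since the whole point is to stretch \emph{any} PRU, including ones that do not arise from classical pseudorandomness; the paper explicitly remarks on this at the end of \Cref{sec:stretch_pru}.

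The paper avoids the problem structurally. Rather than a wide brickwork in one shot, \Cref{thm:plain_model:stretching_pru} glues exactly \emph{two} overlapping blocks per round using \Cref{lem:gluing_random_unitaries} of \cite{schuster2024random}: one round maps output length $t(\secp)$ to $2t(\secp)-\log^2(\secp)$ at a cost of only $2\log^2(\secp)$ fresh key bits (the two Pauli keys, which by \Cref{cor:stretch_pru} need only be $\omega(\log\secp)$ bits long). Iterating $c\log(\secp)$ times doubles the output length each round and accumulates only $2c\log^3(\secp) = o(\secp)$ extra key in total, so no key derivation is needed at all. Your multi-instance indistinguishability claim for $(H_{k_F,k_1},\dots,H_{k_F,k_N})$ is not stated in \Cref{thm:intro:prus:iqhrom} but does follow from a routine hybrid (each $G_{k_i}^U$ is simulatable given oracle access to $U$ and $k_i$), so that part is a presentational rather than substantive issue; the unfixable step, from the paper's assumption alone, is the PRF.
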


\noindent The work of~\cite{GJMZ23} proved a similar result (although technically incomparable) where the resulting pseudorandom unitary was only secure against a single adversarial query, but the stretching algorithm does not require sampling any additional randomness.  
\par We highlight that our stretched PRU only stretches \emph{output length}. In particular, it does not shrink the key length; though the increase in output length is much larger than the increase in key length.  Hence, for some output size $n$, one could start from PRUs for a much smaller security parameter (say, $n^\delta$), and build a pseudorandom unitary with output length $n$. The question of taking a pseudorandom unitary with a fixed output length to another pseudorandom unitary with the \emph{same} output length, but smaller key, is still an open question.

\paragraph{Unbounded-query secure PRUs: On the number of calls needed.} It is interesting to explore if it is inherent that $G_k$ needs to make at least two calls to $U$. We show that it is necessary. Informally, we show that any PRU construction making a single call to $U$ is insecure as long as the adversary is allowed to make $\Omega\left( \frac{\secp}{\log(\secp)} \right)$ queries to the PRU.

\begin{theorem}[Informal]
\label{thm:intro:negresult:one}
Any PRU construction that only makes a single query to the Haar random oracle is insecure against adversaries making $\Omega\left( \frac{\secp}{\log(\secp)} \right)$ non-adaptive queries to the PRU.
\end{theorem}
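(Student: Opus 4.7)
The plan is to build an explicit non-adaptive adversary $\Adversary^{\oracle,U}$ that makes $T = \Omega(\secp/\log(\secp))$ parallel queries to the challenge oracle $\oracle$ (plus $\poly(\secp)$ queries to the Haar oracle $U$) and distinguishes $\oracle = G_k^U$ from $\oracle = V$ (Haar random) with constant advantage. The core principle is a dimension-counting / compression argument: once $U$ is fixed, the family $\{G_k^U : k \in \{0,1\}^\secp\}$ has only $2^\secp$ members, whereas a fresh Haar random $V$ ranges over essentially continuum-many unitaries.

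First I would put any one-query PRU into the canonical form
\begin{equation*}
  G_k^U \;=\; B_k \cdot (U \otimes I_{\mathrm{anc}}) \cdot A_k,
\end{equation*}
where $A_k, B_k$ are key-dependent unitaries and the ancilla register is initialized and returned to $\ket{0}$. This is the standard normal form for $1$-query quantum algorithms, and its crucial consequence is that, using a single query to $U$, the adversary can itself evaluate $G_{k'}^U\ket{\phi}$ for any candidate key $k'$ and any input $\ket{\phi}$.

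The attack then proceeds as follows. The adversary prepares $T$ parallel test queries (for concreteness, $T$ copies of a maximally entangled state on input/reference registers), submits them to $\oracle$, and obtains a joint state $\ket{\Psi^\oracle}$. It then performs a distinguishing measurement tailored to the PRU family, morally the projector onto
\begin{equation*}
  \mathcal{S}^U \;=\; \mathrm{span}\bigl\{\ket{\Psi^{G_{k'}^U}} : k' \in \{0,1\}^\secp\bigr\}.
\end{equation*}
In the PRU world, $\ket{\Psi^\oracle} \in \mathcal{S}^U$ by construction, so the projector accepts with probability $1$. In the Haar world, $\ket{\Psi^\oracle}$ is (after symmetrization) essentially a Haar-random element of a space of dimension $\gtrsim 2^{T(m-\log T)}$, so its expected squared overlap with the $\leq 2^\secp$-dimensional subspace $\mathcal{S}^U$ is at most $2^{\secp - T(m - \log T)}$. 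The advantage is therefore constant as soon as $T(m-\log T) \geq \secp + \Omega(1)$, which gives $T = \Omega(\secp/\log(\secp))$ in the tightest regime of small output length $m$ and an even stronger attack for larger $m$.

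The hard part is implementing the projection onto $\mathcal{S}^U$ without spending $2^\secp$ queries to $U$ to enumerate the candidate keys. I would address this using the Ma--Huang path-recording formalism, which replaces the Haar random $U$ by a purely combinatorial database of query/response pairs and lets one reason directly about the subspace $\mathcal{S}^U$ at the level of the database, without materializing the $2^\secp$ predictions. The same formalism is essential for carefully tracking the correlation between the adversary's external $U$-queries and the oracle's own internal query to $U$ in the PRU world --- this bookkeeping is the most delicate part of the proof, and once it is in place the dimension count above delivers the claimed lower bound.
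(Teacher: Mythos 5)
Your skeleton—put the one-query construction in the form $G_k^U = B_k (U\otimes I) A_k$, work with Choi states, and run a rank/dimension-counting argument pitting the $2^{\secp}$ keys against the much larger space available to an independent Haar $V$—is exactly the paper's strategy. But the step you flag as "the hard part" is a genuine gap, and your proposed fix does not close it. The projector onto $\mathcal{S}^U = \mathrm{span}\{\ket{\Psi^{G_{k'}^U}}\}_{k'}$ depends on $U$, and the adversary only has black-box query access to $U$; even a computationally unbounded adversary cannot apply a $U$-dependent measurement it cannot specify from its queries. The Ma--Huang path-recording formalism cannot rescue this: it is a purification device for \emph{analyzing} the view of a query algorithm (the relation register is held by the "environment," not the adversary), so it confers no new measurement capability on the attacker — and here you are constructing an attack, not bounding one.

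The paper resolves this by making the measurement $U$-independent. The adversary spends $t=\secp^3$ extra queries to prepare $\ket{\Phi_U}^{\otimes t}$ alongside the $\ell$ challenge Choi states $\ket{\Phi_{\oracle}}^{\otimes \ell}$, and measures the \emph{fixed} projector $\Pi$ onto the support of $\rho_0^{(t,\ell)} = \mathbb{E}_{k,U}\bigl[\ketbra{\Phi_U}{\Phi_U}^{\otimes t}\otimes\ketbra{\Phi_{G_k^U}}{\Phi_{G_k^U}}^{\otimes \ell}\bigr]$. Since $\ket{\Phi_{G_k^U}}=(A_k^{\intercal}\otimes B_k)\ket{\Phi_U}$, this support has rank at most $2^{\secp}\binom{2^{2m}+t+\ell-1}{t+\ell}$ (one symmetric subspace per key), whereas the ideal-world state $\mathbb{E}_{U,V}[\ketbra{\Phi_U}{\Phi_U}^{\otimes t}\otimes\ketbra{\Phi_V}{\Phi_V}^{\otimes\ell}]$ is maximally mixed on a space of rank $\binom{2^{2m}+t-1}{t}\binom{2^{2m}+\ell-1}{\ell}$. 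Note the separation comes from the \emph{relative symmetrization} of the two blocks (a factor of roughly $\binom{t+\ell}{\ell}$), not from the ambient dimension $2^{2m}$ as in your count; a combinatorial estimate imported from the common-Haar-state literature then gives $\Tr(\Pi\rho_1)\le 2^{-\secp}$ once $\ell=\omega(\secp/\log\secp)$ and $t=\secp^3$, while the real-world state is accepted with probability $1$. Without this averaging trick (or an alternative such as the quantum-OR reduction the paper uses for the parallel-query generalization), your attack is not implementable, so the proof as proposed does not go through.
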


\paragraph{Bounded-query secure PRUs.} The above negative result leads us to an intriguing question: does there exist a PRU construction that only makes a single call to the Haar random oracle and satisfies security as long as the adversary only makes an {\em a priori} bounded number of queries to the PRU? We answer this question below. 

\begin{theorem}[Informal]
\label{thm:intro:bqprus}
PRUs exist in the inverseless quantum Haar random oracle model with the following properties: (a) the construction makes a single call to the Haar oracle, (b) the adversary makes at most $O\left(\frac{\secparam}{(\log(\secparam))^{1+\varepsilon}}\right)$ queries, for $\varepsilon > 0$. 
\end{theorem}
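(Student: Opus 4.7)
The plan is to propose a single-query construction of the form $G_k^U = V_k\, U\, W_k$, where $V_k$ and $W_k$ are efficiently implementable, key-dependent classical unitaries (the simplest candidates to try being a two-sided XOR scrambling $G_k^U = X^{k_1} U X^{k_2}$ or, more generally, a Clifford/permutation pair). The intuition is that $W_k$ hides the input that is actually fed to the single call to $U$, while $V_k$ hides the resulting column of $U$, thereby preventing any short distinguisher from correlating outputs of $G_k^U$ directly with outputs of its own $U$-queries. The target adversary model has polynomially many queries to the Haar oracle $U$ and at most $t = O(\secparam/(\log\secparam)^{1+\veps})$ queries to either $G_k^U$ (real world) or a fresh Haar random unitary $V$ (ideal world); the goal is to bound the trace distance between the two induced views by $o(1)$.

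The technical backbone is an application of the path-recording formalism of Ma and Huang (the same tool that powers the two-call PRU of \Cref{thm:intro:prus:iqhrom}): both $U$ in the real world, and $U$ together with $V$ in the ideal world, are replaced by lazy-sampled path-recording oracles. After all queries, the adversary's joint state with the oracle register is characterized by a random relation $R_U$ of input--output pairs for $U$ and, in the ideal world, by an independent relation $R_V$ for $V$. In the real world, each $G_k^U$-query on input $\ket{x}$ instead adds a pair $(W_k\ket{x},\ket{y})$ to $R_U$ and returns $V_k\ket{y}$ to the adversary. Distinguishing the two worlds therefore reduces to bounding the probability of a ``collision'' in $R_U$ between a genuine $U$-query and the effective input $W_k\ket{x}$ of some $G_k^U$-query.

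The heart of the proof is a type-class / compression argument. For a uniformly random key $k$, any fixed adversarial strategy can extract only a bounded amount of information about the effective inputs $\{W_k\ket{x_i}\}_i$ of its $G_k^U$-queries, and producing a colliding $U$-query requires essentially pinning down one such input. Quantifying this via a sharp entropy bound, combined with a Ma--Huang-style triangle inequality summed over the $t$ queries, should yield a total trace distance of order $t \cdot (\log\secparam)^{1+\veps} / \secparam = o(1)$ in the stated regime. The asymmetry between the single $U$-call of the construction and the lower bound $\Omega(\secp/\log\secp)$ of \Cref{thm:intro:negresult:one} will then naturally manifest as a $(\log\secparam)^{\veps}$ slack in the analysis.

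I expect the main obstacle to be making the per-query leakage bound tight enough to reach the exponent $1+\veps$ on the logarithm, rather than the exponent $2$ that a naive union bound would give. Closing this gap will likely require tracking the symmetric-subspace structure of the path-recorded state carefully, together with a sharp entropy estimate on how much information a single $G_k^U$-query can reveal about $k$. A secondary subtlety is that the adversary is allowed polynomially many queries to $U$ itself, so the collision bound must be essentially independent of the number of $U$-queries, or else one loses polynomial factors that would destroy the stated bound on $t$.
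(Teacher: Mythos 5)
There is a genuine gap here, and it starts with the choice of construction. The paper's single-query PRU is $G_k^U = (Z^k\otimes \id)\,U$, and the $Z$-string (a \emph{phase} on the output) is not an interchangeable placeholder for ``any key-dependent scrambling'': it is the entire mechanism of the proof. In the path-recording picture, a PRU query records the pair $(x_i,y_i)$ in the \emph{same} relation as the direct $U$-queries, and the adversary receives $(-1)^{\langle k\|0,\,y_i\rangle}\ket{y_i}$. Summing over the purified key and applying a Hadamard turns the key register into $\ket{\bigoplus_{i\in\bfa} y_i[1\!:\!\secp]}$ --- a function of the PRU outputs sitting in the purification. The notion of strong $\ell$-fold $\secp$-prefix collision-freeness then guarantees that this XOR uniquely identifies \emph{which} entries of the relation came from PRU queries, so an isometry on the purifying registers can split the single relation into two and simultaneously disentangle the key, landing exactly on the two-independent-path-recording-oracle state. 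Your candidate $X^{k_1}UX^{k_2}$ admits no analogue of this Fourier trick: the relation stores $(x_i\oplus k_2, y_i)$ with the key register still entangled as $\ket{k_1,k_2}$, and since XOR by $k_2$ is a bijection on inputs there is no marker telling the splitting isometry which pairs to extract or un-shift. Relatedly, your framing of the problem as bounding collisions between genuine $U$-inputs and the effective inputs $W_k\ket{x}$ misses the main difficulty: even with zero such collisions, the real world has one key-entangled relation and the ideal world has two key-free independent relations, and the whole proof is the construction of explicit isometries witnessing their closeness, not a collision bound.

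The quantitative claim is also off in a way that signals the wrong mechanism. At $t=\Theta\bigl(\secp/(\log\secp)^{1+\veps}\bigr)$ your proposed error $t\cdot(\log\secp)^{1+\veps}/\secp$ is $\Theta(1)$, not $o(1)$. The paper's actual error is $O\bigl(\sqrt{\ell}\,t^{\ell+1}/2^{\secp/2}\bigr)$ plus $O(t^2/N)$ terms (\Cref{cor:ind:twopcfpr}, \Cref{thm:ell_design}), which is \emph{exponentially} small precisely because the exponent $1+\veps$ forces $\ell\log t = O\bigl(\secp/(\log\secp)^{\veps}\bigr)=o(\secp)$. That exponent arises from counting the $\binom{|S|}{i}\binom{|S|}{i-1}$ constraints in the collision-freeness bound of \Cref{lem:cf_set_size} (roughly $\ell\,|S|^{2\ell}2^{-\secp}$), i.e., from a union bound over $\ell$-subsets of recorded outputs --- not from an entropy or compression bound on key leakage, and not from any slack relative to the $\Omega(\secp/\log\secp)$ lower bound. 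To repair the proposal you would need to either switch to the paper's phase-based construction and prove the split/disentangle isometries, or supply a genuinely new argument for why a shift-based construction's purification can be mapped onto the ideal one.
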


\noindent From our negative result (\Cref{thm:intro:negresult:one}), we have that condition (b) in the above theorem is tight. 

\paragraph{Negative Result: Generalization to the Parallel Query case.} We already saw that at least two calls to the Haar oracle are necessary if we were to design PRUs with unbounded query security. However, our construction in~\Cref{thm:intro:prus:iqhrom} makes two sequential calls to the Haar random oracle. A natural question is: {\em are two sequential calls necessary?} We answer this question in the affirmative. In fact, we show that any number of parallel calls to the Haar random oracle is not sufficient. Concretely, we show that any PRU making arbitrary number of {\em parallel} calls to the Haar random oracle $U$ (i.e. of the form, $W \cdot U^{\otimes t} \cdot V$, for some unitaries $W,V$ and for some polynomial $t$) is insecure as long as the adversary is allowed to make $\Omega\left( \secp \right)$ queries to the PRU and $U$. 

\begin{theorem}[Informal]
\label{thm:intro:neglresult:two}
Any PRU construction that only makes parallel queries to the Haar oracle is insecure against adversaries making $\Omega(\secp)$ non-adaptive queries to the PRU.  
\end{theorem}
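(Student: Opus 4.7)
Proof Plan:

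The plan is to reduce the parallel-query case to the single-query case already handled by \Cref{thm:intro:negresult:one}, incurring only a polynomial loss in the query bound. First I would re-cast the parallel PRU $G_k = W_k \cdot U^{\otimes t} \cdot V_k$ as a \emph{single}-query PRU $G_k = W_k \cdot \tilde U \cdot V_k$ with respect to the ``bundled'' oracle $\tilde U := U^{\otimes t}$ acting on $nt$ qubits. Crucially, the adversary can simulate a query to $\tilde U$ using $t$ parallel queries to $U$, so any non-adaptive distinguisher against a single-query PRU with oracle $\tilde U$ lifts to a non-adaptive distinguisher against the parallel PRU with access to $U$, increasing the number of $U$-queries by at most a factor of $t$.

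Next I would re-run the single-query distinguisher underlying \Cref{thm:intro:negresult:one} with $\tilde U$ in place of the Haar random oracle $U$. Since the soundness there is established in the Ma--Huang path-recording formalism (which the paper states as the main tool), it suffices to show that replacing a Haar random $nt$-qubit unitary by the tensor power $U^{\otimes t}$ changes the relevant moments used by the distinguisher by at most a $\poly(t)$ factor. A direct Weingarten calculation should establish this, because the distinguisher only accesses bounded-degree moments, and the ratio of these moments between the true Haar measure on $\mathrm{U}(2^{nt})$ and the tensor-power distribution is controlled by Kostka-type coefficients that grow polynomially in $t$.

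Combining the two steps, a non-adaptive adversary distinguishes $G_k^U$ from a fresh Haar random $m$-qubit unitary with non-negligible advantage given $q = \Omega(\secp)$ queries to $G_k^U$ together with a further $O(q t)$ queries to $U$ (for the simulation of $\tilde U$). The worsening from the $\Omega(\secp / \log \secp)$ bound of \Cref{thm:intro:negresult:one} to the $\Omega(\secp)$ bound here is exactly where the $\poly(\log \secp)$ loss from the reduction is absorbed.

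The main obstacle is the second step: controlling the discrepancy between $U^{\otimes t}$ and a true Haar random unitary on $nt$ qubits. The tensor power has an enormous $S_t$ Schur--Weyl stabilizer and thus is supported on a much smaller subset of $\mathrm{U}(2^{nt})$, so its high-order moments differ substantially from the Haar measure. I expect to have to carry out the path-recording analysis directly for the oracle $U^{\otimes t}$ rather than blackbox-replacing it with a Haar isometry, and to show by a representation-theoretic accounting that at the moment order used by the single-query attack the two are indistinguishable up to a $\poly(t)$ factor. Making this accounting tight is the technical crux and is what ultimately pins down the $\Omega(\secp)$ query bound.
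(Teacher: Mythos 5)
Your plan founders at the step you yourself identify as the crux, and the proposed repair cannot be made to work. The single-query attack behind \Cref{thm:intro:negresult:one} is a rank-counting argument: its soundness uses that, for a Haar random oracle $\tilde U$ on the full query space, $\E_{\tilde U}[\proj{\Phi_{\tilde U}}^{\otimes s}]$ is supported on the symmetric subspace of $s$ copies of a $2^{2nt}$-dimensional space, which is what makes the ratio in \Cref{clm:agl24:bound} exponentially small. For $\tilde U = U^{\otimes t}$ one has $\ket{\Phi_{U^{\otimes t}}}^{\otimes s} = \ket{\Phi_{U}}^{\otimes st}$, so the corresponding average has rank about $\binom{2^{2n}+st-1}{st}$ rather than $\binom{2^{2nt}+s-1}{s}$; the two differ by a factor of order $(st)!/s!$, and the discrepancy shrinks $\rank(\rho_1)$, i.e.\ it degrades the soundness bound rather than helping it. Worse, redoing the rank count honestly for $\rho_0$ and $\rho_1$ built from $\ket{\Phi_U}^{\otimes st}$ shows that the inequality $\rank(\rho_0)/\rank(\rho_1)\le 2^{-\secp}$ fails outright whenever the PRU's output register has $m$ qubits with $m \le nt - \omega(t\log(st))$: the numerator is then dominated by the $U$-copies rather than by the $2^{\secp}$ keys, and the ratio becomes exponentially \emph{large}. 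So no lemma of the form ``the moments of $U^{\otimes t}$ and of a Haar unitary on $nt$ qubits agree up to $\poly(t)$ factors'' can be true at the orders the attack uses (the two oracles are distinguishable with $O(1)$ queries), and your fallback of ``carry out the path-recording analysis directly for $U^{\otimes t}$'' is not a reduction but an unexecuted restart of the proof.

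The paper sidesteps the comparison entirely and mounts a different attack. By the ricochet property (\Cref{lem:modify_choi_state}), for any non-adaptive $C^U = B\,U^{\otimes t}A$ the Choi state $\ket{\Phi_{C^U}}$ can be synthesized \emph{exactly and reversibly} from $t$ copies of $\ket{\Phi_U}$. For each of the $2^{\secp}$ keys this gives a projective test (SWAP tests against $O(\secp)$ copies of the challenge oracle's Choi state) that accepts with probability $1$ in the real case, while a Levy's-lemma concentration bound (\Cref{lem:Haar_choi_state_low_overlap}) shows a fresh Haar unitary's Choi state has squared overlap at most $1/2$ with every fixed Choi state except with probability $2^{-\Omega(2^{n})}$, so each test passes with probability at most $(3/4)^{O(\secp)}$ in the ideal case. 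The quantum OR lemma (\Cref{lem:quantum_or}) then lets a polynomial-space, query-bounded adversary run all $2^{\secp}$ tests with only a union-bound loss, using $O(\secp)$ non-adaptive PRU queries and $O(t\secp)$ queries to $U$. Nothing in that argument requires $U^{\otimes t}$ to resemble a Haar unitary on $nt$ qubits, which is precisely the obstruction your reduction cannot overcome.
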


\paragraph{Pseudorandom Quantum States and Function-like States in the ${\sf iQHROM}$.} We next look at pseudorandom state generators in the ${\sf iQHROM}$. Since PRUs imply PRSGs, we immediately get the implication that PRSGs exist in the ${\sf iQHROM}$. The question, then, is if there are even simpler constructions of PRSGs in the ${\sf iQHROM}$. We show the following. 

\begin{theorem}[Informal]
\label{thm:intro:prsgs}
PRSGs exist in the inverseless quantum Haar random oracle model with the following properties: (a) the construction makes a single call to the Haar oracle and, (b) the adversary is given an arbitrary polynomial number of copies. 
\end{theorem}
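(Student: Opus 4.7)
My plan is to analyze the simple single-query construction $G^U(k) = U\ket{k}\ket{0^{n-\secp}}$, where $k \in \{0,1\}^\secp$ is the key and $n \geq \secp$ is the output length, so that $G^U(k)$ is the $(k0^{n-\secp})$-th column of the Haar random unitary $U$. To produce $t$ copies, the generator is invoked $t$ times, each invocation making a single oracle call. The security reduction will then proceed in three steps.

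First, I would invoke the Ma--Huang path recording theorem to replace the Haar random oracle $U$ by the path recording simulator $\mathcal{R}$ everywhere. Since the total number of queries from the generator (at most $t$) and the adversary (at most $q$) is polynomial in $\secp$, the joint real and simulated states are negligibly close in trace distance after the full experiment. Second, in the path recording world, the $t$ invocations of the generator all query $\mathcal{R}$ on the same input $\ket{k0^{n-\secp}}$: the first call inserts a fresh entry $(k0^{n-\secp},\ket{y})$ into the relation register (with $y$ marginally uniform), and each subsequent call is answered consistently using this entry, producing a further copy of $\ket{y}$. Hence the generator outputs the joint state $\ket{y}^{\otimes t}$, entangled with the $y$-part of the relation register. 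The ideal experiment, in which the adversary receives $t$ copies of a Haar random state independent of $U$, translates in the path recording picture to the same joint form $\ket{y}^{\otimes t}$ but with the relation register initialized empty.

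Third, I would show that a uniformly random hidden $k$ makes the initial entry $(k0^{n-\secp},\ket{y})$ invisible to the adversary. The entry only modifies an adversary query response when the query state has nontrivial amplitude on $\ket{k0^{n-\secp}}$, and averaged over uniform $k$ the expected squared overlap of any normalized query state with $\ket{k0^{n-\secp}}$ is at most $1/2^\secp$. A hybrid over the $q$ adversary queries, together with the standard trace-distance bookkeeping from the path recording formalism, then yields a total distinguishing advantage of $O(q^2 / 2^\secp) = \negl(\secp)$.

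The main technical obstacle is the third step: the adversary's queries may be in superposition and adaptively chosen based on the entangled state $\ket{y}^{\otimes t}$, so one must carefully bound how the hidden relation entry perturbs the path recording oracle's action on these superposed inputs. The correct accounting is to average the per-query ``collision amplitude'' with $\ket{k0^{n-\secp}}$ over the random classical choice of $k$, which is valid because $k$ is independent of everything the adversary can observe (its only trace on the output is through the marginally Haar random state $\ket{y}$). Carrying this averaging through each hybrid, while keeping the entanglement between the output copies and the relation register intact, is where I expect the bulk of the technical work to lie.
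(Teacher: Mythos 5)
Your overall architecture (same construction $U\ket{k||0^{n-\secp}}$, replace $U$ by the path-recording oracle, then argue the hidden key entry is invisible to the adversary) matches the paper's, and your third step is essentially the paper's projection $\Pi_{\mathrm{good}}$ onto keys $k$ lying outside the set of the adversary's query inputs. But your second step contains a genuine error that would sink the proof as written: the Ma--Huang path-recording oracle is \emph{not} a consistent lazy sampler. On every call it appends a \emph{fresh} pair $(x,y)$ with $y$ ranging over $[N]\setminus\mathrm{Im}(R)$, even when the input $x$ has already been queried. Consequently the $t$ generator invocations on $\ket{k||0^{n-\secp}}$ produce the superposition $\sum_{\vec{y}\in[N]^t_{\mathrm{dist}}}\ket{y_1,\ldots,y_t}\otimes\ket{\{(k||0^{n-\secp},y_i)\}_{i}}$, whose reduced state is (negligibly close to) $t$ copies of a Haar random state --- not the state $\ket{y}^{\otimes t}$ entangled with a single recorded $y$ that you describe. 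The latter would give the reduced state $2^{-n}\sum_y(\proj{y})^{\otimes t}$, which is trivially distinguishable from $t$ Haar copies (computational-basis measurements of two copies collide with probability $1$ versus $O(2^{-n})$), so the ``consistency'' you assume is exactly the property the simulator must \emph{not} have in order to faithfully mimic a Haar random unitary. With the correct (injective-on-outputs) path-recording behavior, your plan collapses back into the paper's proof.

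Two smaller omissions: the ideal world (copies of a state independent of $U$) must itself be modeled, e.g.\ as $V\ket{0}$ for an independently sampled Haar $V$ that is then also replaced by a second path-recording oracle; and because the real world records everything in one shared relation register while the ideal world uses two, the generator outputs $y_i$ and the adversary's oracle outputs $z_j$ are jointly distinct in one case but independently sampled in the other. The paper absorbs this mismatch via the collision-free variant of two path-recording oracles at an additional cost of $O((t+s)^2/\sqrt{2^n})$; your $O(q^2/2^\secp)$ accounting covers only the key-collision events and misses this term.
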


\noindent In particular, our result shows that the negative result~\Cref{thm:intro:negresult:one} only holds for PRUs and not PRSGs and PRFSs. We note that~\cite{BFV19} proposed a much more involved construction of PRSGs in the stronger ${\sf QHROM}$ model although they did not formally prove the security of their candidate. We also extend our construction of pseudorandom states to a construction of pseudorandom \emph{function-like} states.
\begin{theorem}[Informal]
    \label{thm:intro:prfs}
    PRFSs exist in the inverseless quantum Haar random oracle model, with the following properties: (a) the construction makes a single call to the Haar random oracle and, (b) the adversary is given an arbitrary polynomial number of classical queries. 
\end{theorem}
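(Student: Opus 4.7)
The plan is to construct the PRFS by folding the classical input into the key slot of the single-call PRSG from \Cref{thm:intro:prsgs}. Concretely, I would define
\[ G^U(k, x) = U \ket{k \| x \| 0^{n - \secp - m}}, \]
where $k \in \{0,1\}^\secp$ is the key and $x \in \{0,1\}^m$ is the classical input. Since the adversary is restricted to classical queries on $x$, there is no need to implement $x \mapsto \ket{\psi_{k,x}}$ in superposition, and the pair $(k,x)$ can simply play the role of an ``extended key'' fed into the PRSG construction. Efficiency and the single-query-to-$U$ property are then inherited from the PRSG.

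The security proof then has three steps. First, I would replace the Haar oracle by the Ma--Huang path-recording oracle (exactly as in the proof of \Cref{thm:intro:prsgs}), so that every query to $U$ -- whether made by the generator or by the adversary -- either aligns with a previously recorded input or causes a fresh Haar vector to be appended to the recorded relation $R$. Each of the $t$ classical PRFS queries $x_i$ then contributes an entry of the form $(k \| x_i \| 0^{n-\secp-m}, \ket{\eta_{x_i}})$, and these inputs are pairwise distinct across the $t$ queries, so distinct queries produce independent fresh Haar samples.

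Second, I would define the ``bad event'' $\bad$ that at some point the adversary's direct queries to $U$ place non-negligible amplitude on computational basis states whose prefix equals the unknown key $k$. The key $k$ is uniformly random and (prior to interaction) independent of the adversary's view; moreover, the PRFS outputs are marginally close to Haar random and thus leak no information about $k$ that the adversary could concentrate on in subsequent queries. A hybrid argument over the queries, combined with a standard quantum union-bound / ``gentle measurement'' estimate in the path-recording framework, should give
\[ \Pr[\bad] \le O\!\left( \frac{t^2}{2^\secp} \right) = \negl(\secp). \]
Conditioned on $\lnot \bad$, the fresh Haar vectors $\ket{\eta_{x_i}}$ generated to answer the PRFS queries appear nowhere else in the adversary's view.

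Third, conditioned on $\lnot \bad$, the adversary's view in the real experiment is generated by sampling a fresh Haar state for each \emph{distinct} classical query $x_i$, which is identical to the ideal-world distribution where one samples $\{\ket{\phi_w}\}_w$ i.i.d.\ from the Haar measure and returns $\ket{\phi_{x_i}}$ on query $x_i$. Combining with step two yields indistinguishability with advantage $\negl(\secp)$. The main obstacle will be step two: carefully controlling, inside the path-recording formalism, the amplitude that polynomially many adaptive quantum queries to $U$ (interleaved with classical PRFS queries) can place on strings whose $\secp$-bit prefix equals the hidden key $k$. This is where the ``hidden key'' structure does the real work, and the cleanest argument will likely track the recorded relation $R$ query by query, showing inductively that the marginal probability that $k$ is a prefix of any input appearing in $R$ at the end of query $i$ remains $O(i / 2^\secp)$, so that the overall advantage remains negligible.
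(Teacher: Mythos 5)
Your construction and overall strategy match the paper's exactly: $G^U(k,w)=U\ket{k\|w\|0^{n-\secp-m}}$, analyzed by replacing $U$ with the path-recording oracle and then discarding the keys that collide with prefixes of the adversary's direct queries to $U$. The one place where the paper's argument is simpler than your plan for step two: rather than an inductive, query-by-query bound on the amplitude the adversary places on the hidden prefix $k$, it purifies the key register and applies a single projector at the very end onto pairs $(R,k)$ with $k\notin\set{\vec{x}_{[1:\secp]}}$ --- since for any fixed transcript at most $t$ of the $2^{\secp}$ keys are bad, the norm loss is at most $t/2^{\secp}$ and the gentle-measurement lemma gives trace distance $O\bigl(\sqrt{t/2^{\secp}}\bigr)$, with the remaining real-versus-ideal discrepancy (one globally injective relation versus $2^{m}$ independent Haar unitaries, a point your step three elides by calling the two views ``identical'') absorbed by an extra hybrid chain over the inputs $w$ and a collision-freeness step costing $O(t^{2}/\sqrt{2^{n}})$.
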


\vspace{-10 pt}
\begin{figure}[H]
\begin{tabular}{|p{\textwidth}|}
\hline 
\vspace{-10 pt}
\begin{center}
\underline{\textsc{Results Summary (Informal)
}}\\
\begin{enumerate}
    \item Unbounded query secure PRUs, with \underline{two calls} to the Haar random oracle, exist in ${\sf iQHROM}$ (\Cref{thm:intro:prus:iqhrom})
    \item Unbounded query secure PRUs exist with keys of size $O(\lambda^{1/c})$ for any constant $c$, if any PRU exists in the plain model (\Cref{thm:intro:stretching_pru}).
    \item Unbounded query secure PRUs, with \underline{one call} to the Haar random oracle, does {\bf not} exist in ${\sf iQHROM}$ (\Cref{thm:intro:negresult:one})
    \item Bounded query secure PRUs, with \underline{one call} to the Haar random oracle, exists in ${\sf iQHROM}$ (\Cref{thm:intro:bqprus})
    \item Multi-copy PRSGs, with \underline{one call} to the Haar random oracle, exists in ${\sf iQHROM}$ (\Cref{thm:intro:prsgs})
    \item Adaptively secure PRFSs, with \underline{one call} to the Haar random oracle, exists in the $\mathsf{iQHROM}$ (\Cref{thm:intro:prfs})
    \item The negative result in bullet 3 can be generalized further: unbounded query secure PRUs, with any number of parallel calls to the Haar random oracle, does {\bf not} exist in ${\sf iQHROM}$ (\Cref{thm:intro:neglresult:two})
\end{enumerate}
\end{center}
\vspace{-10pt}
\ \\
\hline 
\end{tabular} 
\label{fig:resultssummary}
\end{figure}

\ifllncs

\else
    \begin{figure}[H]
    \input{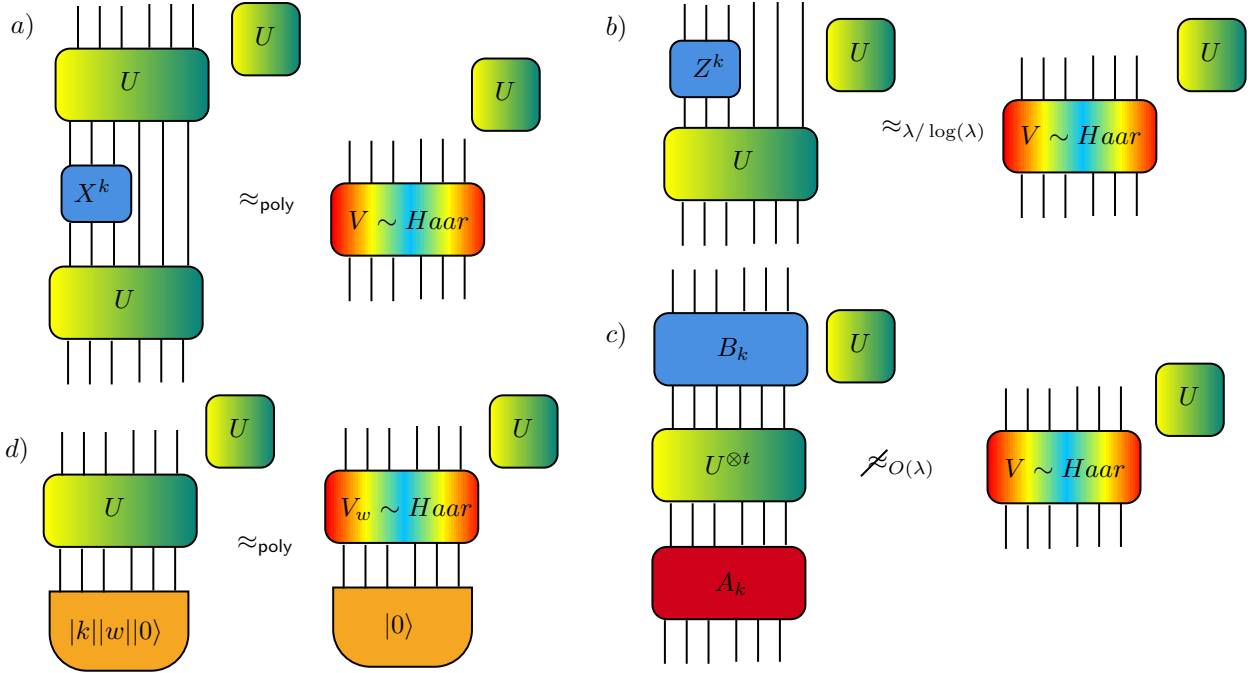}
    \caption{A summary of our results, time goes up in all diagrams. \textbf{(a)} We show that the simple $U X U$ is indistinguishable from an independently sampled Haar random unitary for adversaries who have query access to $U$.  \textbf{(b)} We also show that up to $\secp / \log(\secp)$ queries, the even simpler unitary $Z U$ is indistinguishable from a Haar random unitary to adversaries that have query access to $U$.  \textbf{(c)} We also show that there is no construction of $O(\secp)$-secure pseudo-random unitaries that only make a single parallel query to the common Haar random unitary, if the adversary is given polynomial-space computation.  \textbf{(d)} Finally, we show that simply calling the Haar random unitary on a uniformly random classical basis state is indistinguishable from a Haar random state to adversaries that get polynomially many queries to $U$, yielding both PRSGs and PRFSs}
\end{figure}
    \section{Related Work}

\paragraph{Quantum Pseudorandomness.}
Study into quantum pseudorandomness began with the work of \cite{JLS18}, who first defined pseudorandom states and unitaries.\cite{JLS18} presented the first construction of pseudorandom states from one-way functions. Since then, a few works~\cite{BS19,BrakerskiS20,AGQY22} presented improved and simpler constructions of pseudorandom states. However, until very recently, the construction of pseudorandom unitaries have remained elusive. Recently, a few works~\cite{LQSYZ23,AGKL,brakerski2024real} made progress on building pseudorandom unitaries. Specifically, they consider the security of pseudorandom unitaries on specific sets of queries. 




Building on these results, \cite{MPSY24} provided the first constructions of non-adaptively secure pseudorandom unitaries, which used the so-called $\mathsf{PFC}$ ensemble (which stands for ``Permutation-Random Function-Clifford'').  \cite{chen2024efficient} simultaneously constructed pseudorandom unitaries from random permutations, but both papers rely heavily on Schur-Weyl duality and the properties of the symmetric group.  Recently, \cite{MH24} extended the compressed oracle techniques of \cite{zhandry2019record} to the path recording formalism for Haar random unitaries.  Specifically, they show that the $\mathsf{PFC}$ ensemble is a pseudorandom unitary and that $\mathsf{C}^{\dagger}\mathsf{PFC}$ is additionally inverse secure. \cite{schuster2024random} showed that low-depth pseudorandom unitaries can be instantiated from concrete assumptions such as LWE.
\ 

\paragraph{Common Haar State Model.} The common Haar random state model (CHRS)~\cite{AGL24,chen2024power} is an idealized model of computation where all parties have access to a joint common state, and can be viewed as the quantum equivalent of the classical common reference string model.  Both works show that in the CHRS model, some form of bounded copy pseudorandom states with short keys (and therefore quantum bit commitments) exist, while ruling out a wide range of other primitives.  \cite{AGL24} rule out quantum cryptography primitives with classical communication, and \cite{chen2024power} rule out unbounded copy pseudorandom states.  While this idealized model is interesting to study and provides both efficient constructions of cryptographic primitives and black box separations, the model can be problematic to instantiate in a realistic setting.  For example, instantiating the model in the real world might require a complicated multi-party computation to compute a shared quantum state, or a trusted third party who can distribute copies of the state.  

\paragraph{Quantum Haar Random Oracle Model.} The quantum Haar random oracle model was first introduced in \cite{CM24}, who provided a construction of succinct commitments.  However, \cite{CM24} was not able to analyze the security of their mode in the QHROM.  \cite{BFV19} separately consider the QHROM, using it as an idealized model of the scrambling behavior of black holes.  They provide a construction of pseudorandom states, with a proof sketch. The common denominator in both the works is the lack of techniques to formally analyze security in the quantum Haar random oracle model. Finally, \cite{Kretschmer21} considers an idealized version of pseudorandom states, consisting of $2^{\secp}$ many Haar random unitaries.  They show that relative to this oracle and a $\mathsf{PSPACE}$ oracle, one-way functions do not exist while pseudorandom unitaries do.  

\fi
\section{Technical Overview}

Here we provide proof sketches for the our construction of unbounded-copy secure pseudorandom unitaries in the $\mathsf{iQHROM}$ (\Cref{sec:tech:pru}), our negative result on an unbounded pseudorandom unitary from a single parallel query (\Cref{sec:tech:negative}), our construction of bounded-copy secure pseudorandom unitaries from a single query in the $\mathsf{iQHROM}$ (\Cref{sec:tech:bounded_pru}),
 and our simple construction of pseudorandom states in the $\mathsf{iQHROM}$ (\Cref{sec:tech:prs}).

\subsection{Path Recording Formalism of Ma and Huang}

We begin by recalling the (forward secure) characterization of Haar random unitaries, known colloquially as the path recording framework.  If $\reg{A}$, $\reg{X}$ and $\reg{Y}$ be three quantum registers, with $\reg{A}$ being the adversary's register and $\reg{XY}$ being purifying registers, then the path recording oracle $\mathsf{PR}: \reg{AXY} \mapsto \reg{AXY}$ is the following linear map:
    \begin{equation*}
        \mathsf{PR}: \ket{x}_{\reg{A}} \otimes \ket{R}_{\reg{XY}} \mapsto \frac{1}{\sqrt{N - |R|}}\sum_{\substack{y \in [N] \setminus \Im(R)}} \ket{y}_{\reg{A}} \otimes  \ket{R \cup \{(x, y)\}}\,,
    \end{equation*}
where $R$ is an injective relation state, which is a set of input/output pairs with the condition that the same output never appears twice in the set, and $\Im(R)$ is the set of outputs in the relation state. In the rest of this technical overview, we drop the normalizing factor of $\frac{1}{\sqrt{N - |R|}}$.  \cite{MH24} show that the path recording oracle is right invariant and thus indistinguishable from oracle access to a Haar random unitary.  In the paper of \cite{MH24}, the authors go on to argue that the action of a uniformly random sampled permutation and binary phase function (i.e. $\mathsf{PF}$) is identical to the path recording oracle \emph{if} the inputs to the oracle are distinct.  Using the properties of a $2$-design, they show that any adversary (except with negligible probability) will query the $\mathsf{PF}$ part of the $\mathsf{PFC}$ oracle on distinct strings, allowing them to claim the following are all (approximately) indistinguishable from each other
\begin{equation*}
    \mathsf{PFC} \approx \mathsf{PR}\cdot \mathsf{C} \approx \mathsf{PR} \approx \mathsf{PR} \cdot \mathcal{U} \approx \mathsf{PF}\cdot \mathcal{U} = \mathcal{U}\,.  
\end{equation*}

As the path recording framework is so important to the result of this paper, we review some of the proof.  Let $\mathcal{A}$ be a $t$-query adversary, and let $\ket{\mathcal{A}^{\mathsf{PR}}}$ be the state of the adversary with access to $\mathsf{PR}$.  We can write this state as follows:
\begin{equation*}
    \ket{\mathcal{A}^{\mathsf{PR}}} = \prod_{i = 1}^{t} \left(\mathsf{PR}_{\reg{AXY}} \cdot  A_{i}\right) \ket{0}_{\reg{A}} \otimes \ket{\{\}}_{\reg{XY}}\,.
\end{equation*}
Expanding the definition of the path recording oracle, we get the following state:
\begin{equation*}
    \ket{\mathcal{A}^{\mathsf{PR}}} = \sum_{\substack{\Vec{x} \in [N]^t\\ \Vec{y} \in [N]^t_{\mathrm{dist}}}} \prod_{i = 1}^{t} \left(\ket{y_i}\!\bra{x_i} \cdot A_i\right)\ket{0}_{\reg{A}} \otimes \ket{\{(x_i, y_i)\}_{i \in [t]}}_{\reg{XY}}\,.
\end{equation*}
Now consider another oracle, where instead of only being given access to the path recording oracle, the adversary is given access to an oracle that first applies a unitary $U = \sum_{x, y \in [N]} \alpha_{xy} \ket{y}\!\bra{x}$, and then applies the path recording oracle.
\begin{align*}
    \ket{\mathcal{A}^{\mathsf{PR} \cdot U}} &= \sum_{\substack{\Vec{x} \in [N]^t\\ \Vec{y} \in [N]^t_{\mathrm{dist}}}} \prod_{i = 1}^{t} \left(\ket{y_i}\!\bra{x_i} \cdot U \cdot A_i\right)\ket{0}_{\reg{A}} \otimes \ket{\{(x_i, y_i)\}_{i \in [t]}}_{\reg{XY}}\\
    &= \sum_{\substack{\Vec{x} \in [N]^t\\ \Vec{y} \in [N]^t_{\mathrm{dist}}}} \prod_{i = 1}^{t} \left(\sum_{z_i} \alpha_{z_i, x_i}\ket{y_i}\!\bra{z_i} \cdot A_i\right)\ket{0}_{\reg{A}} \otimes \ket{\{(x_i, y_i)\}_{i \in [t]}}_{\reg{XY}}\\
    &= \sum_{\substack{\Vec{x}, \Vec{z} \in [N]^t\\ \Vec{y} \in [N]^t_{\mathrm{dist}}}} \prod_{i = 1}^{t} \left(\ket{y_i}\!\bra{z_i} \cdot A_i\right)\ket{0}_{\reg{A}} \otimes \alpha_{z_i, x_i}\ket{\{(x_i, y_i)\}_{i \in [t]}}_{\reg{XY}}\\
    &= \sum_{\substack{\Vec{x}, \Vec{z} \in [N]^t\\ \Vec{y} \in [N]^t_{\mathrm{dist}}}} \prod_{i = 1}^{t} \left(\ket{y_i}\!\bra{x_i} \cdot A_i\right)\ket{0}_{\reg{A}} \otimes \alpha_{x_i, z_i}\ket{\{(z_i, y_i)\}_{i \in [t]}}_{\reg{XY}}\\
    &= \sum_{\substack{\Vec{x} \in [N]^t\\ \Vec{y} \in [N]^t_{\mathrm{dist}}}} \prod_{i = 1}^{t} \left(\ket{y_i}\!\bra{x_i} \cdot A_i\right)\ket{0}_{\reg{A}} \otimes U_{\reg{X}}^{\otimes t}\ket{\{(x_i, y_i)\}_{i \in [t]}}_{\reg{XY}}\,.
\end{align*}
Here the second line is the result of expanding $U$, the third line is the result of aggregating all of the sums and moving the coefficient $\alpha_{z_i, x_i}$ to the purifying register, and the next lines the result of re-labelling $x_i \leftrightarrow z_i$ and subsequently applying the definition of $U$ again.  Thus, the state in the $\reg{A}$ register (after tracing out $\reg{XY}$) remains un-changed after applying $U$.  Taking $U$ to be sampled from the Haar measure, we see that the state of an adversary with access to the path recording oracle is identical to their state if they first applied a Haar random unitary and then had the path recording oracle act, which is itself indistinguishable from a Haar random unitary (without the path recording). 

\subsection{(Unbounded) PRUs with Two Queries}
\label{sec:tech:pru}

The construction of unbounded-query pseudorandom unitaries in the iQHROM (with $U$ being the Haar random oracle) is the following
\begin{equation*}
    \mathsf{PRU}_{k}^{U} = U (X^{k} \otimes \id) U\,.
\end{equation*}
To show that this is a pseudorandom unitary, our goal is to show that $(U, \mathsf{PRU}_k^{U})$ is computationally indistinguishable from $(U, V)$, where $V$ is sampled independently from the Haar measure, to an adversary that makes a polynomial number of queries.  At a high level, we want to show that (in the purifying register), $X^{k}$ allows the path recording oracle to dis-entangle calls to $U$ and $\mathsf{PRU}_k^{U}$, which will allow the oracle (except with negligible probability) to record the calls to the pseudorandom unitary in a separate register than calls to $U$, resembling an entirely distinct path recording oracle. 

In order to prove this, we can extend the path recording oracle to the case when an adversary $\mathcal{A}$ gets access to two Haar random unitaries $U$ and $V$ and makes $t$ queries.  In this case, we can replace both Haar random unitaries with \emph{distinct} path recording oracles $\mathsf{PR}_1$ and $\mathsf{PR}_2$, each with their own purifying register, $\reg{X}_1\reg{Y}_1$, and $\reg{X}_2\reg{Y}_2$.  If we say $\mathcal{A}^{\mathsf{PR}_1, \mathsf{PR}_2}$ makes alternating queries to each of their oracles (with unitaries $A_i$ and $B_i$ before the respective oracle calls), we can express their state as follows:
\begin{equation*}
    \ket{\mathcal{A}^{\mathsf{PR}_1, \mathsf{PR}_2}} = \sum_{\substack{\Vec{x}, \Vec{a} \in [N]^t\\ \Vec{y}, \Vec{b} \in [N]^t_{\mathrm{dist}}}} \prod_{i = 1}^{t} \left(\ket{b_i}\!\bra{a_i} \cdot B_{i}\ket{y_i}\!\bra{x_i} \cdot A_{i} \right)\ket{0}_{\reg{A}} \otimes \ket{\{(x_i, y_i)\}_{i \in [t]}}_{\reg{X}_1\reg{Y}_1}\ket{\{(a_i, b_i)\}_{i \in [t]}}_{\reg{X}_2\reg{Y}_2}\,.
\end{equation*}
Similarly, we can write the state of the $t$-query adversary after querying $U$ and $\mathsf{PRU}_{k}^{U}$ for a uniformly random $k$ using the path recording framework as follows:
\begin{align*}
    &\ket{\mathcal{A}^{\mathsf{PR}, \mathsf{PRU}(\mathsf{PR})}}\\
    &\hspace{5mm}=\sum_{k \in \{0, 1\}^{\lambda}}\sum_{\substack{\Vec{x}, \Vec{a}, \Vec{c} \in [N]^t\\ (\Vec{y}, \Vec{b}, \Vec{d}) \in [N]^{3t}_{\mathrm{dist}}}} \prod_{i = 1}^{t} \left(\ket{d_i} \!\braket{c_i}{b_i \oplus k}\!\bra{a_i} \cdot B_{i} \ket{y_i}\!\bra{x_i} \cdot A_{i} \right)\ket{0}_{\reg{A}} \otimes \ket{\{(x_i, y_i), (a_i, b_i), (c_i, d_i)\}_{i \in [t]}}_{\reg{X}\reg{Y}} \ket{k}_{\reg{K}}\\
    &\hspace{5mm}=
    \sum_{k \in \{0, 1\}^{\lambda}}\sum_{\substack{\Vec{x}, \Vec{a}, \in [N]^t\\ (\Vec{y}, \Vec{b}, \Vec{d}) \in [N]^{3t}_{\mathrm{dist}}}} \prod_{i = 1}^{t} \left(\ket{d_i}\!\bra{a_i} \cdot B_{i} \ket{y_i}\!\bra{x_i} \cdot A_{i} \right)\ket{0}_{\reg{A}} \otimes \ket{\{(x_i, y_i), (a_i, b_i), (b_i \oplus k, d_i)\}_{i \in [t]}}_{\reg{X}\reg{Y}} \ket{k}_{\reg{K}}\,.    
\end{align*}
Here the first line uses the fact that $X^{k} \ket{y} = \ket{y \oplus k}$, together with the expansion of the path recording oracle $\mathsf{PR}$.  The second line removes terms in the sum for which $\braket{c_i}{b_i \oplus k} = 0$.  Note that we also purify the key register $\reg{K}$ so that we can write the adversary's purified view as a pure state.  

To show the closeness between $\ket{\mathcal{A}^{\mathsf{PR}_1, \mathsf{PR}_2}}$ and $\ket{\mathcal{A}^{\mathsf{PR}, \mathsf{PRU}(\mathsf{PR})}}$ after partially tracing out their purified registers, our approach is to define two isometries $\mathsf{Split}:\reg{XYK} \to \reg{X_1Y_1X_2Z_2Y_2K}$ and $\mathsf{Augment}:\reg{X_1Y_1X_2Y_2} \to \reg{X_1Y_1X_2Z_2Y_2K}$, where $\reg{Z}_2$ is an ancilla register, such that 
\begin{equation*}
    \mathsf{Split}\ket{\mathcal{A}^{\mathsf{PR}, \mathsf{PRU}(\mathsf{PR})}} 
    \approx \mathsf{Augment} \ket{\mathcal{A}^{\mathsf{PR}_1, \mathsf{PR}_2}}.
\end{equation*}
For intuition, we have the following \emph{classical} interpretation of path recording oracles. 

\paragraph{Splitting $\ket{\mathcal{A}^{\mathsf{PR}, \mathsf{PRU}(\mathsf{PR})}}$.} Suppose a classical algorithm $\Adversary$ is given oracle access to \emph{randomized} oracles $f:\bit^n \to \bit^n$ and $G^f_k:\bit^n \to \bit^n$. Without loss of generality, we assume that $\Adversary$ alternatively asks $t = \poly(\secp)$ queries to each of $f$ and $G^f_k$.\footnote{That is the adversary makes all odd-indexed queries to $f$ and all even-indexed queries to $G^f_k$.} The distribution of $(f,G^f_k)$ is defined via the following interactive experiment involving a challenger $\cC$ and an adversary $\Adversary$. First, $\cC$ initializes an empty relation $R = \emptyset$ and samples $k \random \bit^n$. For odd $i$'s, upon receiving query $x_i$ to $f$, $\cC$ samples $y_i \random \bit^n \setminus \Im(R)$, adds $(x_i,y_i)$ to $R$, and returns $y_i$ to $\Adversary$. For even $i$'s, upon receiving query $x_i$ to $G^f_k$, $\cC$ samples $z_{\nicefrac{i}{2}} \random \bit^n \setminus \Im(R)$, adds $(x_i,z_{\nicefrac{i}{2}})$ to $R$, samples $y_i \random \bit^n \setminus \Im(R)$, adds $(z_{\nicefrac{i}{2}} \oplus k,y_i)$ to $R$, and returns $y_i$ to $\Adversary$. Note that even if some $x_i$'s are equal to each other, the corresponding $y_i$'s are pairwise distinct. At the end, $\Adversary$ has learned the query-answer pairs $\set{(x_i,y_i)}_{i \in [2t]}$. On the other hand, $R$ is of size $3t$ and becomes 
\begin{equation*}
    \set{(x_1,y_1),(x_2,z_1),(z_1 \oplus k, y_2), \dots, 
    (x_{2t-1},y_{2t-1}),(x_{2t},z_t),(z_t \oplus k, y_{2t})}.
\end{equation*}
We crucially rely on the notion of \emph{correlated pairs} defined as follows. Given $(R,k)$, a pair $((u,v),(u',v')) \in R \times R$ is \emph{$k$-correlated} if $v \oplus u' = k$. Note that, as long as there are only $t$ many $k$-correlated pairs, we can always split odd and even queries. We say that $\Adversary$ wins if there are more than $t$ many $k$-correlated pairs in $R$.

We first show that $\Adversary$'s winning probability is negligible. The idea is to defer the sampling of $k$ until $\Adversary$ is done with querying. Consider the following identically distributed experiment. First, $\cC$ initializes an empty relation $R = \emptyset$. For odd $i$'s, upon receiving query $x_i$ to $f$, $\cC$ samples $y_i \random \bit^n \setminus \Im(R)$, adds $(x_i,y_i)$ to $R$, and returns $y_i$ to $\Adversary$. For even $i$'s, upon receiving query $x_i$ to $G^f_k$, $\cC$ samples $y_i \random \bit^n \setminus \Im(R)$, adds $(\bot,y_i)$ to $R$, and returns $y_i$ to $\Adversary$. At the end, for $i\in[t]$, $\cC$ samples $z_i \random \bit^n \setminus \Im(R)$ and adds $(x_{2i},z_i)$ to $R$. Finally, $\cC$ samples  $k \random \bit^n$ and updates $(\bot,y_{2i}) \mapsto (z_i \oplus k,y_{2i})$ for $i\in[t]$. 

By a careful case analysis, we show that for \emph{any} $(\Vec{x},\Vec{y},\Vec{z})$\footnote{Here $\Vec{x} := (x_1,x_2,\dots,x_{2t})$, $\Vec{y} := (y_1,y_2,\dots,y_{2t})$ and $\Vec{z} := (z_1,x_2,\dots,z_t)$.} in the support, $R_k := \set{(x_1,y_1),(x_2,z_1),(z_1 \oplus k, y_2), \dots, (x_{2t-1},y_{2t-1}),(x_{2t},z_t),(z_t \oplus k, y_{2t})}$ has more than $t$ many $k$-correlated pairs \emph{if and only if} $k$ is of the form $x_i \oplus y_j$ or $x_i \oplus z_j$ for some $i,j$. That is to say, right before $\cC$ samples $k$, there are \emph{always} at most $O(t^2) = \poly(\secp)$ many ``bad'' keys $k$ such that $(R_k,k)$ has more than $t$ many $k$-correlated pairs. Therefore, the winning probability of $\Adversary$ is at most $O(t^2/2^\secp) = \negl(\secp)$. Suppose $\Adversary$ does not win, observe that $R_k$ has exactly $t$ \emph{mutually disjoint} $k$-correlated pairs. With the information of $k$, one can uniquely map $R_k$ into $(R_k^{\mathsf{iso}}, R_k^{\mathsf{cor}})$ where $R_k^{\mathsf{iso}} := \set{(x_{2i-1},y_{2i-1})}_{i\in[t]}$ and $R_k^{\mathsf{cor}} := \set{(x_{2i},z_i,y_{2i})}_{i\in[t]}$.

\paragraph{Augmenting $\ket{\mathcal{A}^{\mathsf{PR}_1, \mathsf{PR}_2}}$.} Suppose a classical algorithm $\Adversary$ is given oracle access to randomized oracles $f_1,f_2:\bit^n \to \bit^n$. Similarly, we assume that $\Adversary$ alternatively asks $t$ queries to each oracle. The distribution of $(f_1,f_2)$ is defined via the following interactive experiment involving a challenger $\cC$ and an adversary $\Adversary$. First, $\cC$ initializes two empty relations $R_1 = R_2 = \emptyset$. For odd $i$'s, upon receiving query $x_i$ to $f_1$, $\cC$ samples $y_i \random \bit^n \setminus \Im(R_1 \cup R_2)$, adds $(x_i,y_i)$ to $R_1$, and returns $y_i$ to $\Adversary$. For even $i$'s, upon receiving query $x_i$ to $f_2$, $\cC$ samples $y_i \random \bit^n \setminus \Im(R_1 \cup R_2)$, adds $(x_i,y_i)$ to $R_2$, and returns $y_i$ to $\Adversary$.\footnote{Notice that by definition, it is always the case that $\Im(R_1) \cap \Im(R_2) = \emptyset$ whereas it is not necessarily true for $\ket{\mathcal{A}^{\mathsf{PR}_1, \mathsf{PR}_2}}$. However, in~\Cref{sec:PR_New} we prove that making such an approximation only introduces negligible error.} At the end, we have $R_1 = \set{(x_{2i-1},y_{2i-1})}_{i\in[t]}$ and $R_2 = \set{(x_{2i},y_{2i})}_{i\in[t]}$.

Now, our goal is to sample $(\set{z_i}_{i\in[t]}, k)$ conditioned on $(R_1,R_2)$ so that the joint distribution of $(\set{(x_{2i-1},y_{2i-1})}_{i\in[q]}$ and $\set{(x_{2i}, z_i, y_{2i})}_{i\in[t]}, k)$ is negligibly close to that of $(R_k^{\mathsf{iso}}, R_k^{\mathsf{cor}}, k)$ in the previous experiment. The following natural way turns out to work: for $i\in[t]$, sample $z_i \random \bit^n \setminus (\Im(R_1 \cup R_2) \cup \bigcup_{j<i} \set{z_j})$; sample a uniformly random $k$ conditioned on $k$ is not of the form $x_i \oplus y_j$ or $x_i \oplus z_j$ for some $i,j$. \\

It turns out that the above classical reasoning offers a method for deriving the quantum proof in~\Cref{sec:pru:shortkeys}, though it involves technical subtleties. To work through these technical subtleties, we introduce a framework of working with relation states in~\Cref{sec:multiset}. We employ the mentioned framework to construct $\mathsf{Split}$ and $\mathsf{Augment}$.

\subsection{(Unbounded) PRUs: One Parallel Query is Insufficient}
\label{sec:tech:negative}
We also show that sequential queries to the Haar random oracle are required to get unbounded query secure pseudorandom unitaries.  Formally, we show that for every PRU construction in the iQHROM that only makes a single parallel calls to $U$, there is a polynomial space adversary that breaks PRU security with $O(\lambda)$ many non-adaptive calls to the PRU and common Haar random unitary.  In order to prove this, we use the quantum OR attack from \cite{chen2024power}.  In particular, we show how, using the ricochet property of EPR pairs, an adversary can prepare the Choi state $\ket{\Phi_{\mathsf{PRU}_k}}$ from many copies of $\ket{\Phi_{U}}$.  

Then, an adversary given oracle access to $\mathcal{O}$ can prepare many copies of $\ket{\Phi_{\mathcal{O}}}$ and perform swap tests with $\ket{\Phi_{\mathsf{PRU}_k}}$ to determine if they have access to one of the pseudorandom unitaries or an independently sampled Haar random unitary.  Since there is a unitary that prepares $\ket{\Phi_{\mathsf{PRU}_k}}$ from many copies of $\ket{\Phi_{U}}$, the adversary can recover and re-use their copies of $\ket{\Phi_{U}}$.  The proof of correctness follows a similar line as \cite{chen2024power}. We also present a tighter analysis using techniques from~\cite{AGL24} when the PRU construction queries the Haar random oracle exactly once.

\subsection{(Bounded) PRUs with One Query: Feasibility} 
\label{sec:tech:bounded_pru}

In the case where the construction only makes a single query to the Haar random oracle, we present an even simpler construction of $o(\lambda / \log(\lambda))$-query secure pseudorandom unitaries.  Specifically, the pseudorandom unitary is the following
$$\mathsf{PRU}_{k}^{U} = (Z^{k}\otimes \id) U\,.$$

\noindent We prove that all adversaries making $o(\lambda / \log(\lambda))$ (potentially adaptive) calls to $\mathsf{PRU}$ and arbitrary polynomial calls to $U$ cannot distinguish between $\mathsf{PRU}$ and an independently sampled Haar random unitary. To show this, begin by writing out the construction using the path recording framework to represent $U$. We will write the state assuming that the adversary $t$ total queries of with $\ell$ queries are made to $\mathsf{PRU}_{k}^{U}$ (on indices $\bfa = \set{a_1,\ldots,a_\ell}$).
\begin{align*}
    \ket{\mathcal{A}^{\mathsf{PR}, \mathsf{PRU}(\mathsf{PR})}} 
    = \begin{split}
    \sum_{k \in \{0, 1\}^{\lambda}}\sum_{\substack{\Vec{x}\in [N]^t\\ \Vec{y}\in [N]^t_{\mathrm{dist}}}} (-1)^{\langle k || 0^{n - \lambda},\bigoplus_{i\in\bfa} y_{i}\rangle}\prod_{i = 1}^{t} &\left( \ketbra{y_i}{x_i} \cdot A_{i} \right)\ket{0}_{\reg{A}} \\& \otimes \ket{\{(x_i, y_i)\}_{i \in [t]}}_{\reg{X}\reg{Y}} \ket{k}_{\reg{K}}\,.
    \end{split}
\end{align*}
Here the phase comes from the fact that $Z^{k}\ket{y} = (-1)^{\langle k,y\rangle} \ket{y}$.  Then we can push both the phase, and sum over keys into the $\reg{K}$ register to the following state
$$\sum_{\substack{\Vec{x}\in [N]^t\\ \Vec{y}\in [N]^t_{\mathrm{dist}}}} \prod_{i = 1}^{t} \left( \ketbra{y_i}{x_i} \cdot A_{i} \right)\ket{0}_{\reg{A}}\ket{\{(x_i, y_i)\}_{i \in [t]}}_{\reg{X}\reg{Y}}\sum_{k \in \{0, 1\}^{\lambda}}(-1)^{\langle k || 0^{n - \lambda},\bigoplus_{i\in\bfa} y_{i}\rangle} \ket{k}_{\reg{K}}$$
Then if we apply an isometry that appends $n - \lambda$ many $0$'s to the key register and then performs an $n$-qubit Hadamard, we get the following.
$$\sum_{\substack{\Vec{x}\in [N]^t\\ \Vec{y}\in [N]^t_{\mathrm{dist}}}} \prod_{i = 1}^{t} \left( \ketbra{y_i}{x_i} \cdot A_{i} \right)\ket{0}_{\reg{A}}\ket{\{(x_i, y_i)\}_{i \in [t]}}_{\reg{X}\reg{Y}} \ket{\bigoplus_{i\in\bfa} y_{i}}_{\reg{K}}$$
Using the idea of $\ell$-fold collision-freeness from \cite{AGL24}, we are able to show that if $\Vec{y}$ is an $\ell$-fold collision-free set, the XOR of all the $\set{y_i}_{i\in\bfa}$ will suffice to determine all $\set{y_i}_{i\in\bfa}$ from $\Vec{y}$. We show that as long as $\ell = o(\secp / \log(\secp))$, the weight on $\ell$-fold collision-free $\Vec{y}$'s is overwhelming.
Thus, there is an isometry that for most $\Vec{y}$'s identifies $\set{y_i}_{i\in\bfa}$ from the XOR value in the key register, and extracts the elements of the set into a new relation state. Hence, outputting a state close to the following
$$\sum_{\substack{\Vec{x}\in [N]^t\\ \Vec{y}\in [N]^t_{\mathrm{dist}}}} \prod_{i = 1}^{t} \left( \ketbra{y_i}{x_i} \cdot A_{i} \right)\ket{0}_{\reg{A}}\ket{\{(x_i, y_i)\}_{i \in [t]\setminus\bfa}}\ket{\{(x_i, y_i)\}_{i \in\bfa}}_{\reg{X}\reg{Y}}.$$
This is exactly the state that the adversary would have after querying two independent Haar random unitaries, as desired.

\subsection{(Unbounded) Pseudorandom States with One Query}
\label{sec:tech:prs}
We also present an extremely simple construction of pseudorandom states in the iQHROM.  The pseudorandom state for key $k$ of size $\secp$ is simply
\begin{equation*}
    \ket{\psi_k} = U\ket{k|| 0^n}\,.
\end{equation*}
In order to show that this is a pseudorandom state even against adversaries who can query $U$, we write out the state of the adversary who recieves $t$ copies of the pseudorandom state as input. The state will be proportional to
\begin{equation*}
    \sum_{k = 0}^{2^{\lambda} - 1}\left(\mathsf{PR}\ket{k||0}\right)^{\otimes t}\ket{k||0^{n}} = \sum_{y_1, \ldots, y_t \in [N]^{t}_{\mathrm{dist}}} \ket{y_1, \ldots, y_t} \otimes \sum_{k = 0}^{2^{\lambda} - 1}\ket{\{(k||0, y_i)\}_{i = 1}^{t}} \ket{k||0}\,.
\end{equation*}
Then when the adversary makes $s$ additional calls to the Haar random unitary, they will have the following state
\begin{equation*}
    \sum_{\vec{x} \in [N]^s}\sum_{(\vec{y}, \vec{z}) \in [N]^{t+s}_{\mathrm{dist}}} \left(\prod_{i = 1}^{s} \ket{z_i}\!\bra{x_i} A_i \right) \ket{y_1, \ldots, y_t} \ket{0} \otimes \sum_{k = 0}^{2^{\lambda} - 1}\ket{\{(k||0, y_i)\}_{i = 1}^{t} \cup \{(x_j, z_j)\}_{j = 1}^{s}} \ket{k||0}
\end{equation*}
Then, we can imagine the state that results from projecting the key register onto keys not in the support of $\vec{x}$.  Since there are at most $t$ keys, this will lead to an error of $O(t/\sqrt{2^{\secp}})$.  
For those keys that are distinct from $\{x_{j}\}$, we can apply an isometry on the purifying register that extracts out the elements of the relation that have input $k || 0$, to get the following state
\begin{equation*}
    \sum_{\vec{x} \in [N]^s}\sum_{(\vec{y}, \vec{z}) \in [N]^{t+s}_{\mathrm{dist}}} \left(\prod_{i = 1}^{s} \ket{z_i}\!\bra{x_i} A_i \right) \ket{y_1, \ldots, y_t} \ket{0} \otimes \ket{\{(0, y_i)\}_{i = 1}^{t}} \ket{\{(x_j, z_j)\}_{j = 1}^{s}}\,.
\end{equation*}
Notice that for the Haar random case, $y$ and $z$ are sampled independently instead of being sampled to be distinct from each other.  However, the probability that uniformly random $\vec{z}$ overlaps with $\vec{y}$ is on the order of $(t+s)^2 / 2^{n+\lambda}$.
Thus, the two states are close in trace distance, and our original construction is a pseudorandom state.
\section{Preliminaries} \label{sec:prelim}

We denote the security parameter by $\secp$. We assume that the reader is familiar with fundamentals of quantum computing, otherwise readers can refer to \cite{nielsen_chuang_2010}.

\newcommand{\Symgp}{\mathsf{Sym}}
\subsection{Notation}

\paragraph{Indexing and sets} We use the notation $[n]$ to refer to the set $\{1, \ldots, n\}$.  For a string $x \in \{0, 1\}^{n+m}$, let $x_{[1:n]}$ to denote the first $n$ bits of $x$.  For a finite set $T$, we use the binomial notation $\binom{T}{k}$ to refer to the set of all size-$k$ subsets of $T$.  We also use the notation $x \random T$ to indicate that $x$ is sampled uniformly at random from $T$.  For $N, \ell \in \mathbb{N}$, we let $N^{\downarrow \ell} = \prod_{i = 0}^{\ell-1} (N - i)$. We use $\uplus$ to denote the disjoint union of two sets.

\paragraph{Set products and the symmetric group}
We use $\Symgp_t$ to refer to the symmetric group over $t$ elements (i.e. the group of all permutations of $t$ elements).  Given a set $A$ and $t \in \mathbb{N}$, we use the notation $A^t$ to denote the $t$-fold Cartesian product of $A$, and the notation $A^t_{\mathrm{dist}}$ to denote distinct subspace of $A^t$, i.e. the vectors in $A^t$, $\vec{y} = (y_1, \ldots, y_t)$, such that for all $i \neq j$, $y_i \neq y_j$. We also define $\{\vec{x}\} := \bigcup_{i\in[t]} \{x_i\}$. 

\paragraph{Quantum states and distances}
A register $\reg{R}$ is a named finite-dimensional Hilbert space.  If $\reg{A}$ and $\reg{B}$ are registers, then $\reg{AB}$ denotes the tensor product of the two associated Hilbert spaces.  We denote by $\mathcal{D}(\reg{R})$ the density matrices over register $\reg{R}$.  For $\rho_{\reg{AB}} \in \mathcal{D}(\reg{AB})$, we let $\Tr_\reg{B}(\rho_{\reg{AB}}) \in \mathcal{D}(\reg{A})$ denote the reduced density matrix that results from taking the partial trace over $\reg{B}$.  We denote by $\TD(\rho, \rho') = \frac{1}{2} \norm{\rho - \rho'}_1$ the trace distance between $\rho$ and $\rho'$, where $\norm{X}_1 = \Tr(\sqrt{X^{\dagger} X})$ is the trace norm. For two pure (and possibly subnormalized) states $\ket{\psi}$ and $\ket{\phi}$, we use $\TD\qty(\ket{\psi},\ket{\phi})$ as a shorthand for $\TD\qty(\proj{\psi},\proj{\phi})$. We also say that $A \preceq B$ if $B - A$ is a positive semi-definite matrix.  For a permutation $\sigma \in \Symgp_t$, we let $(S_\sigma)_{\reg{R}_1\ldots\reg{R}_t}$ be the $nt$-qubit unitary that acts on registers $\reg{R}_1, \ldots, \reg{R}_t$ by permuting the registers according to $\sigma$.  That is,
\begin{equation*}
    S_{\sigma} \ket{x_1, \ldots, x_t} = \ket{x_{\sigma(1)}, \ldots, x_{\sigma(t)}}\,.
\end{equation*}
We denote by $\haarstates_n$ the Haar distribution over $n$-qubit states, and $\haarunitaries_n$ the Haar measure over $n$-qubit unitaries (i.e. the unique left and right invariant measure).

\subsection{Cryptographic Primitives}

In this section, we define the quantum cryptographic primitives that we reference in the rest of the paper, beginning with pseudorandom states~\cite{JLS18}.

\begin{definition}[Pseudorandom states]
    \label{def:prs}
    We say that a quantum polynomial-time algorithm $G$ is a \emph{pseudorandom state (PRS) generator} if the following holds:
    \begin{itemize}
        \item (Pure output) For all $\secp$ and $k \in \{0, 1\}^{\secp}$, the algorithm outputs
        \begin{equation*}
            G_{\secp}(k) = \proj{\psi_k}\,,
        \end{equation*}
        for some $n(\secp)$-qubit pure state $\ket{\psi_k}$.
        \item (Pseudorandomness) For all polynomials $t$ and quantum polynomial-time adversaries $\mathcal{A}$, there exists a negligible function $\epsilon$ such that for all $\secp$, 
        \begin{equation*}
            \left|\Pr_{k \leftarrow\{0, 1\}^{\secp}} \left[1 \leftarrow \mathcal{A}_{\secp}(\ket{\psi_{k}}^{\otimes t(\secp)})\right] - \Pr_{\ket{\psi} \leftarrow \Haar_{n(\secp)}} \left[1 \leftarrow \mathcal{A}_{\secp}(\ket{\psi}^{\otimes t(\secp)})\right]\right|\leq \epsilon(\secp)\,.    
        \end{equation*}
    \end{itemize}
    In the $\mathsf{iQHROM}$, both $G_\secp$ and $\mathcal{A}_{\secp}$ have oracle access to a family of unitaries $\{U_{\secp}\}_{\secp \in \mathbb{N}}$ sampled from the Haar measure on $\secp$ qubits.
\end{definition}

Pseudorandom state generators can be generalized into pseudorandom function-like states~\cite{AGQY22}, which are a family of states (indexed by a parameter $w$) that look as if they were all sampled independently from the Haar measure.

\begin{definition}[Pseudorandom function-like states]
    \label{def:prfs}
    A quantum polynomial-time algorithm $G$ is an {\em adaptively secure pseudorandom function-like state (APRFS)} generator if for all quantum polynomial-time adversaries $\mathcal{A}$, there exists a negligible function $\epsilon$ such that for all $\secp$,
     \[
    \left|\Pr_{k \leftarrow \{0,1\}^{\secparam} }\left[1\gets \Adversary_\lambda^{{\cal O}_{\sf PRFS}(k,\cdot)}\right] - \Pr_{{\cal O}_{\sf Haar}}\left[1\gets \Adversary_\lambda^{{\cal O}_{\sf Haar}(\cdot)}\right]\right| \le \eps(\lambda),
  \]
  where:
  \begin{itemize}
      \item ${\cal O}_{\sf PRFS}(k,\cdot)$, 
      on input $w \in \{0,1\}^{m(\secparam)}$, outputs $G_{\secparam}(k,w)$.
      \item ${\cal O}_{\sf Haar}(\cdot)$, 
      on input $w \in \{0,1\}^{m(\secparam)}$, outputs $\ket{\vartheta_w}$, where, for every $y \in \{0,1\}^{m(\secparam)}$,  $\ket{\vartheta_y} \leftarrow \Haar_{n(\lambda)}$. 
  \end{itemize}
    Moreover, the adversary $A_{\secparam}$ has \emph{classical} access to ${\cal O}_{\sf PRFS}(k,\cdot)$ and ${\cal O}_{\sf Haar}(\cdot)$. That is, we can assume without loss of generality that any query made to either oracle is measured in the computational basis.\footnote{In~\cite{AGQY22}, the authors further study a stronger security notation called \emph{quantum-accessible adaptively secure pseudorandom function-like states (QAPRFS)}, where the adversary has \emph{superposition} oracle access to ${\cal O}_{\sf PRFS}(k,\cdot)$ and ${\cal O}_{\sf Haar}(\cdot)$.}

    \par In the $\mathsf{iQHROM}$, both $G_\secp$ and $\mathcal{A}_{\secp}$ have oracle access to a family of unitaries $\{U_{\secp}\}_{\secp \in \mathbb{N}}$ sampled from the Haar measure on $\secp$ qubits.

    \par We say that $G$ is a $(
    \secp, m(\secp),n(\secp))$-APRFS generator to succinctly indicate that its input length is $m(\secp)$ and its output length is $n(\secp)$.
\end{definition}

Pseudorandom unitaries~\cite{JLS18} are the quantum equivalent of a pseudorandom function, in that an adversary can not distinguish the PRU from a truly Haar random unitary.

\begin{definition}[Pseudorandom unitaries]
    \label{def:pru}
    We say that a quantum polynomial-time algorithm $G$ is a pseudorandom unitary if for all quantum polynomial-time adversaries $\mathcal{A}$, there exists a negligible function $\epsilon$ such that for all $\secp$,
    \begin{equation*}
        \left|\mathop{\mathrm{Pr}}_{k \gets \{0, 1\}^{\secp}} \left[1 \gets \mathcal{A}_{\lambda}^{G_\secp(k)} \right]- \mathop{\mathrm{Pr}}_{\mathcal{U} \gets \mu_{n(\secp)}} \left[1 \gets \mathcal{A}_{\lambda}^{\mathcal{U}} \right]\right|  \leq \epsilon(\secp)\,.
    \end{equation*} 
    In the $\mathsf{iQHROM}$, both $G_\secp$ and $\mathcal{A}_{\secp}$ have oracle access to an additional family of unitaries $\{U_{\secp}\}_{\secp \in \mathbb{N}}$ sampled from the Haar measure on $\secp$ qubits.
\end{definition}

\subsection{Useful Lemmas}
Here we present useful quantum lemmas that should be familiar to a reader well versed in quantum computation.

\begin{lemma}[Gentle operator lemma, a special case of~{\cite[Lemma 9.4.2 \& Exercise 9.4.1]{WildeBook}}]   \label{lem:norm:proj}
Let $\ket{\psi}$ be a subnormalized state and $\Pi$ a projector. Then 
$\TD( \ket{\psi}, \Pi\ket{\psi} )
\leq
\sqrt{ 1 - \norm{\Pi\ket{\psi}}^2 }\,.$
\end{lemma}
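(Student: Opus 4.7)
The plan is to reduce the trace-distance statement to a norm-on-vectors statement by exploiting that both $\proj{\psi}$ and $\proj{\Pi\psi}$ are rank-one operators (built from subnormalized vectors of norm at most $1$). First I would write the telescoping identity
\begin{equation*}
\proj{\psi} - \proj{\Pi\psi}
= \ket{\psi}\bigl(\bra{\psi}-\bra{\psi}\Pi\bigr) + \bigl(\ket{\psi}-\Pi\ket{\psi}\bigr)\bra{\psi}\Pi,
\end{equation*}
which separates the difference into two rank-one pieces whose ``short'' side is exactly $(I-\Pi)\ket{\psi}$. I would then apply the triangle inequality for the trace norm, together with the submultiplicative bound $\|\ket{u}\bra{v}\|_1 = \|\ket{u}\|\cdot\|\ket{v}\|$ for rank-one operators, to obtain
\begin{equation*}
\bigl\|\proj{\psi}-\proj{\Pi\psi}\bigr\|_1 \leq \bigl(\|\ket{\psi}\| + \|\Pi\ket{\psi}\|\bigr)\cdot\|(I-\Pi)\ket{\psi}\|.
\end{equation*}

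Next I would invoke the two crucial quantitative facts. Since $\ket{\psi}$ is subnormalized, $\|\ket{\psi}\|\leq 1$, and since $\Pi$ is a projector, $\|\Pi\ket{\psi}\|\leq\|\ket{\psi}\|\leq 1$, so the prefactor is at most $2$. For the ``short'' factor, the Pythagorean decomposition $\ket{\psi} = \Pi\ket{\psi} + (I-\Pi)\ket{\psi}$ gives $\|(I-\Pi)\ket{\psi}\|^2 = \|\ket{\psi}\|^2 - \|\Pi\ket{\psi}\|^2 \leq 1 - \|\Pi\ket{\psi}\|^2$. Combining these yields
\begin{equation*}
\TD(\ket{\psi},\Pi\ket{\psi}) = \tfrac{1}{2}\bigl\|\proj{\psi}-\proj{\Pi\psi}\bigr\|_1 \leq \sqrt{1 - \|\Pi\ket{\psi}\|^2},
\end{equation*}
which is exactly the claimed bound.

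I do not anticipate a real obstacle here: both ingredients (the telescoping identity and the rank-one trace-norm identity) are standard, and the bound $\|\ket{\psi}\|\leq 1$ is built into the hypothesis that $\ket{\psi}$ is subnormalized. The only point worth handling carefully is keeping track of which vectors are subnormalized, so that one does not silently use $\|\ket{\psi}\|=1$; but the argument above works verbatim in the subnormalized regime because every vector norm appearing is bounded by $1$.
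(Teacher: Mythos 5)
Your proof is correct, and every step checks out: the telescoping identity is valid, the identity $\norm{\ket{u}\bra{v}}_1 = \norm{\ket{u}}\cdot\norm{\ket{v}}$ for rank-one operators is exactly right (it is an equality, not merely a submultiplicative bound, but that only helps you), and the Pythagorean step $\norm{(I-\Pi)\ket{\psi}}^2 = \norm{\ket{\psi}}^2 - \norm{\Pi\ket{\psi}}^2 \leq 1 - \norm{\Pi\ket{\psi}}^2$ is where the subnormalization hypothesis correctly enters. The paper itself offers no proof of this lemma---it is imported verbatim as a special case of Wilde's gentle operator lemma---so your self-contained derivation is a genuine addition rather than a rederivation of the paper's argument. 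For comparison, one can also get the bound by an exact two-dimensional computation: writing $\ket{\psi} = \Pi\ket{\psi} + (I-\Pi)\ket{\psi}$ with $a = \norm{\Pi\ket{\psi}}$ and $b = \norm{(I-\Pi)\ket{\psi}}$, the operator $\proj{\psi} - \proj{\Pi\psi}$ is supported on a two-dimensional subspace and its trace norm equals $b\sqrt{b^2 + 4a^2}$, which under $a^2 + b^2 \leq 1$ gives $\TD(\ket{\psi},\Pi\ket{\psi}) \leq b \leq \sqrt{1-a^2}$; this recovers the same final bound and shows your triangle-inequality route loses nothing in the end. The only cosmetic remark is that your telescoping decomposition is not symmetric between the two terms, but since both carry the common factor $\norm{(I-\Pi)\ket{\psi}}$ and the remaining norms are each at most $1$, the asymmetry is harmless.
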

\ifllncs
The proof of the following lemmas can be found in~\Cref{app:prelim}.
\fi
\begin{lemma}   \label{lem:norm:cs}
    Let $\ket{\phi}$ be a state and $\ket{\psi}$ a subnormalized state, $\|\ket{\psi}\| \geq |\braket{\psi}{\phi}|$.
\end{lemma}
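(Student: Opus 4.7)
The plan is to apply the Cauchy--Schwarz inequality directly. Recall that for any two vectors $\ket{u}, \ket{v}$ in a Hilbert space, $|\braket{u}{v}| \leq \|\ket{u}\| \cdot \|\ket{v}\|$. I would first note that since $\ket{\phi}$ is (by the convention of the paper) a normalized pure state, we have $\|\ket{\phi}\| = 1$. Substituting $\ket{u} = \ket{\psi}$ and $\ket{v} = \ket{\phi}$ into the Cauchy--Schwarz inequality then yields $|\braket{\psi}{\phi}| \leq \|\ket{\psi}\| \cdot \|\ket{\phi}\| = \|\ket{\psi}\|$, which is precisely the claim.

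There is essentially no obstacle here: the statement is a one-line corollary of Cauchy--Schwarz, and the only subtlety worth flagging in the write-up is the distinction between ``state'' (normalized) and ``subnormalized state'' ($\|\ket{\psi}\| \leq 1$), since the lemma leverages the normalization of $\ket{\phi}$ to drop the $\|\ket{\phi}\|$ factor on the right-hand side. I would keep the proof to a single sentence, something along the lines of: \emph{By Cauchy--Schwarz, $|\braket{\psi}{\phi}| \leq \|\ket{\psi}\|\cdot\|\ket{\phi}\| = \|\ket{\psi}\|$, where the final equality uses that $\ket{\phi}$ is a normalized state.}
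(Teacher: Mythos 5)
Your proposal is correct and matches the paper's own proof exactly: both apply the Cauchy--Schwarz inequality and then use that $\ket{\phi}$ is normalized so that $\|\ket{\phi}\| = 1$. Nothing further is needed.
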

\ifllncs
\else
\begin{proof}
    By Cauchy-Schwarz inequality, $|\braket{\psi}{\phi}|^2 \leq \|\ket{\psi}\|^2\|\ket{\phi}\|^2$. Since $\|\ket{\phi}\|=1$, we have $|\braket{\psi}{\phi}| \leq \|\ket{\psi}\|$.
\end{proof}
\fi

\begin{lemma}   \label{lem:norm:sub-state}
    Let $\set{\ket{i}}_{i\in [N]}$ be some set of orthonormal vectors. Let $\ket{\psi} = \sum_{i} \ket{i}\ket{\psi_i}$ be a normalized state. Let $\set{\alpha_i}_{i\in [N]}$, be a set of non-negative real numbers with $\alpha_i\geq \beta$. Define the vector $\ket{\phi} := \sum_{i} \alpha_i\ket{i}\ket{\psi_i}$. Then $\braket{\psi}{\phi} \geq \beta$.
\end{lemma}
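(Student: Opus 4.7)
The plan is to expand the inner product $\braket{\psi}{\phi}$ directly using the orthonormality of $\set{\ket{i}}_{i\in[N]}$ and then apply the lower bound $\alpha_i \geq \beta$ together with the normalization of $\ket{\psi}$. Concretely, first I would compute
\begin{equation*}
    \braket{\psi}{\phi} = \sum_{i,j} \alpha_j \braket{i}{j} \braket{\psi_i}{\psi_j} = \sum_{i} \alpha_i \braket{\psi_i}{\psi_i} = \sum_i \alpha_i \|\ket{\psi_i}\|^2,
\end{equation*}
where the middle equality uses $\braket{i}{j} = \delta_{ij}$.

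Next, since each $\alpha_i$ is a nonnegative real number with $\alpha_i \geq \beta$, and each $\|\ket{\psi_i}\|^2 \geq 0$, I can lower bound the sum term by term to get
\begin{equation*}
    \sum_i \alpha_i \|\ket{\psi_i}\|^2 \geq \beta \sum_i \|\ket{\psi_i}\|^2.
\end{equation*}
Finally, using that $\ket{\psi}$ is normalized and the $\ket{i}$ are orthonormal, $\sum_i \|\ket{\psi_i}\|^2 = \braket{\psi}{\psi} = 1$, so the right-hand side equals $\beta$, completing the proof.

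There is no real obstacle here; the only minor subtlety is that the statement implicitly assumes $\beta$ is nonnegative (otherwise the bound would be vacuous or incorrect for the term-by-term lower bound), which is consistent with the hypothesis that the $\alpha_i$ are nonnegative and $\alpha_i \geq \beta$. I would just note this implicitly by keeping the chain of inequalities in the order above.
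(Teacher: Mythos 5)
Your proof is correct and follows essentially the same route as the paper: expand $\braket{\psi}{\phi}=\sum_i\alpha_i\braket{\psi_i}{\psi_i}$ via orthonormality, lower-bound term by term using $\alpha_i\geq\beta$ and $\braket{\psi_i}{\psi_i}\geq 0$, and conclude with $\sum_i\braket{\psi_i}{\psi_i}=1$. (Your aside about needing $\beta\geq 0$ is not actually required, since multiplying $\alpha_i\geq\beta$ by the nonnegative quantity $\|\ket{\psi_i}\|^2$ is valid for any real $\beta$.)
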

\ifllncs
\else
\begin{proof}
    Notice that $\braket{\psi}{\phi} = \sum_i \alpha_i\braket{\psi_i}{\psi_i}$. Since $\alpha_i\geq \beta$ and $\braket{\psi_i}{\psi_i} \geq 0$, we have $\braket{\psi}{\phi} \geq \beta \cdot \sum_i\braket{\psi_i}{\psi_i}$. Moreover, since $\ket{\psi}$ is normalized, we have $\braket{\psi}{\psi} = \sum_i \braket{\psi_i}{\psi_i} = 1$. Hence, $\braket{\psi}{\phi} \geq \beta$.
\end{proof}
\fi

\begin{lemma}[Ricochet property]
    Let $\ket{\Omega}$ be an EPR pair on registers $\reg{AB}$ and $U$ be a unitary acting on $\reg{A}$, then the following holds:
    \begin{equation*}
        (U \otimes \id)\ket{\Omega} = (\id \otimes U^{\intercal}) \ket{\Omega}\,.
    \end{equation*}
\end{lemma}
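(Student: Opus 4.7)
The plan is to verify the identity by direct computation in the computational basis. Writing $\ket{\Omega} = \frac{1}{\sqrt{d}}\sum_{i=1}^{d}\ket{i}_{\reg{A}}\ket{i}_{\reg{B}}$, where $d$ is the dimension of $\reg{A}$ (and of $\reg{B}$), and expanding $U = \sum_{j,k} U_{jk}\ketbra{j}{k}$, I would compute both sides separately and match coefficients.

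First I would compute the left-hand side:
\begin{equation*}
(U \otimes \id)\ket{\Omega} = \frac{1}{\sqrt{d}} \sum_{i} U\ket{i}_{\reg{A}} \otimes \ket{i}_{\reg{B}} = \frac{1}{\sqrt{d}} \sum_{i,j} U_{ji}\, \ket{j}_{\reg{A}} \otimes \ket{i}_{\reg{B}}\,.
\end{equation*}
Then I would compute the right-hand side, using that $(U^{\intercal})_{ij} = U_{ji}$:
\begin{equation*}
(\id \otimes U^{\intercal})\ket{\Omega} = \frac{1}{\sqrt{d}} \sum_{i} \ket{i}_{\reg{A}} \otimes U^{\intercal}\ket{i}_{\reg{B}} = \frac{1}{\sqrt{d}} \sum_{i,j} U_{ij}\, \ket{i}_{\reg{A}} \otimes \ket{j}_{\reg{B}}\,.
\end{equation*}
Finally, swapping the dummy indices $i \leftrightarrow j$ in the right-hand side yields exactly the expression for the left-hand side, completing the proof.

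There is essentially no obstacle here: the entire argument is a two-line index manipulation that follows immediately from the definition of the matrix transpose and the maximally entangled state. The only subtlety worth flagging is the implicit choice of computational basis used to define $U^{\intercal}$; since the statement of the lemma fixes a basis (namely the one in which $\ket{\Omega}$ is written), this is unambiguous.
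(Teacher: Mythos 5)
Your computation is correct: the index manipulation is exactly the standard argument for the ricochet (transpose/mirror) property, and your remark about the basis-dependence of $U^{\intercal}$ being fixed by the basis in which $\ket{\Omega}$ is written is the right subtlety to flag. The paper itself states this lemma without proof as a standard fact, so there is no paper proof to compare against; your two-line verification fills that in correctly.
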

\section{Ma-Huang's Path-Recording Framework}   \label{sec:PR_framework}
We first recall the path-recording framework by Ma and Huang~\cite{MH24}. Most of the text is copied verbatim from~\cite{MH24}, but is contained here for the sake of completeness. 

\subsection{Oracle Adversary} 
We present the following definition of an oracle adversary. 

\begin{definition}[Oracle Adversary]
An oracle adversary $\Adversary$ is a quantum algorithm that makes queries to an oracle ${\cal O}$ that acts on the first $n$ qubits of the adversary's space, which we call the $\reg{A}$ register. The adversary also has an $m$-qubit ancillary space, which we call the $\reg{B}$ register. A $t$-query adversary $\Adversary$ specified by a $t$-tuple of unitaries $(A_{\reg{A}\reg{B}}^{(1)},\ldots,A_{\reg{A}\reg{B}}^{(t)})$.
\end{definition}

\begin{definition}[Oracle Adversary's view after $t$ queries]
Given a $t$-query adversary $\Adversary$ specified by a $t$-tuple of unitaries $(A_{\reg{A}\reg{B}}^{(1)},\ldots,A_{\reg{A}\reg{B}}^{(t)})$, we define the adversary's view after $t$ queries as: 
$$\ket{\Adversary_t^{{\cal  O}}}_{\reg{A} \reg{B}} = \prod_{i=1}^t \left( {\cal O}_{\reg{A}} \cdot A_{\reg{A} \reg{B}}^{(i)} \right) \ket{0}_{\reg{A}\reg{B}}  $$
Here, ${\cal O}$ represents the $n$-qubit oracle, and $A_{\reg{A} \reg{B}}^{(i)}$ is the unitary operation applied by the adversary between the $(i-1)^{th}$ and $i^{th}$ oracle queries. For an arbitrary $t$, we denote as $\ket{{\cal A}^{{\cal O}}}_{\reg{A}\reg{B}}$. 
\end{definition} 

\paragraph{Generalizations.} We consider generalized oracle adversaries in this work, where the adversary has access to two or more oracles. We also consider a restricted adversary who only makes selective calls. 

\begin{definition}[Multi-Oracle Adversary] 
An oracle adversary $\Adversary$ is a quantum algorithm that makes queries to $\ell$ oracles ${\cal O}_1,\ldots,{\cal O}_{\ell}$ each of which acts on the first $n$ qubits of the adversary's space, which we call the $\reg{A}$ register. The adversary also has an $m$-qubit ancillary space, which we call the $\reg{B}$ register. A $t$-query adversary $\Adversary$ specified by a $t$-tuple of unitaries $(A_{\reg{A}\reg{B}}^{(1)},\ldots,A_{\reg{A}\reg{B}}^{(t)})$. Here, $t$ denotes the total number of queries to all of the oracles ${\cal O}_1,\ldots,{\cal O}_{\ell}$. 
\end{definition} 

\begin{definition}[Multi-Oracle Adversary's view after $t$ queries]
Given a $t$-query multi-oracle adversary $\Adversary$ specified by a $t$-tuple of unitaries $(A_{\reg{A}\reg{B}}^{(1)},\ldots,A_{\reg{A}\reg{B}}^{(t)})$, we define the adversary's view after $t$ queries as: 
$$\ket{\Adversary_t^{{\cal  O}_1,\ldots,{\cal O}_t}}_{\reg{A} \reg{B}} = \prod_{i=1}^t \left( {\cal O}_{\reg{A}}^{(i)} \cdot A_{\reg{A} \reg{B}}^{(i)} \right) \ket{0}_{\reg{A}\reg{B}}  $$
Here, ${\cal O}^{(i)} \in \{{\cal O}_1,\ldots,{\cal O}_{\ell}\}$ represents one of the $\ell$ $n$-qubit oracles, and $A_{\reg{A} \reg{B}}^{(i)}$ is the unitary operation applied by the adversary between the $(i-1)^{th}$ and $i^{th}$ oracle queries. For an arbitrary $t$, we denote as $\ket{{\cal A}^{{\cal O}_1,\ldots,{\cal O}_t}}_{\reg{A}\reg{B}}$. 
\par When the number of queries to each of the oracles needs to be made explicit, we use the notation $(t_1,\ldots,t_{\ell})$-query multi-oracle adversary. In this case, $\Adversary$ makes $t_i$ queries to the oracle ${\cal O}_i$
\end{definition} 

\subsection{Relation States}

\noindent Before we recall the definition of relation states, we first define size-$t$ relations. A size-$t$ relation $R$ is represented by a multiset of $t$ tuples $R=\{(x_1,y_1),\ldots,(x_t,y_t)\}$, where $x_i \in [N],y_i \in [N]$ for all $i \in [t]$. We define ${\rm Im}(R) := \{y_1,\ldots,y_t\}$ and ${\rm Dom}(R) := \set{x_1,\ldots,x_t}$.
\par We define relation states below. 

\begin{definition}[Relation States]
\label{def:relationstates}
For $0 \leq t \leq N$ and a size-$t$ relation $R=\{(x_1,y_1),\ldots,(x_t,y_t)\}$, define the corresponding relation state to be the unit vector: 
$$\ket{R}_{\reg{A}\reg{B}} = \alpha_R \cdot \sum_{\pi \in \Symgp_t} S_{\pi} \ket{x_1,\ldots,x_t} \otimes S_{\pi} \ket{y_1,\ldots,y_t},$$
where: 
$$\alpha_R =  \sqrt{\frac{\prod_{x,y \in [N]}\left( \sum_{i=1}^t \delta_{(x_i,y_i) = (x,y)} \right)!}{t!}}\,.$$
\end{definition}

\begin{definition}[Injective Relation]
Let $t,N \in \mathbb{N}$. A relation $R=\{(x_1,y_1),\ldots,(x_t,y_t)\}$ is an injective relation if $(y_1,\ldots,y_t) \in [N]^{t}_{\dist}$. The set of all injective relations consisting of exactly $t$ pairs is denoted by $\rel^{\inj}_t$. Let $\rel^{\inj} = \cup_{j=0}^N \rel_j^{\inj}$. 
\end{definition}

\noindent For an injective relation $R$, note that $\alpha_R = \frac{1}{\sqrt{t!}}$, where $\alpha_R$ is defined in~\Cref{def:relationstates}.   

\subsection{Path-Recording Isometry ($\mathsf{PR}$)} Consider the following: for some $N \in \mathbb{N}$,
\begin{itemize}
    \item $x$ is an element of $[N]$, 
    \item $R \in \rel^{\inj}$ is an injective relation, over pairs in $[N] \times [N]$, of size $|R| < N$. 
\end{itemize}
The linear map ${\sf PR}$, on registers $\reg{A}$ and $\reg{E}$, is defined as follows: 
$${\sf PR}_{\reg{A}\reg{E}}: \ket{x}_{\reg{A}}\ket{R}_{\reg{E}} \mapsto \frac{1}{\sqrt{N - |R|}} \sum_{\substack{y \in [N],\\ y \notin \mathrm{Im}(R)}} \ket{y}_{\reg{A}} \ket{R \cup \{(x,y)\}}_{\reg{E}}\,.$$

\paragraph{Partial Isometry}~\cite{MH24} showed that ${\sf PR}$ is an isometry on some subspaces. Formally, they prove the following lemma.

\begin{lemma}[Lemma 4.1 of \cite{MH24}]
The path-recording linear map ${\sf PR}$, on the registers $(\reg{A},\reg{E})$, is an isometry on the subspace spanned by the states $\ket{x}_{\reg{A}}\ket{R}_{\reg{E}}$ for $x \in [N]$ and $R \in \rel^{\inj}$ such that $|R| < N$. 
\end{lemma}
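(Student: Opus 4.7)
The plan is to verify that $\mathsf{PR}$ preserves inner products on the stated subspace. First I would observe that the family $\{\ket{x}_{\reg{A}} \ket{R}_{\reg{E}} : x \in [N],\, R \in \rel^{\inj},\, |R| < N\}$ is orthonormal in $\reg{AE}$: orthonormality in the $\reg{A}$ register is immediate, and orthonormality of the injective relation states follows from the definition, since injectivity of $R$ means that permuting the pairwise distinct $y_i$'s produces $t!$ distinct computational-basis strings on the output side (so $\ket{R}$ has unit norm), and two different injective relations $R \neq R'$ give \emph{disjoint} sets of computational-basis contributions to $\ket{R}$ and $\ket{R'}$. Therefore showing $\mathsf{PR}$ is an isometry on the span reduces to showing
$$\bra{x'}\bra{R'} \mathsf{PR}^\dagger \mathsf{PR} \ket{x}\ket{R} \;=\; \delta_{x,x'}\,\delta_{R,R'}$$
for all admissible quadruples $(x,R,x',R')$.

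Next I would expand both sides using the defining formula for $\mathsf{PR}$:
\begin{align*}
\bra{x'}\bra{R'}\mathsf{PR}^\dagger \mathsf{PR} \ket{x}\ket{R}
&= \frac{1}{\sqrt{(N-|R|)(N-|R'|)}} \sum_{\substack{y \notin \Im(R)\\ y' \notin \Im(R')}} \braket{y'}{y}\,\bra{R'\cup\{(x',y')\}}\,\ket{R\cup\{(x,y)\}}.
\end{align*}
The sum collapses to the diagonal $y=y'$, so only $y \notin \Im(R)\cup\Im(R')$ contribute.

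The main substantive step is the combinatorial claim: for any $y \notin \Im(R)\cup\Im(R')$, the augmented injective relations $R\cup\{(x,y)\}$ and $R'\cup\{(x',y)\}$ are equal as multisets if and only if $R=R'$ and $x=x'$. This is where injectivity matters: since $y$ appears in neither $\Im(R)$ nor $\Im(R')$, the value $y$ occurs exactly once in the image of each augmented relation, so the tuple containing $y$ must be $(x,y)$ on the left and $(x',y)$ on the right, forcing $x=x'$; removing that unique tuple from both sides then yields $R=R'$. Combined with orthonormality of injective relation states (from the preliminary step), the inner product of augmented relation states is $\delta_{x,x'}\delta_{R,R'}$.

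Finally, if $(x,R)\neq (x',R')$ the sum vanishes identically, while if $(x,R)=(x',R')$ the sum reduces to $\sum_{y \notin \Im(R)} 1 = N-|R|$, which exactly cancels the normalization to give $1$. This proves the isometry property. I expect the mildly tricky step to be the combinatorial claim above—tracking that relations are multisets and using the constraint $y\notin\Im(R)\cup\Im(R')$ correctly—while everything else is straightforward bookkeeping.
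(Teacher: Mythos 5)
Your proof is correct. Note that the paper itself does not prove this statement---it imports it verbatim as Lemma 4.1 of Ma--Huang---so there is no in-paper argument to compare against; your write-up is essentially the standard proof from the original source. The three ingredients you isolate are exactly the right ones: (i) orthonormality of the injective relation states $\ket{R}$ (which holds because distinct injective relations contribute disjoint sets of computational-basis terms, and distinctness of the $y_i$'s makes the $t!$ permuted terms orthonormal so $\alpha_R = 1/\sqrt{t!}$ is the correct normalization); (ii) the collapse of the double sum to $y = y' \notin \mathrm{Im}(R) \cup \mathrm{Im}(R')$; and (iii) the cancellation-by-uniqueness argument showing $R \cup \{(x,y)\} = R' \cup \{(x',y)\}$ forces $(x,R) = (x',R')$, which is precisely where injectivity and the constraint $y \notin \mathrm{Im}(R) \cup \mathrm{Im}(R')$ are used. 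The final count $\sum_{y \notin \mathrm{Im}(R)} 1 = N - |R|$ cancels the normalization, and the hypothesis $|R| < N$ guarantees this quantity is positive so the map is well defined. Since you verify $\bra{x'}\bra{R'}\mathsf{PR}^{\dagger}\mathsf{PR}\ket{x}\ket{R} = \delta_{x,x'}\delta_{R,R'}$ on an orthonormal basis of the subspace, norm preservation on the whole span follows by linearity, which is the isometry claim.
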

 
\paragraph{Indistinguishability Theorem.} We import the following theorem from~\cite{MH24}. Informally, it states that a $t$-query oracle adversary cannot distinguish a Haar unitary versus a path-recording isometry. 

\begin{theorem}[Theorem 5 of \cite{MH24}]
\label{thm:MH24}
Let $\Adversary$ be a $t$-oracle adversary. Then: 
$$\TD\left( \underset{{{\cal O} \leftarrow \haarunitaries_n}}{\expect} \ketbra{\Adversary_t^{{\cal O}}}{\Adversary_t^{{\cal O}}},\ \Tr_{\reg{E}}\left( \ketbra{\Adversary_t^{{\sf PR}_{\reg{A}\reg{E}}}}{\Adversary_t^{{\sf PR}_{\reg{A}\reg{E}}}}_{\reg{A}\reg{B}\reg{E}}  \right) \right) \leq \frac{2t(t-1)}{N+1} $$
\end{theorem}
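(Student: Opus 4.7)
The goal is to bound the trace distance between the Haar-averaged view and the partial trace of the path-recording view by $\frac{2t(t-1)}{N+1}$. My plan is to couple the two states through a shared purification, identifying where they agree exactly and then bounding the discrepancy by a distinctness (birthday) argument.

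First I would construct a Stinespring-type purification $V_{\reg{AE}'}$ of the Haar random channel after $t$ queries, using Schur--Weyl duality to rewrite
\[
\mathbb{E}_{U\sim\haarunitaries_n} U^{\otimes t}\otimes (U^\dagger)^{\otimes t} = \sum_{\sigma,\tau\in\Symgp_t}\mathrm{Wg}(\sigma\tau^{-1},N)\, S_\sigma\otimes S_\tau,
\]
which yields a closed-form expression for the Haar-averaged density operator in terms of a symmetric-group sum. Dually, I would unroll the $t$ applications of $\sf PR$ inside $\ket{\Adversary_t^{\sf PR}}_{\reg{ABE}}$ to obtain the explicit superposition
\[
\ket{\Adversary_t^{\sf PR}} = \sum_{\vec{x}\in[N]^t,\ \vec{y}\in[N]^t_{\dist}} \tfrac{1}{\sqrt{N^{\downarrow t}}}\,\Bigl(\prod_{i=1}^t \ket{y_i}\!\bra{x_i}\,A^{(i)}\Bigr)\ket{0}_{\reg{AB}}\otimes \ket{R(\vec x,\vec y)}_{\reg E},
\]
exploiting that the normalization $\alpha_R = 1/\sqrt{t!}$ of injective relation states (\Cref{def:relationstates}) is precisely what reassembles the $t!$ permutation branches into a single symmetric sum. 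Since distinct injective relations yield orthogonal relation states, tracing out $\reg E$ collapses the $\vec y$-sum into permutation-equivalence classes, and I would check that the resulting reduced operator on $\reg{AB}$ agrees with the Haar side on the subspace where $\vec y\in[N]^t_{\dist}$.

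The main obstacle, and the source of the quantitative bound, is controlling the complementary ``non-distinct'' contribution on the Haar side: output tuples in $[N]^t\setminus [N]^t_{\dist}$ correspond precisely to collision events that $\sf PR$ cannot produce. A telescoping / birthday argument, conditioning on each successive query and noting that $\sf PR$ draws from an $(N-j)$-element set while the Haar purification draws from an $N$-element one, bounds the $\ell_2$-mass of the ``bad'' component by $\sqrt{\binom{t}{2}/(N+1)}$; the gentle operator lemma (\Cref{lem:norm:proj}) then promotes this to a trace-distance bound of $\frac{2t(t-1)}{N+1}$. The technical subtlety I expect is in matching normalizations between the falling factorial $N^{\downarrow t}$ that is natural for $\sf PR$ and the Weingarten coefficients natural for the Haar average: these coincide only to leading order in $1/N$ on the distinct subspace, and the next-order Weingarten corrections must be absorbed into the same $\binom{t}{2}/(N+1)$ budget so that no residual error survives outside the collision term.
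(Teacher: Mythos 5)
This theorem is imported verbatim from \cite{MH24}; the paper contains no proof of it, so your proposal must be measured against the argument in that work. There, the comparison is never made head-on: one uses the exact right-invariance of $\mathsf{PR}$ on reduced states (the relabeling argument $\mathsf{PR}\approx\mathsf{PR}\cdot\mathcal{U}$ reproduced in the technical overview), the identity $\mathsf{PF}\cdot\mathcal{U}=\mathcal{U}$ from left-invariance of the Haar measure, and the fact that a purified random permutation/phase oracle coincides with $\mathsf{PR}$ on \emph{distinct inputs}; the whole error is the probability that those inputs collide. Your direct Weingarten comparison is a genuinely different route, but it has two gaps I do not see how to close. The pivotal one is the claim that the reduced $\mathsf{PR}$ state ``agrees with the Haar side on the subspace where $\vec y\in[N]^t_{\dist}$.'' This is false already for $t=2$ with a non-adaptive query on $\ket{x_1x_2}$, $x_1\neq x_2$: the Haar average is $\frac{\Pi_{\mathrm{sym}}}{N(N+1)}+\frac{\Pi_{\mathrm{anti}}}{N(N-1)}=\frac{I}{N^2-1}-\frac{S}{N(N^2-1)}$ (with $S$ the swap), while tracing out $\reg{E}$ on the $\mathsf{PR}$ side gives $\frac{1}{N(N-1)}\sum_{y_1\neq y_2}\ketbra{y_1y_2}{y_1y_2}$. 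Compressing the former to the distinct-output subspace yields $\frac{\Pi_{\dist}}{N^2-1}-\frac{S\Pi_{\dist}}{N(N^2-1)}$, which differs from $\frac{\Pi_{\dist}}{N(N-1)}$ both in normalization and by a genuinely off-diagonal swap term. So the discrepancy is not localized to collision events: the Weingarten corrections perturb the state everywhere, and ``absorbing'' them is not an afterthought but the entire content of the theorem. Adaptivity compounds this, since the $t$ uses of $U$ are interleaved with the $A^{(i)}$ and the moment operator $\mathbb{E}[U^{\otimes t}\otimes \bar U^{\otimes t}]$ does not apply without further register gymnastics your sketch does not supply.

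The second gap is quantitative: your final step cannot produce the stated bound. If the bad component has squared $\ell_2$-mass $\binom{t}{2}/(N+1)$, then \Cref{lem:norm:proj} gives a trace distance of $\sqrt{\binom{t}{2}/(N+1)}$, which for $t^2\ll N$ is far larger than $\frac{2t(t-1)}{N+1}$. A pure-state projection argument is structurally incapable of yielding a bound \emph{linear} in $t^2/N$, since for pure states trace distance and infidelity are related by a square root. The linear bound in \cite{MH24} arises from a mixed-state coupling in which the error is literally a collision \emph{probability} on the purifying registers, not an amplitude; to recover it along your route you would need to exhibit such a coupling rather than invoke the gentle operator lemma.
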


\noindent We have the following simple corollary. 

\begin{corollary}
\label{cor:ind:twopr}
Let $\Adversary$ be a $t$-oracle adversary. Then: 
$$\TD\left( \underset{{{\cal O}_1,{\cal O}_2 \leftarrow \haarunitaries_n}}{\expect} \ketbra{\Adversary_t^{{\cal O}_1,{\cal O}_2}}{\Adversary_t^{{\cal O}_1,{\cal O}_2}},\ \Tr_{\reg{E}_1\reg{E}_2}\left( \ketbra{\Adversary_t^{{\sf PR}_{\reg{A}\reg{E}_1},{{\sf PR}_{\reg{A}\reg{E}_2}} }}{\Adversary_t^{{\sf PR}_{\reg{A}\reg{E}_1},{{\sf PR}_{\reg{A}\reg{E}_2}} }}_{\reg{A}\reg{B}\reg{E}_1\reg{E}_2}  \right) \right) \leq \frac{4t(t-1)}{N+1} $$
\end{corollary}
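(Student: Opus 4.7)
The plan is to prove the two-oracle statement by a standard hybrid argument that invokes the single-oracle indistinguishability theorem (\Cref{thm:MH24}) twice. Intuitively, since the two Haar random unitaries are independent and the two path-recording isometries act on disjoint purifying registers $\reg{E}_1$ and $\reg{E}_2$, replacing them one at a time costs only the single-oracle bound $\frac{2t(t-1)}{N+1}$ per step, and the triangle inequality gives the stated $\frac{4t(t-1)}{N+1}$.

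Concretely, I would define an intermediate hybrid in which the first oracle is the path-recording isometry $\mathsf{PR}_{\reg{A}\reg{E}_1}$ while the second remains a Haar random unitary $\mathcal{O}_2 \leftarrow \haarunitaries_n$. Fixing any realization of $\mathcal{O}_2$, the two-oracle adversary $\Adversary$ making $t$ total queries can be viewed as a (at most) $t$-query single-oracle adversary against the first oracle, where the interleaved fixed unitary $\mathcal{O}_2$ and the between-query unitaries $A^{(i)}$ are absorbed into new between-query unitaries. Applying \Cref{thm:MH24} to this repackaged single-oracle adversary, and then averaging over the choice of $\mathcal{O}_2$ (which cannot increase trace distance by convexity), yields
\begin{equation*}
\TD\!\left( \underset{{\cal O}_1,{\cal O}_2 \leftarrow \haarunitaries_n}{\expect} \ketbra{\Adversary_t^{{\cal O}_1,{\cal O}_2}}{\Adversary_t^{{\cal O}_1,{\cal O}_2}},\ \underset{{\cal O}_2 \leftarrow \haarunitaries_n}{\expect} \Tr_{\reg{E}_1}\!\left( \ketbra{\Adversary_t^{\mathsf{PR}_{\reg{A}\reg{E}_1},{\cal O}_2}}{\Adversary_t^{\mathsf{PR}_{\reg{A}\reg{E}_1},{\cal O}_2}} \right) \right) \leq \frac{2t(t-1)}{N+1}.
\end{equation*}

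For the second hop, I would fix the state on registers $(\reg{A},\reg{B},\reg{E}_1)$ produced by the first path-recording oracle and repeat the argument on $\mathcal{O}_2$. Here the subtlety is that $\mathsf{PR}_{\reg{A}\reg{E}_1}$ touches $\reg{E}_1$, which is disjoint from the register $\reg{A}$ on which $\mathcal{O}_2$ (or $\mathsf{PR}_{\reg{A}\reg{E}_2}$) acts, so we can again repackage the interleaved calls to $\mathsf{PR}_{\reg{A}\reg{E}_1}$ together with the $A^{(i)}$'s into new between-query unitaries acting on $(\reg{A},\reg{B},\reg{E}_1)$, treating $(\reg{B},\reg{E}_1)$ as the enlarged ancilla. \Cref{thm:MH24} then applies to give another $\frac{2t(t-1)}{N+1}$. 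Partial tracing out $\reg{E}_1\reg{E}_2$ can only decrease trace distance (data-processing), so we obtain
\begin{equation*}
\TD\!\left( \underset{{\cal O}_2 \leftarrow \haarunitaries_n}{\expect} \Tr_{\reg{E}_1}\!\ketbra{\Adversary_t^{\mathsf{PR}_{\reg{A}\reg{E}_1},{\cal O}_2}}{\Adversary_t^{\mathsf{PR}_{\reg{A}\reg{E}_1},{\cal O}_2}},\ \Tr_{\reg{E}_1\reg{E}_2}\!\ketbra{\Adversary_t^{\mathsf{PR}_{\reg{A}\reg{E}_1},\mathsf{PR}_{\reg{A}\reg{E}_2}}}{\Adversary_t^{\mathsf{PR}_{\reg{A}\reg{E}_1},\mathsf{PR}_{\reg{A}\reg{E}_2}}} \right) \leq \frac{2t(t-1)}{N+1}.
\end{equation*}

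The two bounds combine by the triangle inequality to give the desired $\frac{4t(t-1)}{N+1}$. The only step that requires any genuine care, rather than bookkeeping, is the repackaging in the second hop: one must verify that treating $(\reg{B},\reg{E}_1)$ as the ancilla of a single-oracle adversary against $\mathcal{O}_2$ really does yield a valid $t$-query oracle adversary in the sense of the path-recording framework, i.e., that the unitaries interleaved between queries to $\mathcal{O}_2$ (which now include applications of $\mathsf{PR}_{\reg{A}\reg{E}_1}$) act as unitaries on $\reg{A}\reg{B}\reg{E}_1$ when restricted to the relevant invariant subspace. Because $\mathsf{PR}$ is an isometry on the subspace spanned by $\ket{x}_{\reg{A}}\ket{R}_{\reg{E}_1}$ with $|R|<N$, and the state built up by the adversary never leaves this subspace (assuming $t<N$, which is the regime of interest), this repackaging goes through and the proof is complete.
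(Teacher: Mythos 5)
Your proof is correct and is essentially the argument the paper intends: the corollary is stated without proof as a "simple corollary" of \Cref{thm:MH24}, and the standard route is exactly your two-step hybrid (replace each Haar oracle by its path-recording isometry, absorbing the other oracle's calls into the between-query unitaries, then apply the triangle inequality). You also correctly flag and resolve the one real subtlety — that the interleaved $\mathsf{PR}_{\reg{A}\reg{E}_1}$ calls in the second hop are only partial isometries, but extend to unitaries on the reachable subspace for $t<N$ — so nothing is missing.
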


\section{Path-Recording Framework: New Observations}    \label{sec:PR_New}
We discuss new observations about the path-recording framework in this section. Before that, we state some definitions related to sets whose prefixes are all distinct.

\subsection{Definitions}

\begin{definition}[Strong $\ell$-fold $\secp$-prefix collision-free sets]
\label{def:s_l_fold_collisions}
Let $n,\secp \in \mathbb{N}$ with $\secp\leq n$. Let ${\cal S}$ be a set with elements from $\{0,1\}^{n}$. We say that ${\cal S}$ is {\em strong $\ell$-fold $\secp$-prefix collision-free} if the following holds: for all two $i$-sized subsets $S_1,S_2$, for any $i \leq \ell$, it holds that $\bigoplus_{x\in S_1} x_{[1:\secp]} = \bigoplus_{y\in S_2} y_{[1:\secp]} $ if and only if $S_1=S_2$. We denote $\setsofsets^{\cfree(\ell,\secp)}_{n}$ to be the set of all strong $\ell$-fold $\secp$-prefix collision-free sets. 
\end{definition}

\begin{lemma}\label{lem:cf_set_size}
    Let $S \in \setsofsets^{\cfree(\ell,\secp)}_{n}$, where $\setsofsets^{\cfree(\ell,\secp)}_{n}$ is as defined in~\Cref{def:s_l_fold_collisions}. Define the following: 
    $$\cfreeset_{\ell,\secp}(S) = \set{y\in\bit^{n}| S\cup\set{y}\text{ is strong }\ell\text{-fold }\secp\text{-prefix collision-free}} $$
    then $$|\cfreeset_{\ell,\secp}(S)|\geq 2^{n} - \ell|S|^{2\ell}2^{n-\secp}.$$ 
\end{lemma}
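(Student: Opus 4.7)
The plan is to upper-bound the complement $\{0,1\}^{n} \setminus \cfreeset_{\ell,\secp}(S)$ by enumerating, for each potential ``witness'' of a collision, how many $y$'s that witness can produce.

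First I would fix any $y \in \{0,1\}^n$ with $S \cup \{y\}$ \emph{not} strong $\ell$-fold $\secp$-prefix collision-free, and unpack what this means: there exist $i \leq \ell$ and two distinct size-$i$ subsets $S_1, S_2 \subseteq S \cup \{y\}$ with $\bigoplus_{x\in S_1} x_{[1:\secp]} = \bigoplus_{x\in S_2} x_{[1:\secp]}$. Because $S$ itself is strong $\ell$-fold $\secp$-prefix collision-free, at least one of $S_1, S_2$ must contain $y$. A short case analysis rules out the case where both contain $y$: if $y \in S_1 \cap S_2$ and $i \geq 2$, cancelling $y$ from both XORs would yield two distinct size-$(i-1)$ subsets of $S$ with equal prefix XOR, contradicting the hypothesis on $S$; the case $i=1$ forces $S_1=S_2=\{y\}$, which is not a distinct pair. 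Hence, up to swapping $S_1$ and $S_2$, we may assume $y \in S_1$ and $y \notin S_2$, so that
\[
y_{[1:\secp]} \;=\; \Bigl(\bigoplus_{x\in S_1\setminus\{y\}} x_{[1:\secp]}\Bigr) \oplus \Bigl(\bigoplus_{x\in S_2} x_{[1:\secp]}\Bigr).
\]
Thus $y_{[1:\secp]}$ is fully determined by the pair $(A,B) := (S_1 \setminus \{y\}, S_2)$, where $A \subseteq S$ has size $i-1$ and $B \subseteq S$ has size $i$, for some $1 \leq i \leq \ell$.

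Next I would count these witness pairs. The number of pairs $(A,B)$ with $A, B \subseteq S$, $|A|=i-1$, $|B|=i$, summed over $1 \leq i \leq \ell$, is at most
\[
\sum_{i=1}^{\ell} \binom{|S|}{i-1}\binom{|S|}{i} \;\leq\; \sum_{i=1}^{\ell} |S|^{i-1}\cdot |S|^{i} \;\leq\; \ell\, |S|^{2\ell-1}.
\]
Each such pair fixes $y_{[1:\secp]}$ to a single value in $\{0,1\}^{\secp}$, and each fixed prefix is shared by exactly $2^{n-\secp}$ strings $y \in \{0,1\}^n$. By a union bound over all witness pairs, the number of bad $y$'s is at most $\ell\, |S|^{2\ell-1}\cdot 2^{n-\secp} \leq \ell\, |S|^{2\ell}\cdot 2^{n-\secp}$, which gives
\[
|\cfreeset_{\ell,\secp}(S)| \;\geq\; 2^{n} - \ell\, |S|^{2\ell}\, 2^{n-\secp},
\]
as claimed.

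I do not expect any real obstacle here: the only subtlety is the case split on whether $y$ lies in one or both of $S_1, S_2$, which the hypothesis on $S$ closes off cleanly. The rest is a straightforward counting argument with generous slack (indeed the bound $\ell|S|^{2\ell-1}\cdot 2^{n-\secp}$ slightly beats the statement, confirming there is room to spare).
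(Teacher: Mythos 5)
Your proposal is correct and follows essentially the same route as the paper: identify each collision in $S\cup\{y\}$ with a pair of subsets of $S$ of sizes $i-1$ and $i$, note that each such pair pins down $y_{[1:\secp]}$ and hence rules out at most $2^{n-\secp}$ strings, and union-bound over the at most $\sum_{i=1}^{\ell}\binom{|S|}{i-1}\binom{|S|}{i}$ pairs. Your write-up is in fact a bit more careful than the paper's (which omits the case analysis showing exactly one of $S_1,S_2$ contains $y$) and your final estimate $\binom{|S|}{i-1}\binom{|S|}{i}\le |S|^{2i-1}$ reaches the stated bound more directly than the paper's detour through $(e|S|/i)^{2i}$.
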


\noindent The proof of~\Cref{lem:cf_set_size} can be found in~\Cref{app:PR_New}.

\begin{definition}[Prefix Collision-Free Relations]
Let $t,n,\secp\in\mathbb{N}$ with $\secp\leq n$. A relation $R=\{(x_1,y_1),\ldots,(x_t,y_t)\}$ is strong $\ell$-fold $\secp$-prefix collision-free if $\{y_1,\ldots,y_t\} \in \setsofsets^{\cfree(\ell,\secp)}_{n}$. We denote $\setsofrels^{\cfree(\ell,\secp)}$ to be the set of all strong $\ell$-fold $\secp$-prefix collision-free relations.  
\end{definition}

\noindent Note that $\setsofrels^{\cfree(\ell,\secp)}$ is a subset of $\rel^{\inj}$.

\subsection{Recording $\ell$-Fold Collision-Free Paths}  \label{sec:cfPR}
We define a variant of the path-recording linear map below. Later we show the indistinguishability of this variant from the original path-recording map. 

\paragraph{Prefix $\ell$-fold Collision-Free Path Recording Linear Maps.} Consider the following: for some $n,\secp,\ell\in\mathbb{N}$ such that $\secp \leq n$,
\begin{itemize}
    \item $x$ is an element of $\{0,1\}^{n}$, 
    \item $R_1,R_2 \in \setsofrels^{\cfree(\ell,\secp)}$ are strong $\ell$-fold $\secp$-prefix collision-free relation, over pairs in $\{0,1\}^{n} \times \{0,1\}^{n}$, such that of $R_1\cup R_2\in \setsofrels^{\cfree(\ell,\secp)}$, $R_1\cap R_2 = \emptyset$ and $|R_1\cup R_2| < 2^{n}$. 
\end{itemize}
We define two prefix collision-free path linear maps, on registers $\reg{A}$ and $\reg{E} :=(\reg{E_1},\reg{E_2})$, as follows: 
\ifllncs
    \begin{equation*}
    \begin{split}
        \pcfpr{\ell}{\secp}^{(\reg{E_1})}_{\reg{A}\reg{E}}:\ & \ket{x}_{\reg{A}}\ket{R_1}_{\reg{E_1}}\ket{R_2}_{\reg{E_2}} \\
        \mapsto & \frac{1}{\sqrt{|\cfreeset_{\ell,\secp}({\rm Im}(R_1 \cup R_2)) }|} \sum_{\substack{y \in \cfreeset_{\ell,\secp}({\rm Im}(R_1 \cup R_2))}} \ket{y}_{\reg{A}} \ket{R_1 \cup \{(x,y)\}}_{\reg{E_1}}\ket{R_2}_{\reg{E_2}},   
    \end{split}
    \end{equation*}
    \begin{equation*}
    \begin{split}
        \pcfpr{\ell}{\secp}^{(\reg{E_2})}_{\reg{A}\reg{E}}:\ & \ket{x}_{\reg{A}}\ket{R_1}_{\reg{E_1}}\ket{R_2}_{\reg{E_2}} \\
        \mapsto & \frac{1}{\sqrt{|\cfreeset_{\ell,\secp}({\rm Im}(R_1 \cup R_2)) }|} \sum_{\substack{y \in \cfreeset_{\ell,\secp}({\rm Im}(R_1 \cup R_2))}} \ket{y}_{\reg{A}} \ket{R_1}_{\reg{E_1}}\ket{R_2  \cup \{(x,y)\}}_{\reg{E_2}}
    \end{split}
    \end{equation*}
\else
    $$\pcfpr{\ell}{\secp}^{(\reg{E_1})}_{\reg{A}\reg{E}}: \ket{x}_{\reg{A}}\ket{R_1}_{\reg{E_1}}\ket{R_2}_{\reg{E_2}} \mapsto \frac{1}{\sqrt{|\cfreeset_{\ell,\secp}({\rm Im}(R_1 \cup R_2)) }|} \sum_{\substack{y \in \cfreeset_{\ell,\secp}({\rm Im}(R_1 \cup R_2))}} \ket{y}_{\reg{A}} \ket{R_1 \cup \{(x,y)\}}_{\reg{E_1}}\ket{R_2}_{\reg{E_2}},$$
    $$\pcfpr{\ell}{\secp}^{(\reg{E_2})}_{\reg{A}\reg{E}}: \ket{x}_{\reg{A}}\ket{R_1}_{\reg{E_1}}\ket{R_2}_{\reg{E_2}} \mapsto \frac{1}{\sqrt{|\cfreeset_{\ell,\secp}({\rm Im}(R_1 \cup R_2)) }|} \sum_{\substack{y \in \cfreeset_{\ell,\secp}({\rm Im}(R_1 \cup R_2))}} \ket{y}_{\reg{A}} \ket{R_1}_{\reg{E_1}}\ket{R_2  \cup \{(x,y)\}}_{\reg{E_2}}$$
\fi

Intuitively, these prefix collision-free path recording oracles are similar to the original path recording oracle, except that they only output $y$ that are prefix collision-free with respect to the relation, instead of outputting $y$ that are distinct from the image of the relation.  We will show that most $y$ that are distinct are also prefix collision-free (for a suitably large prefix), which will imply that the prefix collision-free path recording oracle acts similarly to the original path recording oracle.

\paragraph{Indistinguishability.} We prove the following theorem: 

\begin{theorem} \label{thm:ind:pcfpr:pr}
Let $\Adversary$ be a $t$-oracle adversary. Then:
$$\TD(\rho,\sigma) \leq O\left(\frac{\sqrt{\ell}t^{\ell+1}}{2^{\secp/2}}\right),$$
where: 
$$\rho = \Tr_{\reg{E}_1\reg{E}_2}\left( \ketbra{\Adversary_t^{{\pcfpr{\ell}{\secp}}_{\reg{A}\reg{E}}^{(\reg{E}_1)},{{\pcfpr{\ell}{\secp}}_{\reg{A}\reg{E}}^{(\reg{E}_2)}} }}{\Adversary_t^{{\pcfpr{\ell}{\secp}}_{\reg{A}\reg{E}}^{(\reg{E}_1)},{{\pcfpr{\ell}{\secp}}_{\reg{A}\reg{E}}^{(\reg{E}_2)}} }}_{\reg{A}\reg{B}\reg{E}_1\reg{E}_2}  \right)$$
$$\sigma = \Tr_{\reg{E}_1\reg{E}_2}\left( \ketbra{\Adversary_t^{{\sf PR}_{\reg{A}\reg{E}_1},{{\sf PR}_{\reg{A}\reg{E}_2}} }}{\Adversary_t^{{\sf PR}_{\reg{A}\reg{E}_1},{{\sf PR}_{\reg{A}\reg{E}_2}} }}_{\reg{A}\reg{B}\reg{E}_1\reg{E}_2}  \right) $$
\end{theorem}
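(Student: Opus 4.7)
The plan is to prove the bound by a hybrid argument, replacing the two $\mathsf{PR}$ oracles with $\pcfpr{\ell}{\secp}$ one query at a time and controlling the per-step error via the counting estimate of Lemma \ref{lem:cf_set_size}. The main subtlety is that the two $\mathsf{PR}$ oracles act on independent purifying registers $\reg{E}_1,\reg{E}_2$, whereas $\pcfpr{\ell}{\secp}^{(\reg{E}_1)}$ and $\pcfpr{\ell}{\secp}^{(\reg{E}_2)}$ are jointly constrained by $R_1 \cup R_2$; the right way to reconcile them is to identify an invariant ``good'' subspace and show that $\mathsf{PR}$'s output, restricted back to this subspace, agrees with $\pcfpr{\ell}{\secp}$ up to a scalar close to $1$.

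Concretely, I would introduce the subspace $\cH_{\mathsf{good}}$ on $\reg{E}_1\reg{E}_2$ spanned by relation states $\ket{R_1}\ket{R_2}$ with $R_1 \cup R_2 \in \setsofrels^{\cfree(\ell,\secp)}$, with projector $\Pi_{\mathsf{good}}$. Since the initial empty relation is in $\cH_{\mathsf{good}}$, $\pcfpr{\ell}{\secp}$ preserves $\cH_{\mathsf{good}}$ by construction, and the adversary's unitaries act as identity on $\reg{E}_1\reg{E}_2$, any prefix of $\pcfpr{\ell}{\secp}$ queries keeps the joint state in $\cH_{\mathsf{good}}$. For $\ket{\phi}\in\cH_{\mathsf{good}}$ whose every basis term has $|R_1 \cup R_2| \leq q$, expanding both oracles on a basis state $\ket{x}\ket{R_1}\ket{R_2}$ and using $\cfreeset_{\ell,\secp}(\mathrm{Im}(R_1 \cup R_2)) \subseteq \bit^n \setminus \mathrm{Im}(R_1)$ (collision-freeness forces distinct prefixes for singletons) yields
\[
\Pi_{\mathsf{good}}\,\mathsf{PR}^{(\reg{E}_j)}\ket{x,R_1,R_2} \;=\; \lambda(R_1,R_2)\cdot\pcfpr{\ell}{\secp}^{(\reg{E}_j)}\ket{x,R_1,R_2},
\]
with $\lambda(R_1,R_2)^2 = |\cfreeset_{\ell,\secp}(\mathrm{Im}(R_1\cup R_2))|/(2^n - |R_j|)$. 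Lemma \ref{lem:cf_set_size} then gives $\lambda(R_1,R_2)^2 \geq 1 - \ell q^{2\ell}/2^{\secp}$, and a direct summation (noting that matching basis states force equal inputs, outputs and relations) yields $\braket{\pcfpr{\ell}{\secp}\phi}{\mathsf{PR}^{(\reg{E}_j)}\phi} \geq 1 - \ell q^{2\ell}/2^{\secp}$; since both output states are normalized, this gives a per-query trace distance of at most $\sqrt{2\ell q^{2\ell}/2^{\secp}}$.

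Finally, I would define hybrids $\ket{H_i}$ for $i = 0,\ldots,t$ using $\pcfpr{\ell}{\secp}$ for the first $i$ queries and $\mathsf{PR}$ for the remaining $t-i$, so that $\ket{H_0}$ and $\ket{H_t}$ purify $\sigma$ and $\rho$ respectively. Just before the $i$-th query, both $\ket{H_i}$ and $\ket{H_{i-1}}$ coincide on a state in $\cH_{\mathsf{good}}$ with relation size at most $i-1$, and the isometries applied afterwards preserve norms, so the per-query bound gives $\TD(\ket{H_i},\ket{H_{i-1}}) \leq \sqrt{2\ell (i-1)^{2\ell}/2^{\secp}}$. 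Telescoping over $i$ and using $\sum_{i=0}^{t-1} i^{\ell} \leq t^{\ell+1}$ produces the claimed bound $O\bigl(\sqrt{\ell}\, t^{\ell+1}/2^{\secp/2}\bigr)$. The hard part is the closeness calculation in the good subspace via Lemma \ref{lem:cf_set_size}; once that per-step comparison is in place, the hybrid bookkeeping is routine.
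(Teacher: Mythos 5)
Your proposal is correct and follows essentially the same route as the paper's proof: the same query-by-query hybrid, the same per-step comparison of $\mathsf{PR}$ against $\pcfpr{\ell}{\secp}$ on the collision-free subspace via \Cref{lem:cf_set_size}, and the same telescoping sum $\sum_{i<t} i^{\ell} \leq t^{\ell+1}$. The only cosmetic difference is that you phrase the key identity as $\Pi_{\mathsf{good}}\,\mathsf{PR}^{(\reg{E}_j)} = \lambda(R_1,R_2)\cdot\pcfpr{\ell}{\secp}^{(\reg{E}_j)}$ on basis states of the good subspace, whereas the paper writes the equivalent statement $\bigl(\mathsf{PR}_{\reg{A}\reg{E}_1}\bigr)^{\dagger}\,\pcfpr{\ell}{\secp}^{(\reg{E}_1)} = \lambda(R_1,R_2)\cdot\mathrm{id}$ there.
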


\ifllncs
\noindent The proof of~\Cref{thm:ind:pcfpr:pr} is deferred to~\Cref{app:PR_New}. Applying the triangle inequality on~\Cref{thm:ind:pcfpr:pr} and~\Cref{cor:ind:twopr}, we have the following corollary.
\else

\begin{proof}
Without loss of generality, assume $\Adversary = (A_{\reg{A}\reg{B}}^{(1)},\ldots,A_{\reg{A}\reg{B}}^{(t)})$. 
We start by defining the following hybrids.

\noindent $\hybrid_1$: Output $$\Tr_{\reg{E}_1\reg{E}_2}\left( \ketbra{\Adversary_t^{{\sf PR}_{\reg{A}\reg{E}_1},{{\sf PR}_{\reg{A}\reg{E}_2}} }}{\Adversary_t^{{\sf PR}_{\reg{A}\reg{E}_1},{{\sf PR}_{\reg{A}\reg{E}_2}} }}_{\reg{A}\reg{B}\reg{E}_1\reg{E}_2}  \right) .$$

\noindent $\hybrid_{2.j}$, for $j\in \set{0,\ldots,t}$: Let $$\ket{\psi_j} = \prod_{i=j+1}^t \left( {\cal O}_{\reg{A}}^{(i)} \cdot A_{\reg{A} \reg{B}}^{(i)} \right)\prod_{i=1}^j \left( \Tilde{{\cal O}}_{\reg{A}}^{(i)} \cdot A_{\reg{A} \reg{B}}^{(i)} \right) \ket{0}_{\reg{A}\reg{B}},$$ with ${\cal O}_{\reg{A}}^{(i)}\in\set{{\sf PR}_{\reg{A}\reg{E}_1},{{\sf PR}_{\reg{A}\reg{E}_2}}}$ and $\Tilde{{\cal O}}_{\reg{A}}^{(i)}\in\set{{\pcfpr{\ell}{\secp}}_{\reg{A}\reg{E}}^{(\reg{E}_1)},{{\pcfpr{\ell}{\secp}}_{\reg{A}\reg{E}}^{(\reg{E}_2)}}}$. 
Output $$\Tr_{\reg{E}_1\reg{E}_2}\left( \ketbra{\psi_j}{\psi_j}\right).$$
    
\noindent $\hybrid_3$: Output $$\Tr_{\reg{E}_1\reg{E}_2}\left( \ketbra{\Adversary_t^{{\pcfpr{\ell}{\secp}}_{\reg{A}\reg{E}}^{(\reg{E}_1)},{{\pcfpr{\ell}{\secp}}_{\reg{A}\reg{E}}^{(\reg{E}_2)}} }}{\Adversary_t^{{\pcfpr{\ell}{\secp}}_{\reg{A}\reg{E}}^{(\reg{E}_1)},{{\pcfpr{\ell}{\secp}}_{\reg{A}\reg{E}}^{(\reg{E}_2)}} }}_{\reg{A}\reg{B}\reg{E}_1\reg{E}_2}  \right).$$

\noindent Note that for $\ket{\psi_0} = \ket{\Adversary_t^{{\sf PR}_{\reg{A}\reg{E}_1},{{\sf PR}_{\reg{A}\reg{E}_2}} }}$ and $\ket{\psi_t} = \ket{\Adversary_t^{{\pcfpr{\ell}{\secp}}_{\reg{A}\reg{E}}^{(\reg{E}_1)},{{\pcfpr{\ell}{\secp}}_{\reg{A}\reg{E}}^{(\reg{E}_2)}} }}$. Hence, $\hybrid_1$ is identical to $\hybrid_{2.0}$ and $\hybrid_3$ is identical to $\hybrid_{2.t}$. 

\begin{lemma} \label{lem:pcfpr:hybrids}
    The trace distance between the output of $\hybrid_{2.(j-1)}$ and $\hybrid_{2.(j)}$, for $j\in [t]$, is $\frac{\sqrt{\ell}(j-1)^{\ell}}{2^{\secp/2}}$.
\end{lemma}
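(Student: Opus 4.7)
The plan is a direct inner product computation at the $j$-th oracle query, where the two hybrids differ. Hybrids $\hybrid_{2.(j-1)}$ and $\hybrid_{2.(j)}$ agree on queries $i \neq j$ (using $\pcfpr{\ell}{\secp}$ for $i < j$ and $\mathsf{PR}$ for $i > j$) and on all adversary unitaries $A^{(i)}$; they differ only at the $j$-th call, where one invokes $\mathsf{PR}_{\reg{A}\reg{E}_k}$ and the other invokes $\pcfpr{\ell}{\secp}^{(\reg{E}_k)}$ on the same register $\reg{E}_k \in \set{\reg{E}_1,\reg{E}_2}$. Since the subsequent unitaries and the partial trace over $\reg{E}_1\reg{E}_2$ are non-increasing in trace distance, it suffices to bound the trace distance of the two pure states immediately after the $j$-th oracle call.

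The first step is to establish the invariant that the shared state $\ket{\chi}_{\reg{A}\reg{B}\reg{E}_1\reg{E}_2}$ entering the $j$-th query is supported only on basis vectors $\ket{x, b, R_1, R_2}$ with $R_1 \cup R_2 \in \setsofrels^{\cfree(\ell,\secp)}$ and $|R_1 \cup R_2| = j-1$. This follows by induction on the first $j-1$ queries: each $\pcfpr{\ell}{\secp}$ invocation appends a pair $(x,y)$ with $y \in \cfreeset_{\ell,\secp}(\mathrm{Im}(R_1 \cup R_2))$, which by definition of $\cfreeset_{\ell,\secp}$ preserves both injectivity and prefix-collision-freeness of the joint image while incrementing $|R_1 \cup R_2|$ by exactly one, and the adversary unitaries $A^{(i)}$ act trivially on $\reg{E}_1 \reg{E}_2$. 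Next, expanding $\ket{\chi} = \sum c_{x,b,R_1,R_2} \ket{x, b, R_1, R_2}$ and applying both oracles term by term, all cross-terms in the inner product cancel because injectivity of the updated relation forces the freshly inserted pair containing any shared output $y$ to match, pinning $x$, $R_1$, $R_2$, and $b$. This yields
\begin{equation*}
    \langle \chi_{\mathsf{PR}} \mid \chi_{\pcfpr{\ell}{\secp}} \rangle \;=\; \sum_{x,b,R_1,R_2} |c_{x,b,R_1,R_2}|^2 \cdot \sqrt{\frac{|\cfreeset_{\ell,\secp}(\mathrm{Im}(R_1 \cup R_2))|}{N - |R_k|}}\,,
\end{equation*}
where $R_k$ denotes the relation held in the register updated by the $j$-th call.

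Applying Lemma~\ref{lem:cf_set_size} with $|\mathrm{Im}(R_1 \cup R_2)| \leq j-1$ gives $|\cfreeset_{\ell,\secp}(\mathrm{Im}(R_1 \cup R_2))| \geq 2^n - \ell(j-1)^{2\ell} 2^{n-\secp}$, while trivially $N - |R_k| \leq 2^n$, so each per-term square root is at least $\sqrt{1 - \ell(j-1)^{2\ell}/2^{\secp}}$. Since $\sum |c_{x,b,R_1,R_2}|^2 = 1$, the same bound carries over to the global inner product, and converting fidelity to trace distance via $\TD(\ket{\phi_1}, \ket{\phi_2}) \leq \sqrt{1 - |\braket{\phi_1}{\phi_2}|^2}$ delivers the claimed bound $\sqrt{\ell}(j-1)^{\ell}/2^{\secp/2}$. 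The main obstacle is rigorously establishing the cross-term cancellation in the inner product expansion; this requires careful use of the symmetrization structure of the relation states from Definition~\ref{def:relationstates} together with the injectivity property, namely that the freshly inserted output $y$ appears exactly once in the updated relation and therefore uniquely determines its preimage $x$ and the pre-query relation on both sides of the inner product.
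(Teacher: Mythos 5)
Your proposal is correct and follows essentially the same route as the paper: reduce to the two pure states differing only at the $j$-th oracle call, observe that the pre-query state is supported on prefix-collision-free relations of size $j-1$, compute the inner product (your term-by-term cross-term cancellation is exactly the paper's observation that $(\mathsf{PR})^{\dagger}\,\pcfpr{\ell}{\secp}$ acts diagonally with eigenvalue $\sqrt{|\cfreeset_{\ell,\secp}(\mathrm{Im}(R_1\cup R_2))|/(N-|R_1|)}$, since $\cfreeset_{\ell,\secp}(\mathrm{Im}(R_1\cup R_2))\subseteq [N]\setminus\mathrm{Im}(R_1)$), and finish with \Cref{lem:cf_set_size} and the fidelity-to-trace-distance conversion. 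No gaps.
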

\begin{proof}[Proof of~\Cref{lem:pcfpr:hybrids}]
    Since taking partial trace cannot increase the trace distance, we instead find the trace distance between $\ket{\psi_{j-1}}$ and $\ket{\psi_{j}}$. Recall that $$\ket{\psi_{j-1}} = \prod_{i=j}^t \left( {\cal O}_{\reg{A}}^{(i)} \cdot A_{\reg{A} \reg{B}}^{(i)} \right)\prod_{i=1}^{j-1} \left( \Tilde{{\cal O}}_{\reg{A}}^{(i)} \cdot A_{\reg{A} \reg{B}}^{(i)} \right) \ket{0}_{\reg{A}\reg{B}},$$ and $$\ket{\psi_j} = \prod_{i=j+1}^t \left( {\cal O}_{\reg{A}}^{(i)} \cdot A_{\reg{A} \reg{B}}^{(i)} \right)\prod_{i=1}^j \left( \Tilde{{\cal O}}_{\reg{A}}^{(i)} \cdot A_{\reg{A} \reg{B}}^{(i)} \right) \ket{0}_{\reg{A}\reg{B}}.$$ Since applying the same channel cannot increase the trace distance, we only need to calculate the trace distance between $$\ket{\phi_{j-1}} := {\cal O}_{\reg{A}}^{(j)} \cdot A_{\reg{A} \reg{B}}^{(j)} \prod_{i=1}^{j-1} \left( \Tilde{{\cal O}}_{\reg{A}}^{(i)} \cdot A_{\reg{A} \reg{B}}^{(i)} \right) \ket{0}_{\reg{A}\reg{B}},$$ and $$\ket{\phi_j} := \prod_{i=1}^j \left( \Tilde{{\cal O}}_{\reg{A}}^{(i)} \cdot A_{\reg{A} \reg{B}}^{(i)} \right) \ket{0}_{\reg{A}\reg{B}}.$$
    Note that the trace distance between pure states $\ket{\phi_{j-1}}$ and $\ket{\phi_{j}}$ is $\sqrt{1-|\braket{\phi_{j-1}}{\phi_{j}}|^2}$. 
    Notice that $$\ket{\phi_{j-1}} = {\cal O}_{\reg{A}}^{(j)} \ket{\theta_j},$$ and $$\ket{\phi_j} := \Tilde{{\cal O}}_{\reg{A}}^{(j)} \ket{\theta_j},$$ for $$\ket{\theta_j} := A_{\reg{A} \reg{B}}^{(j)} \prod_{i=1}^{j-1} \left( \Tilde{{\cal O}}_{\reg{A}}^{(i)} \cdot A_{\reg{A} \reg{B}}^{(i)} \right) \ket{0}_{\reg{A}\reg{B}}.$$

    \noindent Hence, $\braket{\phi_{j-1}}{\phi_{j}} = \bra{\theta_j}\left({\cal O}_{\reg{A}}^{(j)}\right)^{\dagger}\Tilde{{\cal O}}_{\reg{A}}^{(j)}\ket{\theta_j}$ and we will look at the behavior of $\left({\cal O}_{\reg{A}}^{(j)}\right)^{\dagger}\Tilde{{\cal O}}_{\reg{A}}^{(j)}$. We assume ${\cal O}_{\reg{A}}^{(j)} = {\sf PR}_{\reg{A}\reg{E}_1}$ and $\Tilde{{\cal O}}_{\reg{A}}^{(j)} = {\pcfpr{\ell}{\secp}}_{\reg{A}\reg{E}}^{(\reg{E}_1)}$, the other case works by symmetry. 
    
    \noindent Notice that if 
    \begin{itemize}
        \item $x$ is an element of $\{0,1\}^{n}$, 
        \item $R_1,R_2 \in \setsofrels^{\cfree(\ell,\secp)}$ are strong $\ell$-fold $\secp$-prefix collision-free relations, over pairs in $\{0,1\}^{n} \times \{0,1\}^{n}$, such that of $R_1\cup R_2\in \setsofrels^{\cfree(\ell,\secp)}$, $R_1\cap R_2 = \emptyset$ and $|R_1\cup R_2| < 2^{n}$. 
    \end{itemize}
    \begin{equation*}
    \begin{split}
        & \pcfpr{\ell}{\secp}^{(\reg{E_1})}_{\reg{A}\reg{E}}:\ \ket{x}_{\reg{A}}\ket{R_1}_{\reg{E_1}}\ket{R_2}_{\reg{E_2}} \\
        & \mapsto\ \frac{1}{\sqrt{|\cfreeset_{\ell,\secp}({\rm Im}(R_1 \cup R_2)) }|} \sum_{\substack{y \in \cfreeset_{\ell,\secp}({\rm Im}(R_1 \cup R_2))}} \ket{y}_{\reg{A}} \ket{R_1 \cup \{(x,y)\}}_{\reg{E_1}}\ket{R_2}_{\reg{E_2}},
    \end{split}
    \end{equation*}
    and 
    $${\sf PR}_{\reg{A}\reg{E_1}}: \ket{x}_{\reg{A}}\ket{R_1}_{\reg{E_1}}\ket{R_2}_{\reg{E_2}} \mapsto \frac{1}{\sqrt{N - |R_1|}} \sum_{\substack{y \in [N],\\ y \notin \mathrm{Im}(R_1)}} \ket{y}_{\reg{A}} \ket{R_1 \cup \{(x,y)\}}_{\reg{E_1}}\ket{R_2}_{\reg{E_2}},$$

    Since $\cfreeset_{\ell,\secp}({\rm Im}(R_1 \cup R_2))\subseteq [N]\setminus\mathrm{Im}(R_1)$, $$\left({\sf PR}_{\reg{A}\reg{E_1}}\right)^{\dagger}\pcfpr{\ell}{\secp}^{(\reg{E_1})}_{\reg{A}\reg{E}} \left(\ket{x}_{\reg{A}}\ket{R_1}_{\reg{E_1}}\ket{R_2}_{\reg{E_2}}\right) = \sqrt{\frac{|\cfreeset_{\ell,\secp}({\rm Im}(R_1 \cup R_2))|}{N-|R_1|}}\ket{x}_{\reg{A}}\ket{R_1}_{\reg{E_1}}\ket{R_2}_{\reg{E_2}}.$$
    Hence, $$\left({\sf PR}_{\reg{A}\reg{E_1}}\right)^{\dagger}\pcfpr{\ell}{\secp}^{(\reg{E_1})}_{\reg{A}\reg{E}} = \sum_{\substack{x,R_1,R_2\\ R_1,R_2 \in \setsofrels^{\cfree(\ell,\secp)}\\ R_1\cap R_2 = \set{}\\ R_1\cup R_2\in \setsofrels^{\cfree(\ell,\secp)}}} \sqrt{\frac{|\cfreeset_{\ell,\secp}({\rm Im}(R_1 \cup R_2))|}{N-|R_1|}} 
    \ketbra{x}{x}_{\reg{A}}\otimes\ketbra{R_1}{R_1}_{_{\reg{E_1}}}\otimes \ketbra{R_2}{R_2}_{\reg{E_2}}.$$

    \noindent By~\Cref{lem:cf_set_size}, $$\sqrt{\frac{|\cfreeset_{\ell,\secp}({\rm Im}(R_1 \cup R_2))|}{N-|R_1|}}\geq\sqrt{\frac{2^n-\ell|{\rm Im}(R_1 \cup R_2)|^{2\ell}2^{n-\secp}}{2^{n}}}\geq\sqrt{1-\frac{\ell|{\rm Im}(R_1 \cup R_2)|^{2\ell}}{2^{\secp}}}.$$

    \noindent Hence, we can write $$\left({\sf PR}_{\reg{A}\reg{E_1}}\right)^{\dagger}\pcfpr{\ell}{\secp}^{(\reg{E_1})}_{\reg{A}\reg{E}}\succeq I_{\reg{A}}\otimes \left(\bigoplus_{i=1}^{2^{n/2}}\sum_{\substack{R_1,R_2 \in \setsofrels^{\cfree(\ell,\secp)}\\ R_1\cap R_2 = \set{}\\ R_1\cup R_2\in \setsofrels^{\cfree(\ell,\secp)}\\ |R_1\cup R_2|=i}}\sqrt{1-\frac{\ell i^{2\ell}}{2^{\secp}}} \ketbra{R_1}{R_1}_{\reg{E_1}}\otimes \ketbra{R_2}{R_2}_{\reg{E_2}}\right).$$

    \noindent Note that since $\ket{\theta_j}$ is formed by $j-1$ total queries to $\Tilde{{\cal O}}_{\reg{A}}$, 
    \begin{multline*}
        \ket{\theta_j}\in\mathsf{span}\set{\ket{x}_{\reg{AB}}\ket{R_1}_{\reg{E_1}}\ket{R_2}_{\reg{E_2}}:x\in\bit^{n+m},R_1,R_2 \in \setsofrels^{\cfree(\ell,\secp)},\\ R_1\cap R_2 = \emptyset,R_1\cup R_2\in \setsofrels^{\cfree(\ell,\secp)},|R_1\cup R_2|=j-1}.
    \end{multline*}
    Hence, $$\left|\braket{\phi_{j-1}}{\phi_{j}}\right|^2 = \left|\bra{\theta_j}\left({\cal O}_{\reg{A}}^{(j)}\right)^{\dagger}\Tilde{{\cal O}}_{\reg{A}}^{(j)}\ket{\theta_j}\right|^2\geq 1-\frac{\ell (j-1)^{2\ell}}{2^{\secp}}.$$

    \noindent The trace distance between $\ket{\phi_{j-1}}$ and $\ket{\phi_{j}}$ is $\sqrt{1-|\braket{\phi_{j-1}}{\phi_{j}}|^2}\leq\frac{\sqrt{\ell}(j-1)^{\ell}}{2^{\secp/2}}.$
    This completes the proof of~\Cref{lem:pcfpr:hybrids}.
\end{proof}
By~\Cref{lem:pcfpr:hybrids}, the total trace distance between $\hybrid_1$ and $\hybrid_3$ is $\sum_{j=1}^{t} \frac{\sqrt{\ell}(j-1)^{\ell}}{2^{\secp/2}}\leq \frac{\sqrt{\ell}t^{\ell+1}}{2^{\secp/2}}$. This completes the proof of~\Cref{thm:ind:pcfpr:pr}.
\end{proof}

\noindent Applying triangle inequality on~\Cref{thm:ind:pcfpr:pr} and~\Cref{cor:ind:twopr}, we have the following corollary. 
\fi

\begin{corollary}
\label{cor:ind:twopcfpr}
Let $\Adversary$ be a $t$-oracle adversary. Then:
$$\TD(\rho,\sigma) \leq O\left(\frac{\sqrt{\ell}t^{\ell+1}}{2^{\secp/2}}\right) + \frac{4t(t-1)}{N+1},$$
where: 
$$\rho = \Tr_{\reg{E}_1\reg{E}_2}\left( \ketbra{\Adversary_t^{{\pcfpr{\ell}{\secp}}_{\reg{A}\reg{E}}^{(\reg{E}_1)},{{\pcfpr{\ell}{\secp}}_{\reg{A}\reg{E}}^{(\reg{E}_2)}} }}{\Adversary_t^{{\pcfpr{\ell}{\secp}}_{\reg{A}\reg{E}}^{(\reg{E}_1)},{{\pcfpr{\ell}{\secp}}_{\reg{A}\reg{E}}^{(\reg{E}_2)}} }}_{\reg{A}\reg{B}\reg{E}_1\reg{E}_2}  \right)$$
$$\sigma = \underset{{{\cal O}_1,{\cal O}_2 \leftarrow \haarunitaries_n}}{\expect} \ketbra{\Adversary_t^{{\cal O}_1,{\cal O}_2}}{\Adversary_t^{{\cal O}_1,{\cal O}_2}} $$
\end{corollary}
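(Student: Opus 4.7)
The plan is to derive the corollary as a direct consequence of \Cref{thm:ind:pcfpr:pr} and \Cref{cor:ind:twopr} via the triangle inequality for the trace distance. Define the intermediate state
\[
\tau = \Tr_{\reg{E}_1\reg{E}_2}\left( \ketbra{\Adversary_t^{{\sf PR}_{\reg{A}\reg{E}_1},{\sf PR}_{\reg{A}\reg{E}_2}}}{\Adversary_t^{{\sf PR}_{\reg{A}\reg{E}_1},{\sf PR}_{\reg{A}\reg{E}_2}}}_{\reg{A}\reg{B}\reg{E}_1\reg{E}_2} \right),
\]
which is the reduced state of the adversary's view when both oracles are replaced by independent copies of the (exact) path-recording isometry.

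First, I would invoke \Cref{thm:ind:pcfpr:pr} directly, which controls the distance between $\rho$ (the reduced state under the two prefix collision-free path-recording maps) and $\tau$ (the reduced state under two ordinary path-recording maps) by $O\!\left(\sqrt{\ell}\,t^{\ell+1}/2^{\secp/2}\right)$. Next, I would invoke \Cref{cor:ind:twopr} to bound the distance between $\tau$ and $\sigma = \mathbb{E}_{\mathcal{O}_1,\mathcal{O}_2 \leftarrow \haarunitaries_n}\ketbra{\Adversary_t^{\mathcal{O}_1,\mathcal{O}_2}}{\Adversary_t^{\mathcal{O}_1,\mathcal{O}_2}}$ by $4t(t-1)/(N+1)$.

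Finally, the triangle inequality for trace distance yields
\[
\TD(\rho,\sigma) \;\leq\; \TD(\rho,\tau) + \TD(\tau,\sigma) \;\leq\; O\!\left(\frac{\sqrt{\ell}\,t^{\ell+1}}{2^{\secp/2}}\right) + \frac{4t(t-1)}{N+1},
\]
which is exactly the claimed bound. There is essentially no obstacle here: both hard analytic estimates (the hybrid argument for replacing each path-recording call with its prefix collision-free variant, and the right-invariance based replacement of a Haar unitary by a path-recording isometry) have already been carried out in \Cref{thm:ind:pcfpr:pr} and \Cref{cor:ind:twopr} respectively, so the corollary is obtained purely by composing these two bounds.
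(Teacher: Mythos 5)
Your proposal is correct and matches the paper's argument exactly: the paper also obtains \Cref{cor:ind:twopcfpr} by applying the triangle inequality to \Cref{thm:ind:pcfpr:pr} and \Cref{cor:ind:twopr} through the intermediate two-$\mathsf{PR}$ state. Nothing further is needed.
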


\noindent We also look at a special case of $\ell$-fold collision-free path recording where $\ell = 1$ and $\secp = n$, which we call the collision-free path recording oracle. We define these as follows:
\paragraph{Collision-Free Path Recording Linear Maps.} Consider the following: for some $n\in\mathbb{N}$,
\begin{itemize}
    \item $x$ is an element of $\{0,1\}^{n}$, 
    \item $R_1,R_2 \in \setsofrels^{\inj}$ are injective relations, over pairs in $\{0,1\}^{n} \times \{0,1\}^{n}$, such that of $R_1\cup R_2\in \setsofrels^{\inj}$, $R_1\cap R_2 = \emptyset$ and $|R_1\cup R_2| < 2^{n}$. 
\end{itemize}
We define two collision-free path linear maps, on registers $\reg{A}$ and $\reg{E}=(\reg{E_1},\reg{E_2})$, as follows: 
$$\pr^{(\reg{E_1})}_{\reg{A}\reg{E}}: \ket{x}_{\reg{A}}\ket{R_1}_{\reg{E_1}}\ket{R_2}_{\reg{E_2}} \mapsto \frac{1}{\sqrt{N-|{\rm Im}(R_1 \cup R_2)}|} \sum_{\substack{y \in [N]\setminus {\rm Im}(R_1 \cup R_2))}} \ket{y}_{\reg{A}} \ket{R_1 \cup \{(x,y)\}}_{\reg{E_1}}\ket{R_2}_{\reg{E_2}},$$
$$\pr^{(\reg{E_2})}_{\reg{A}\reg{E}}: \ket{x}_{\reg{A}}\ket{R_1}_{\reg{E_1}}\ket{R_2}_{\reg{E_2}} \mapsto \frac{1}{\sqrt{N-|{\rm Im}(R_1 \cup R_2) }|} \sum_{\substack{y \in [N]\setminus {\rm Im}(R_1 \cup R_2)}} \ket{y}_{\reg{A}} \ket{R_1}_{\reg{E_1}}\ket{R_2  \cup \{(x,y)\}}_{\reg{E_2}}$$

\noindent We highlight that the difference between this oracle and two independent path recording oracles is that the two collision-free path recording oracles are aware not only of their own relation, but also the image of the other relation.  Setting $\ell = 1$, $\secp = n$ for~\Cref{cor:ind:twopcfpr}, we get the following corollary.
\begin{corollary}
\label{cor:ind:twopr:shared}
Let $\Adversary$ be a $t$-oracle adversary. Then:
$$\TD(\rho,\sigma) \leq O\left(\frac{t^{2}}{\sqrt{N}}\right),$$
where: 
$$\rho = \Tr_{\reg{E}_1\reg{E}_2}\left( \ketbra{\Adversary_t^{{\pr}_{\reg{A}\reg{E}}^{(\reg{E}_1)},{{\pr}_{\reg{A}\reg{E}}^{(\reg{E}_2)}} }}{\Adversary_t^{{\pr}_{\reg{A}\reg{E}}^{(\reg{E}_1)},{{\pr}_{\reg{A}\reg{E}}^{(\reg{E}_2)}} }}_{\reg{A}\reg{B}\reg{E}_1\reg{E}_2}  \right)$$
$$\sigma = \underset{{{\cal O}_1,{\cal O}_2 \leftarrow \haarunitaries_n}}{\expect} \ketbra{\Adversary_t^{{\cal O}_1,{\cal O}_2}}{\Adversary_t^{{\cal O}_1,{\cal O}_2}} $$
\end{corollary}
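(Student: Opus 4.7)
The plan is to derive Corollary~\ref{cor:ind:twopr:shared} as an immediate instantiation of Corollary~\ref{cor:ind:twopcfpr} with parameters $\ell = 1$ and $\secp = n$, so the real work is just checking that the oracles match and simplifying the bound. First I would verify that, under this choice, the prefix collision-free conditions from Definition~\ref{def:s_l_fold_collisions} collapse to simple distinctness: for $\ell = 1$ and $\secp = n$, the only $i$-sized subsets involved are singletons, and the condition $\bigoplus_{x \in S_1} x_{[1:n]} = \bigoplus_{y \in S_2} y_{[1:n]} \iff S_1 = S_2$ becomes $x = y \iff x = y$, which holds trivially once all elements of the set are distinct. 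Hence $\setsofsets^{\cfree(1,n)}_n$ is precisely the family of sets of distinct $n$-bit strings, $\setsofrels^{\cfree(1,n)}$ coincides with $\setsofrels^{\inj}$, and $\cfreeset_{1,n}({\rm Im}(R_1 \cup R_2)) = [N] \setminus {\rm Im}(R_1 \cup R_2)$.

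With these identifications, the linear maps $\pcfpr{1}{n}^{(\reg{E}_1)}$ and $\pcfpr{1}{n}^{(\reg{E}_2)}$ act on basis states $\ket{x}_{\reg{A}} \ket{R_1}_{\reg{E}_1} \ket{R_2}_{\reg{E}_2}$ (with $R_1, R_2 \in \setsofrels^{\inj}$, $R_1 \cap R_2 = \emptyset$, $R_1 \cup R_2 \in \setsofrels^{\inj}$, and $|R_1 \cup R_2| < N$) exactly as the collision-free path recording maps $\pr^{(\reg{E}_1)}_{\reg{A}\reg{E}}$ and $\pr^{(\reg{E}_2)}_{\reg{A}\reg{E}}$ defined just above the corollary. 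Consequently, the state $\ket{\Adversary_t^{\pr^{(\reg{E}_1)}, \pr^{(\reg{E}_2)}}}$ equals $\ket{\Adversary_t^{\pcfpr{1}{n}^{(\reg{E}_1)}, \pcfpr{1}{n}^{(\reg{E}_2)}}}$, and their reduced states on $\reg{AB}$ coincide.

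Finally I would plug $\ell = 1$, $\secp = n$ (so $2^{\secp} = N$) into the trace distance bound of Corollary~\ref{cor:ind:twopcfpr} to obtain
\[
O\!\left( \frac{\sqrt{1}\,t^{2}}{2^{n/2}} \right) + \frac{4t(t-1)}{N+1}
\;=\; O\!\left( \frac{t^{2}}{\sqrt{N}} \right) + O\!\left( \frac{t^{2}}{N} \right)
\;=\; O\!\left( \frac{t^{2}}{\sqrt{N}} \right),
\]
where the second summand is absorbed into the first since $1/N \leq 1/\sqrt{N}$ for $N \geq 1$. This yields the claimed bound. There is no genuine obstacle in this argument: it is a pure bookkeeping exercise in unpacking the definitions at the specified parameter choice and observing the asymptotic simplification.
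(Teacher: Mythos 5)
Your proposal is correct and matches the paper's own derivation exactly: the paper obtains this corollary by setting $\ell = 1$ and $\secp = n$ in \Cref{cor:ind:twopcfpr}, after implicitly identifying $\pcfpr{1}{n}^{(\reg{E}_1)}, \pcfpr{1}{n}^{(\reg{E}_2)}$ with $\pr^{(\reg{E}_1)}, \pr^{(\reg{E}_2)}$, which is precisely the bookkeeping you carry out. Your explicit check that the prefix collision-free conditions degenerate to injectivity, and that the $O(t^2/N)$ term is absorbed into $O(t^2/\sqrt{N})$, is a faithful (and slightly more detailed) version of the paper's one-line argument.
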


\subsection{Multiset States and Operations.}
\label{sec:multiset}
We define a generalization of relation states as multiset states below and define some general isometric operations on these states:

\begin{definition}[Multiset States]
\label{def:multisetstates}
For $0 \leq t \leq N$ and a size-$t$ multiset $S=\{a_1,\ldots,a_t\}$, define the corresponding multiset state to be the unit vector: 
$$\ket{S}_{\reg{A}} = \alpha_S \cdot \sum_{\pi \in \Symgp_t} S_{\pi} \ket{a_1,\ldots,a_t}$$
where: 
$$\alpha_S =  \sqrt{\frac{\prod_{a \in [N]}\left( \sum_{i=1}^t \delta_{a_i = a} \right)!}{t!}}$$
\end{definition}

\noindent Note that when the elements of the multiset are tuples of size $2$, it becomes a relation state. \\

\noindent Next, we define some isometric operations on these multiset states that give us a framework to work with these states. We define the three operations as:
\begin{itemize}
    \item $\mathsf{Partition}$: For any key $k\in\bit^*$, there exists a unique partition of any multiset $S$ into $S_1^{k},S_2^{k}$ with $S_1^k\uplus S_2^k = S$,\footnote{Without loss of generality, we can always assume an order between $S_1^{k},S_2^{k}$.} then we define $$\Vpart:\ket{S}\ket{k}\mapsto\ket{S_1^{k}}\ket{S_2^{k}}\ket{k}.$$
    
    \item $\mathsf{Apply}$: For any key $k\in\bit^*$, there exists an injective function $f(\cdot,\cdot)$ on the elements of any multiset $S$. Define $f(S,k):= \set{f(a,k):a\in S}$. Then we define $$\Vfunc{f}:\ket{S}\ket{k}\mapsto\ket{f(S,k)}\ket{k}.$$
    
    \item $\mathsf{Pair}$: For any key $k\in\bit^*$, and any two multisets $S_1$ and $S_2$ with $|S_1|=|S_2|$, there exists a unique relation $R^k_{S_1,S_2}$ such that ${\rm Dom}(R^k_{S_1,S_2}) = S_1$ and ${\rm Im}(R^k_{S_1,S_2}) = S_2$. Then we define $$\Vpair:\ket{S_1}\ket{S_2}\ket{k}\mapsto\ket{R^k_{S_1,S_2}}\ket{k}.$$
\end{itemize}

\noindent Note that since all the above operations are reversible on the range of these operations, they are all isometries.

\section{Pseudorandom Unitaries with Short Keys}
\label{sec:pru:shortkeys}

In this section, we present a construction of pseudorandom unitaries with keys of length $n$, which is secure against adversaries making any $\poly(n)$ many queries in the iQHROM. The construction simply involves (1) applying $U$, (2) applying a short random Pauli $X$ string, and (3) applying $U$ again.

\begin{theorem}
    \label{thm:stretch_pru}
   For $k\in\bit^{n}$, define $G_k^U := U X^k U$, where $U$ is an $n$-qubit unitary. Then $\set{G_k^U}_{k\in\bit^{n}}$ is a PRU in ${\sf iQHROM}$, where $U$ is the Haar oracle. Formally, for any $t$-query two-oracle adversary $\Adversary$, for any polynomial $t$ in $n$, we have: 
   $$\TD\left( \underset{\substack{U \leftarrow \haarunitaries_n\\ k \leftarrow \{0,1\}^{n}}}{\expect}\left[\ketbra{\Adversary_{t}^{G_k^U,U}}{\Adversary_{t}^{G_k^U,U}}\right],\ \underset{\substack{U \leftarrow \haarunitaries_n\\ V \leftarrow \haarunitaries_n}}{\expect}\left[\ketbra{\Adversary_{t}^{V,U}}{\Adversary_{t}^{V,U}}\right]\right) \leq \negl(n).$$
\end{theorem}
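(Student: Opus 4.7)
The plan is to apply Ma--Huang's path recording framework to both the real world (in which the adversary queries $U$ and $G_k^U = UX^kU$) and the ideal world (in which the adversary queries $U$ and an independent Haar random $V$), and then argue via an isometric re-interpretation of the purifying registers that the two reduced states on $\reg{AB}$ coincide up to negligible trace distance. In the ideal world, I would invoke \Cref{cor:ind:twopr:shared} to replace the two independent Haar unitaries by collision-free path recording oracles $\mathsf{PR}^{(\reg{E}_1)}$, $\mathsf{PR}^{(\reg{E}_2)}$, paying an additive error of $O(t^2/\sqrt{N})$. In the real world, I would replace the single Haar $U$ by $\mathsf{PR}$ on a shared purifying register $\reg{XY}$ (via \Cref{thm:MH24}) and coherently purify the uniform key as $\sum_k \ket{k}_{\reg{K}}$. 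After these substitutions, each call to $G_k^U$ writes two entries $(a_i,b_i),(b_i\oplus k,d_i)$ to the shared relation, while each direct call to $U$ writes one entry $(x_i,y_i)$; in the ideal world, calls to $U$ and $V$ write disjointly to $\reg{E}_1$ and $\reg{E}_2$ respectively.

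The core of the argument is to construct two isometries, built from the $\Vpart,\Vfunc{f},\Vpair$ primitives of \Cref{sec:multiset}, and show they map the two purified states close to one another. Concretely, $\mathsf{Split}:\reg{XYK}\to \reg{X_1Y_1 X_2Z_2Y_2 K}$ reads the key $k$ from $\reg{K}$, partitions the relation $R$ into its ``$k$-correlated'' part (the pairs $((u,v),(v\oplus k,w))$ that arose from a single $G_k^U$ call) and its ``isolated'' part, applies $\Vfunc{f}$ with $f(v,k)=v\oplus k$ to undo the $X^k$ shift on the middle entries, and stores the results as a triple-relation in $\reg{X_2Z_2Y_2}$ and an injective relation in $\reg{X_1Y_1}$. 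The companion isometry $\mathsf{Augment}:\reg{X_1Y_1 X_2Y_2}\to \reg{X_1Y_1X_2Z_2Y_2 K}$ does the reverse: it coherently samples fresh intermediate values $z_i$ uniformly from the complement of $\mathrm{Im}(R_1\cup R_2)\cup\{z_{<i}\}$ and a uniformly random key $k$ from the complement of the ``bad set'' $\{x_i\oplus y_j,\,x_i\oplus z_j\}_{i,j}$, and then reassembles the combined relation. The crucial combinatorial fact, which mirrors the classical argument in the technical overview, is that for any fixed tuple $(\vec{x},\vec{y},\vec{z})$ of inputs, outputs, and intermediate values, a key $k$ produces more than $t$ correlated pairs in $R$ only if $k$ lies in the bad set, which has size $O(t^2)$; since $k$ is uniform over $\{0,1\}^n$, the weight on such keys is $O(t^2/2^n)=\negl(n)$. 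On the complementary good set, $\mathsf{Split}$ and $\mathsf{Augment}$ are exact inverses of one another, so they send the two purified states to the same vector up to negligible $\ell_2$-error.

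The final step is to apply gentle-measurement-type bounds (\Cref{lem:norm:proj}) together with the fact that isometries preserve trace distance and partial trace is non-expansive, to conclude that the reduced states on $\reg{AB}$ in the real and ideal worlds are at trace distance at most $O(t^2/\sqrt{N})+O(t/2^{n/2}) = \negl(n)$. The main obstacle is not the classical combinatorics but rather the quantum bookkeeping required to make the isometries $\mathsf{Split}$ and $\mathsf{Augment}$ into genuine isometries with matching normalizations: the amplitudes $1/\sqrt{N-|R|}$ produced by nested $\mathsf{PR}$ calls in the real world must be matched against the amplitudes produced by independently sampled intermediate values $z_i$ in the ideal world, and the uniform-over-good-keys distribution used by $\mathsf{Augment}$ must be shown to commute with these normalizations up to negligible error. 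Handling this coherently, rather than for each basis state $(\vec{x},\vec{y},\vec{z},k)$ separately, is where I expect the most care to be needed.
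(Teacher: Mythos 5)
Your proposal follows essentially the same route as the paper's proof: the same hybrid chain (real world $\to$ single path-recording oracle with purified key $\to$ two collision-free path-recording oracles via \Cref{cor:ind:twopr:shared} $\to$ two independent Haar unitaries), the same good-key projector characterized by the bad set $\left(\set{\Vec{x}}\oplus\set{\Vec{y}}\right)\cup\left(\set{\Vec{x}}\oplus\set{\Vec{z}}\right)$ of size $O(t^2)$, and the same pair of isometries built from $\Vpart$, $\Vpair$, and $\Vfunc{f}$ (your $\mathsf{Split}$ and $\mathsf{Augment}$ are the paper's $V^2\Pi^{\mathrm{good}}$ and $V^3$), concluded via \Cref{lem:norm:sub-state}, Cauchy--Schwarz, and the gentle measurement lemma. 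You also correctly identify the main remaining labor — matching the $1/\sqrt{N-|R|}$ normalizations of nested $\mathsf{PR}$ calls against the coherent sampling of the intermediate multiset $\set{\Vec{z}}$ — which the paper resolves by the permutation re-indexing $\sigma\gets\pi^{-1}\sigma$ that disentangles $\Vec{z}$ into its own register.
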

\begin{proof}
Without loss of generality, assume that $\Adversary$ queries the first oracle on indices $\bfa = \set{a_1,\ldots,a_{\ell}} \subseteq [t]$ and the second oracle on indices $\bfb = [t]\setminus\set{a_1,\ldots,a_{\ell}}$. Consider the following hybrids. \\

\noindent $\hybrid_1$: Output $\rho_1=\underset{\substack{U \leftarrow \haarunitaries_n\\ k \leftarrow \{0,1\}^{n}}}{\expect}\left[\ketbra{\Adversary_{t}^{G_k^U,U}}{\Adversary_{t}^{G_k^U,U}}\right]$.

\noindent $\hybrid_2$: Output $\rho_2=\underset{\substack{k \leftarrow \{0,1\}^{n}}}{\expect}\left[\Tr_{\reg{E}_1}\left( \ketbra{\Adversary_t^{{\pr}_{\reg{A}\reg{E}_1}X^k{\pr}_{\reg{A}\reg{E}_1},{{\pr}_{\reg{A}\reg{E}_1}} }}{\Adversary_t^{{\pr}_{\reg{A}\reg{E}_1}X^k{{\pr}_{\reg{A}\reg{E}_1}},{{\pr}_{\reg{A}\reg{E}_1}} }}_{\reg{A}\reg{B}\reg{E}_1}  \right)\right]$.

\noindent $\hybrid_3$: Output $\rho_3= \Tr_{\reg{E}_1\reg{E}_2}\left( \ketbra{\Adversary_t^{{\pr}_{\reg{A}\reg{E}}^{(\reg{E}_1)},{{\pr}_{\reg{A}\reg{E}}^{(\reg{E}_2)}} }}{\Adversary_t^{{\pr}_{\reg{A}\reg{E}}^{(\reg{E}_1)},{{\pr}_{\reg{A}\reg{E}}^{(\reg{E}_2)}} }}_{\reg{A}\reg{B}\reg{E}_1\reg{E}_2}  \right)$. \\
\ \\

\noindent $\hybrid_4$: Output $\rho_4= \underset{\substack{U \leftarrow \haarunitaries_n\\ V \leftarrow \haarunitaries_n}}{\expect}\left[\ketbra{\Adversary_{t}^{V,U}}{\Adversary_{t}^{V,U}}\right]$.

\begin{myclaim}
$\TD(\rho_1,\rho_2) \leq \frac{4(t+\ell)(t+\ell-1)}{N+1}$.
\end{myclaim}
\begin{proof}
Follows from~\Cref{thm:MH24}. 
\end{proof}

\begin{myclaim}
$\TD(\rho_3,\rho_4) \leq O\left(\frac{(t+\ell)^2}{\sqrt{N}}\right)$.
\end{myclaim}
\begin{proof}
Follows from~\Cref{cor:ind:twopr:shared}. 
\end{proof}

\noindent Hence, the only thing left to prove is that $\rho_2$ is close to $\rho_3$.
\begin{myclaim}
    $\TD(\rho_2,\rho_3)\leq 2\sqrt{\frac{t^2+t\ell}{N}}$.
\end{myclaim}
\begin{proof}
    We start by noticing that $\rho_2 = \Tr_{\reg{E}_1\reg{K}}\left( \ketbra{\psi_2}{\psi_2}_{\reg{A}\reg{B}\reg{E}_1\reg{K}}\right)$, where $$\ket{\psi_{2}}_{\reg{A}\reg{B}\reg{E}_1\reg{K}} =  \sum_{k\in\bit^n} \frac{1}{2^{n/2}} \ket{\Adversary_t^{{\pr}_{\reg{A}\reg{E}_1}X^k{\pr}_{\reg{A}\reg{E}_1},{{\pr}_{\reg{A}\reg{E}_1}}}}_{\reg{A}\reg{B}\reg{E}_1}\ket{k}_{\reg{K}}.$$
    Similarly, $\rho_3 = \Tr_{\reg{E}_1\reg{E}_2}\left( \ketbra{\psi_3}{\psi_3}_{\reg{A}\reg{B}\reg{E}_1\reg{E}_2}  \right)$, where 
    $$\ket{\psi_3}_{\reg{A}\reg{B}\reg{E}_1\reg{E}_2} = \ket{\Adversary_t^{{\pr}_{\reg{A}\reg{E}}^{(\reg{E}_1)},{{\pr}_{\reg{A}\reg{E}}^{(\reg{E}_2)}}}}_{\reg{A}\reg{B}\reg{E}_1\reg{E}_2}.$$

    \noindent We start by expanding $\ket{\psi_3}$ using the definition of $\pr_{\reg{A}\reg{E}}^{(\reg{E}_1)}$ and $\pr_{\reg{A}\reg{E}}^{(\reg{E}_2)}$, 
    \ifllncs
        \begin{multline*}
            \ket{\psi_3}_{\reg{A}\reg{B}\reg{E}_1\reg{E}_2} = \frac{1}{\sqrt{N^{\downarrow t}}}\sum_{\substack{(x_1,\ldots,x_t)\in [N]^t\\ (y_1,\ldots,y_t)\in[N]^t_{\dist}}}\prod_{i=1}^{t}\left( \ketbra{y_i}{x_i}_{\reg{A}} \cdot A_{\reg{A} \reg{B}}^{(i)} \right) \ket{0}_{\reg{A}\reg{B}} \\ \ket{\set{(x_i,y_i)}_{i\in\bfa}}_{\reg{E}_1}\ket{\set{(x_i,y_i)}_{i\in\bfb}}_{\reg{E}_2}.
        \end{multline*}
    \else
        $$\ket{\psi_3}_{\reg{A}\reg{B}\reg{E}_1\reg{E}_2} = \frac{1}{\sqrt{N^{\downarrow t}}}\sum_{\substack{(x_1,\ldots,x_t)\in [N]^t\\ (y_1,\ldots,y_t)\in[N]^t_{\dist}}}\prod_{i=1}^{t}\left( \ketbra{y_i}{x_i}_{\reg{A}} \cdot A_{\reg{A} \reg{B}}^{(i)} \right) \ket{0}_{\reg{A}\reg{B}} \ket{\set{(x_i,y_i)}_{i\in\bfa}}_{\reg{E}_1}\ket{\set{(x_i,y_i)}_{i\in\bfb}}_{\reg{E}_2}.$$
    \fi

    \noindent We use $\Vec{x} = (x_1,\ldots,x_t)$, $\Vec{y} = (y_1,\ldots,y_t)$, and $\ket{\phi_{\Vec{x},\Vec{y}}}_{\reg{A}\reg{B}} = \frac{1}{\sqrt{N^{\downarrow t}}}\prod_{i=1}^{t}\left( \ketbra{y_i}{x_i}_{\reg{A}} \cdot A_{\reg{A} \reg{B}}^{(i)} \right) \ket{0}_{\reg{A}\reg{B}}$. Hence,
    $$\ket{\psi_3}_{\reg{A}\reg{B}\reg{E}_1\reg{E}_2} = \sum_{\substack{\Vec{x}\in [N]^t\\ \Vec{y}\in[N]^t_{\dist}}}\ket{\phi_{\Vec{x},\Vec{y}}}_{\reg{A}\reg{B}} \ket{\set{(x_i,y_i)}_{i\in\bfa}}_{\reg{E}_1}\ket{\set{(x_i,y_i)}_{i\in\bfb}}_{\reg{E}_2}.$$

    \noindent Similarly, to expand $\ket{\psi_2}$, we notice that for any $k\in\bit^n$, ${\pr}_{\reg{A}\reg{E}_1}X^k{\pr}_{\reg{A}\reg{E}_1}$ behaves as follows:
    \ifllncs
        \begin{multline*}
            {\pr}_{\reg{A}\reg{E}_1}X^k{\pr}_{\reg{A}\reg{E}_1} \ket{x_i}_{\reg{A}}\ket{R}_{\reg{E}_1} = \\ 
            \frac{1}{\sqrt{(N-|R|)^{\downarrow 2}}}\sum_{\substack{z_i\in [N]\setminus\Image(R)\\ y_i\in [N]\setminus (\Image(R)\cup\set{z_i})}}\ket{y_i}_{\reg{A}}\ket{R\uplus\set{(x_i,z_i),(z_i\oplus k, y_i)}}_{\reg{E}_1}.
        \end{multline*}
    \else
        $${\pr}_{\reg{A}\reg{E}_1}X^k{\pr}_{\reg{A}\reg{E}_1} \ket{x_i}_{\reg{A}}\ket{R}_{\reg{E}_1} = \frac{1}{\sqrt{(N-|R|)^{\downarrow 2}}}\sum_{\substack{z_i\in [N]\setminus\Image(R)\\ y_i\in [N]\setminus (\Image(R)\cup\set{z_i})}}\ket{y_i}_{\reg{A}}\ket{R\uplus\set{(x_i,z_i),(z_i\oplus k, y_i)}}_{\reg{E}_1}.$$
    \fi
    
    \noindent Using identity functions, we can write the above as:
    \ifllncs
        \begin{multline*}
            {\pr}_{\reg{A}\reg{E}_1}X^k{\pr}_{\reg{A}\reg{E}_1} \ket{x_i}_{\reg{A}}\ket{R}_{\reg{E}_1} = \frac{1}{\sqrt{(N-|R|)^{\downarrow 2}}}\sum_{\substack{z_i,y_i\in [N]}}\\ \prod_{w\in\Image(R)}\left(\delta_{z_i\neq w}\delta_{y_i\neq w}\right)\delta_{z_i\neq y_i}
            \ket{y_i}_{\reg{A}}\ket{R\uplus\set{(x_i,z_i),(z_i\oplus k, y_i)}}_{\reg{E}_1}.
        \end{multline*}
    \else
        $${\pr}_{\reg{A}\reg{E}_1}X^k{\pr}_{\reg{A}\reg{E}_1} \ket{x_i}_{\reg{A}}\ket{R}_{\reg{E}_1} = \frac{1}{\sqrt{(N-|R|)^{\downarrow 2}}}\sum_{\substack{z_i,y_i\in [N]}}\prod_{w\in\Image(R)}\left(\delta_{z_i\neq w}\delta_{y_i\neq w}\right)\delta_{z_i\neq y_i}\ket{y_i}_{\reg{A}}\ket{R\uplus\set{(x_i,z_i),(z_i\oplus k, y_i)}}_{\reg{E}_1}.$$
    \fi

    \noindent Hence, we expand $\ket{\psi_2}$ as 

    \ifllncs
        \begin{equation*}
        \begin{split}
        & \ket{\psi_2}_{\reg{A}\reg{B}\reg{E}_1\reg{K}} = \frac{1}{\sqrt{N\cdot N^{\downarrow (t+\ell)}}}\sum_{\substack{k\in\bit^n\\ x_1,\ldots,x_t\in [N]\\ y_1,\ldots,y_t\in[N]\\ z_{a_1},\ldots,z_{a_\ell}\in [N]}}
        \prod_{i=1}^{t} \left(\prod_{j=i}^t\delta_{y_i\neq y_j} \prod_{j=1}^\ell\delta_{y_i\neq z_{a_j}}\right) \times \\
        & \prod_{i=1}^{t}\left( \ketbra{y_i}{x_i}_{\reg{A}} \cdot A_{\reg{A} \reg{B}}^{(i)} \right) \ket{0}_{\reg{A}\reg{B}} 
        \ket{\set{(x_i,z_i)}_{i\in\bfa}\uplus \set{(z_i\oplus k,y_i)}_{i\in\bfa}\uplus\set{(x_i,y_i)}_{i\in\bfb}}_{\reg{E}_1}\ket{k}_{\reg{K}}.
        \end{split}
        \end{equation*}
    \else
        \begin{multline*}
        \ket{\psi_2}_{\reg{A}\reg{B}\reg{E}_1\reg{K}} = \frac{1}{\sqrt{N\cdot N^{\downarrow (t+\ell)}}}\sum_{\substack{k\in\bit^n\\ x_1,\ldots,x_t\in [N]\\ y_1,\ldots,y_t\in[N]\\ z_{a_1},\ldots,z_{a_\ell}\in [N]}}\prod_{i=1}^{t}\left(\prod_{j=i}^t\delta_{y_i\neq y_j}\prod_{j=1}^\ell\delta_{y_i\neq z_{a_j}}\right)\prod_{i=1}^{t}\left( \ketbra{y_i}{x_i}_{\reg{A}} \cdot A_{\reg{A} \reg{B}}^{(i)} \right) \ket{0}_{\reg{A}\reg{B}} \\
        \otimes \ket{\set{(x_i,z_i)}_{i\in\bfa}\uplus \set{(z_i\oplus k,y_i)}_{i\in\bfa}\uplus\set{(x_i,y_i)}_{i\in\bfb}}_{\reg{E}_1}\ket{k}_{\reg{K}}.
        \end{multline*}
    \fi

    \noindent We use $\Vec{x} = (x_1,\ldots,x_t)$, $\Vec{y} = (y_1,\ldots,y_t)$ and $\Vec{z} = (z_{a_1},\ldots,z_{a_\ell})$ and expand $\ket{\psi_2}$ as 
    \begin{multline*}
    \ket{\psi_2}_{\reg{A}\reg{B}\reg{E}_1\reg{K}} = \\
    \sum_{\substack{\Vec{x}\in [N]^t
    \Vec{y}\in[N]^t_\dist}} \ket{\phi_{\Vec{x},\Vec{y}}}_{\reg{A}\reg{B}} \otimes  \frac{1}{\sqrt{N\cdot (N-t)^{\downarrow \ell}}} 
    \sum_{\substack{k\in\bit^n\\ \Vec{z}\in ([N]\setminus\set{\Vec{y}})^\ell_{\dist}}}\ket{\set{(x_i,z_i)}_{i\in\bfa}\uplus \set{(z_i\oplus k,y_i)}_{i\in\bfa}\uplus\set{(x_i,y_i)}_{i\in\bfb}}_{\reg{E}_1}\ket{k}_{\reg{K}}.
    \end{multline*}

    \noindent We prove $\rho_2$ is close to $\rho_3$ by showing that there exists a projector $\Pi^{\mathrm{good}}_{\reg{E}_1\reg{K}}$, and two isometries $V^2_{\reg{E}_1\reg{K}}$ and $V^{3}_{\reg{E}_1\reg{E}_2}$ such that $V^2_{\reg{E}_1\reg{K}}\Pi^{\mathrm{good}}_{\reg{E}_1\reg{K}}\ket{\psi_{2}}_{\reg{A}\reg{B}\reg{E}_1\reg{K}}$ is close to $V^{3}_{\reg{E}_1\reg{E}_2}\ket{\psi_3}_{\reg{A}\reg{B}\reg{E}_1\reg{E}_2}$. To define $\Pi^{\mathrm{good}}_{\reg{E}_1\reg{K}}$, we start by defining $\CorX(R,k)$ as a set of pairs in $R$ that are correlated by $k$. Formally,
    \begin{equation*}
        \CorX(R,k) := \set{((u,v),(u',v'))\in R \times R: v\oplus u'=k}.
    \end{equation*} 
    Next, we define the set of good keys for a given relation $R$ as 
    $$\good(R,\ell) := \set{k:|\CorX(R,k)|=\ell}.$$ Hence, we define $\Pi^{\mathrm{good}}_{\reg{E}_1\reg{K}}$ as $$\Pi^{\mathrm{good}}_{\reg{E}_1\reg{K}} := \sum_{R}\left(\ketbra{R}{R}_{\reg{E}_1}\otimes\sum_{k\in\good(R,\ell)}\ketbra{k}{k}_{\reg{K}}\right).$$

    \noindent Next, we study the effect of $\Pi^{\mathrm{good}}_{\reg{E}_1\reg{K}}$ on relations of the form $\set{(x_i,z_i)}_{i\in\bfa}\uplus \set{(z_i\oplus k,y_i)}_{i\in\bfa}\uplus\set{(x_i,y_i)}_{i\in\bfb}$.
    
    \noindent We use the following notation: $\left(\set{\Vec{x}}\oplus\set{\Vec{y}}\right)\cup\left(\set{\Vec{x}}\oplus\set{\Vec{z}}\right) = \left(\set{x_i\oplus y_j}_{\substack{i\in [t]\\ j\in [t]}}\cup \set{x_i\oplus z_{j}}_{\substack{i\in [t]\\ j\in \bfa}}\right).$

    \begin{myclaim}     \label{cor:proj:key} 
        Let $\Vec{x}\in [N]^t$, $\Vec{y}\in [N]^t_{\dist}$ and $\Vec{z}\in ([N]\setminus\set{\Vec{y}})^\ell_{\dist}$, then 
        \begin{multline*}
            \Pi^{\mathrm{good}}_{\reg{E}_1\reg{K}} \sum_{\substack{k\in\bit^n}}\ket{\set{(x_i,z_i)}_{i\in\bfa}\uplus \set{(z_i\oplus k,y_i)}_{i\in\bfa}\uplus\set{(x_i,y_i)}_{i\in\bfb}}_{\reg{E}_1}\ket{k}_{\reg{K}} = \\
            \sum_{\substack{k\in\bit^n\setminus \left(\set{\Vec{x}}\oplus\set{\Vec{y}}\right)\cup\left(\set{\Vec{x}}\oplus\set{\Vec{z}}\right)}}\ket{\set{(x_i,z_i)}_{i\in\bfa}\uplus \set{(z_i\oplus k,y_i)}_{i\in\bfa}\uplus\set{(x_i,y_i)}_{i\in\bfb}}_{\reg{E}_1}\ket{k}_{\reg{K}}.
        \end{multline*}
    \end{myclaim}
    
    \noindent The proof of~\Cref{cor:proj:key} is deferred to~\Cref{app:pru:shortkeys}. We define $\ket{\wt{\psi}_2} := \Pi^{\mathrm{good}}_{\reg{E}_1\reg{K}}\ket{\psi_2}$. Then by~\Cref{cor:proj:key}, 
    \begin{multline*}
        \ket{\wt{\psi}_2} = \sum_{\substack{\Vec{x}\in [N]^t\\ \Vec{y}\in[N]^t_\dist}} \ket{\phi_{\Vec{x},\Vec{y}}}_{\reg{A}\reg{B}} \otimes  \frac{1}{\sqrt{N\cdot (N-t)^{\downarrow \ell}}}\sum_{\substack{\Vec{z}\in ([N]\setminus\set{\Vec{y}})^\ell_{\dist}}}\\ \sum_{k\in [N]\setminus\left(\set{\Vec{x}}\oplus\set{\Vec{y}}\right)\cup\left(\set{\Vec{x}}\oplus\set{\Vec{z}}\right)}\ket{\set{(x_i,z_i)}_{i\in\bfa}\uplus \set{(z_i\oplus k,y_i)}_{i\in\bfa}\uplus\set{(x_i,y_i)}_{i\in\bfb}}_{\reg{E}_1}\ket{k}_{\reg{K}}.
    \end{multline*}

    \noindent Next we define the isometry $V^{2}_{\reg{E_1}\reg{K}}$ on $\ket{\wt{\psi}_2}$ as the following procedure:
    \begin{enumerate}
        \item We define the following partition (indexed by a key $k$) of any relation $R = \set{(u_i,v_i)}_{i}$ as $R_2^k = \set{(u,v): \exists (u',v')\in R, v\oplus u' = k}$, and $R_1^k = R\setminus R_2^k$. Then applying $\Vpart_{\reg{E}_1\reg{K}}$ isometry for the above partition $\ket{\wt{\psi}_2}$ gives us 
    \begin{multline*}
    \sum_{\substack{\Vec{x}\in [N]^t\\ \Vec{y}\in[N]^t_\dist}} \ket{\phi_{\Vec{x},\Vec{y}}}_{\reg{A}\reg{B}} \otimes  \frac{1}{\sqrt{N\cdot (N-t)^{\downarrow \ell}}}\sum_{\substack{\Vec{z}\in ([N]\setminus\set{\Vec{y}})^\ell_{\dist}}}\\ \sum_{k\in [N]\setminus\left(\set{\Vec{x}}\oplus\set{\Vec{y}}\right)\cup\left(\set{\Vec{x}}\oplus\set{\Vec{z}}\right)}\ket{ \set{(z_i\oplus k,y_i)}_{i\in\bfa}\uplus\set{(x_i,y_i)}_{i\in\bfb}}_{\reg{E}_1}\ket{\set{(x_i,z_i)}_{i\in\bfa}}_{\reg{E}_2}\ket{k}_{\reg{K}}.
    \end{multline*}
    
    \item We define the following partition (indexed by a key $k$ and a relation $R'$) of any relation $R = \set{(u_i,v_i)}_{i}$ as $R_2^k = \set{(u,v): \exists (u',v')\in R', u\oplus v' = k}$, and $R_1^k = R\setminus R_2^k$. Then applying $\Vpart_{\reg{E}_1\reg{E}_2\reg{K}}$ isometry gives us 
    \begin{multline*}
    \sum_{\substack{\Vec{x}\in [N]^t\\ \Vec{y}\in[N]^t_\dist}} \ket{\phi_{\Vec{x},\Vec{y}}}_{\reg{A}\reg{B}} \otimes  \frac{1}{\sqrt{N\cdot (N-t)^{\downarrow \ell}}}\sum_{\substack{\Vec{z}\in ([N]\setminus\set{\Vec{y}})^\ell_{\dist}}}\\ \sum_{k\in [N]\setminus\left(\set{\Vec{x}}\oplus\set{\Vec{y}}\right)\cup\left(\set{\Vec{x}}\oplus\set{\Vec{z}}\right)}\ket{ \set{(x_i,y_i)}_{i\in\bfb}}_{\reg{E}_1}\ket{\set{(x_i,z_i)}_{i\in\bfa}}_{\reg{E}_2}\ket{\set{(z_i\oplus k,y_i)}_{i\in\bfa}}_{\reg{E}_3}\ket{k}_{\reg{K}}.
    \end{multline*}
    
    \item We define the following pairing (indexed by a key $k$) of two relations $R_1 = \set{(u^1_i,v^1_i)}_{i}$ and  $R_2 = \set{(u^2_i,v^2_i)}_{i}$ as $R_{R_1,R_2}^k = \set{(u^1_i,v^1_i,u^2_{i'},v^2_{i'}):  (u^1_i,v^1_i)\in R_1, (u^2_{i'},v^2_{i'})\in R_2, v^1_i\oplus u^2_{i'} = k}$. Then applying $\Vpair_{\reg{E}_2\reg{E}_3\reg{K}}$ isometry gives us 
    \begin{multline*}
    \sum_{\substack{\Vec{x}\in [N]^t\\ \Vec{y}\in[N]^t_\dist}} \ket{\phi_{\Vec{x},\Vec{y}}}_{\reg{A}\reg{B}} \otimes  \frac{1}{\sqrt{N\cdot (N-t)^{\downarrow \ell}}}\sum_{\substack{\Vec{z}\in ([N]\setminus\set{\Vec{y}})^\ell_{\dist}}}\\ \sum_{k\in [N]\setminus\left(\set{\Vec{x}}\oplus\set{\Vec{y}}\right)\cup\left(\set{\Vec{x}}\oplus\set{\Vec{z}}\right)}\ket{ \set{(x_i,y_i)}_{i\in\bfb}}_{\reg{E}_1}\ket{\set{(x_i,z_i,z_i\oplus k,y_i)}_{i\in\bfa}}_{\reg{E}_2}\ket{k}_{\reg{K}}.
    \end{multline*}
    
    \item We define the following injection (indexed by a key $k$) $f_k:(x_i,z_i,z_i\oplus k,y_i)\mapsto (x_i,z_i,y_i)$. Then applying $\Vfunc{f_k}_{\reg{E}_2\reg{K}}$ isometry gives us 
    \begin{multline*}
    \sum_{\substack{\Vec{x}\in [N]^t\\ \Vec{y}\in[N]^t_\dist}} \ket{\phi_{\Vec{x},\Vec{y}}}_{\reg{A}\reg{B}} \otimes  \frac{1}{\sqrt{N\cdot (N-t)^{\downarrow \ell}}}\sum_{\substack{\Vec{z}\in ([N]\setminus\set{\Vec{y}})^\ell_{\dist}}}\\ \ket{ \set{(x_i,y_i)}_{i\in\bfb}}_{\reg{E}_1}\ket{\set{(x_i,z_i,y_i)}_{i\in\bfa}}_{\reg{E}_2}
    \otimes 
    \sum_{k\in [N]\setminus\left(\set{\Vec{x}}\oplus\set{\Vec{y}}\right)\cup\left(\set{\Vec{x}}\oplus\set{\Vec{z}}\right)}\ket{k}_{\reg{K}}.
    \end{multline*}
    \end{enumerate}

    \noindent Now, we further expand the register $\reg{E}_2$ according to the definition of multiset states: 
    \begin{multline*}
        \sum_{\substack{\Vec{x}\in [N]^t\\ \Vec{y}\in[N]^t_\dist}} \ket{\phi_{\Vec{x},\Vec{y}}}_{\reg{A}\reg{B}} \otimes  \frac{1}{\sqrt{N\cdot (N-t)^{\downarrow \ell}}}\sum_{\substack{\Vec{z}\in ([N]\setminus\set{\Vec{y}})^\ell_{\dist}}}\ket{ \set{(x_i,y_i)}_{i\in\bfb}}_{\reg{E}_1}\otimes\\ 
        \left(\frac{1}{\sqrt{\ell!}}\sum_{\pi\in\Symgp_{\ell}}S_{\pi}\ket{x_{a_1},\ldots,x_{a_\ell}} \otimes S_{\pi}\ket{z_{a_1},\ldots,z_{a_\ell}} \otimes S_{\pi}\ket{y_{a_1},\ldots,y_{a_\ell}}\right)_{\reg{E}_2}\otimes
        \sum_{k\in [N]\setminus\left(\set{\Vec{x}}\oplus\set{\Vec{y}}\right)\cup\left(\set{\Vec{x}}\oplus\set{\Vec{z}}\right)}\ket{k}_{\reg{K}}.
    \end{multline*}

    \noindent Notice that we can sample $\set{\Vec{z}}$ first and then assign an order\footnote{Consider $\set{\Vec{z}}$ to have a canonical order and we look at all possible permutations of this order.}. Hence, the state looks like 
    \begin{multline*}
        \sum_{\substack{\Vec{x}\in [N]^t\\ \Vec{y}\in[N]^t_\dist}} \ket{\phi_{\Vec{x},\Vec{y}}}_{\reg{A}\reg{B}} \otimes  \frac{1}{\sqrt{N\cdot (N-t)^{\downarrow \ell}}}\sum_{\set{\Vec{z}}\in {[N]\setminus\set{\Vec{y}} \choose \ell}}\sum_{\sigma \in \Symgp_\ell}\ket{ \set{(x_i,y_i)}_{i\in\bfb}}_{\reg{E}_1} \otimes \\ \left(\frac{1}{\sqrt{\ell!}}\sum_{\pi\in\Symgp_{\ell}} S_{\pi}\ket{x_{a_1},\ldots,x_{a_\ell}} \otimes S_{\pi}S_{\sigma}\ket{z_{a_1},\ldots,z_{a_\ell}} \otimes S_{\pi}\ket{y_{a_1},\ldots,y_{a_\ell}}\right)_{\reg{E}_2} \otimes 
        \sum_{k\in [N]\setminus\left(\set{\Vec{x}}\oplus\set{\Vec{y}}\right)\cup\left(\set{\Vec{x}}\oplus\set{\Vec{z}}\right)}\ket{k}_{\reg{K}}.
    \end{multline*}
    By switching the order of $\sigma$ and $\pi$ and setting $\sigma\gets\pi^{-1}\sigma$, we can disentangle $\Vec{z}$, 
    \begin{multline*}
        \sum_{\substack{\Vec{x}\in [N]^t\\ \Vec{y}\in[N]^t_\dist}} \ket{\phi_{\Vec{x},\Vec{y}}}_{\reg{A}\reg{B}} \otimes  \frac{1}{\sqrt{N\cdot (N-t)^{\downarrow \ell}}}\sum_{\set{\Vec{z}}\in {[N]\setminus\set{\Vec{y}} \choose \ell}}\ket{ \set{(x_i,y_i)}_{i\in\bfb}}_{\reg{E}_1}\otimes\\
        \left(\frac{1}{\sqrt{\ell!}}\sum_{\pi\in\Symgp_{\ell}}S_{\pi}\ket{x_{a_1},\ldots,x_{a_\ell}}\otimes S_{\pi}\ket{y_{a_1},\ldots,y_{a_\ell}}\right)_{\reg{E}_2}\otimes\left(\sum_{\sigma \in \Symgp_\ell}S_{\sigma}\ket{z_{a_1},\ldots,z_{a_\ell}}\right)_{\reg{E}_3}\otimes \\
        \sum_{k\in [N]\setminus\left(\set{\Vec{x}}\oplus\set{\Vec{y}}\right)\cup\left(\set{\Vec{x}}\oplus\set{\Vec{z}}\right)}\ket{k}_{\reg{K}}.
    \end{multline*}
    Using the multiset state notation:
    \begin{multline*}
        \sum_{\substack{\Vec{x}\in [N]^t\\ \Vec{y}\in[N]^t_\dist}} \ket{\phi_{\Vec{x},\Vec{y}}}_{\reg{A}\reg{B}} \otimes  \frac{\sqrt{\ell!}}{\sqrt{N\cdot (N-t)^{\downarrow \ell}}}\sum_{\set{\Vec{z}}\in {[N]\setminus\set{\Vec{y}} \choose \ell}}\ket{ \set{(x_i,y_i)}_{i\in\bfb}}_{\reg{E}_1} \otimes \\ \ket{\set{(x_i,y_i)}_{i\in\bfa}}_{\reg{E}_2}\ket{\set{z_i}_{i\in\bfa}}_{\reg{E}_3}\sum_{k\in [N]\setminus\left(\set{\Vec{x}}\oplus\set{\Vec{y}}\right)\cup\left(\set{\Vec{x}}\oplus\set{\Vec{z}}\right)}\ket{k}_{\reg{K}}.
    \end{multline*}

    \noindent Re-arranging and setting $\ket{\Psi_2} := V^2_{\reg{E_1}\reg{K}}\Pi^{\mathrm{good}}_{\reg{E}_1\reg{K}}\ket{\psi_2}_{\reg{A}\reg{B}\reg{E}_1\reg{K}}$, then 
    \begin{multline*}
    \ket{\Psi_2}  = \sum_{\substack{\Vec{x}\in [N]^t\\ \Vec{y}\in[N]^t_\dist}} \ket{\phi_{\Vec{x},\Vec{y}}}_{\reg{A}\reg{B}} \otimes \ket{ \set{(x_i,y_i)}_{i\in\bfa}}_{\reg{E}_1}\ket{\set{(x_i,y_i)}_{i\in\bfb}}_{\reg{E}_2}\otimes\\
    \sqrt{\frac{\ell!}{(N-t)^{\downarrow \ell}}}\sum_{\set{\Vec{z}}\in {[N]\setminus\set{\Vec{y}} \choose \ell}}\ket{\set{z_i}_{i\in\bfa}}_{\reg{E}_3}\frac{1}{\sqrt{N}} \otimes \sum_{k\in [N]\setminus\left(\set{\Vec{x}}\oplus\set{\Vec{y}}\right)\cup\left(\set{\Vec{x}}\oplus\set{\Vec{z}}\right)}\ket{k}_{\reg{K}}.
    \end{multline*}

    \noindent Recall that $\ket{\psi_3}$ is defined as 
    $$\ket{\psi_3}_{\reg{A}\reg{B}\reg{E}_1\reg{E}_2} = \sum_{\substack{\Vec{x}\in [N]^t\\ \Vec{y}\in[N]^t_{\dist}}}\ket{\phi_{\Vec{x},\Vec{y}}}_{\reg{A}\reg{B}} \ket{\set{(x_i,y_i)}_{i\in\bfa}}_{\reg{E}_1}\ket{\set{(x_i,y_i)}_{i\in\bfb}}_{\reg{E}_2}.$$

    \noindent Next we define the isometry $V^{3}_{\reg{E}_1\reg{E}_2}$ on $\ket{\psi_3}$ as the following procedure:
    \begin{enumerate}
        \item Controlled on the $y_i$'s in registers $\reg{E}_1\reg{E}_2$, we create a superposition over all $\ell$-sized sets disjoint from $y_i$'s to get 
        \begin{multline*}
        \sum_{\substack{\Vec{x}\in [N]^t\\ \Vec{y}\in[N]^t_\dist}} \ket{\phi_{\Vec{x},\Vec{y}}}_{\reg{A}\reg{B}} \otimes \ket{ \set{(x_i,y_i)}_{i\in\bfa}}_{\reg{E}_1}\ket{\set{(x_i,y_i)}_{i\in\bfb}}_{\reg{E}_2} \otimes \sqrt{\frac{\ell!}{(N-t)^{\downarrow \ell}}}\sum_{\set{\Vec{z}}\in {[N]\setminus\set{\Vec{y}} \choose \ell}}\ket{\set{z_i}_{i\in\bfa}}_{\reg{E}_3}.
        \end{multline*}

        \item Finally, controlled on $\reg{E}_1\reg{E}_2\reg{E}_3$, we create a superposition over all keys not in $\left(\set{\Vec{x}}\oplus\set{\Vec{y}}\right)\cup\left(\set{\Vec{x}}\oplus\set{\Vec{z}}\right)$, 
        \begin{multline*}
        \sum_{\substack{\Vec{x}\in [N]^t\\ \Vec{y}\in[N]^t_\dist}} \ket{\phi_{\Vec{x},\Vec{y}}}_{\reg{A}\reg{B}} \otimes \ket{ \set{(x_i,y_i)}_{i\in\bfa}}_{\reg{E}_1}\ket{\set{(x_i,y_i)}_{i\in\bfb}}_{\reg{E}_2}\otimes 
        \sqrt{\frac{\ell!}{(N-t)^{\downarrow \ell}}}\sum_{\set{\Vec{z}}\in {[N]\setminus\set{\Vec{y}} \choose \ell}}\ket{\set{z_i}_{i\in\bfa}}_{\reg{E}_3} \\
        \otimes
        \frac{1}{\sqrt{N-|\left(\set{\Vec{x}}\oplus\set{\Vec{y}}\right)\cup\left(\set{\Vec{x}}\oplus\set{\Vec{z}}\right)|}}\sum_{k\in[N]\setminus\left(\set{\Vec{x}}\oplus\set{\Vec{y}}\right)\cup\left(\set{\Vec{x}}\oplus\set{\Vec{z}}\right)}\ket{k}_{\reg{K}}.
        \end{multline*}
    \end{enumerate}
    Hence, let $\ket{\Psi_3} := V^3_{\reg{E}_1\reg{E}_2}\ket{\psi_3}_{\reg{A}\reg{B}\reg{E}_1\reg{E}_2}$. Then 
    \begin{multline*}
    \ket{\Psi_3} = \sum_{\substack{\Vec{x}\in [N]^t\\ \Vec{y}\in[N]^t_\dist}} \ket{\phi_{\Vec{x},\Vec{y}}}_{\reg{A}\reg{B}} \otimes \ket{ \set{(x_i,y_i)}_{i\in\bfa}}_{\reg{E}_1}\ket{\set{(x_i,y_i)}_{i\in\bfb}}_{\reg{E}_2}\otimes
    \sqrt{\frac{\ell!}{(N-t)^{\downarrow \ell}}}\sum_{\set{\Vec{z}}\in {[N]\setminus\set{\Vec{y}} \choose \ell}}\ket{\set{z_i}_{i\in\bfa}}_{\reg{E}_3} \\
    \otimes
    \frac{1}{\sqrt{N-|\left(\set{\Vec{x}}\oplus\set{\Vec{y}}\right)\cup\left(\set{\Vec{x}}\oplus\set{\Vec{z}}\right)|}}\sum_{k\in[N]\setminus\left(\set{\Vec{x}}\oplus\set{\Vec{y}}\right)\cup\left(\set{\Vec{x}}\oplus\set{\Vec{z}}\right)}\ket{k}_{\reg{K}}.
    \end{multline*}

    \noindent Note that by a simple counting argument, for all $(\Vec{x},\Vec{y},\Vec{z})$, 
    \begin{equation}
    \label{eq:pru:small_key}
        \sqrt{\frac{N-|\left(\set{\Vec{x}}\oplus\set{\Vec{y}}\right)\cup\left(\set{\Vec{x}}\oplus\set{\Vec{z}}\right)|}{N}}\geq\sqrt{\frac{N-(t^2+t\ell)}{N}}.
    \end{equation} 
    Applying~\Cref{lem:norm:sub-state}, $$\braket{\Psi_2}{\Psi_3}\geq\sqrt{\frac{N-(t^2+t\ell)}{N}}.$$ Hence, by~\Cref{lem:norm:cs}, we get $$\|\ket{\wt{\psi}_2}\|^2 = \|\ket{\Psi_2}\|^2\geq 1-\frac{t^2+t\ell}{N}.$$
    Finally, by~\Cref{lem:norm:proj}, we get $$\TD(\ket{\psi_2},\ket{\wt{\psi}_2})\leq \sqrt{\frac{t^2+t\ell}{N}}.$$ 
    Hence, combining the above bounds, $\TD(\rho_2,\rho_3)\leq 2\sqrt{\frac{t^2+t\ell}{N}}$. 
\end{proof}
\noindent Combining the above three claims, we complete the proof of~\Cref{thm:stretch_pru}.
\end{proof}
\ \newline
\noindent In the above proof, we assume the size of the Pauli $X$ to be equal to the size of $U$, but we can generalize the above proof to hold as long as the size of the Pauli $X$ is $\omega(\log{\secp})$.
\begin{corollary}
\label{cor:stretch_pru}
   For any $f(\secp) = \omega(\log\secp)$, $k\in\bit^{f(\secp)}$, define $G_k^U = U (X^k\otimes I_{n-f(\secp)}) U$, where $U$ is an $n$-qubit unitary. Then $\set{G_k^U}_{k\in\bit^{f(\secp)}}$ is a PRU in ${\sf iQHROM}$, where $U$ is the Haar oracle. Formally, for any $t$-query two-oracle adversary $\Adversary$, for any polynomial $t$ in $\secp$, we have: 
   $$\TD\left( \underset{\substack{U \leftarrow \haarunitaries_n\\ k \leftarrow \{0,1\}^{f(\secp)}}}{\expect}\left[\ketbra{\Adversary_{t}^{G_k^U,U}}{\Adversary_{t}^{G_k^U,U}}\right],\ \underset{\substack{U \leftarrow \haarunitaries_n\\ V \leftarrow \haarunitaries_n}}{\expect}\left[\ketbra{\Adversary_{t}^{V,U}}{\Adversary_{t}^{V,U}}\right]\right) \leq \negl(\secp).$$
\end{corollary}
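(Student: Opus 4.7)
The plan is to essentially port the proof of \Cref{thm:stretch_pru} line-by-line to this setting, with the only substantive change being that every place where the key $k$ appeared as an $n$-bit string should be replaced by $K := k \| 0^{n-f(\secp)}$. In particular, the effective key register now has dimension $2^{f(\secp)}$ rather than $N = 2^n$, but the ``bad key'' counting arguments should yield essentially the same polynomial bound, giving a final trace distance of roughly $O(t^2 / 2^{f(\secp)/2})$, which is negligible as soon as $f(\secp) = \omega(\log \secp)$.

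Concretely, I would set up the same three hybrids $\rho_1, \rho_2, \rho_3, \rho_4$ as in \Cref{thm:stretch_pru}, where $\rho_2$ replaces the two Haar oracles $G_k^U$ and $U$ by path-recording calls with the middle $X^k \otimes I$ acting between two copies of $\mathsf{PR}_{\reg{A}\reg{E}_1}$, and the final key register ranges over $\bit^{f(\secp)}$ with weight $1/\sqrt{2^{f(\secp)}}$. The $\hybrid_1 \approx \hybrid_2$ and $\hybrid_3 \approx \hybrid_4$ bounds are unchanged since they are direct invocations of \Cref{thm:MH24} and \Cref{cor:ind:twopr:shared} that do not involve $k$ at all. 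The core work is again the bound between $\rho_2$ and $\rho_3$.

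For that step, the correlation set becomes $\CorX(R,k) := \{((u,v),(u',v')) \in R \times R : v \oplus u' = k \| 0^{n-f(\secp)}\}$, and correspondingly $\good(R,\ell)$ restricts to keys in $\bit^{f(\secp)}$. The analogue of \Cref{cor:proj:key} is that projecting onto good keys kills precisely those $k \in \bit^{f(\secp)}$ for which $k \| 0^{n-f(\secp)} \in (\{\vec{x}\} \oplus \{\vec{y}\}) \cup (\{\vec{x}\} \oplus \{\vec{z}\})$; this is still at most $O(t^2 + t\ell)$ values since each bad $k$ is determined by a pair of indices. The isometries $\mathsf{Split}$ and $\mathsf{Augment}$ (partition, pair, and the injection $f_k : (x_i, z_i, z_i \oplus (k \| 0^{n-f}), y_i) \mapsto (x_i, z_i, y_i)$) transfer verbatim, and the corresponding estimate \eqref{eq:pru:small_key} becomes
\begin{equation*}
    \sqrt{\frac{2^{f(\secp)} - |(\{\vec{x}\} \oplus \{\vec{y}\}) \cup (\{\vec{x}\} \oplus \{\vec{z}\})|_{f(\secp)}}{2^{f(\secp)}}} \geq \sqrt{\frac{2^{f(\secp)} - (t^2 + t\ell)}{2^{f(\secp)}}},
\end{equation*}
where $|\cdot|_{f(\secp)}$ counts values whose first $f(\secp)$ bits arise as a ``bad'' XOR. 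Plugging into \Cref{lem:norm:sub-state}, \Cref{lem:norm:cs}, and \Cref{lem:norm:proj} exactly as before yields $\TD(\rho_2, \rho_3) \leq 2\sqrt{(t^2 + t\ell)/2^{f(\secp)}} = \negl(\secp)$.

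The main obstacle I anticipate is purely bookkeeping: making sure that the injection $f_k$ used inside the $\mathsf{Split}$ isometry is still well-defined on the restricted key space, i.e., that recovering $z_i \oplus (k \| 0^{n-f(\secp)})$ from $(x_i, z_i, y_i)$ together with the key register $\ket{k}_{\reg{K}}$ uses only the $f(\secp)$-bit key and not the full $n$ bits. Since $X^k \otimes I$ acts trivially on the last $n - f(\secp)$ qubits, this is automatic, but it is worth verifying that the matching between ``correlated pairs'' in the large register and the $f(\secp)$-bit key is bijective after projecting onto good keys. Once that is checked, every remaining estimate in the proof of \Cref{thm:stretch_pru} goes through without modification.
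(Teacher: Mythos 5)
Your proposal is correct and follows exactly the route the paper takes: the paper's own proof is a short sketch observing that the argument for \Cref{thm:stretch_pru} carries over verbatim with the key embedded as $k\|0^{n-f(\secp)}$, and that the only quantitative change is in \Cref{eq:pru:small_key}, where the denominator becomes $2^{f(\secp)}$ and the bound $(t^2+t\ell)/2^{f(\secp)}$ remains negligible since $f(\secp)=\omega(\log\secp)$. Your additional check that the number of bad keys in $\bit^{f(\secp)}$ is still at most $t^2+t\ell$ and that the $\mathsf{Split}$/$\mathsf{Augment}$ isometries remain well-defined is exactly the bookkeeping the paper leaves implicit.
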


\begin{proof}[Proof sketch]
    The proof above goes exactly the same except that the bound in~\Cref{eq:pru:small_key} changes to $$\sqrt{\frac{2^{f(\secp)}-|\left(\set{\Vec{x}}\oplus\set{\Vec{y}}\right)\cup\left(\set{\Vec{x}}\oplus\set{\Vec{z}}\right)|}{2^{f(\secp)}}}\geq\sqrt{\frac{2^{f(\secp)}-(t^2+t\ell)}{2^{f(\secp)}}},$$
    since $f(\secp) = \omega(\log\secp)$, the above is still lower-bounded by $1-\negl(\secp)$. Hence, the construction is secure.
\end{proof}

\section{Key-Stretched PRU in the Plain Model}
\label{sec:stretch_pru}

In this section, applying our result from~\Cref{sec:pru:shortkeys} to a pseudorandom unitary in the plain model, we show that we can stretch the output length of \emph{any} PRU, relative to its key size.

As an immediate consequence of~\Cref{cor:stretch_pru}, we can actually get arbitrary polynomial-stretch PRU in the \emph{plain} model. At a high level, the idea is to sample a single PRU key, and use this to computationally instantiate a Haar random oracle. Then use the construction of PRUs with short keys in the Haar oracle model to get more PRUs while only using $O(\log^{1+\epsilon} \secp)$ more bits of randomness (for any constant $\epsilon$, although we will set this to be $\log^2 \secp$ in the remainder of this section).

\par Plugging these seemingly fresh pseudorandom unitaries into the construction of \cite{schuster2024random}, we can stretch the output size of the pseudorandom unitary. For a graphical depiction of the construction, see~\Cref{fig:stretch_pru}.

\begin{figure}[H]
    \resizebox{\textwidth}{!}{
        \input{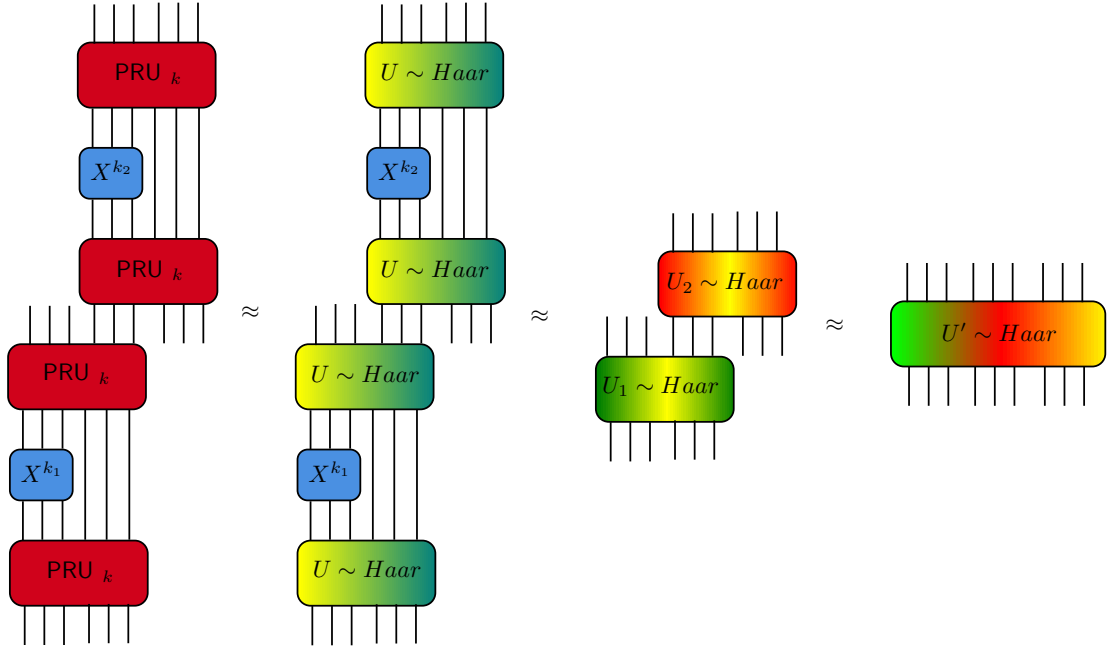}
    }
    \caption{Implementation of key-stretched PRU from any PRU.  Going from left to right, the first approximation uses the definition of the PRU, the next one uses \Cref{thm:intro:prus:iqhrom}, and the final one uses the result from \cite{schuster2024random}.}
    \label{fig:stretch_pru}
\end{figure}

\noindent We recall the main lemma of \cite{schuster2024random}.
\begin{lemma}[Gluing two random unitaries~\cite{schuster2024random}]
    \label{lem:gluing_random_unitaries}
    Let $\reg{A}, \reg{B}, \reg{C}$ be three disjoint subsystems.  Consider a random unitary given by $V_{\reg{ABC}} = U_{\reg{AB}}U'_{\reg{BC}}$, where $U$ and $U'$ are drawn from $\epsilon_1$ and $\epsilon_2$-approximate $k$-designs, respectively.
    Then $V_{\reg{ABC}}$ is an $\epsilon$-approximate $k$ design, with
    \begin{equation*}
        1 + \epsilon \leq (1+ \epsilon_1)(1 + \epsilon_2)\left(1 + \frac{5k^2}{2^{|\reg{B}|}}\right)\,.
    \end{equation*}
    As long as the number of qubits in $\reg{B}$ satisfies $|\reg{B}| \geq \log(5k^2)$.
\end{lemma}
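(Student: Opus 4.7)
The plan is to work at the level of the $k$-fold twirl channel $\Phi_W(\rho) := \Ex_W[W^{\otimes k}\rho(W^\dagger)^{\otimes k}]$, using the multiplicative (relative-error) formulation of $\epsilon$-approximate $k$-design, under which $W$ is $\epsilon$-approximate iff $(1-\epsilon)\Phi_{\mathrm{Haar}} \preceq \Phi_W \preceq (1+\epsilon)\Phi_{\mathrm{Haar}}$ as completely positive maps on operators. The multiplicative form of the bound to be proved is the hint that this is the right definition.

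First, since $U_{\reg{AB}}$ and $U'_{\reg{BC}}$ are sampled independently, one gets the channel factorization $\Phi_V = \Phi_{U_{\reg{AB}}} \circ \Phi_{U'_{\reg{BC}}}$, where each factor only twirls on its own subsystem. Applying the two hypotheses and using monotonicity of $\preceq$ under CP composition yields
\begin{equation*}
    \Phi_V \;\preceq\; (1+\epsilon_1)(1+\epsilon_2)\,\bigl(\Phi_{\mathrm{Haar}_{\reg{AB}}} \circ \Phi_{\mathrm{Haar}_{\reg{BC}}}\bigr),
\end{equation*}
and symmetrically for the matching lower bound. Thus the whole theorem reduces to the purely Haar-theoretic ``gluing'' inequality
\begin{equation*}
    \Phi_{\mathrm{Haar}_{\reg{AB}}} \circ \Phi_{\mathrm{Haar}_{\reg{BC}}} \;\preceq\; \left(1 + \tfrac{5k^2}{2^{|\reg{B}|}}\right) \Phi_{\mathrm{Haar}_{\reg{ABC}}},
\end{equation*}
plus its matching lower bound.

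To establish this, I would invoke Schur--Weyl duality: on a $d$-dimensional subsystem, the $k$-fold Haar twirl is exactly the projector (in superoperator space) onto the $\Symgp_k$-commutant, spanned by the permutation operators $\{S_\sigma\}_{\sigma \in \Symgp_k}$, and can be written explicitly through the Weingarten/Gram matrix $G_{\sigma\tau} = \Tr(S_\sigma S_\tau^{\dagger}) = d^{\#\mathrm{cycles}(\sigma^{-1}\tau)}$. Composing the $\reg{AB}$-twirl with the $\reg{BC}$-twirl then entangles their permutation indices only through their joint action on the shared $\reg{B}$ register: the ``identity-matched'' contribution on $\reg{B}$ reproduces exactly $\Phi_{\mathrm{Haar}_{\reg{ABC}}}$, while every other combination of permutations contributes a correction suppressed by factors of $d_{\reg{B}}^{-1} = 2^{-|\reg{B}|}$ coming from reduced cycle counts of $\sigma^{-1}\tau$ on $\reg{B}$. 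The main obstacle is controlling these off-diagonal corrections rigorously: one has to expand the joint Weingarten expression, peel off the leading diagonal term, and Neumann-invert the overlap matrix on $\reg{B}$. The hypothesis $|\reg{B}| \geq \log(5k^2)$ is exactly what makes this inversion well-conditioned, and bounding the resulting sum over the $(k!)^2$ pairs of permutations on $\reg{B}$ --- using that each non-identity pair loses at least one factor of $d_{\reg{B}}$ and the fact that the number of fixed-point configurations is controlled --- produces the claimed $5k^2/2^{|\reg{B}|}$ correction.
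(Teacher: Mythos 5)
This lemma is imported verbatim from \cite{schuster2024random}; the paper you are reading gives no proof of it, so there is no in-paper argument to compare against. Measured against the proof in the cited source, your reduction is the right one and is essentially theirs: independence of $U$ and $U'$ gives the factorization $\Phi_V = \Phi_{U_{\reg{AB}}} \circ \Phi_{U'_{\reg{BC}}}$ of the $k$-fold moment channel, the multiplicative (relative-error) design bounds compose because the positive-semidefinite ordering of completely positive maps is preserved under pre- and post-composition with CP maps, and the whole statement collapses to the purely Haar-theoretic inequality
\begin{equation*}
    \left(1 - \tfrac{5k^2}{2^{|\reg{B}|}}\right)\Phi_{\mathrm{Haar}_{\reg{ABC}}} \;\preceq\; \Phi_{\mathrm{Haar}_{\reg{AB}}} \circ \Phi_{\mathrm{Haar}_{\reg{BC}}} \;\preceq\; \left(1 + \tfrac{5k^2}{2^{|\reg{B}|}}\right) \Phi_{\mathrm{Haar}_{\reg{ABC}}}\,,
\end{equation*}
together with the elementary observation that the resulting two-sided bounds are subsumed by the single stated $\epsilon$.

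The gap is that this remaining inequality is the entire content of the lemma, and you describe it rather than prove it. You correctly name the ingredients --- Schur--Weyl duality, the Weingarten expression $\Phi_{\mathrm{Haar}}(X) = \sum_{\sigma,\tau}(G^{-1})_{\sigma\tau}\Tr(S_\tau^\dagger X)\,S_\sigma$ with Gram matrix $G_{\sigma\tau} = d^{\#\mathrm{cycles}(\sigma^{-1}\tau)}$, the fact that mismatched permutation pairs on $\reg{B}$ are suppressed by $d_{\reg{B}}^{\,\#\mathrm{cycles}(\sigma^{-1}\tau)-k} \leq d_{\reg{B}}^{-1}$, and the role of $|\reg{B}| \geq \log(5k^2)$ in making the Neumann inversion of the overlap matrix converge --- but you do not carry out the expansion of the composed twirl, the inversion, or the combinatorial count over permutation pairs (dominated by the $\binom{k}{2}$ transpositions) that produces the specific constant $5$ and the $k^2$ scaling. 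As written, the step ``every other combination contributes a correction suppressed by $2^{-|\reg{B}|}$'' is an assertion of the conclusion, not a derivation of it. Your proposal is a correct and well-oriented plan, but to count as a proof it would need the quantitative core executed, which is exactly what \cite{schuster2024random} does and what this paper chose to cite rather than reprove.
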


\begin{theorem}[Stretching a PRU]
\label{thm:plain_model:stretching_pru}
    Let $\{\mathsf{PRU}_{\lambda, k}\}_{\lambda \in \mathbb{N}, k \in \{0, 1\}^{\lambda}}$ be a PRU family with keys of size $\lambda$, where $U_{k}$ acts on $t(\lambda)$ many qubits.
    Then there exists a pseudorandom unitary family $\{\mathsf{SPRU}_{k}\}_{\lambda \in \mathbb{N}, k \in \{0, 1\}^{\lambda + 2\log^2(\lambda)}}$ ($\mathsf{S}$ for stretched) with keys of size $\lambda + 2\log^2(\lambda)$ that acts on $2t(\lambda) - \log^2(\lambda)$ qubits.
\end{theorem}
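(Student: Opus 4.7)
The plan is to implement the construction depicted in Figure~2 and analyze it via a short sequence of hybrids that composes the PRU assumption, Corollary~\ref{cor:stretch_pru}, and the gluing lemma (Lemma~\ref{lem:gluing_random_unitaries}). Split the output register into three sub-registers $\reg{A}, \reg{B}, \reg{C}$ with $|\reg{B}| = \log^2(\lambda)$ and $|\reg{A}| = |\reg{C}| = t(\lambda) - \log^2(\lambda)$, so that $\reg{AB}$ and $\reg{BC}$ are both $t(\lambda)$-qubit subsystems overlapping in $\reg{B}$, and the total size is $2t(\lambda) - \log^2(\lambda)$ as required. Define
\[
\mathsf{SPRU}_{k\|k_1\|k_2} \;:=\; \bigl[\mathsf{PRU}_{\lambda,k}(X^{k_1}\otimes I)\mathsf{PRU}_{\lambda,k}\bigr]_{\reg{AB}} \cdot \bigl[\mathsf{PRU}_{\lambda,k}(X^{k_2}\otimes I)\mathsf{PRU}_{\lambda,k}\bigr]_{\reg{BC}},
\]
where $k\in\{0,1\}^{\lambda}$, $k_1,k_2\in\{0,1\}^{\log^2\lambda}$, and each $X^{k_i}$ acts on the shared register $\reg{B}$ (padded with identity on the other $t(\lambda)-\log^2\lambda$ qubits of the corresponding $t(\lambda)$-qubit block).

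\textbf{Hybrid sequence.} Fix a QPT adversary $\mathcal{A}$ making a polynomial number $T$ of oracle queries. In $\mathsf{Hyb}_0$, $\mathcal{A}$ interacts with $\mathsf{SPRU}_{k\|k_1\|k_2}$. In $\mathsf{Hyb}_1$, replace every invocation of $\mathsf{PRU}_{\lambda,k}$ by oracle access to a Haar-random unitary $U$ on $t(\lambda)$ qubits; since each call to $\mathsf{SPRU}$ invokes $\mathsf{PRU}_{\lambda,k}$ exactly four times, this is a straightforward reduction to PRU security with at most $4T$ queries and incurs only negligible error. Next, in $\mathsf{Hyb}_2$, replace $[UX^{k_1}U]_{\reg{AB}}$ by an independent Haar-random $V_1$ on $\reg{AB}$: the reduction is given oracle access to $(\mathcal{O},U)$, samples $k_2$ internally, and simulates $\mathcal{O}_{\reg{AB}} \cdot [UX^{k_2}U]_{\reg{BC}}$ using its $U$-oracle to implement the right-hand factor; Corollary~\ref{cor:stretch_pru} (with parameter $\log^2\lambda = \omega(\log\lambda)$) guarantees that the distinguishing advantage is negligible. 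Analogously, in $\mathsf{Hyb}_3$, replace $[UX^{k_2}U]_{\reg{BC}}$ by an independent Haar-random $V_2$ on $\reg{BC}$ using Corollary~\ref{cor:stretch_pru} again; the reduction simulates $[V_1]_{\reg{AB}}\cdot\mathcal{O}_{\reg{BC}}$ by sampling $V_1$ internally (which is information-theoretic, hence does not require an external oracle, though in the efficient setting one uses the standard $k$-design argument as in Lemma~\ref{lem:gluing_random_unitaries}).

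\textbf{Applying the gluing lemma.} In $\mathsf{Hyb}_3$ the oracle is $[V_1]_{\reg{AB}} \cdot [V_2]_{\reg{BC}}$ with $V_1, V_2$ independent and Haar-random, which are exact $k$-designs for every $k$. By Lemma~\ref{lem:gluing_random_unitaries} applied with $\epsilon_1=\epsilon_2=0$ and $|\reg{B}| = \log^2\lambda$, the composition is an $\epsilon$-approximate $T$-design with $\epsilon \leq 5T^2/2^{\log^2\lambda} = \mathsf{negl}(\lambda)$, which suffices to switch it for a single Haar-random unitary $W$ on $\reg{ABC}$ up to negligible advantage against any $T$-query adversary. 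Chaining the bounds through $\mathsf{Hyb}_0\to\mathsf{Hyb}_3\to$(Haar on $\reg{ABC}$) by the triangle inequality yields PRU security of $\mathsf{SPRU}$ with key length $\lambda + 2\log^2\lambda$ and output length $2t(\lambda)-\log^2\lambda$.

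\textbf{Main obstacle.} The one place care is needed is the intermediate hybrid where we invoke Corollary~\ref{cor:stretch_pru}: the corollary is stated in the iQHROM where the distinguisher is explicitly given oracle access to the common $U$, so we must structure the reduction so that the second factor $[UX^{k_2}U]_{\reg{BC}}$ is simulated via the provided $U$-oracle while still passing through the $\mathcal{O}$ oracle for the first factor. Once this is set up, the remainder is a clean composition. A secondary technical point is ensuring $|\reg{B}| = \log^2\lambda$ is simultaneously large enough for both Corollary~\ref{cor:stretch_pru} (which needs $\omega(\log\lambda)$) and Lemma~\ref{lem:gluing_random_unitaries} (which needs $2^{|\reg{B}|}$ to dominate $T^2$ by a super-polynomial factor); $\log^2\lambda$ comfortably satisfies both, and in fact any $\omega(\log\lambda)$ choice with $2^{|\reg{B}|} = \lambda^{\omega(1)}$ would work, giving the claimed arbitrary polynomial stretch by iterating the construction a constant number of times.
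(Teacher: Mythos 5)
Your proposal is correct and follows essentially the same route as the paper: the same construction of two overlapping $U X^{k_i} U$ blocks glued along a $\log^2(\lambda)$-qubit buffer register, and the same hybrid chain (PRU security to replace $\mathsf{PRU}_k$ by $U$, then \Cref{cor:stretch_pru} applied twice to obtain two independent Haar unitaries, then \Cref{lem:gluing_random_unitaries}). The only minor deviation is your final step, where you invoke the gluing lemma with design order equal to the adversary's query count $T$ and assert directly that a negligible-error $T$-design fools $T$-query adversaries, whereas the paper sets the design order to the superpolynomial $2^{\log^{1.5}(\lambda)}$ so it can cite the standard black-box fact that negligible-error superpolynomial designs are PRUs, which cleanly sidesteps any adaptivity concerns.
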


\begin{proof}
    Let $\{\mathsf{PRU}_{k}\}_{k}$ be a family of pseudorandom unitaries for a fixed security parameter $\secp$.  Let $\mathcal{A}^{(\cdot)}$ be a quantum polynomial time adversary that makes queries to an oracle and outputs either $\top$ or $\bot$.   We first define $\mathsf{SPRU}_{k, k_1, k_2}$ as follows.  Let $\reg{ABC}$ be three quantum registers, with $\reg{A}, \reg{C}$ being $t(\secp) - \log^2(\secp)$ and $\reg{B}$ being $\log^2(\secp)$.  
    Then for a key $k$ of size $\secp$ and keys $k_1, k_2$ of size $\log^2(\secp)$, 
    \begin{equation*}
        \mathsf{SPRU}_{k, k_1, k_2} = (\mathsf{PRU}_k)_{\reg{BC}} X^{k_2}_{\reg{BC}}  (\mathsf{PRU}_{k})_{\reg{BC}}(\mathsf{PRU}_k)_{\reg{AB}} X^{k_1}_{\reg{AB}}  (\mathsf{PRU}_{k})_{\reg{AB}}\,.
    \end{equation*}
    We claim the following holds
    \begin{equation*}
        \Bigg| \Pr_{\substack{k \random \{0, 1\}^{\secp} \\ k_1, k_2 \random \{0, 1\}^{\log^2(\secp)}}}\left[\top \leftarrow \Adversary^{\mathsf{SPRU}_{k, k_1, k_2}} \right] - 
        \Pr_{\substack{U \gets \mu_{t(\secp)} \\ k_1, k_2 \random \{0, 1\}^{\log^2(\secp)}}} \left[\top \leftarrow \Adversary^{U_{\reg{BC}} X^{k_2}_{\reg{BC}} U_{\reg{BC}} U_{\reg{AB}} X^{k_1}_{\reg{AB}} U_{\reg{AB}}}\right] \Bigg| 
        \leq \negl(\secp)\,.
    \end{equation*}
    This holds by the pseudorandomness of the original PRU.  If, for the sake of contradiction, there exists an adversary that has a non-negligible advantage above, then the adversary can be turned into an adversary for $\mathsf{PRU}_k$ and $U$ just by simulating $\mathsf{SPRU}$ by sampling the additional keys. \\
    
    \noindent Then, we have the following
    \begin{equation*}
        \Bigg| \Pr_{\substack{U \gets \mu_{t(\secp)} \\ k_1, k_2 \random \{0, 1\}^{\log^2(\secp)}}} \left[\top \leftarrow \Adversary^{U_{\reg{BC}} X^{k_2}_{\reg{BC}} U_{\reg{BC}} U_{\reg{AB}} X^{k_1}_{\reg{AB}} U_{\reg{AB}}}\right] - 
        \Pr_{\substack{U, V \gets \mu_{t(\secp)} \\ k_1 \random \{0, 1\}^{\log^2(\secp)}}} \left[\top \leftarrow \Adversary^{V_{\reg{BC}} U_{\reg{AB}} X^{k_1}_{\reg{AB}} U_{\reg{AB}}}\right] \Bigg| 
        \leq \negl(\secp)\,.
    \end{equation*}
    This holds because of the construction of~\Cref{cor:stretch_pru}.  In particular, we proved that $U X^{k} U$ is indistinguishable from an independently sampled Haar random unitary $V$, even to adversaries who also get query access to $U$.  \\
    
    \noindent Again by~\Cref{cor:stretch_pru}, $UX^{k_1}U$ is indistinguishable from a Haar random unitary, so the following holds
    \begin{equation*}    
        \Bigg|\Pr_{\substack{U, V  \gets \mu_{t(\secp)} \\ k_1 \random \{0, 1\}^{\log^2(\secp)}}} \left[\top \leftarrow \mathcal{A}^{V_{\reg{BC}} U_{\reg{AB}} X^{k_1}_{\reg{AB}} U_{\reg{AB}}}\right] -  \Pr_{U, V  \gets \mu_{t(\secp)}} \left[\top \leftarrow \mathcal{A}^{V_{\reg{BC}} U_{\reg{AB}}}\right]\Bigg|\leq \negl(\secp)\,.
    \end{equation*}    
    Then we apply \Cref{lem:gluing_random_unitaries} with $k = 2^{\log^{1.5}(\secp)}$, and $|B| = \log^2(\secp)$. Since Haar random unitaries are exact $k$-designs for all $k$, $\epsilon_1 = \epsilon_2 = 0$, and we get that $V_{\reg{BC}} U_{\reg{AB}}$ is an $\eps$-approximate $k$-design, where
    \begin{equation*}
        \eps(\secp) = \frac{5 \cdot 2^{2\log^{1.5}(\secp)}}{2^{\log^2(\secp)}} = \negl(\lambda)\,.
    \end{equation*}
    
    \noindent Finally, it is known that $\delta$-approximate $q$-designs are PRUs if $\delta = \negl(\secp)$ and $q = \secp^{\omega(1)}$ (see \eg \cite{AMR20,Kretschmer21,MPSY24} or \cite[Lemma~5]{schuster2024random}). \\
    
    \noindent Since $k = 2^{\log^{1.5}(\secp)} = \secp^{\omega(1)}$, for any adversary that makes only a polynomial number of queries, the following holds
    \begin{equation*}
        \left| \Pr_{U, V  \gets \mu_{t(\secp)}} \left[\top \leftarrow \mathcal{A}^{V_{\reg{BC}} U_{\reg{AB}}}\right] - \Pr_{U \gets \mu_{2t(\lambda) - \log^2(\secp)}}\left[\top \leftarrow \mathcal{A}^{U_{\reg{ABC}}}\right]\right| \leq \negl(\lambda)\,.
    \end{equation*}
    Thus, by the triangle inequality, the original construction of $\mathsf{SPRU}_{k, k_1, k_2}$ is indistinguishable from a large Haar random unitary on $2t(\secp) - \log^2(\secp)$ qubits.
\end{proof}

\noindent We can repeat this reduction $O(\log(\secp))$ many times (since the size of the pseudorandom unitary doubles every time we apply it) to get the following lemma.
\begin{corollary}[Pseudorandom unitaries with small keys from any pseudorandom unitary]
    Let $\{\mathsf{PRU}_{\lambda, k}\}_{\lambda, k}$ be a family of pseudorandom unitaries that has keys of length $\secp$ and acts on $t(\secp)$ qubits.  Then for every constant $c$, there exists a pseudorandom unitary family such that
    \begin{enumerate}
        \item The key length of the family is $\secp + 2c\log^3(\secp)$.
        \item The pseudorandom unitary acts on $\secp^c \left(t(\secp) - \log^2(\secp)\right) + \log^2(\secp)$ many qubits.
    \end{enumerate}
\end{corollary}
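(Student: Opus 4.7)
The plan is to apply \Cref{thm:plain_model:stretching_pru} iteratively $j = \lceil c\log\secp \rceil$ times, each time feeding the PRU produced at the previous step back into the stretching theorem as the new base PRU. A single application consumes $4$ queries to the base PRU and adds $2\log^2\secp$ bits of key, while transforming the output length $s$ into $2s - \log^2\secp$. After $j$ iterations the total number of queries to the original PRU is $4^j = \secp^{2c} = \poly(\secp)$, so the final construction remains polynomial-size and efficiently computable.

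Next I would track the output length through the recursion $s_{i+1} = 2 s_i - \log^2\secp$ with $s_0 = t(\secp)$. Substituting $a = \log^2\secp$ gives $s_{i+1} - a = 2(s_i - a)$, whence $s_i = 2^i\bigl(t(\secp) - \log^2\secp\bigr) + \log^2\secp$. Taking $i = c\log\secp$ yields the advertised output length $\secp^c\bigl(t(\secp) - \log^2\secp\bigr) + \log^2\secp$, while the key length is $\secp + 2c\log^3\secp$. Pseudorandomness follows from a standard hybrid argument across the $j$ levels: the $i$-th hop swaps the $i$-th base PRU for an independent Haar unitary, introducing the $\negl(\secp)$ error from \Cref{thm:plain_model:stretching_pru}; summing $O(\log\secp)$ such negligible terms leaves the overall advantage negligible in $\secp$.

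The main obstacle I anticipate is a bookkeeping subtlety: \Cref{thm:plain_model:stretching_pru} is stated for PRU families whose key length equals the security parameter $\secp$, but after the first iteration the key length strictly exceeds $\secp$ while the operational security parameter stays at $\secp$. I would resolve this by observing that the proof of \Cref{thm:plain_model:stretching_pru} only uses the base PRU through its $\poly(\secp)$-query pseudorandomness guarantee, which is unaffected by the key being longer than $\secp$; equivalently, one can relabel the security parameter at each step and note that negligibility in a larger parameter implies negligibility in $\secp$. Once this relabelling is in place, the hybrid composes cleanly and the corollary follows.
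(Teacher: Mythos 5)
Your proposal is correct and follows essentially the same route as the paper: iterate \Cref{thm:plain_model:stretching_pru} $c\log\secp$ times, solve the output-length recursion $s_{i+1} = 2s_i - \log^2\secp$ (the paper writes the equivalent telescoping sum), and compose the per-level $\negl(\secp)$ losses via a hybrid over the $O(\log\secp)$ levels. Your accounting of $4^j = \secp^{2c}$ calls to the base PRU is in fact slightly more careful than the paper's stated $2^{c\log\secp} = \secp^c$ (each application of the stretching theorem makes four calls to the previous-level PRU, not two), but both are polynomial, so the conclusion is unaffected.
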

\begin{proof}
    Applying the reduction recursively $c\log(\secp)$ many times, we add $2\log^2(\secp)$ many bits to the key each time, and double (minus $\log^2(\secp)$) the output length of the pseudorandom unitary. Hence, after doing this $n$ times, the output length becomes 
    \begin{equation*}
    2^n t(\secp) - 2^{n-1}\log^2{\secp} - 2^{n-2}\log^2 \secp - \ldots - 2^{0}\log^2 \secp\,.
    \end{equation*}
    Hence, the final output length is 
    \begin{equation*}
    2^n \left(t(\secp) - \log^2{\secp}\right)+\log^2 \secp\,.
    \end{equation*}
    Thus, for $n = c\log(\secp)$, we get the desired key length and output length for the pseudorandom unitary.
    Additionally note that doing this requires running the original pseudorandom unitary $2^{c\log(\secp)} = \secp^c$ times, but since this is a polynomial the entire construction runs in polynomial time.
\end{proof}

Rescaling so that $\secp + 2c\log^3(\secp) = \secp'$, we see that for every choice of $c$, there is a pseudorandom unitary whose output length is roughly $\secp^c$, for keys of length $\secp$.  
We also note that our construction did not require any extra conditions on the pseudorandom unitary family, simply that any construction is a pseudorandom unitary.

Applying a brickwork, instead of staircase, layout, we can take any pseudorandom unitary family in depth $d$, and output a pseudorandom unitary family in depth $4d+2$ that has arbitrarily small keys.
\begin{corollary}[Low depth pseudorandom unitaries with short keys]
    Let $\{\mathsf{PRU}_{\secp, k}\}_{\secp, k}$ be a family of pseudorandom unitaries with keys of length $\secp$, such that every $\mathsf{PRU}_{k}$ is depth at most $d(\secp)$ and acts on $t(\secp)$ qubits.
    Then for every constant $c$, there is a pseudorandom unitary family with keys of length $O\left(\secp\right)$, output length $\secp^c t(\secp)$, and depth $4d(\secp) + 2$. 
\end{corollary}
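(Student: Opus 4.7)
The plan is to generalize the staircase construction of \Cref{thm:plain_model:stretching_pru} to a parallel brickwork layout, in analogy with the brickwork construction of \cite{schuster2024random}. Fix a target constant $c$. Partition the output register of roughly $\secp^c\, t(\secp)$ qubits into $\secp^c$ overlapping blocks of size $t(\secp)$, where consecutive blocks share a ``seam'' region of $\log^2(\secp)$ qubits. I would then apply two sequential layers of ``bricks'': in the first layer, apply the stretching gadget $\mathsf{PRU}_k\, X^{k_{2i-1}}\, \mathsf{PRU}_k$ in parallel on non-overlapping pairs $(1,2), (3,4), \ldots$; in the second layer, apply $\mathsf{PRU}_k\, X^{k_{2i}}\, \mathsf{PRU}_k$ in parallel on the shifted pairs $(2,3), (4,5), \ldots$. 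Each brick layer consists of two rounds of $\mathsf{PRU}_k$ separated by a depth-$1$ Pauli layer, contributing depth $2d(\secp) + 1$, and the two brick layers are sequential, yielding the required total depth of $4d(\secp) + 2$.

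To achieve key length $O(\secp)$ despite having $\Theta(\secp^c)$ bricks, I would use a single $\secp$-bit PRU key $k$ shared across all bricks, together with a second $\secp$-bit PRF seed $s$ used to derive all Pauli strings $\{k_i\}_i$ in the brickwork. A quantum-secure PRF is implied by the existence of any PRU (for instance, via standard constructions on top of a pseudorandom state generator), so the total key length is $O(\secp)$ while the Pauli strings remain computationally indistinguishable from uniformly random to any polynomial-time adversary.

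The security proof then proceeds by a short sequence of hybrids. First, replace the PRF outputs by truly uniform strings, losing at most $\negl(\secp)$. Second, invoke (multi-query) PRU security a single time to replace $\mathsf{PRU}_k$ throughout by a single Haar-random unitary $V$ on $t(\secp)$ qubits. Third, hybridize across the $\Theta(\secp^c)$ bricks, replacing each $V X^{k_i} V$ by an independent Haar-random unitary on its block pair, one brick at a time; each such step is justified by \Cref{cor:stretch_pru}, with the reduction simulating the other bricks (which still invoke $V$) using its own oracle access to $V$ granted by the two-oracle formulation. The total loss from this stage is $\poly(\secp)\cdot\negl(\secp) = \negl(\secp)$. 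Finally, apply the gluing lemma \Cref{lem:gluing_random_unitaries} iteratively across the brickwork; since each seam has width $\log^2(\secp) = \omega(\log \secp)$, the approximate-design error accumulated over the $\Theta(\secp^c)$ recursive gluing steps remains negligible, so the brickwork of independent Haar-random unitaries is indistinguishable from a single Haar-random unitary on the full $\secp^c\, t(\secp)$-qubit register.

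The main technical obstacle is ensuring that the per-brick hybrid step in the third stage is consistent with the shared underlying $V$. The reduction must simulate the $\secp^c - 1$ unchanged bricks (each of which uses $V$ twice on its respective block pair) while itself having only oracle access to $V$ and to the test oracle (which is either $V X^{k_i} V$ or a fresh Haar unitary). This is precisely the setting of \Cref{cor:stretch_pru}, which is stated in the two-oracle \textsf{iQHROM} with joint oracle access to $V$ and to the target construction, so the reduction carries through cleanly. A secondary consideration is tracking the accumulated error across the many recursive invocations of \Cref{lem:gluing_random_unitaries}; here the $\omega(\log \secp)$-qubit seam guarantees that each application contributes only an exponentially small multiplicative factor, keeping the final error $\negl(\secp)$.
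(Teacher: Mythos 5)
Your construction is the same as the paper's: a two-layer brickwork of $\mathsf{PRU}_k\, X^{k_i}\, \mathsf{PRU}_k$ gadgets on overlapping $t(\secp)$-qubit blocks with $\log^2(\secp)$-qubit seams, giving depth $4d(\secp)+2$, with security argued by (i) a single invocation of PRU security to replace $\mathsf{PRU}_k$ by a Haar unitary $U$, (ii) a per-brick hybrid via \Cref{cor:stretch_pru} to replace each $U X^{k_i} U$ by an independent Haar unitary while the reduction simulates the remaining bricks through its oracle access to $U$, and (iii) iterated application of \Cref{lem:gluing_random_unitaries}. Steps (i)--(iii) and your error accounting for the $\Theta(\secp^c)$ gluing steps are fine and match the paper's intent.

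The genuine gap is the key-compression step. You derive the $\Theta(\secp^c)$ Pauli strings $\{k_i\}$ from a single $\secp$-bit seed using a quantum-secure PRF, asserting that such a PRF ``is implied by the existence of any PRU.'' This is false: a quantum-secure PRF implies a one-way function, and pseudorandom unitaries (or pseudorandom state generators) are not known to imply one-way functions --- the oracle separation of \cite{Kretschmer21}, cited in this paper, gives a relativized world where PRUs exist but one-way functions do not, and the paper explicitly advertises that its reduction applies to PRU families that do not arise from classical pseudorandom functions. So the PRF step cannot be justified from the hypothesis of the corollary. Without it, the per-brick hybrid in step (ii) needs genuinely independent $k_i$ (each reduction step must sample the other $k_j$ itself), which costs $\Theta(\secp^c)\cdot\omega(\log\secp)$ key bits, not $O(\secp)$. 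To your credit, you have isolated a real tension that the paper's own one-line proof does not resolve: its claim of only ``$c\log^3\secp$ additional bits'' is carried over from the logarithmic-depth staircase recursion and does not account for the $\Theta(\secp^c)$ bricks of the flat brickwork. A correct repair must either accept the larger key length or derandomize the Pauli strings by means available without one-way functions (e.g., a limited-independence argument threaded through the path-recording proof); your proposal supplies neither.
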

\begin{proof}
    We apply \Cref{thm:stretch_pru} to the brickwork architecture shown in \Cref{fig:low_depth_stretch_pru}, we get a pseudorandom unitary whose depth is $4d(\secp)+2$, and at a cost of sampling $c\log^3 \secp$ additional bits of randomness, whose output length is scaled up by a factor of $\secp^c$.
\end{proof}

\begin{figure}[H]
    \centering
    \input{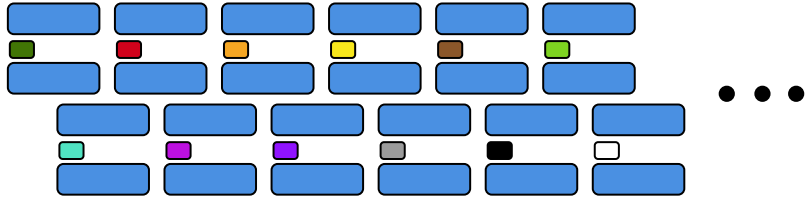}
    \caption{Implementation of low depth, short key pseudorandom unitaries from any pseudorandom unitary family.  Long blue boxes are a single sample of the original pseudorandom unitary family, and short colored boxes are additional $\omega(\log(\secp))$ sized Pauli $X$ strings.}
    \label{fig:low_depth_stretch_pru}
\end{figure}

We note that if one-way functions exist (as in \cite{schuster2024random}, which assumes the subexponential hardness of LWE), then this key shrinkage can be achieved by first shrinking the keys used in the pseudorandom function construction, however our reduction applies to \emph{all} pseudorandom unitary families, even those that do not arise from a classical pseudorandom function.
\section{Bounded-Query Pseudorandom Unitaries with Short Keys}     \label{sec:unitarydesign:shortkeys}

Now, we present an even simpler construction of $O(\secp / \log(\secp)^{1+\epsilon})$-query secure pseudorandom unitaries with keys of length $\secp$ in the iQHROM that only makes \emph{a single query} to the common Haar random unitary.

\begin{theorem}     \label{thm:ell_design}
   For $k\in\bit^{\secp}$, define $G_k^U = (Z^k\otimes I) U$, where $U$ is an $n$-qubit unitary such that $\secp \leq n$. Then $\set{G_k^U}_{k\in\bit^{\secp}}$ is an $\ell$-query secure pseudorandom unitary for $\ell=O(\secp/\log(\secp)^{1+\epsilon})$ for all $\epsilon>0$ in ${\sf iQHROM}$, where $U$ is the Haar oracle. Formally, for any $(\ell,t-\ell)$-query two-oracle adversary $\Adversary$, for any $t$ polynomial in $n$, we have: 
   $$\TD\left( \underset{\substack{U \leftarrow \haarunitaries_n\\ k \leftarrow \{0,1\}^{\secp}}}{\expect}\left[\ketbra{\Adversary_{t}^{G_k^U,U}}{\Adversary_{t}^{G_k^U,U}}\right],\ \underset{\substack{U \leftarrow \haarunitaries_n\\ V \leftarrow \haarunitaries_n}}{\expect}\left[\ketbra{\Adversary_{t}^{V,U}}{\Adversary_{t}^{V,U}}\right]\right) \leq \negl(\secp).$$
Note that $\Adversary$ makes $\ell$ queries to $G_k^U$ (or $V$) and $t-\ell$ queries to $U$. 
\end{theorem}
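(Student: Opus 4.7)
The plan is to follow the outline sketched in \Cref{sec:tech:bounded_pru} and carry out the path-recording analysis carefully, working with the strong $\ell$-fold $\secp$-prefix collision-free variant of the path recording oracle introduced in \Cref{sec:cfPR}.

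First, I would invoke \Cref{thm:MH24} twice to replace both the Haar oracle $U$ appearing inside $G_k^U$ and the standalone oracle $U$ with a single shared path recording isometry $\mathsf{PR}_{\reg{A}\reg{E}}$, paying a trace distance of $O(t^2/2^n)$. Purifying the key register $\reg{K}$, the adversary's state becomes, up to normalization,
\[
\ket{\psi} = \sum_{k\in\bit^\secp}\sum_{\substack{\Vec{x}\in[N]^t\\ \Vec{y}\in[N]^t_{\dist}}} (-1)^{\langle k\|0^{n-\secp},\,\bigoplus_{i\in\bfa} y_i\rangle}\prod_{i=1}^t\bigl(\ketbra{y_i}{x_i}A^{(i)}\bigr)\ket{0}_{\reg{A}\reg{B}}\otimes\ket{\{(x_i,y_i)\}_i}_{\reg{E}}\ket{k}_{\reg{K}},
\]
where $\bfa\subseteq[t]$ with $|\bfa|=\ell$ is the set of indices on which $\Adversary$ calls $G_k^U$. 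Next I would apply an isometry on $\reg{K}$ that appends $n-\secp$ zero qubits and then performs an $n$-qubit Hadamard; this converts the phase sum over $k$ into the basis state $\ket{\bigoplus_{i\in\bfa} y_i}_{\reg{K}}$.

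The crux is to disentangle this XOR register from the $\reg{E}$ register, and this is where $\ell$-fold prefix collision-freeness enters. Using \Cref{thm:ind:pcfpr:pr} (or \Cref{cor:ind:twopcfpr}), I would replace $\mathsf{PR}$ with its strong $\ell$-fold $\secp$-prefix collision-free variant $\pcfpr{\ell}{\secp}$, incurring an additional error of $O(\sqrt{\ell}\, t^{\ell+1}/2^{\secp/2})$. On the image of $\pcfpr{\ell}{\secp}$ the tuple $\Vec{y}$ is strong $\ell$-fold $\secp$-prefix collision-free, so by \Cref{def:s_l_fold_collisions} the value $\bigoplus_{i\in\bfa}(y_i)_{[1:\secp]}$ uniquely determines the subset $\{y_i:i\in\bfa\}$ given $\{y_i\}_{i\in[t]}$. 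I would then define an extraction isometry on $\reg{E}\reg{K}$ that reads $\reg{K}$, identifies the subset $\{(x_i,y_i)\}_{i\in\bfa}$ from the relation in $\reg{E}$, moves it into a fresh register $\reg{E}_2$, and uncomputes $\reg{K}$. Using the multiset-state framework of \Cref{sec:multiset} (in particular $\Vpart$ and $\Vpair$), this extraction can be written as a legitimate isometry on the relevant subspace.

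After extraction, the state is (up to the above approximations) the same as the state produced by an $(\ell,t-\ell)$-query adversary interacting with two independent copies of $\pcfpr{\ell}{\secp}$, which by \Cref{cor:ind:twopcfpr} is in turn within $O(\sqrt\ell t^{\ell+1}/2^{\secp/2}) + O(t^2/N)$ of two independent Haar unitaries, namely the ideal $(V,U)$ world. Summing all errors gives a total bound of $O\bigl(\sqrt\ell\, t^{\ell+1}/2^{\secp/2} + t^2/2^n\bigr)$, and setting $\ell = O(\secp/\log(\secp)^{1+\epsilon})$ ensures that the dominant term $t^{\ell+1}/2^{\secp/2}$ remains negligible whenever $t = \poly(\secp)$.

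The main obstacle is the tradeoff between $\ell$ and the error in \Cref{thm:ind:pcfpr:pr}: one must verify that $\ell$-fold prefix collision-freeness is both strong enough to make the XOR-value injective on the support (enabling the extraction isometry) and weak enough to be almost automatically satisfied by the path recorder, which dictates the precise $\ell = O(\secp/\log(\secp)^{1+\epsilon})$ threshold. The secondary subtlety is checking that the extraction procedure really is an isometry; this requires carefully handling the symmetric-group overcounting in the relation states (via $\Vpart$ and $\Vpair$) rather than treating relations as ordered tuples.
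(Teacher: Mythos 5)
Your proposal is correct and follows essentially the same route as the paper's proof: replace the shared Haar oracle by the strong $\ell$-fold $\secp$-prefix collision-free path recorder via \Cref{cor:ind:twopcfpr}, Hadamard the purified key register to turn the $Z^k$ phases into the basis state $\ket{\bigoplus_{i\in\bfa}(y_i)_{[1:\secp]}}$, and use prefix collision-freeness to define the partition/uncompute isometry that exactly matches the two-oracle collision-free world, before returning to two Haar unitaries. The only cosmetic difference is that you pass through the plain $\mathsf{PR}$ oracle as an intermediate step and apply the Hadamard before switching to $\pcfpr{\ell}{\secp}$, whereas the paper bundles both replacements into a single application of \Cref{cor:ind:twopcfpr} and performs the key-register manipulation afterwards; since these operations act on disjoint registers the order is immaterial, and your error accounting and parameter threshold match the paper's.
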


\ifllncs
\noindent The proof of~\Cref{thm:ell_design} can be found in~\Cref{app:unitarydesign:shortkeys}.
\else
\begin{proof}[Proof of~\Cref{thm:ell_design}]
Consider the following hybrids. \\

\noindent $\hybrid_1$: Output $\rho_1=\underset{\substack{U \leftarrow \haarunitaries_n\\ k \leftarrow \{0,1\}^{\secp}}}{\expect}\left[\ketbra{\Adversary_{t}^{G_k^U,U}}{\Adversary_{t}^{G_k^U,U}}\right]$.

\noindent $\hybrid_2$: Output 
\begin{equation*}
    \rho_2 := \underset{\substack{k \leftarrow \{0,1\}^{\secp}}}{\expect}\left[\Tr_{\reg{E}_1}\left( \ketbra{\Adversary_t^{Z^k{\pcfpr{\ell}{n}}_{\reg{A}\reg{E}_1}^{(\reg{E}_1)},{{\pcfpr{\ell}{n}}_{\reg{A}\reg{E}_1}^{(\reg{E}_1)}} }}{\Adversary_t^{Z^k{\pcfpr{\ell}{n}}_{\reg{A}\reg{E}_1}^{(\reg{E}_1)},{{\pcfpr{\ell}{n}}_{\reg{A}\reg{E}_1}^{(\reg{E}_1)}} }}_{\reg{A}\reg{B}\reg{E}_1}  \right)\right].
\end{equation*}

\begin{myclaim}
$\TD(\rho_1,\rho_2) \leq O\left(\frac{\sqrt{\ell}t^{\ell+1}}{2^{\secp/2}}\right) + \frac{4t(t-1)}{N+1}$.
\end{myclaim}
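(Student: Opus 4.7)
The plan is to introduce a single intermediate hybrid $\hybrid_{1.5}$ in which the Haar random unitary $U$ is replaced by a shared path-recording isometry $\mathsf{PR}_{\reg{A}\reg{E}_1}$ used by both oracles, i.e.\ $\rho_{1.5} := \expect_{k}\bigl[\Tr_{\reg{E}_1}\bigl(\ketbra{\Adversary_t^{Z^k \mathsf{PR}_{\reg{A}\reg{E}_1},\, \mathsf{PR}_{\reg{A}\reg{E}_1}}}{\Adversary_t^{Z^k \mathsf{PR}_{\reg{A}\reg{E}_1},\, \mathsf{PR}_{\reg{A}\reg{E}_1}}}\bigr)\bigr]$. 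The triangle inequality then reduces the problem to bounding $\TD(\rho_1, \rho_{1.5})$ and $\TD(\rho_{1.5}, \rho_2)$ separately, corresponding to swapping Haar for path-recording, and then swapping path-recording for its $\ell$-fold collision-free variant.

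For the first distance, I will fix an arbitrary key $k$ and observe that the oracle pair $(Z^k U, U)$ is obtained from a single Haar unitary $U$ by merely appending the locally controlled operation $Z^k$ after queries to the first oracle. Absorbing $Z^k$ into the adversary's wrapping unitaries, any $(\ell, t-\ell)$-query two-oracle adversary to $(Z^k U, U)$ can be reinterpreted, for that fixed $k$, as a $t$-query single-oracle adversary to $U$. \Cref{thm:MH24} then gives $\TD \leq 2t(t-1)/(N+1)$ for every fixed $k$, and convexity of trace distance preserves this bound after averaging over $k$.

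For the second distance, I will mirror the proof of \Cref{thm:ind:pcfpr:pr} but with a single shared purifying register $\reg{E}_1$ rather than two independent ones. Concretely, I will introduce $t$ sub-hybrids interpolating between $\mathsf{PR}$ and $\pcfpr{\ell}{n}$ one query at a time, bounding the per-step cost using the operator inequality
\[
\bigl(\mathsf{PR}_{\reg{A}\reg{E}_1}\bigr)^{\dagger} \cdot \pcfpr{\ell}{n}_{\reg{A}\reg{E}_1}^{(\reg{E}_1)} \;\succeq\; I_{\reg{A}} \otimes \sum_{R \in \setsofrels^{\cfree(\ell,n)}} \sqrt{1 - \frac{\ell\,|R|^{2\ell}}{2^{n}}}\; \ketbra{R}{R}_{\reg{E}_1},
\]
which is immediate from \Cref{lem:cf_set_size} instantiated with prefix length $n$. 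Summing the per-hybrid distance bound $\sqrt{\ell(j-1)^{2\ell}/2^{n}}$ over $j=1,\ldots,t$ yields $O(\sqrt{\ell}\,t^{\ell+1}/2^{n/2}) \leq O(\sqrt{\ell}\,t^{\ell+1}/2^{\secp/2})$ since $\secp \leq n$. Combining the two bounds via the triangle inequality (loosening $2t(t-1)/(N+1)$ to $4t(t-1)/(N+1)$) gives the claim. The only non-routine ingredient is the adaptation of \Cref{thm:ind:pcfpr:pr} to a single shared purifying register, but since its proof is entirely local in the register being updated at each step, this adaptation is purely mechanical.
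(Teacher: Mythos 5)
Your proposal is correct and follows essentially the same route as the paper: the paper's one-line justification ``follows from \Cref{cor:ind:twopcfpr}'' unpacks to exactly your decomposition, namely a triangle inequality through an intermediate $\mathsf{PR}$ hybrid, with the Haar-to-$\mathsf{PR}$ leg handled by \Cref{thm:MH24} and the $\mathsf{PR}$-to-$\pcfpr{\ell}{n}$ leg handled by the per-query hybrid argument of \Cref{thm:ind:pcfpr:pr} via \Cref{lem:cf_set_size}. You are in fact more careful than the paper here, since you explicitly address the two mismatches that the bare citation glosses over --- absorbing the fixed $Z^k$ into the adversary's unitaries so that the two-oracle pair collapses to a single $t$-query oracle, and adapting the collision-free hybrid argument to a single shared purifying register --- and both adaptations are sound.
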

\begin{proof}
Follows from~\Cref{cor:ind:twopcfpr}. 
\end{proof}

\noindent $\hybrid_3$: Output 
\begin{equation*}
    \rho_3 := \Tr_{\reg{E}_1\reg{E}_2}\left( \ketbra{\Adversary_t^{{\pcfpr{\ell}{n}}_{\reg{A}\reg{E}}^{(\reg{E}_1)},{{\pcfpr{\ell}{n}}_{\reg{A}\reg{E}}^{(\reg{E}_2)}} }}{\Adversary_t^{{\pcfpr{\ell}{n}}_{\reg{A}\reg{E}}^{(\reg{E}_1)},{{\pcfpr{\ell}{n}}_{\reg{A}\reg{E}}^{(\reg{E}_2)}} }}_{\reg{A}\reg{B}\reg{E}_1\reg{E}_2}  \right).
\end{equation*}
\begin{myclaim}
$\rho_2 = \rho_3$.
\end{myclaim}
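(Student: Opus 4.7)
The plan is to purify the averaging over $k$ in $\rho_2$ by introducing a key register $\reg{K}$, and then apply a Hadamard and a partition isometry on the purifying registers (neither of which affects the reduced state on $\reg{AB}$) to produce a state whose reduced state on $\reg{AB}$ coincides with $\rho_3$. Concretely, define
\[
\ket{\Psi_2} := \tfrac{1}{\sqrt{2^\secp}} \sum_{k \in \bit^\secp}  \ket{\Adversary_t^{Z^k \pcfpr{\ell}{n}^{(\reg{E}_1)}, \pcfpr{\ell}{n}^{(\reg{E}_1)}}} \otimes \ket{k}_{\reg{K}},
\]
so $\rho_2 = \Tr_{\reg{E}_1\reg{K}} \ketbra{\Psi_2}$. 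Expanding by the path-recording formalism, each of the $\ell$ queries at positions $\bfa$ to the first oracle picks up a phase $(-1)^{\langle k,\, y_i|_{[1:\secp]}\rangle}$ from $Z^k$. Collecting the sum over $k$ in $\reg{K}$ and using $\tfrac{1}{\sqrt{2^\secp}}\sum_k (-1)^{\langle k,s\rangle}\ket{k} = H^{\otimes\secp}\ket{s}$, the state becomes
\[
\ket{\Psi_2} = \sum_{\Vec{x},\Vec{y}} c_{\Vec{x},\Vec{y}}\, \prod_{i=1}^{t}\left(\ketbra{y_i}{x_i} A_i\right)\ket{0}_{\reg{AB}} \otimes \ket{R(\Vec{x},\Vec{y})}_{\reg{E}_1} \otimes H^{\otimes\secp}\ket{s(\Vec{y})}_{\reg{K}},
\]
where $R = \set{(x_i,y_i)}_{i\in[t]}$, $s(\Vec{y}) = \bigoplus_{i\in\bfa} y_i|_{[1:\secp]}$, the sum runs over $\Vec{y}$ whose image is in $\setsofsets^{\cfree(\ell,n)}_{n}$, and $c_{\Vec{x},\Vec{y}}$ is the product of the $\pcfpr{\ell}{n}$ normalizations.

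Next, apply the unitary $I_{\reg{E}_1}\otimes H^{\otimes\secp}_{\reg{K}}$ (which replaces the Hadamard superposition by $\ket{s(\Vec{y})}_{\reg{K}}$), followed by the partition isometry $\Vpart_{\reg{E}_1\reg{K}\to \reg{E}_1\reg{E}_2\reg{K}}$ whose associated partition rule sends $(R,s)$ to $(S_R^s,\, R\setminus S_R^s)$, with $S_R^s$ the unique size-$\ell$ subset of $R$ satisfying $\bigoplus_{(x,y)\in S_R^s} y_{[1:\secp]} = s$. Existence is witnessed by $\set{(x_i,y_i)}_{i\in\bfa}$, and uniqueness follows from the fact that $\pcfpr{\ell}{n}$ only outputs images in $\setsofsets^{\cfree(\ell,n)}_n$, so $\Vpart$ is well-defined as an isometry on the support of $(I\otimes H^{\otimes\secp})\ket{\Psi_2}$. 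The resulting state equals
\[
\sum_{\Vec{x},\Vec{y}} c_{\Vec{x},\Vec{y}}\, \prod_{i=1}^{t}\left(\ketbra{y_i}{x_i} A_i\right)\ket{0}_{\reg{AB}} \otimes \ket{\set{(x_i,y_i)}_{i\in\bfa}}_{\reg{E}_1} \otimes \ket{\set{(x_i,y_i)}_{i\in\bfb}}_{\reg{E}_2} \otimes \ket{s(\Vec{y})}_{\reg{K}}.
\]
Directly expanding $\ket{\Psi_3}$ yields the first four factors with the same coefficients $c_{\Vec{x},\Vec{y}}$, because the $\pcfpr{\ell}{n}^{(\cdot)}$ normalization at each step depends only on $\Image(R_1\cup R_2)$, which equals $\Image(R)$ in the $\ket{\Psi_2}$ expansion.

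Finally, since $H^{\otimes\secp}$ and $\Vpart$ act only on the traced-out registers $\reg{E}_1\reg{K}$, they preserve the reduced state on $\reg{AB}$. Moreover, $\ket{s(\Vec{y})}_{\reg{K}}$ is a deterministic function of the contents of $\reg{E}_1$ in the final expression, so tracing out $\reg{K}$ induces no further decoherence beyond tracing out $\reg{E}_1\reg{E}_2$. Chaining these observations gives
\[
\rho_2 = \Tr_{\reg{E}_1\reg{K}}\ketbra{\Psi_2} = \Tr_{\reg{E}_1\reg{E}_2\reg{K}}\ketbra{\Vpart(I\otimes H^{\otimes\secp})\Psi_2} = \Tr_{\reg{E}_1\reg{E}_2}\ketbra{\Psi_3} = \rho_3,
\]
as desired. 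The hard part is justifying that $\Vpart$ is genuinely an isometry on the relevant subspace, and this is exactly what the $\ell$-fold $n$-prefix collision-free property baked into $\pcfpr{\ell}{n}$ buys us: it ensures the partition rule $(R,s)\mapsto (S_R^s,\, R\setminus S_R^s)$ is single-valued on every $(R,s)$ in the support.
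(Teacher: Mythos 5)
Your overall strategy coincides with the paper's: purify the averaging over $k$ into a register $\reg{K}$, absorb the $Z^k$ phases so that a Hadamard on $\reg{K}$ produces the syndrome $\bigoplus_{i\in\bfa}(y_i)_{[1:\secp]}$, and then apply a partition isometry on the purifying registers to split the single recorded relation into its $\bfa$- and $\bfb$-parts. Your closing observation that $\reg{K}$ then holds a deterministic function of the contents of $\reg{E}_1$, so that tracing it out costs nothing, is a harmless variant of the paper's explicit uncomputation of $\reg{K}$ back to $\ket{0}$; and your remark that the $\pcfpr{\ell}{\cdot}$ normalizations agree on both sides because they depend only on $\Image(R_1\cup R_2)$ is correct and needed.

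There is, however, a genuine gap in the one step you yourself flag as the crux. The syndrome you recover is an XOR of \emph{$\secp$-bit prefixes} of the $y_i$'s (the key has length $\secp$ and the construction applies $Z^k\otimes I_{n-\secp}$), so for the rule $(R,s)\mapsto(S_R^s,\,R\setminus S_R^s)$ to be single-valued you need that no two distinct subsets of $\Image(R)$ of size at most $\ell$ share the same \emph{prefix} XOR, i.e.\ strong $\ell$-fold $\secp$-prefix collision-freeness. The property you invoke, membership of the image in $\setsofsets^{\cfree(\ell,n)}_{n}$, only forbids collisions among XORs of the full $n$-bit strings; for $\secp<n$ this is strictly weaker and does not exclude, say, two recorded outputs $y\neq y'$ with $(y)_{[1:\secp]}=(y')_{[1:\secp]}$, in which case already for $\ell=1$ two different singletons yield the same syndrome and your $\Vpart$ is not well defined. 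The repair is to carry out the hybrid with $\pcfpr{\ell}{\secp}$ and to cite $\setsofrels^{\cfree(\ell,\secp)}$ for uniqueness — this is what the paper's own proof of the claim does (it works with relations in $\setsofrels^{\cfree(\ell,\secp)}$ and extracts $\bigoplus_{i}(y_{a_i})_{[1:\secp]}$), and it is also the reading forced by the $O\bigl(\sqrt{\ell}\,t^{\ell+1}/2^{\secp/2}\bigr)$ bounds quoted for the adjacent hybrids. With that substitution your argument establishes the exact equality $\rho_2=\rho_3$.
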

\begin{proof}
    We note that 
    \begin{multline*}
        \rho_2 = 
        \Tr_{\reg{E}_1\reg{K}} \Bigg[ \left( \frac{1}{\sqrt{2^{\secp}}}\sum_{k}\ket{k}_{\reg{K}}\ket{\Adversary_t^{Z^k{\pcfpr{\ell}{n}}_{\reg{A}\reg{E}_1}^{(\reg{E}_1)},{{\pcfpr{\ell}{n}}_{\reg{A}\reg{E}_1}^{(\reg{E}_1)}} }}_{\reg{A}\reg{B}\reg{E}_1} \right) \\
        \left(\frac{1}{\sqrt{2^{\secp}}}\sum_{k'}\bra{k'}_{\reg{K}}\bra{\Adversary_t^{Z^{k'}{\pcfpr{\ell}{n}}_{\reg{A}\reg{E}_1}^{(\reg{E}_1)},{{\pcfpr{\ell}{n}}_{\reg{A}\reg{E}_1}^{(\reg{E}_1)}} }}_{\reg{A}\reg{B}\reg{E}_1} \right) \Bigg]
    \end{multline*}
    and
    \begin{equation*}
        \rho_3 = \Tr_{\reg{E}_1\reg{E}_2}\left( \ketbra{\Adversary_t^{{\pcfpr{\ell}{n}}_{\reg{A}\reg{E}}^{(\reg{E}_1)},{{\pcfpr{\ell}{n}}_{\reg{A}\reg{E}}^{(\reg{E}_2)}} }}{\Adversary_t^{{\pcfpr{\ell}{n}}_{\reg{A}\reg{E}}^{(\reg{E}_1)},{{\pcfpr{\ell}{n}}_{\reg{A}\reg{E}}^{(\reg{E}_2)}} }}_{\reg{A}\reg{B}\reg{E}_1\reg{E}_2} \right).
    \end{equation*}

    \noindent We show this by showing that there exists an isometry on the register $\reg{E}_1\reg{K}$ that maps $$\ket{\psi_1} = \frac{1}{\sqrt{2^{\secp}}}\sum_{k}\ket{k}_{\reg{K}}\ket{\Adversary_t^{Z^k{\pcfpr{\ell}{n}}_{\reg{A}\reg{E}_1}^{(\reg{E}_1)},{{\pcfpr{\ell}{n}}_{\reg{A}\reg{E}_1}^{(\reg{E}_1)}} }}_{\reg{A}\reg{B}\reg{E}_1}$$ to $$\ket{\psi_2} = \ket{\Adversary_t^{{\pcfpr{\ell}{n}}_{\reg{A}\reg{E}}^{(\reg{E}_1)},{{\pcfpr{\ell}{n}}_{\reg{A}\reg{E}}^{(\reg{E}_2)}} }}_{\reg{A}\reg{B}\reg{E}_1\reg{E}_2}.$$

    \noindent Without loss of generality, assume $\Adversary$ queries the first oracle on $\bfa = \set{a_1,\ldots,a_{\ell}}$ indices and the second oracle on $\bfb = [t]\setminus\set{a_1,\ldots,a_{\ell}}$ indices. Hence we expand $\ket{\psi_2}$ using the definition of ${\pcfpr{\ell}{n}}_{\reg{A}\reg{E}}^{(\reg{E}_1)}$ and ${{\pcfpr{\ell}{n}}_{\reg{A}\reg{E}}^{(\reg{E}_2)}}$,
    \begin{equation*}
    \begin{split}
        \ket{\psi_2} = \sum_{\substack{x_1,\ldots,x_t\in [N]\\ y_1\in\cfreeset_{\ell,\secp}(\set{})\\ y_2\in\cfreeset_{\ell,\secp}(\set{y_1})\\ \vdots\\ y_t\in\cfreeset_{\ell,\secp}(\set{y_1,\ldots,y_{t-1}})}}\prod_{i=1}^{t}\frac{1}{\sqrt{|\cfreeset_{\ell,\secp}(\set{y_1,\ldots,y_{i-1}})|}}\left( \ketbra{y_i}{x_i}_{\reg{A}} \cdot A_{\reg{A} \reg{B}}^{(i)} \right) \ket{0}_{\reg{A}\reg{B}} \\
        \otimes \ket{\set{(x_i,y_i)}_{i\in\bfa}}_{\reg{E_1}}\ket{\set{(x_i,y_i)}_{i\in \bfb}}_{\reg{E_2}}\,.
    \end{split}
    \end{equation*}
    
    \noindent For conciseness, we use the subnormalized vector $\ket{\phi_{\Vec{x},\Vec{y}}}$ to denote $$\ket{\phi_{\Vec{x},\Vec{y}}}_{\reg{A} \reg{B}}:=\prod_{i=1}^{t}\frac{1}{\sqrt{|\cfreeset_{\ell,\secp}(\set{y_1,\ldots,y_{i-1}})|}}\left( \ketbra{y_i}{x_i}_{\reg{A}} \cdot A_{\reg{A} \reg{B}}^{(i)} \right) \ket{0}_{\reg{A}\reg{B}}.$$
    Hence, 
    $$\ket{\psi_2} = \sum_{\substack{x_1,\ldots,x_t\in [N]\\ y_1\in\cfreeset_{\ell,\secp}(\set{})\\ y_2\in\cfreeset_{\ell,\secp}(\set{y_1})\\ \vdots\\ y_t\in\cfreeset_{\ell,\secp}(\set{y_1,\ldots,y_{t-1}})}}\ket{\phi_{\Vec{x},\Vec{y}}}_{\reg{A} \reg{B}}\ket{\set{(x_i,y_i)}_{i\in\bfa}}_{\reg{E_1}}\ket{\set{(x_i,y_i)}_{i\in \bfb}}_{\reg{E_2}}.$$
    
    \noindent Similarly, we can we expand $\ket{\psi_1}$ as $$\ket{\psi_1} = \frac{1}{\sqrt{2^\secp}} \sum_{\substack{k\in\bit^\secp\\ x_1,\ldots,x_t\in [N]\\ y_1\in\cfreeset_{\ell,\secp}(\set{})\\ y_2\in\cfreeset_{\ell,\secp}(\set{y_1})\\ \vdots\\ y_t\in\cfreeset_{\ell,\secp}(\set{y_1,\ldots,y_{t-1}})}}\prod_{i=1}^{\ell} (-1)^{\langle y_{a_i},k||0\rangle}\ket{\phi_{\Vec{x},\Vec{y}}}_{\reg{A} \reg{B}} \ket{\set{(x_i,y_i)}_{i\in [t]}}_{\reg{E_1}}\ket{k}_{\reg{K}}.$$
    
    \noindent Moving the sum over $k$ to the purification, we get
    \begin{multline*}
        \ket{\psi_1} = 
        \sum_{\substack{x_1,\ldots,x_t\in [N]\\ y_1\in\cfreeset_{\ell,\secp}(\set{})\\ y_2\in\cfreeset_{\ell,\secp}(\set{y_1})\\ \vdots\\ y_t\in\cfreeset_{\ell,\secp}(\set{y_1,\ldots,y_{t-1}})}}\ket{\phi_{\Vec{x},\Vec{y}}}_{\reg{A} \reg{B}} \ket{\set{(x_i,y_i)}_{i\in [t]}}_{\reg{E_1}} 
        \otimes \frac{1}{\sqrt{2^\secp}} \sum_{k\in\bit^\secp} (-1)^{\langle \oplus_{i=1}^{\ell}y_{a_i},k||0\rangle}\ket{k}_{\reg{K}}.
    \end{multline*}

    \noindent Next, we show that there exists an isometry $W$ such that for all $\set{(x_i,y_i)}_{i\in [t]}\in\setsofrels^{\cfree(\ell,\secp)}$, 
    \begin{equation*}
    \begin{split}
        & W\ket{\set{(x_i,y_i)}_{i\in [t]}}_{\reg{E_1}} \otimes \frac{1}{\sqrt{2^\secp}} \sum_{k\in\bit^\secp} (-1)^{\langle \oplus_{i=1}^{\ell}y_{a_i},k||0\rangle}\ket{k}_{\reg{K}} \\
        =\ & \ket{\set{(x_i,y_i)}_{i\in\bfa}}_{\reg{E_1}}\ket{\set{(x_i,y_i)}_{i\in \bfb}}_{\reg{E_2}}.
    \end{split}
    \end{equation*}

    \noindent We describe $W$ as the following procedure. We start by applying Hadamard on $\reg{K}$ to get $$\ket{\set{(x_i,y_i)}_{i\in [t]}}_{\reg{E_1}}\ket{\oplus_{i=1}^{\ell}y_{a_i}[1:\secp]}_{\reg{K}}.$$ Next, using $\oplus_{i=1}^{\ell}y_{a_i}[1:\secp]$ as the key we can define a partition of $\set{(x_i,y_i)}_{i\in [t]}$ the XOR of $\ell$-sized subset of $y$'s is $\ket{\oplus_{i=1}^{\ell}y_{a_i}[1:\secp]}$. Since $\set{(x_i,y_i)}_{i\in [t]}\in\setsofrels^{\cfree(\ell,\secp)}$, there is a unique partition with this property, namely $\set{y_i}_{i\in\bfa}$ and $\set{(x_i,y_i)}_{i\in \bfb}$. Hence, applying $\Vpart$, we get the state as $$\ket{\set{(x_i,y_i)}_{i\in \bfa}}_{\reg{E_1}}\ket{\set{(x_i,y_i)}_{i\in\bfb}}_{\reg{E_2}}\ket{\oplus_{i=1}^{\ell}y_{a_i}[1:\secp]}_{\reg{K}}.$$ Next, we XOR the bits $y$'s from $\reg{E}_1$ with $\reg{K}$ to get $\ket{0}_{\reg{K}}$,tracing out $\ket{0}_{\reg{K}}$, we get $$\ket{\set{(x_i,y_i)}_{i\in\bfa}}_{\reg{E_1}}\ket{\set{(x_i,y_i)}_{i\in \bfb}}_{\reg{E_2}}.$$ Hence, $W$ maps $\ket{\psi_1}$ to $\ket{\psi_2}$, and $\rho_2=\rho_3$.
\end{proof}

\noindent $\hybrid_4$: Output $\rho_4= \underset{\substack{U \leftarrow \haarunitaries_n\\ V \leftarrow \haarunitaries_n}}{\expect}\left[\ketbra{\Adversary_{t}^{V,U}}{\Adversary_{t}^{V,U}}\right]$. 

\begin{myclaim}
$\TD(\rho_3,\rho_4) \leq O\left(\frac{\sqrt{\ell}t^{\ell+1}}{2^{\secp/2}}\right) + \frac{4t(t-1)}{N+1}$
\end{myclaim}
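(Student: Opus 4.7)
The plan is to observe that this claim is an immediate instantiation of~\Cref{cor:ind:twopcfpr} applied with the prefix parameter set to $n$. In $\rho_3$, the adversary $\Adversary$ interacts with the pair of prefix collision-free path-recording oracles $\pcfpr{\ell}{n}^{(\reg{E}_1)}$ and $\pcfpr{\ell}{n}^{(\reg{E}_2)}$ acting on a shared purifying register $\reg{E}_1\reg{E}_2$, after which the purifying registers are traced out. In $\rho_4$, the same adversary interacts with two independently sampled $n$-qubit Haar random unitaries $U,V$. These are precisely the two objects $\rho$ and $\sigma$ in the statement of~\Cref{cor:ind:twopcfpr} (with the generic parameter $\secp$ there instantiated to $n$ here).

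Applying~\Cref{cor:ind:twopcfpr} with prefix length $n$ directly yields
\[
\TD(\rho_3,\rho_4) \;\leq\; O\!\left(\frac{\sqrt{\ell}\,t^{\ell+1}}{2^{n/2}}\right) + \frac{4t(t-1)}{N+1}.
\]
Since the theorem assumes $\secp \leq n$, we have $2^{n/2} \geq 2^{\secp/2}$, so the first term is at most $O(\sqrt{\ell}\,t^{\ell+1}/2^{\secp/2})$, giving the claimed bound. (In fact, this is the same reduction used to establish the earlier claim $\TD(\rho_1,\rho_2) \leq O(\sqrt{\ell}\,t^{\ell+1}/2^{\secp/2}) + 4t(t-1)/(N+1)$, since the only difference between the two applications of the corollary is which particular $t$-query adversary is being fed into it.)

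There is no real obstacle here — the entire content of the claim has already been carried out by~\Cref{cor:ind:twopcfpr} (which in turn follows from~\Cref{thm:ind:pcfpr:pr} and~\Cref{cor:ind:twopr} via the triangle inequality). The only thing to verify is that $\rho_3$ and $\rho_4$ match the hypotheses of that corollary, which is immediate from definitions. Hence the proof is a single-line invocation of~\Cref{cor:ind:twopcfpr}.
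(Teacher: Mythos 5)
Your proof is correct and matches the paper's, which disposes of this claim by the same one-line invocation of \Cref{cor:ind:twopcfpr}. One small remark: the $\pcfpr{\ell}{n}$ in the definitions of $\rho_2$ and $\rho_3$ is almost certainly a typo for $\pcfpr{\ell}{\secp}$ (the adjacent proof that $\rho_2=\rho_3$ explicitly relies on $\secp$-prefix collision-freeness, since the key contributing the $Z^k$ phase is only $\secp$ bits long), in which case the corollary yields the stated bound directly and your monotonicity step $2^{n/2}\geq 2^{\secp/2}$ is not needed — though under the literal reading your derivation is also valid and in fact gives a stronger bound.
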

\begin{proof}
Follows from~\Cref{cor:ind:twopcfpr}. 
\end{proof}
\noindent This completes the proof of~\Cref{thm:ell_design}.
\end{proof}
\fi

\ifllncs

\else
    \section{Unbounded-Query Secure Non-Adaptive PRUs: Barriers}
Here we show that the construction of $o(\secp/\log(\secp))$-copy secure PRUs are almost tight for constructions that make arbitrary depth-$1$ calls to the PRU.  The main observation is that we can apply the quantum OR technique from \cite{chen2024power} to the Choi states of any PRU that only makes non-adaptive calls.  In particular, consider the following:
\begin{lemma}\label{lem:modify_choi_state}
    Let $U$ be a unitary and $C^{U}$ be any algorithm that makes $t$ non-adaptive calls to $U$.  Then there is an algorithm that prepares $\ket{\Phi_{C^U}}$ given $t$-copies of $\ket{\Phi_{U}}$, where $\ket{\Phi_{U}}$ is the Choi state of $U$.
\end{lemma}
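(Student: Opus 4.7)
The plan is to exploit the ricochet property of the maximally entangled state together with the non-adaptive structure of $C^U$. Since all $t$ queries to $U$ are non-adaptive, we can decompose (for simplicity, ignoring ancillas)
\[
    C^U = V \cdot (U^{\otimes t}) \cdot W,
\]
where $V$ and $W$ are efficient unitaries independent of $U$. Writing $\ket{\Phi}_{\reg{A}\reg{B}} = d^{-t/2}\sum_{\vec{x}}\ket{\vec{x}}_{\reg{A}}\ket{\vec{x}}_{\reg{B}}$ for the maximally entangled state, the target Choi state is
\[
    \ket{\Phi_{C^U}} = \bigl( V \cdot U^{\otimes t} \cdot W \otimes I_{\reg{B}} \bigr)\ket{\Phi}_{\reg{A}\reg{B}}.
\]

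First I would apply the ricochet property to push the pre-processing $W$ to the other half of the EPR pairs, obtaining
\[
    \ket{\Phi_{C^U}} = (V\cdot U^{\otimes t} \otimes W^{T})\ket{\Phi}_{\reg{A}\reg{B}} = (V\otimes W^{T}) \cdot (U^{\otimes t}\otimes I_{\reg{B}})\ket{\Phi}_{\reg{A}\reg{B}}.
\]
The crucial observation is that the rightmost vector is precisely $\ket{\Phi_U}^{\otimes t}$, where the $i$-th copy of $\ket{\Phi_U}$ lives on the pair $(\reg{A}_i,\reg{B}_i)$. Therefore the preparation algorithm is very simple: take the $t$ given copies of $\ket{\Phi_U}$, regard them as lying on registers $\reg{A}\reg{B}$, apply the unitary $V$ to $\reg{A}$, and apply the unitary $W^{T}$ to $\reg{B}$. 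The resulting state is exactly $\ket{\Phi_{C^U}}$.

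Two small things to justify in the writeup. First, $W^{T}$ is efficient given an efficient circuit description of $W$: transpose every gate in the circuit and reverse the circuit order. Second, if $C^U$ actually uses ancillas, we absorb them into $W$ (which prepares $\ket{0}$'s) and into $V$ (which may leave ancillas behind); since the ancillas are attached to the $\reg{A}$-side, they simply become part of the output register of the new Choi state, and the same ricochet argument goes through verbatim. I do not foresee any real obstacle — the entire proof is a one-line identity once the ricochet rewrite is applied; the subtlety is conceptual (recognizing that non-adaptivity is exactly what lets us factor out $U^{\otimes t}$ and then ricochet the surrounding $W$).
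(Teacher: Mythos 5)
Your proposal is correct and takes essentially the same approach as the paper: both decompose $C^U$ as post-processing $\cdot\, U^{\otimes t}\,\cdot$ pre-processing, identify $\ket{\Phi_U}^{\otimes t}$ with $\ket{\Phi_{U^{\otimes t}}}$, and use the ricochet property to push the transposed pre-processing unitary onto the un-queried halves of the EPR pairs. The only difference is the (immaterial) convention of which tensor factor the unitary acts on in the Choi state.
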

\begin{proof}
    We can assume without loss of generality that $C^U$ applies a unitary $A$, then $U^{\otimes t}$, and then another unitary $B$.  Thus, we can prepare the Choi state $\ket{\Phi_{C^U}}$ as follows:
    \begin{align*}
        (\id \otimes B)(A^{\intercal} \otimes \id) \ket{\Phi_U}^{\otimes t} &= (\id \otimes B)(A^{\intercal} \otimes \id)\ket{\Phi_{U^{\otimes t}}}\\
        &= (\id \otimes B)(A^{\intercal} \otimes \id)(\id \otimes U^{\otimes t}) \ket{\Omega}\\
        &= (\id \otimes B)(\id \otimes U^{\otimes t}) (A^{\intercal} \otimes \id) \ket{\Omega}\\
        &= (\id \otimes B)(\id \otimes U^{\otimes t}) (\id \otimes A) \ket{\Omega}\\
        &= \ket{\Phi_{B U^{\otimes t} A}}\\
        &= \ket{\Phi_{C^U}}\,.
    \end{align*}
    Here the first line uses the definition of the Choi state.  Then we use the fact that $U^{\otimes t}$ and $A^{\intercal}$ act on different registers and therefore commute with each other.  Finally, we use the ricochet property of EPR pairs, and the definition of the Choi state and $C^U$.  
\end{proof}

\subsection{Optimality of $\ell$-Query PRUs in~\Cref{sec:unitarydesign:shortkeys}}

\noindent We argue that the construction presented in~\Cref{sec:unitarydesign:shortkeys} is optimal in terms of its query bound. That is, we show that for any $\ell$-query PRU, where the algorithm makes a {\em single} query to the Haar random oracle, it has to be the case that $\ell = O(\secparam/\log(\secparam)^{1+\varepsilon})$, where $\varepsilon > 0$. Our proof is inspired by the dimension counting argument from~\cite{AGQY22,AGL24}. 

\begin{theorem}
Let $\{G_k^U\}_{k \in \{0,1\}^{\secparam}}$ be a set of oracle algorithms, where $G_k$ runs in polynomial time. Moreover, for every $k \in \{0,1\}^{\secparam}$, there exists efficiently implementable unitaries $A_k$ and $B_k$ such that $G_k=B_k U A_k$. 
\par Then, $\{G_k^U\}_{k \in \{0,1\}^{\secparam}}$ is not an $\ell$-query PRU for $\ell=\omega(\secparam/\log(\secparam))$ in ${\sf iQHROM}$, where $U$ is the Haar unitary. 
\end{theorem}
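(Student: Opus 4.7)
The plan is to exhibit a polynomial-time adversary making $\ell = \omega(\secp/\log\secp)$ non-adaptive queries to the PRU oracle, based on the Choi-state isomorphism and a quantum OR over all $2^\secp$ possible keys. The core structural observation is that since $G_k = B_k U A_k$ uses exactly one call to $U$, the ricochet property (as in \Cref{lem:modify_choi_state}) gives
\[
\ket{\Phi_{G_k}} = (A_k^T \otimes B_k)\ket{\Phi_U},
\]
so in the real case the state $\ket{\Phi_{\mathcal{O}}}^{\otimes \ell}$ lies in the $2^\secp$-dimensional subspace $S := \mathrm{span}\{(A_k^T \otimes B_k)^{\otimes \ell}\ket{\Phi_U}^{\otimes \ell} : k \in \{0,1\}^\secp\}$.

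First I would set up the attack. The adversary makes $\ell$ non-adaptive queries to $\mathcal{O}$ by applying it to one half of each of $\ell$ EPR pairs, obtaining $\ket{\Phi_{\mathcal{O}}}^{\otimes \ell}$. In parallel, using $\poly(\secp)$ queries to $U$, the adversary prepares multiple reference copies of $\ket{\Phi_U}$. The distinguisher is (roughly) the projective measurement onto $S$, which is implemented as a coherent quantum OR: coherently superpose over guesses $k$, apply controlled $(A_k^{-T} \otimes B_k^\dagger)^{\otimes \ell}$ to the Choi-state register, and check equality with the reference $\ket{\Phi_U}^{\otimes \ell}$ via swap tests. Applying the Harrow--Lin--Montanaro style sequential OR with $O(\log 2^\secp) = O(\secp)$ rounds (and a balanced split of the $\ell$ queries into a tensor power times a number of copies) amplifies this to constant distinguishing advantage, while the precious queries to $\mathcal{O}$ are only spent on preparing the initial Choi states and the remaining polynomial work is all spent against the $U$-oracle.

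Next I would analyze both cases. In the real case, $\ket{\Phi_{\mathcal{O}}}^{\otimes \ell}=\ket{\Phi_{G_{k^*}}}^{\otimes \ell}\in S$, so the measurement accepts with probability $1$. In the Haar case, $\ket{\Phi_{\mathcal{O}}}=\ket{\Phi_V}$ for Haar random $V$ independent of $U$, and by the Weingarten formula $\Ex_{V\sim\mu_n}[|\Tr(V^\dagger G_k)|^{2\ell}] = \Ex_V[|\Tr(V)|^{2\ell}] = \ell!$ for $\ell \leq 2^n$ (by Haar invariance). Hence
\[
\Ex_V\bigl[|\braket{\Phi_V}{\Phi_{G_k}}|^{2\ell}\bigr] = \frac{\ell!}{4^{n\ell}},
\]
and a union bound over the $2^\secp$ keys yields an acceptance probability of at most $2^\secp \cdot \ell!/4^{n\ell} \leq 2^{\,\secp + \ell\log\ell - 2n\ell}$. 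Since the hypothesis forces $\secp \leq n$, plugging in $\ell = \omega(\secp/\log\secp)$ gives $2n\ell - \ell\log\ell - \secp = \omega(\secp)$, so the Haar-case acceptance probability is negligible in $\secp$, yielding the desired distinguishing advantage.

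The main obstacle is implementing the projection onto $S$ efficiently and within the prescribed query budget. A naive Grover-style amplification over $2^\secp$ keys would require $\Theta(2^{\secp/2})$ oracle calls, which is unacceptable; conversely, the simplest HLM sequential OR nominally wants $O(\log 2^\secp)$ fresh copies of $\ket{\Phi_{\mathcal{O}}}^{\otimes \ell'}$, which can blow past the $\ell$-query budget if $\ell'$ is set badly. The resolution I would work out in detail is to partition the $\ell$ queries as $T$ copies of an $\ell'$-tensor block with $T\ell'=\ell$, choose $\ell'$ so that $2^\secp \cdot \ell'!/4^{n\ell'}$ is $o(1/T)$ (which is already comfortable when $n\geq\secp$), and then use the $O(\secp)$ reference copies of $\ket{\Phi_U}$ to carry out the coherent HLM OR without ever re-querying $\mathcal{O}$. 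The precise parameter balance, together with the mild $\log\ell$ slack in the Weingarten bound, is exactly what forces the threshold $\ell = \omega(\secp/\log\secp)$ rather than a cleaner $O(1)$ bound.
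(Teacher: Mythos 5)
There is a genuine gap in your Haar-case analysis, and it shows up in your own numbers: the bound $2^{\secp}\cdot \ell!/4^{n\ell}$ is already negligible for $\ell=O(1)$ when $n\geq\secp$, so your argument would prove that a \emph{constant} number of queries breaks every single-$U$-call construction. That contradicts \Cref{thm:ell_design}, which exhibits a construction of this exact form ($G_k^U=(Z^k\otimes I)U$) that is secure against $O(\secp/\log^{1+\varepsilon}\secp)$ queries. The error is that the quantity $|\braket{\Phi_V}{\Phi_{G_k}}|^{2\ell}$ is the false-accept probability of the \emph{rank-one projector} $\proj{\Phi_{G_k}}^{\otimes\ell}$, and that projector is not implementable by an adversary who only has query access to $U$ (equivalently, copies of $\ket{\Phi_U}$) rather than a classical description of $U$. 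The measurement you actually describe — SWAP tests of the transformed Choi states against reference copies of $\ket{\Phi_U}$ — accepts with probability at least $1/2$ per test even on orthogonal states, so driving the false-accept probability below the $2m\delta = 2^{\secp+1}\delta$ threshold demanded by the quantum OR lemma requires $\Omega(\secp)$ independent tests and hence $\Omega(\secp)$ queries to $\mathcal{O}$. That is exactly the paper's \emph{second}, weaker negative result (the parallel-query tightness theorem, which indeed uses quantum OR plus swap tests plus a Levy's-lemma overlap bound); it does not reach $\omega(\secp/\log\secp)$.

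The paper's proof of the statement you were given is a different, non-constructive argument: the (computationally unbounded, query-bounded) distinguisher prepares $\ket{\Phi_U}^{\otimes t}\otimes\ket{\Phi_W}^{\otimes\ell}$ with $t=\secp^3$ and measures the projector onto the support of $\rho_0^{(t,\ell)}=\Exp_{k,U}[\proj{\Phi_U}^{\otimes t}\otimes\proj{\Phi_{G_k^U}}^{\otimes\ell}]$. The real case accepts with probability $1$; the ideal case accepts with probability at most $\rank(\rho_0^{(t,\ell)})/\rank(\rho_1^{(t,\ell)})$ because $\rho_1$ is flat on its support. Both ranks are dimensions of symmetric subspaces (via the closed form for moments of Choi states), and the ratio is roughly $2^{\secp}/\binom{t+\ell}{\ell}$, which is small precisely when $\ell\log t=\omega(\secp)$, i.e.\ $\ell=\omega(\secp/\log\secp)$ for $t=\poly(\secp)$. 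This symmetric-subspace dimension count — how little the support grows when $\ell$ correlated copies are appended to $t$ reference copies — is the actual source of the threshold, and it is the ingredient your proposal is missing. Your ricochet/Choi-state setup and the observation that the real-case state lives in a $2^{\secp}$-dimensional family are both correct and match the paper's starting point, but the distinguishing measurement and its analysis need to be replaced.
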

\begin{proof}
Before we describe the distinguisher that violates the security of $G_k^U$, we first state some observations. 

\paragraph{Observations.} From~\Cref{lem:modify_choi_state}, note that $\ket{\Phi_{G_k^U}} = (A_k^{\intercal} \otimes B_k)\ket{\Phi_U}$. We define $\rho_0^{(t,\ell)}$ below.

\begin{eqnarray*}
\rho_0^{(t,\ell)} & = & \underset{\substack{k \leftarrow \{0,1\}^{\secp}\\ U \leftarrow \haarunitaries_n}}{\expect}\left[ \ketbra{\Phi_U}{\Phi_U}^{\otimes t} \otimes \ketbra{\Phi_{G_k^U}}{\Phi_{G_k^U}}^{\otimes \ell} \right] \\
& = & \underset{\substack{k \leftarrow \{0,1\}^{\secp}\\ U \leftarrow \haarunitaries_n}}{\expect}\left[ \ketbra{\Phi_U}{\Phi_U}^{\otimes t} \otimes \left(\left(A_k^{\intercal} \otimes B_k \right)\ketbra{\Phi_{U}}{\Phi_{U}}\left((A_k^{\intercal})^{\dagger} \otimes (B_k)^{\dagger} \right) \right)^{\otimes \ell} \right] \\
& = &  \underset{\substack{k \leftarrow \{0,1\}^{\secp}}}{\expect}\left[ \left(\id \otimes \left(A_k^{\intercal} \otimes B_k\right)^{\otimes \ell} \right) \underset{\substack{U \leftarrow \haarunitaries_n}}{\expect}\left[ \ketbra{\Phi_U}{\Phi_U}^{\otimes t} \otimes \ketbra{\Phi_{U}}{\Phi_{U}}^{\otimes \ell} \right] \left(\left(\id \otimes \left(A_k^{\intercal} \otimes B_k\right)^{\dagger} \right)^{\otimes \ell} \right) \right].
\end{eqnarray*}

\noindent We determine ${\sf rank}\left(\rho_{0}^{(t,\ell)} \right)$. In order to do that, note that, from~\cite{Har23},\footnote{Specifically,~\cite{Har23} gives a closed form expression for $\sigma_0 = \underset{\substack{U \leftarrow \haarunitaries_n}}{\expect}\left[ \ketbra{\Phi_U}{\Phi_U}^{\otimes t} \otimes \ketbra{\Phi_{U}}{\Phi_{U}}^{\otimes \ell} \right]$ (see equation (23) in~\cite{Har23}) and also, $\sigma_1=\underset{\substack{U \leftarrow \haarunitaries_n}}{\expect}\left[ U^{\otimes (t+\ell)} \ketbra{0^{2m}}{0^{2m}} U^{\otimes (t+\ell)} \right]$ (see equation (22) in~\cite{Har23}). From the closed form expressions, it follows that ${\sf rank}(\sigma_0)={\sf rank}(\sigma_1)$. But $\sigma_1$ is precisely, the normalized projector on the symmetric subspace $\Pi_{{\sf Sym}}^{(t+\ell),2m}$.} the following holds: 
$${\sf rank}\left(\underset{\substack{U \leftarrow \haarunitaries_n}}{\expect}\left[ \ketbra{\Phi_U}{\Phi_U}^{\otimes t} \otimes \ketbra{\Phi_{U}}{\Phi_{U}}^{\otimes \ell} \right] \right) = \dim(\Pi_{{\sf Sym}}^{(t+\ell),2m}),$$ where $\Pi_{{\sf Sym}}^{(t+\ell),2m}$ is the projector onto the symmetric subspace spanned by $(t+\ell)$-copy $m$-qubit states. 

\noindent Thus, we have the following:
$${\sf rank}\left( \rho_0^{(t,\ell)} \right) \leq 2^{\secp} \cdot {2^{2m} + \ell + t - 1 \choose \ell + t}.$$
\noindent We define another state $\rho_1^{(t,\ell)}$ as 

$$\rho_1^{(t,\ell)} = \underset{\substack{U \leftarrow \haarunitaries_n\\ V \leftarrow \haarunitaries_n}}{\expect}\left[ \ketbra{\Phi_U}{\Phi_U}^{\otimes t} \otimes \ketbra{\Phi_{V}}{\Phi_{V}}^{\otimes \ell} \right].$$

\noindent We determine ${\sf rank}\left(\rho_{1}^{(t,\ell)} \right)$ and have
$${\sf rank}\left( \rho_1^{(t,\ell)} \right) = {2^{2m} + \ell - 1 \choose \ell} \cdot {2^{2m} + t - 1 \choose t}.$$

\noindent Finally, we use the following fact to upper bound $\frac{\rank(\rho_0^{(t,\ell)})}{\rank(\rho_1^{(t,\ell)})}$. 

\begin{myclaim}[\cite{AGL24}]
\label{clm:agl24:bound}
Suppose $\ell = \omega(\secp/\log(\secp))$, $t=\secp^3$. Then: 
$$\frac{2^{\secp} \cdot {2^{2m} + \ell + t - 1 \choose \ell + t}}{{2^{2m} + \ell - 1 \choose \ell} \cdot {2^{2m} + t - 1 \choose t}} \leq \frac{1}{2^{\secp}}$$
\end{myclaim}
\noindent Using the above claim, we have (for the parameters stated in the claim) that $\frac{\rank(\rho_0^{(t,\ell)})}{\rank(\rho_1^{(t,\ell)})} \leq \frac{1}{2^{\secp}}$.

\paragraph{Distinguisher.} Let $\Pi$ be the projector that projects onto the eigenspace of $\rho_0^{(t,\ell)}$. The distinguisher $\Adversary$, with oracle access to the Haar random unitary $U$ and another unitary $W$ (which is either the unitary design that depends on $U$ or a Haar random unitary that is independently sampled), does the following: 
\begin{itemize}
    \item $\Adversary$ creates $t$ copies of $\ketbra{\Phi_{U}}{\Phi_U}$ by making $t$ queries to $U$.
    \item $\Adversary$ creates $\ell$ copies of $\ketbra{\Phi_{W}}{\Phi_W}$ by making $\ell$ queries to $W$.
    \item It then measures $\ketbra{\Phi_{U}}{\Phi_U}^{\otimes t} \otimes \ketbra{\Phi_W}{\Phi_W}^{\otimes \ell}$ using the measurement basis $\{\Pi,\id-\Pi\}$. 
    \item If the measurement succeeds, output 1. Otherwise, output 0. 
\end{itemize}

\noindent Let us consider the following cases. \\

\noindent {\bf Case 1. $W=G_k^U$}: In this case, the probability that the distinguisher always outputs 1. \\

\noindent {\bf Case 2. $W=V$, where $V$ is an i.i.d Haar random unitary}: We have the following:
\begin{align*}
\Tr\left(\Pi \rho_1^{\ell,t} \right)
& \leq \frac{\Tr(\Pi)}{\rank(\rho_1^{(\ell,t)})} \\
& = \frac{\rank(\rho_0^{(\ell,t)})}{\rank(\rho_1^{(\ell,t)})} \\
& \leq \frac{1}{2^{\secparam}}\ (\text{\Cref{clm:agl24:bound}}) \hfill \qedhere
\end{align*}
\end{proof}

\subsection{Almost Optimality among Parallel-Query Pseudorandom Unitaries} 
In the previous section, we showed that a PRU construction that makes only a single call to the Haar random unitary cannot be more secure than our construction.  Here we show that PRU constructions that make arbitrarily many queries in \emph{parallel} cannot be much more secure than our construction.

We begin as in \cite{chen2024power}, showing that many copies of the Choi state of a Haar random unitary are far from the Choi state of any fixed unitary with high probability.  
To prove this, we follow the proof of Lemma 5.4 from \cite{chen2024power}, with a slight modification to handle Choi states of Haar random unitaries.
\begin{lemma}[Lemma 5.4 from \cite{chen2024power}]
    Let $\ket{\psi_0}$ be a $2^n$-dimensional state, then
    \begin{equation*}
        \Pr_{\ket{\psi} \gets \Haar_n} \left[\abs{\braket{\psi}{\psi_0}}^2 \geq \frac{1}{2}\right] \leq 8 \exp\left(\frac{- 2^n}{600}\right)\,. 
    \end{equation*}
\end{lemma}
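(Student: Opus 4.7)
\bigskip

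The plan is to apply Levy's concentration-of-measure inequality on the sphere to the function $f(\ket{\psi}) := |\braket{\psi_0}{\psi}|$. First, I would observe that $f$ is $1$-Lipschitz on the unit sphere of $\C^{N}$ with $N = 2^n$, viewed as the real sphere $S^{2N-1}$, since by the triangle inequality $\bigl|\,|\braket{\psi_0}{\psi}| - |\braket{\psi_0}{\psi'}|\,\bigr| \leq |\braket{\psi_0}{\psi - \psi'}| \leq \|\ket{\psi} - \ket{\psi'}\|$ by Cauchy--Schwarz.

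Next, I would compute (or cite) the standard Haar moment $\Ex_{\ket{\psi} \leftarrow \Haar_n}\bigl[|\braket{\psi_0}{\psi}|^{2}\bigr] = 1/N$, which follows from $\Ex_{\ket{\psi} \leftarrow \Haar_n}\proj{\psi} = \id/N$. Jensen's inequality then gives $\Ex[f] \leq 1/\sqrt{N}$. Consequently, for all $N$ large enough that $1/\sqrt{N} \leq 1/(2\sqrt{2})$, the event $|\braket{\psi_0}{\psi}|^{2} \geq 1/2$ implies $f - \Ex[f] \geq 1/\sqrt{2} - 1/\sqrt{N} \geq 1/(2\sqrt{2})$.

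Then I would invoke Levy's lemma for the real sphere $S^{d-1}$ with $d = 2N$: for any $1$-Lipschitz function $g$ and any $\eta > 0$,
\[
\Pr_{\vec{u} \leftarrow S^{d-1}}\bigl[|g(\vec{u}) - \Ex g| \geq \eta \bigr] \leq 4\exp\!\bigl(-c \cdot d \cdot \eta^{2}\bigr)
\]
for a universal constant $c$ (the standard formulation gives $c = 1/8$ up to a factor depending on convention, which more than suffices). Applying this with $\eta = 1/(2\sqrt{2})$ gives a bound of the form $4\exp(-c' \cdot N)$ for an explicit constant $c' > 1/300$, which after loosening constants is $\leq 8\exp(-2^{n}/600)$.

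The main obstacle is purely a bookkeeping issue: tracking the precise constants in Levy's lemma to land on the stated $8$ and $1/600$, and separately handling the small $n$ regime where $\Ex[f]$ is not negligible relative to $1/\sqrt{2}$. The latter, however, is free: for $n \lesssim 10$ the stated bound $8\exp(-2^n/600)$ exceeds $1$ and is vacuous, so one only needs the argument above for moderately large $n$, where the approximation $\Ex[f] \ll 1/\sqrt{2}$ becomes valid. No deeper idea is required; the lemma is a direct consequence of sharp Haar-measure concentration on the sphere.
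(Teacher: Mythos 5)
Your argument is correct in substance, but note that the paper does not actually prove this lemma: it is imported verbatim from \cite{chen2024power}, and the only proof the paper supplies is for the analogous statement about Choi states of Haar random unitaries (the lemma immediately following this one in the text). Compared to that proof, your route differs in two small ways. You apply Levy's inequality once, directly to the modulus $f(\ket{\psi}) = \abs{\braket{\psi_0}{\psi}}$, which forces you to control $\Ex[f]$ via the second moment $\Ex[f^2] = 2^{-n}$ and Jensen; the paper instead splits the overlap into real and imaginary parts $f_1, f_2$, each of which has mean exactly zero under the Haar measure, applies Levy to each, and union-bounds over the events $f_i^2 \geq 1/4$. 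Both are standard and both work; yours is arguably cleaner for the state version, since a single application of Levy suffices and, as you observe, the $2^{-n/2}$ correction to the mean is absorbed for $n \geq 3$ while the stated bound is vacuous for $n \leq 10$, so all $n$ are covered. The one soft spot is the constant: the version of Levy's lemma most commonly quoted in the quantum information literature has exponent $d\eta^2/(9\pi^3)$ rather than $d\eta^2/8$, and with $d = 2^{n+1}$ and $\eta = 1/(2\sqrt{2})$ that gives roughly $\exp(-2^n/1116)$, which does \emph{not} dominate $8\exp(-2^n/600)$ for large $n$; to literally recover the stated constants one needs the sharper (Gaussian-isoperimetry) form with exponent $(d-2)\eta^2/2$ about the median, together with the median-to-mean conversion. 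Since the target constants are themselves inherited from whichever formulation \cite{chen2024power} used, this is indeed only bookkeeping, as you say, but it is the one place where ``the standard formulation'' is not interchangeable across references.
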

We prove a slightly modified version of the lemma as it pertains to Choi states.
\begin{lemma}
    \label{lem:Haar_choi_state_low_overlap}
    Let $U_0$ be a $2^n \times 2^n$ unitary matrix, then the following holds
    \begin{equation*}
        \Pr_{U \gets \mu_n} \left[\abs{\braket{\Phi_{U}}{\Phi_{U_0}}}^2 \geq \frac{1}{2}\right] \leq 2 \exp\left(\frac{-2^n}{96}\right)\,.
    \end{equation*}
\end{lemma}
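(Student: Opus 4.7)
The plan is to reduce the Choi-state overlap to the trace of a Haar-random unitary and then apply concentration of measure on $U(N)$. Writing $\ket{\Phi_U} = (I \otimes U)\ket{\Omega}$ with $\ket{\Omega}$ the maximally entangled state on $2n$ qubits, a direct calculation gives
$$\braket{\Phi_U}{\Phi_{U_0}} = \bra{\Omega}(I \otimes U^\dagger U_0)\ket{\Omega} = \frac{1}{2^n}\mathrm{Tr}(U^\dagger U_0),$$
so $|\braket{\Phi_U}{\Phi_{U_0}}|^2 = |\mathrm{Tr}(U^\dagger U_0)|^2 / 2^{2n}$. By left-invariance of Haar measure, $V := U^\dagger U_0$ is Haar-distributed on $U(2^n)$, so it suffices to bound $\Pr_{V \gets \mu_n}[|\mathrm{Tr}(V)|^2 \geq 2^{2n-1}]$ by $2\exp(-2^n/96)$. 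This is a standard-type tail bound for the trace of a Haar-random unitary.

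First I would check that $f(V) := |\mathrm{Tr}(V)|/2^n$ is $2^{-n/2}$-Lipschitz with respect to the Frobenius distance on $U(2^n)$; this follows from Cauchy--Schwarz since $\|U_0\|_F = 2^{n/2}$. Second, I would bound the mean using the Diaconis--Shahshahani moment $\E|\mathrm{Tr}(V)|^2 = 1$, giving $\E f(V) \leq 2^{-n}$ by Jensen. Third, I would plug these into the Meckes--Meckes subgaussian concentration inequality on $U(N)$: any $L$-Lipschitz $f$ satisfies $\Pr[|f(V) - \E f(V)| \geq t] \leq 2\exp(-cNt^2/L^2)$. With $N = 2^n$, $L = 2^{-n/2}$ and $t = 1/\sqrt{2} - 2^{-n}$, this yields $\Pr[f(V) \geq 1/\sqrt{2}] \leq 2 \exp(-\Omega(2^{2n}))$, which is vastly stronger than the stated $2\exp(-2^n/96)$.

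The only real obstacle is bookkeeping of the concentration constant $c$, which depends on the normalization convention for the Haar measure. A cleaner route that avoids this issue altogether is to apply Markov directly with the Diaconis--Shahshahani moments $\E[|\mathrm{Tr}(V)|^{2k}] = k!$ valid for $k \leq 2^n$: we get $\Pr[|\mathrm{Tr}(V)|^2 \geq 2^{2n-1}] \leq k!/2^{k(2n-1)}$, and optimizing $k = \Theta(2^n)$ via Stirling gives an $\exp(-\Theta(2^n))$ tail with an explicit constant that can be tuned to $1/96$. I expect the paper's proof to follow one of these two routes, and in either case the modification from the proof of \cite{chen2024power} Lemma 5.4 is essentially just the substitution $\ket{\psi_0} \leftrightarrow \ket{\Phi_{U_0}}$ after invoking the Choi identification above.
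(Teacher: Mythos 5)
Your proposal is correct, and your first route is essentially the paper's proof: the paper likewise starts from the Choi identification $\braket{\Phi_U}{\Phi_{U_0}} = \frac{1}{2^n}\Tr(U^\dagger U_0)$ and applies Levy-type concentration on the unitary group with respect to the Frobenius distance. Two differences are worth noting. First, rather than passing to the Haar-distributed $V = U^\dagger U_0$ and working with $\abs{\Tr(V)}$, the paper bounds the real and imaginary parts $f_{1,2}(U) = \Re/\Im\bigl(\bra{\Omega} U U_0^\dagger \ket{\Omega}\bigr)$ separately (each being a real-valued Lipschitz function, as Levy's lemma requires) and union-bounds via $f_1^2 + f_2^2 \geq \tfrac12 \Rightarrow \max(\abs{f_1},\abs{f_2}) \geq \tfrac12$. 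Second, and more substantively, the paper deliberately relaxes the Lipschitz bound $\frac{1}{2^n}\norm{U-V}_F\norm{U_0}_F = 2^{-n/2}\norm{U-V}_F$ to $\norm{U-V}_F$, i.e.\ uses Lipschitz constant $1$; that is precisely why it lands on $2\exp(-2^n/96)$, whereas your sharper constant $2^{-n/2}$ would give $\exp(-\Omega(2^{2n}))$ — confirming the lemma as stated is far from tight (the weaker bound suffices downstream). Your second route, via the Diaconis--Shahshahani moments $\E\abs{\Tr V}^{2k} = k!$ and Markov, is genuinely different and arguably cleaner: it sidesteps the mild but real subtlety that dimension-free concentration on the full group $\mathrm{U}(N)$ (as opposed to $\mathrm{SU}(N)$) needs some care because of the central $U(1)$ direction — a wrinkle your $\abs{\Tr(\cdot)}$ is immune to by central invariance, but which the paper's $\Re\Tr$ is not and which neither proof addresses explicitly.
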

\begin{proof}
    Our goal is to apply Levy's lemma to the squared inner product of the Choi states.  First, consider the following function of two unitaries:
    \begin{equation*}
        f_1(U) = \Re \left(\bra{\Omega} U U_0^{\dagger} \ket{\Omega}\right)\,.
    \end{equation*}
    Then the following holds
    \begin{align*}
        \abs{f_1(U) - f_1(V)} &= \abs{\Re \left(\bra{\Omega} U U_0^{\dagger} \ket{\Omega}\right) - \Re \left(\bra{\Omega} V U_0^{\dagger} \ket{\Omega}\right)}\\
        &\leq \abs{ \bra{\Omega} (U - V) U_0^{\dagger} \ket{\Omega} }\\
        &= \abs{\frac{1}{2^n}\Tr[(U - V) U_0^{\dagger} ]}\\
        &\leq \frac{1}{2^n} \norm{U - V}_{F} \norm{U_0}_{F} \\
        &= \norm{U - V}_{F}\,.
    \end{align*}
    The same inequalities apply for $f_2(U) = \Im\left(\bra{\Omega} U U_0^{\dagger} \ket{\Omega}\right)$.  Applying Levy's lemma, we have the following:
    \begin{equation*}
        \Pr_{U \gets \mu_n} \left[f_1(U) \geq 1/2\right] \leq \exp(-\frac{2^n}{96})\,.
    \end{equation*}
    The same holds for $f_2$.  Thus by a union bound, we have that
    \begin{align*}
        \Pr_{U \gets \Haar_n} \left[\abs{\braket{\Phi_{U}}{\Phi_{U_0}}}^2 \geq \frac{1}{2}\right] &= \Pr_{U \gets \mu_n}\left[f_1^2(U) + f_2^2(U) \geq \frac{1}{2}\right]\\
        &\leq \Pr_{U \gets \mu_n}\left[f_1(U) \geq \frac{1}{2}\right] + \Pr_{U \gets \mu_n}\left[f_2(U) \geq \frac{1}{2}\right]\\
        &\leq 2\exp(-\frac{2^n}{96})\,,
    \end{align*}
    as desired.  This completes the proof of the lemma.
\end{proof}

The final technical tool we need for this section is the quantum OR lemma.
\begin{lemma}[Quantum OR lemma~\cite{WB24,harrow2017sequential}]\label{lem:quantum_or}
    Let $\Pi_1, \ldots, \Pi_m$ be a collection of projectors, and let $\epsilon \leq \frac{1}{2}$ and $\delta$.  
    Let $\rho$ be a state with the promise that either there exists $\Pi_i$ such that $\Tr[\Pi_i \rho] \geq 1 - \epsilon$ (case $1$), or for all $i$, $\Tr[\Pi_i \rho] \leq \delta$ (case $2$).
    Then there is a polynomial space quantum algorithm $A$ such that 
    \begin{enumerate}
        \item In case $1$, $A$ accepts with probability at least $(1 - \epsilon)^2 / 4.5$.
        \item In case $2$, $A$ accepts with probability at most $2m \delta$.  
    \end{enumerate}
    $A$ only requires black box access to the projector valued measurements $\{\Pi_i, \id - \Pi_i\}$ and $O(\log m)$ additional space.
\end{lemma}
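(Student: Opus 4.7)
The plan is to build on the sequential measurement framework of Harrow, Lin, and Montanaro, which is the standard route to the quantum OR lemma. The basic idea would be: apply the projective measurements $\{\Pi_i, I - \Pi_i\}$ one at a time to $\rho$, in a uniformly random order $\sigma \in \Symgp_m$, each time coherently storing the outcome in a fresh ancilla qubit (so that the $i$-th measurement can be reversed later if needed for the two-outcome OR decision). The algorithm accepts if at least one ancilla records a ``yes'' outcome. The $O(\log m)$ space constraint means we cannot actually store $m$ ancillas or a full permutation, so I would instead sample $\sigma$ coherently and apply a small number of ``trials,'' each consisting of a random choice of $i \in [m]$ and a single measurement of $\Pi_i$; by rejection sampling / controlled application this can be done with only $O(\log m)$ workspace in addition to the ancilla holding $\rho$.

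For the soundness side (Case 2), I would argue the easy direction first: by linearity of trace and a union bound, if I perform $T$ independent trials each of which selects a uniform $i \in [m]$ and measures $\Pi_i$, then the probability that any trial outputs $1$ is at most $T \cdot \tfrac{1}{m}\sum_i \Tr[\Pi_i\rho] \le T\delta$. Choosing $T$ to be a small constant independent of $m$ gives soundness $\le 2m\delta$ after suitable rescaling (one can also directly do a single random trial and amplify, but the constant will change).

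The completeness side (Case 1) is the real obstacle. If there is some ``good'' $\Pi_{i^*}$ with $\Tr[\Pi_{i^*}\rho] \ge 1-\epsilon$, then a \emph{single} random trial hits $i^*$ with probability only $1/m$, so I cannot afford a naive single-trial approach. The key technical ingredient I would invoke is Aaronson's quantum-union/\,gentle-measurement bound: applying any sequence of measurements $\Pi_{\sigma(1)},\Pi_{\sigma(2)},\dots$ \emph{before} $\Pi_{i^*}$ only disturbs $\rho$ a little, provided each prior measurement has small acceptance probability on $\rho$. Concretely, I would randomize the order of the measurements and argue that with constant probability over $\sigma$, every measurement preceding $i^*$ has individual acceptance probability at most, say, $1/2$, which in turn (via the gentle measurement lemma / Winter's lemma) keeps the state close to $\rho$ by the time $\Pi_{i^*}$ is applied. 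Then $\Pi_{i^*}$ still fires with probability close to $1-\epsilon$, yielding the claimed $(1-\epsilon)^2/4.5$ bound.

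The step I expect to be the main obstacle is making the completeness analysis compatible with the $O(\log m)$ space budget: the usual proof implicitly performs all $m$ measurements sequentially and uses Jordan's lemma / a two-dimensional invariant subspace argument to track the disturbance, which is fine in terms of queries but delicate to execute coherently while only keeping a logarithmic-size state around. I would handle this by processing the measurements in place---reusing the ancilla qubit after each measurement and using a coherent pseudorandom permutation (built from a small random seed stored in the $O(\log m)$ workspace) to generate $\sigma$ on the fly---so that at any moment the working state is just $\rho$ together with a logarithmic-size ``counter + seed + current outcome'' register.
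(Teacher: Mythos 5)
This lemma is not proved in the paper at all: it is imported verbatim as a known result from the cited works (Harrow--Lin--Montanaro and its follow-ups), so there is no internal proof to compare your sketch against. Judged on its own terms, your sketch has two genuine gaps. First, the soundness argument does not go through: the ``linearity of trace plus union bound'' step is only valid if each trial is performed on a fresh copy of $\rho$, but the algorithm is given a single copy. For sequential measurements on one copy, the post-measurement state after a rejection is only $O(\sqrt{\delta})$-close to $\rho$ by gentle measurement, so the naive bound degrades to $O(m\sqrt{\delta})$ rather than $O(m\delta)$; obtaining the $2m\delta$ bound is precisely the non-trivial content of the Harrow--Lin--Montanaro sequential-measurement theorem. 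Relatedly, your ``constant number $T$ of random trials'' variant cannot be correct as a single algorithm: it would give soundness $T\delta$ with no $m$ factor, but then in Case 1 it only queries the good index $i^*$ with probability $T/m$, so completeness cannot be bounded below by a constant times $(1-\epsilon)^2$.

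Second, the completeness argument rests on an unjustified assumption. You claim that with constant probability over the random order, every measurement preceding $\Pi_{i^*}$ has acceptance probability at most $1/2$ on $\rho$; nothing in the hypotheses of Case 1 controls $\Tr[\Pi_j\rho]$ for $j \neq i^*$, and these can all be large. When an earlier measurement has large acceptance probability but happens to reject, the conditional post-measurement state can be far from $\rho$, and gentle measurement gives you nothing. The actual proof handles this via a genuinely different mechanism (a control-qubit construction analyzed through Jordan's lemma, or the HLM damage lemma for randomly ordered sequential measurements), not by a disturbance-accumulation bound. Since the paper uses this lemma purely as a black box, the cleanest fix is simply to cite it as the paper does; if you want a self-contained proof, you would need to reproduce the Jordan-decomposition analysis rather than the gentle-measurement route.
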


Using these results, we can prove that no construction of PRU in the $\mathsf{iQHROM}$ can only make a single parallel query to the common Haar random unitary.

\begin{theorem}[Tightness of PRU] 
For any construction of PRU making non-adaptive calls to the Haar random oracle, with output size equal to $\secp$, there exists an adversary that breaks its security by making $O(\secp)$ non-adaptive queries to the PRU and $p(\secp)$ non-adaptive queries to the Haar random oracle for some polynomial $p$.
\end{theorem}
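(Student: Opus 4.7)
My plan is to adapt the quantum OR attack of~\cite{chen2024power} to the pseudorandom unitary setting, using \Cref{lem:modify_choi_state} to handle the Haar-oracle dependence and \Cref{lem:Haar_choi_state_low_overlap} to anti-concentrate the Choi states of Haar-random $V$'s. Let $G_k^U = B_k \cdot U^{\otimes t} \cdot A_k$ be the non-adaptive PRU construction with $t = \poly(\secp)$ parallel $U$-calls, and denote by $W_k := (I \otimes B_k)(A_k^{\intercal} \otimes I)$ the Choi-state preparation unitary guaranteed by \Cref{lem:modify_choi_state}, which maps $\ket{\Phi_U}^{\otimes t}$ to $\ket{\Phi_{G_k^U}}$ and crucially makes no oracle queries to $U$.

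The adversary performs two non-adaptive query layers and then a purely internal subroutine. (i) Query $U$ on the right halves of $mt$ fresh EPR pairs, harvesting $mt$ copies of $\ket{\Phi_U}$, where $m = C\secp$ for a large enough constant $C$. (ii) Query the challenge oracle $\mathcal{O}$ on the right halves of $m$ fresh EPR pairs, harvesting $m$ copies of $\ket{\Phi_\mathcal{O}}$. (iii) Run the quantum-OR procedure of \Cref{lem:quantum_or} over the $2^\secp$ projective measurements $\{\Pi_k\}_{k\in\bit^\secp}$, where $\Pi_k$ coherently prepares $\ket{\Phi_{G_k^U}}^{\otimes m}$ on a fresh ancilla by applying $W_k^{\otimes m}$ to the stored $\ket{\Phi_U}^{\otimes mt}$, performs $m$ independent SWAP tests against the stored $\ket{\Phi_\mathcal{O}}^{\otimes m}$ copies, and accepts iff all SWAP tests succeed; after the measurement, $W_k^{\otimes m}$ is uncomputed so that the $\ket{\Phi_U}$ resource is restored for the next OR round. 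Since all oracle queries were made in the two initial layers and $W_k$ itself is oracle-free, the entire OR subroutine makes no further queries to $U$ or $\mathcal{O}$.

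For the case analysis: if $\mathcal{O} = G_{k^\star}^U$ is the true PRU, then $\braket{\Phi_\mathcal{O}}{\Phi_{G_{k^\star}^U}} = 1$ and $\Tr[\Pi_{k^\star}\rho] = 1$, so one projector in the OR collection fully accepts. If $\mathcal{O} = V \leftarrow \haarunitaries_\secp$ is independent of $U$ and of the key, \Cref{lem:Haar_choi_state_low_overlap} gives $|\braket{\Phi_V}{\Phi_{G_k^U}}|^2 \leq 1/2$ per $k$ except with probability $2\exp(-2^\secp/96)$, and a union bound over the $2^\secp$ keys still leaves this negligible; conditioning on this good event, every SWAP test accepts with probability at most $3/4$, giving $\Tr[\Pi_k\rho]\leq (3/4)^m$ for all $k$ simultaneously. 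Feeding $\epsilon = 0$, $\delta = (3/4)^m$, and $2^\secp$ projectors into \Cref{lem:quantum_or} yields acceptance probability at least $1/4.5$ in the real world and at most $2^{\secp+1}(3/4)^m$ in the ideal world; picking $C$ large enough (so $m > \secp/\log_2(4/3)$) makes the ideal acceptance negligible and gives constant distinguishing advantage.

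The main technical subtlety is preserving non-adaptivity: the OR routine performs polynomially many sequential invocations of the $\Pi_k$ measurements, so we must ensure these do not force adaptive oracle access. This is handled by two observations: (a) $W_k$ uses no oracle queries, so preparing and uncomputing $\ket{\Phi_{G_k^U}}^{\otimes m}$ only manipulates the pre-harvested $\ket{\Phi_U}$ register; (b) since uncomputation of $W_k$ restores that register exactly, the same $mt$ harvested copies of $\ket{\Phi_U}$ suffice across all OR rounds. The total cost is therefore $m = O(\secp)$ non-adaptive queries to the PRU and $p(\secp) = mt = \poly(\secp)$ non-adaptive queries to $U$, as required.
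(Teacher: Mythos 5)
Your proposal is correct and follows essentially the same route as the paper's proof: harvest Choi states of $U$ and of the challenge oracle via the ricochet property, define for each key $k$ a projector that conjugates the swap/symmetric-subspace test by the oracle-free Choi-preparation unitary from \Cref{lem:modify_choi_state}, bound the ideal-world acceptance via \Cref{lem:Haar_choi_state_low_overlap} plus a union bound over the $2^{\secp}$ keys, and conclude with \Cref{lem:quantum_or} with $\epsilon = 0$ and $\delta = (3/4)^{O(\secp)}$. Your explicit discussion of why non-adaptivity is preserved (all queries up front, uncomputation restores the $\ket{\Phi_U}$ resource) is a point the paper leaves implicit but is the same mechanism.
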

\begin{proof}
    We construct an adversary against any PRU that makes non-adaptive queries using the quantum OR attack from \cite{chen2024power}.  
    Assume that the PRU construct on key $k$ first calls unitary $A_k$, then calls the Haar random unitary $U^{\otimes t}$, and finally runs $B_k$.  We describe the $k^{th}$ measurement that we use as input to the quantum OR algorithm.  Starting from $O(t\secp)$ copies of $\ket{\Phi_{U}}$ and $O(\secp)$ copies of $\ket{\Phi_{\oracle}}$, for every key $k$, the adversary performs the algorithm described by \Cref{lem:modify_choi_state} on $\left(\ket{\Phi_{U}}^{\otimes t}\right)^{\otimes O(\secp)}$ to get $O(\secp)$ copies of $\ket{\Phi_{U_k}}$, the Choi state corresponding to the PRU with key $k$.  
    Then the measurement performs a SWAP test with all $O(\secp)$ copies of $\ket{\Phi_{\oracle}}$.  
    Finally, the measurement uncomputes the computation.  
    Formally, the $k$'th measurement operator $\Pi_k$ has the following description, where the registers $\reg{AB}$ contain copies of the Choi state $\ket{\Phi_U}$ and the registers $\reg{CD}$ contain copies of the Choi state $\ket{\Phi_\oracle}$.
    \begin{equation*}
        \Pi_k = \left(\left((A^{*}_k \otimes B^{\dagger}_k)_{\reg{AB}} \otimes \id_{\reg{BC}} \right)\left(\Pi^{\mathrm{sym}}_{\reg{ABCD}}\right)\left((A^{\intercal}_k \otimes B_k)_{\reg{AB}} \otimes \id_{\reg{BC}}\right)\right)^{\otimes O(\secp)}\,.
    \end{equation*}
    Here $A^*$ is the complex conjugate of $A$, as opposed to $A^{\dagger}$, which is the conjugate transpose.  
    From \Cref{lem:modify_choi_state}, when the oracle $\oracle$ is equal to $U_{k}$ for some $k$, there is a choice (namely the one corresponding to $k$) of measurement that accepts with probability $1$.  
    Thus, in \Cref{lem:quantum_or}, we can set $\epsilon = 0$.   
    
    Furthermore, from \Cref{lem:Haar_choi_state_low_overlap} if the oracle is a Haar random unitary, then with probability $1 - 2\exp(-\frac{d}{96})$, the Choi states have less than $1/2$ squared fidelity.  The probability that the swap test succeeds on all $O(\secp)$ copies of the state is thus upper bounded by $\left(\frac{3}{4}\right)^{O(\secp)}$. 
    Setting the number of copies of the state to be $c\secp$ for any constant such that $(3/4)^c < 1/2$, we can apply a union bound over all $2^{\secp}$ measurements to get that the probability that the quantum OR accepts for a Haar random state is negligible in $\secp$.
    Applying \Cref{lem:quantum_or}, we see that our adversary breaks PRU security with constant probability. 
\end{proof}

    \newcommand{\adversary}{{\cal A}}
\newcommand{\reduction}{{\cal R}}
\newcommand{\subspace}{{\cal S}}
\section{Unbounded-Copy PRS in the iQHROM} 

\noindent In this section the main result is a simple construction of pseudorandom states in the $\mathsf{iQHROM}$.  The construction of a PRS generator $G$, with oracle access to an $n(\secp)$-qubit Haar random unitary $U$ for $n \geq \secp$, is as follows: on input $k \in \bit^\secp$, output $U\ket{k||0^{n-\secp}}$.  
We note that this result is probably provable using the lower bound on unstructured search. However, as a warm-up for the next section, we present a nice argument using the path-recording framework.

\begin{theorem}\label{thm:multi_copy_prs}
$G$ is a multi-copy secure PRS generator in the $\mathsf{iQHROM}$. Formally, an adversary that receives $t$ copies of a state that is either the output of $G$ on a random $k \in\bit^\lambda$ or a Haar random state that is independent of $U$ and makes $s$ adaptive queries to $U$ has distinguishing advantage at most $O\qty(\sqrt{\frac{s}{2^\secp}} + \frac{(t+s)^2}{\sqrt{2^n}})$.
\end{theorem}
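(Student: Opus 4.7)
The plan is to mimic the path-recording strategy sketched in~\Cref{sec:tech:prs}, with the two approximations handled by the Ma-Huang framework bracketing a gentle-measurement argument and a purification-register isometry.

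First, I would purify the key $k$ into a register $\reg{K}$, and observe that the whole experiment of creating $t$ copies of $U\ket{k\|0^{n-\secp}}$ and then running the $s$-query adversary $\Adversary$ can be viewed as a single $(t+s)$-query oracle adversary $\Adversary'$ that queries only $U$. Applying~\Cref{thm:MH24} to $\Adversary'$, we replace $U$ by the path-recording isometry $\mathsf{PR}_{\reg{AE}_1}$ at a cost of $O((t+s)^2/N)$ in trace distance. On the ideal side, I would first model each Haar random state as $V\ket{0^n}$ for an independent Haar random $V$, which is without loss of generality. The joint $(U,V)$ experiment is then a two-oracle $(t+s)$-query experiment; by~\Cref{cor:ind:twopr} (or its shared-image variant, whichever turns out cleaner), we can replace $(U,V)$ by two independent path-recording oracles $(\mathsf{PR}_{\reg{AE}_1},\mathsf{PR}_{\reg{AE}_2})$ at an additional cost of $O((t+s)^2/N)$.

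Next, I would explicitly write the purified real-world state. Expanding the $t$ queries on input $\ket{k\|0^{n-\secp}}$ and the $s$ adaptive queries on inputs $\vec{x}$, the state has the form
\[
\ket{\psi_{\mathrm{real}}} \propto \sum_{k\in\bit^\secp}\sum_{\vec{x}\in[N]^s}\sum_{(\vec{y},\vec{z})\in[N]^{t+s}_{\mathrm{dist}}} \alpha_{\vec{x},\vec{y},\vec{z}}^{\Adversary}\ket{\text{adversary}}_{\reg{AB}}\otimes \ket{\{(k\|0,y_i)\}_i\uplus \{(x_j,z_j)\}_j}_{\reg{E}_1}\ket{k}_{\reg{K}},
\]
where $\alpha^{\Adversary}_{\vec{x},\vec{y},\vec{z}}$ absorbs the adversary's unitaries and the $1/\sqrt{N^{\downarrow(t+s)}}$ factor. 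Then I would define the projector
\[
\Pi^{\mathrm{good}}_{\reg{E}_1\reg{K}} = \sum_{R}\ketbra{R}{R}\otimes \sum_{k\,:\, k\|0^{n-\secp}\notin \mathrm{Dom}(R)\setminus\{k\|0^{n-\secp}\}_{\text{PRS-part}}} \ketbra{k}{k},
\]
i.e.\ the projector onto the good event that none of the adversary's query inputs $x_j$ equals $k\|0^{n-\secp}$. Since at most $s$ of the $2^\secp$ keys are ruled out by any fixed $\vec{x}$, $\|\Pi^{\mathrm{good}}\ket{\psi_{\mathrm{real}}}\|^2\geq 1-s/2^\secp$, and by~\Cref{lem:norm:proj} the trace distance between $\ket{\psi_{\mathrm{real}}}$ and its projected version is at most $\sqrt{s/2^\secp}$, which is the first term in the claimed bound.

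Finally, on the good subspace I would apply an isometry $V^{\mathrm{split}}$ on $\reg{E}_1\reg{K}$ (built from the $\Vpart$ primitive of~\Cref{sec:multiset}) that partitions the single relation $\{(k\|0,y_i)\}_i\uplus\{(x_j,z_j)\}_j$ into two relations living in two fresh registers $\reg{E}_1,\reg{E}_2$, according to whether the input equals $k\|0^{n-\secp}$ or not. On the good subspace this partition is canonical, so $V^{\mathrm{split}}$ is a well-defined isometry, and it maps the projected real state to
\[
\sum_{k}\ket{k}_{\reg{K}}\!\otimes\!\sum_{\vec{x},(\vec{y},\vec{z})\in[N]^{t+s}_{\mathrm{dist}}}\!\!\alpha^{\Adversary}_{\vec{x},\vec{y},\vec{z}}\ket{\text{adv}}_{\reg{AB}}\ket{\{(k\|0,y_i)\}}_{\reg{E}_1}\ket{\{(x_j,z_j)\}}_{\reg{E}_2}.
\]
On the ideal side, after swapping $V$ for $\mathsf{PR}_V$, the state is identical except that $\vec{y}\in[N]^t_{\mathrm{dist}}$ and $\vec{z}\in[N]^s_{\mathrm{dist}}$ are sampled independently (each tuple distinct within itself, but possibly colliding across the two). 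Comparing the two amplitudes is a standard collision calculation: restricting to the common distinct subspace costs $O((t+s)^2/N)$ in squared overlap and hence $O((t+s)/\sqrt{N})$, or more loosely $O((t+s)^2/\sqrt{N})$, in trace distance. Combining all three error terms by the triangle inequality yields the advertised bound.

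The main obstacle, as in~\Cref{sec:pru:shortkeys}, will be carrying out the isometric splitting step rigorously: one has to verify that on the good subspace the partition of the relation into the ``PRS part'' (inputs of the form $k\|0^{n-\secp}$) and the ``adversary part'' (everything else) is uniquely determined by $k$, and that the resulting map respects the symmetrization inherent in the relation-state definition (\Cref{def:relationstates}). Once this is handled via the $\Vpart$ framework, everything else is a direct computation.
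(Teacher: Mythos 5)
Your route is the same as the paper's: purify the key, replace $U$ (resp.\ $(U,V)$) by path-recording oracles, project onto the good event that no adversary query hits $k\Vert 0^{n-\secp}$, split the relation register, and compare with the ideal state on the jointly-distinct subspace; the error terms you collect also match. There is, however, a genuine gap at the projection step. You assert $\|\Pi^{\mathrm{good}}\ket{\psi_{\mathrm{real}}}\|^2\ge 1-s/2^{\secp}$ ``since at most $s$ of the $2^{\secp}$ keys are ruled out by any fixed $\vec{x}$.'' Writing $\ket{\psi_{\mathrm{real}}}=2^{-\secp/2}\sum_k\ket{\xi_k}\ket{k}$, the quantity you need is $2^{-\secp}\sum_k\|(\id-\Pi_k)\ket{\xi_k}\|^2$, i.e.\ the average over $k$ of the weight that the key-$k$ experiment places on relations containing an adversary pair with input $k\Vert 0^{n-\secp}$. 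A union bound over keys would be valid if the adversary's query inputs were distributed independently of $k$, but the key register is entangled with the adversary's registers through $\reg{E}$, and the inner sum is conditioned on the key actually being $k$: bounding it by $s/2^{\secp}$ is essentially the statement that the adversary cannot locate $k\Vert 0^{n-\secp}$ from its view, which is a form of the very theorem being proved (an unstructured-search-type bound), not a counting fact. No reordering of the double sum over $(R,k)$ rescues the one-line argument.

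The paper closes exactly this gap by never bounding $\|\Pi^{\mathrm{good}}\ket{\psi_{\mathrm{real}}}\|$ directly: it first builds the \emph{normalized} ideal-side state (collision-free joint path recording via \Cref{thm:ind:pcfpr:pr}, with an appended key superposition $\frac{1}{\sqrt{2^{\secp}-|\{\vec{x}\}|}}\sum_{k\notin\{\vec{x}\}}\ket{k}$ that remains entangled with $\reg{E}_2$), then compares coefficients against the split-and-projected real state using \Cref{lem:norm:sub-state} to get $\braket{\psi_{\mathrm{good}}}{\phi_{\mathrm{ideal}}}\ge\sqrt{1-s/2^{\secp}}$, and only then deduces the norm bound from \Cref{lem:norm:cs}. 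Your own final ``amplitude comparison'' step is precisely this computation, so the repair is to reorder your argument rather than to add a new idea --- but as written the projection step is asserted, not proved. Two smaller points to fold in: after splitting you must also uncompute $k$ from the inputs in $\reg{E}_1$ (mapping $k\Vert 0^{n-\secp}\mapsto 0^{n}$, controlled on $\reg{K}$) so that $\reg{E}_1$ matches the ideal world, and your displayed post-split state factors the key register out of the sum over $\vec{x}$ entirely, which is incorrect --- the restriction $k\notin\{\vec{x}\}$ keeps it entangled with $\reg{E}_2$, and that residual entanglement is exactly where the $\sqrt{s/2^{\secp}}$ term comes from.
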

\begin{proof}
    Fix $n,\lambda,t,s$ and let $N := 2^n$ for the rest of the proof. We complete by hybrid arguments. \\
    
    \noindent $\hybrid~1$: The challenger samples a uniformly random $k\in\bit^\lambda$ and sends $t$ copies of $U\ket{k||0^{n-\secp}}$ to the adversary, where $U$ is the Haar random oracle. Then, the adversary makes $s$ queries to $U$. \\
    
    \noindent $\hybrid~2$: Using the path-recording framework, we can write the initial state of an adversary that receives $t$ copies of $\mathsf{PR}\ket{k||0^{n-\secp}}$ for a uniformly random $k \in \bit^\lambda$ as follows
    \ifllncs
    \begin{equation*}
    \begin{split}
        & \frac{1}{\sqrt{2^{\lambda}}}\sum_{k \in \bit^\lambda} \left( \bigotimes_{i = 1}^t  \mathsf{PR} \ket{k||0^{n-\secp}}_{\reg{C_i}} \right) \ket{k}_{\reg{K}} \\
        = & \sqrt{\frac{1}{2^{\lambda} N^{\downarrow t}}} \sum_{\substack{k \in \bit^\lambda, \\ (y_1, \ldots, y_t) \in [N]^{t}_{\mathrm{dist}}}} \ket{y_1, \ldots, y_t}_{\reg{C_1}\dots\reg{C_t}} \ket{\{(k||0^{n-\secp}, y_i)\}_{i = 1}^{t}}_{\reg{E}} \ket{k}_{\reg{K}}.
    \end{split}
    \end{equation*}
    \else
    \begin{equation*}
        \frac{1}{\sqrt{2^{\lambda}}}\sum_{k \in \bit^\lambda} \left( \bigotimes_{i = 1}^t  \mathsf{PR} \ket{k||0^{n-\secp}}_{\reg{C_i}} \right) \ket{k}_{\reg{K}}
        = \sqrt{\frac{1}{2^{\lambda} N^{\downarrow t}}} \sum_{\substack{k \in \bit^\lambda, \\ (y_1, \ldots, y_t) \in [N]^{t}_{\mathrm{dist}}}} \ket{y_1, \ldots, y_t}_{\reg{C_1}\dots\reg{C_t}} \ket{\{(k||0^{n-\secp}, y_i)\}_{i = 1}^{t}}_{\reg{E}} \ket{k}_{\reg{K}}.
    \end{equation*}
    \fi
    Then we can write the state of any adversary that given registers $(\reg{C_1},\dots,\reg{C_t})$ and makes $s$ many queries to $\mathsf{PR}$ as follows
    \ifllncs
    \begin{equation*}
    \begin{split}
        \ket{\psi_{\mathrm{real}}}_{\reg{A}\reg{B}\reg{E}\reg{K}} := 
        \sqrt{\frac{1}{2^{\lambda}N^{\downarrow t+s}}} \sum_{\substack{k \in \bit^\lambda, \\ (x_1, \ldots, x_s) \in [N]^s, \\ (y_1, \ldots, y_t, z_1, \ldots, z_s) \in [N]^{t+s}_{\mathrm{dist}}}} \left(\prod_{i = 1}^{s} \ket{z_i}\!\bra{x_i}_{\reg{A}} A^{(i)}_{\reg{AB}} \right) \ket{\psi_\mathrm{Init}(\Vec{y})}_{\reg{AB}} \\
        \otimes \ket{\{(k||0^n, y_i)\}_{i = 1}^{t} \cup \{(x_j, z_j)\}_{j = 1}^{s}}_{\reg{E}} \ket{k}_{\reg{K}},
    \end{split}
    \end{equation*}
    \else
    \begin{multline*}
        \ket{\psi_{\mathrm{real}}}_{\reg{A}\reg{B}\reg{E}\reg{K}} := \\
        \sqrt{\frac{1}{2^{\lambda}N^{\downarrow t+s}}} \sum_{\substack{k \in \bit^\lambda, \\ (x_1, \ldots, x_s) \in [N]^s, \\ (y_1, \ldots, y_t, z_1, \ldots, z_s) \in [N]^{t+s}_{\mathrm{dist}}}} \left(\prod_{i = 1}^{s} \ket{z_i}\!\bra{x_i}_{\reg{A}} A^{(i)}_{\reg{AB}} \right) \ket{\psi_\mathrm{Init}(\Vec{y})}_{\reg{AB}} \ket{\{(k||0^{n-\secp}, y_i)\}_{i = 1}^{t} \cup \{(x_j, z_j)\}_{j = 1}^{s}}_{\reg{E}} \ket{k}_{\reg{K}},
    \end{multline*}
    \fi
    where $\ket{\psi_\mathrm{Init}(\Vec{y})}_{\reg{A}\reg{B}} := \ket{y_1, \ldots, y_t}_{\reg{C_1}\dots\reg{C_t}} \ket{0}_{\reg{B'}}$ for some ancilla register $\reg{B'}$. From~\Cref{thm:MH24}, the trace distance between $\Tr_{\reg{EK}}(\ketbra{\psi_{\mathrm{real}}}{\psi_{\mathrm{real}}}_{\reg{A}\reg{B}\reg{E}\reg{K}})$ and the adversary's density matrix in~$\hybrid~1$ is $O\qty((t+s)^2/2^n)$. \\
    
    Here the key fact is that the $z_j$ and $y_i$ are distinct from each other, because the path-recording oracle only appends $z_j$ that are not already in the image of the relation.
    We can then apply an isometry on the purifying register that takes all elements of the relation state that have $k$ as the input and puts them in a new set. 
    To do so, we say that a pair $(R,k)$ of an injective relation $R$ of size $t+s$ and a key $k$ is \emph{good} if and only if 
    \[
    |\set{y\in\Im(R):\ (k||0^n,y)\in R}| = t.
    \]
    We define the projection $\Pi_{\mathrm{good}}$ onto the the subspace spanned by $\set{\ket{R}\ket{k}: (R,k) \text{ is good}}$. Explicitly,
    \[
    \Pi_{\mathrm{good}} := \sum_{(R,k) \text{ is good}} \ketbra{R}{R}_{\reg{E}} \otimes \ketbra{k}{k}_{\reg{K}}.
    \]
    
    \noindent $\hybrid~3$: Now, consider applying $\Pi_{\mathrm{good}}$ on $\ket{\psi_{\mathrm{real}}}_{\reg{A}\reg{B}\reg{E}\reg{K}}$. Observe that $\ket{\{(k||0^{n-\secp}, y_i)\}_{i = 1}^{t} \cup \{(x_j, z_j)\}_{j = 1}^{s}}_{\reg{E}} \ket{k}_{\reg{K}}$ vanishes if and only if $k \in \set{\Vec{x}}$, where $\set{\Vec{x}} :=  \bigcup_{i=1}^s \{x_i\}$. Therefore, we get the following subnormalized state
    \ifllncs
    \begin{equation*}
    \begin{split}
        \sqrt{\frac{1}{2^\lambda N^{\downarrow t+s}}} \sum_{\substack{(x_1, \ldots, x_s) \in [N]^s, \\ (y_1, \ldots, y_t, z_1, \ldots, z_s) \in [N]^{t+s}_{\mathrm{dist}}, \\ k \not\in \set{\Vec{x}}}} \left(\prod_{i = 1}^{s} \ket{z_i}\!\bra{x_i}_{\reg{A}} A^{(i)}_{\reg{AB}} \right) \ket{\psi_\mathrm{Init}(\Vec{y})}_{\reg{AB}} \\
        \otimes \ket{\{(k||0^{n-\secp}, y_i)\}_{i = 1}^{t} \cup \{(x_j, z_j)\}_{j = 1}^{s}}_{\reg{E}} \ket{k}_{\reg{K}}.
    \end{split}
    \end{equation*}
    \else
    \begin{equation*}
        \sqrt{\frac{1}{2^\lambda N^{\downarrow t+s}}} \sum_{\substack{(x_1, \ldots, x_s) \in [N]^s, \\ (y_1, \ldots, y_t, z_1, \ldots, z_s) \in [N]^{t+s}_{\mathrm{dist}}, \\ k \not\in \set{\Vec{x}}}} \left(\prod_{i = 1}^{s} \ket{z_i}\!\bra{x_i}_{\reg{A}} A^{(i)}_{\reg{AB}} \right) \ket{\psi_\mathrm{Init}(\Vec{y})}_{\reg{AB}} \ket{\{(k||0^{n-\secp}, y_i)\}_{i = 1}^{t} \cup \{(x_j, z_j)\}_{j = 1}^{s}}_{\reg{E}} \ket{k}_{\reg{K}}.
    \end{equation*}
    \fi
    Then applying the partition isometry, we obtain
    \ifllncs
    \begin{equation*}
    \begin{split}
        \sqrt{\frac{1}{2^\lambda N^{\downarrow t+s}}} \sum_{\substack{(x_1, \ldots, x_s) \in [N]^s, \\ (y_1, \ldots, y_t, z_1, \ldots, z_s) \in [N]^{t+s}_{\mathrm{dist}}, \\ k \not\in \set{\Vec{x}}}} \left(\prod_{i = 1}^{s} \ket{z_i}\!\bra{x_i}_{\reg{A}} A^{(i)}_{\reg{AB}} \right) \ket{\psi_\mathrm{Init}(\Vec{y})}_{\reg{AB}} \\
        \otimes \ket{\{(k||0^{n-\secp}, y_i)\}_{i = 1}^{t}}_{\reg{E_1}} \ket{\{(x_j, z_j)\}_{j = 1}^{s}}_{\reg{E_2}} \ket{k}_{\reg{K}}.
    \end{split}
    \end{equation*}
    \else
    \begin{equation*}
        \sqrt{\frac{1}{2^\lambda N^{\downarrow t+s}}} \sum_{\substack{(x_1, \ldots, x_s) \in [N]^s, \\ (y_1, \ldots, y_t, z_1, \ldots, z_s) \in [N]^{t+s}_{\mathrm{dist}}, \\ k \not\in \set{\Vec{x}}}} \left(\prod_{i = 1}^{s} \ket{z_i}\!\bra{x_i}_{\reg{A}} A^{(i)}_{\reg{AB}} \right) \ket{\psi_\mathrm{Init}(\Vec{y})}_{\reg{AB}} \ket{\{(k||0^{n-\secp}, y_i)\}_{i = 1}^{t}}_{\reg{E_1}} \ket{\{(x_j, z_j)\}_{j = 1}^{s}}_{\reg{E_2}} \ket{k}_{\reg{K}}.
    \end{equation*}
    \fi
    Then we apply an isometry to uncompute $k$ in register $\reg{E_1}$, controlled by register $\reg{K}$, and obtain the sub-normalized state 
    \ifllncs
    \begin{equation} \label{eq:PRS_GOOD}
    \begin{split}
        \ket{\psi_{\mathrm{good}}}_{\reg{ABE_1E_2K}} := \sqrt{\frac{1}{2^\lambda N^{\downarrow t+s}}} \sum_{\substack{(x_1, \ldots, x_s) \in [N]^s, \\ (y_1, \ldots, y_t, z_1, \ldots, z_s) \in [N]^{t+s}_{\mathrm{dist}}, \\ k \not\in \set{\Vec{x}}}} \left(\prod_{i = 1}^{s} \ket{z_i}\!\bra{x_i}_{\reg{A}} A^{(i)}_{\reg{AB}} \right) \ket{\psi_\mathrm{Init}(\Vec{y})}_{\reg{AB}} \\
        \otimes \ket{\{(0^n, y_i)\}_{i = 1}^{t}}_{\reg{E_1}} \ket{\{(x_j, z_j)\}_{j = 1}^{s}}_{\reg{E_2}} \ket{k}_{\reg{K}}.
    \end{split}
    \end{equation}
    \else
    $\ket{\psi_{\mathrm{good}}}_{\reg{ABE_1E_2K}}$:
    \begin{equation} \label{eq:PRS_GOOD}
        \sqrt{\frac{1}{2^\lambda N^{\downarrow t+s}}} \sum_{\substack{(x_1, \ldots, x_s) \in [N]^s, \\ (y_1, \ldots, y_t, z_1, \ldots, z_s) \in [N]^{t+s}_{\mathrm{dist}}, \\ k \not\in \set{\Vec{x}}}} \left(\prod_{i = 1}^{s} \ket{z_i}\!\bra{x_i}_{\reg{A}} A^{(i)}_{\reg{AB}} \right) \ket{\psi_\mathrm{Init}(\Vec{y})}_{\reg{AB}} \ket{\{(0^n, y_i)\}_{i = 1}^{t}}_{\reg{E_1}} \ket{\{(x_j, z_j)\}_{j = 1}^{s}}_{\reg{E_2}} \ket{k}_{\reg{K}}.
    \end{equation}
    \fi
    
    \noindent The trace distance between $\Tr_{\reg{EK}}(\ketbra{\psi_{\mathrm{real}}}{\psi_{\mathrm{real}}})$ in $\hybrid~2$ and $\Tr_{\reg{E_1E_2K}}(\ketbra{\psi_{\mathrm{good}}}{\psi_{\mathrm{good}}})$ in $\hybrid~3$ satisfies
    \begin{align}
        & \TD(\Tr_{\reg{EK}}(\ketbra{\psi_{\mathrm{real}}}{\psi_{\mathrm{real}}}_{\reg{ABEK}}), 
        \Tr_{\reg{E_1E_2K}}(\ketbra{\psi_{\mathrm{good}}}{\psi_{\mathrm{good}}}_{\reg{ABE_1E_2K}})) \nonumber \\
        = & \TD(\Tr_{\reg{EK}}(\ketbra{\psi_{\mathrm{real}}}{\psi_{\mathrm{real}}}_{\reg{ABEK}}), 
        \Tr_{\reg{EK}}(\Pi_{\mathrm{good}}\ketbra{\psi_{\mathrm{real}}}{\psi_{\mathrm{real}}}_{\reg{ABEK}}\Pi_{\mathrm{good}})) \nonumber \\
        \leq & \TD(\ketbra{\psi_{\mathrm{real}}}{\psi_{\mathrm{real}}}_{\reg{ABE}\reg{K}},
        \Pi_{\mathrm{good}}\ketbra{\psi_{\mathrm{real}}}{\psi_{\mathrm{real}}}_{\reg{ABE}\reg{K}}\Pi_{\mathrm{good}}) \nonumber \\
        \leq & \sqrt{1 - \norm{\Pi_\mathrm{good}\ket{\psi_{\mathrm{real}}}_{\reg{ABEK}}}^2} \nonumber \\
        = & \sqrt{1 - \norm{\ket{\psi_{\mathrm{good}}}_{\reg{ABE_1E_2K}}}^2}, \label{eq:PRS_hybrid}
    \end{align}
    where the last inequality follows from~\Cref{lem:norm:proj}. Looking ahead, instead of bounding $\norm{\ket{\psi_{\mathrm{good}}}}$ directly, we will show that the inner product between $\ket{\psi_{\mathrm{good}}}$ and some normalized state is negligibly close to $1$. Hence, we can conclude that $\norm{\ket{\psi_{\mathrm{good}}}}$ is also negligibly close to $1$ from~\Cref{lem:norm:cs}. \\
    
    For readability, we define $\hybrid~6$ to $\hybrid~4$ in reverse order. \\
    
    \noindent $\hybrid~6$: Next, we can examine the state of an adversary who got copies of an independently sampled Haar random state, which we can model as getting copies of $V\ket{0}$ for an independently sampled unitary $V$. \\

    \noindent $\hybrid~5$: Similar to $\hybrid~2$, using the path-recording framework, we replace $(U,V)$ with $(\mathsf{PR}_1,\mathsf{PR}_2)$ and define the normalized state 
    \ifllncs
    \begin{equation*}
    \begin{split}
        \ket{\psi^{\mathsf{PR}_1,\mathsf{PR}_2}_{\mathrm{ideal}}}_{\reg{ABE_1E_2}} :=
        \sqrt{\frac{1}{N^{\downarrow t}N^{\downarrow s}}} \sum_{\substack{(x_1, \ldots, x_s) \in [N]^s, \\  (y_1, \ldots, y_t) \in [N]^{t}_{\mathrm{dist}}, \\ (z_1, \ldots, z_s) \in [N]^s_{\mathrm{dist}}}} \left(\prod_{i = 1}^{s} \ket{z_i}\!\bra{x_i}_{\reg{A}} A^{(i)}_{\reg{AB}} \right) \ket{\psi_\mathrm{Init}(\Vec{y})}_{\reg{AB}} \\
        \otimes \ket{\{(0^n, y_i)\}_{i = 1}^{t}}_{\reg{E_1}} \ket{\{(x_j, z_j)\}_{j = 1}^{s}}_{\reg{E_2}}.
    \end{split}
    \end{equation*}
    \else
    \begin{equation*}
        \ket{\psi^{\mathsf{PR}_1,\mathsf{PR}_2}_{\mathrm{ideal}}}_{\reg{ABE_1E_2}} :=
        \sqrt{\frac{1}{N^{\downarrow t}N^{\downarrow s}}} \sum_{\substack{(x_1, \ldots, x_s) \in [N]^s, \\  (y_1, \ldots, y_t) \in [N]^{t}_{\mathrm{dist}}, \\ (z_1, \ldots, z_s) \in [N]^s_{\mathrm{dist}}}} \left(\prod_{i = 1}^{s} \ket{z_i}\!\bra{x_i}_{\reg{A}} A^{(i)}_{\reg{AB}} \right) \ket{\psi_\mathrm{Init}(\Vec{y})}_{\reg{AB}} \ket{\{(0^n, y_i)\}_{i = 1}^{t}}_{\reg{E_1}} \ket{\{(x_j, z_j)\}_{j = 1}^{s}}_{\reg{E_2}}.
    \end{equation*}
    \fi

    From~\Cref{thm:MH24}, the trace distance between $\Tr_{\reg{E_1E_2}}(\ketbra{\psi^{\mathsf{PR}_1,\mathsf{PR}_2}_{\mathrm{ideal}}}{\psi^{\mathsf{PR}_1,\mathsf{PR}_2}_{\mathrm{ideal}}})$ in~$\hybrid~5$ and the adversary's density matrix in~$\hybrid~6$ is $O((t^2+s^2)/2^n)$. \\
    
    \noindent $\hybrid~4$:
    We replace $(\mathsf{PR}_1,\mathsf{PR}_2)$ with $(\pcfpr{1}{n}^{(\reg{E_1})},\pcfpr{1}{n}^{(\reg{E_2})})$ as defined in~\Cref{sec:cfPR} and define the normalized state 
    \ifllncs
    \begin{equation*}
    \begin{split}
        \ket{\psi^{\mathsf{cfPR}_1,\mathsf{cfPR}_2}_{\mathrm{ideal}}}_{\reg{ABE_1E_2}} := 
        \sqrt{\frac{1}{N^{\downarrow t+s}}} \sum_{\substack{(x_1, \ldots, x_s) \in [N]^s, \\  (y_1, \ldots, y_t, z_1, \ldots, z_s) \in [N]^{t+s}_{\mathrm{dist}}}} \left(\prod_{i = 1}^{s} \ket{z_i}\!\bra{x_i}_{\reg{A}} A^{(i)}_{\reg{AB}} \right) \ket{\psi_\mathrm{Init}(\Vec{y})}_{\reg{AB}} \\
        \otimes \ket{\{(0^n, y_i)\}_{i = 1}^{t}}_{\reg{E_1}} \ket{\{(x_j, z_j)\}_{j = 1}^{s}}_{\reg{E_2}}.
    \end{split}
    \end{equation*}
    \else
    \begin{multline*}
        \ket{\psi^{\mathsf{cfPR}_1,\mathsf{cfPR}_2}_{\mathrm{ideal}}}_{\reg{ABE_1E_2}} := \\
        \sqrt{\frac{1}{N^{\downarrow t+s}}} \sum_{\substack{(x_1, \ldots, x_s) \in [N]^s, \\  (y_1, \ldots, y_t, z_1, \ldots, z_s) \in [N]^{t+s}_{\mathrm{dist}}}} \left(\prod_{i = 1}^{s} \ket{z_i}\!\bra{x_i}_{\reg{A}} A^{(i)}_{\reg{AB}} \right) \ket{\psi_\mathrm{Init}(\Vec{y})}_{\reg{AB}} \ket{\{(0^n, y_i)\}_{i = 1}^{t}}_{\reg{E_1}} \ket{\{(x_j, z_j)\}_{j = 1}^{s}}_{\reg{E_2}}.
    \end{multline*}
    \fi
    We apply an isometry to append $\frac{1}{\sqrt{2^\lambda - |\set{\Vec{x}}|}} \sum_{k\notin\set{\Vec{x}}} \ket{k}_\reg{K}$ on the above state, controlled by register $\reg{E}$, and result in the normalized state
    \ifllncs
    \begin{equation}
    \begin{split}
        \ket{\phi^{\mathsf{cfPR}_1,\mathsf{cfPR}_2}_{\mathrm{ideal}}}_{\reg{ABE_1E_2K}} := 
        \sqrt{\frac{1}{N^{\downarrow t+s}}} \sum_{\substack{(x_1, \ldots, x_s) \in [N]^s, \\ (y_1, \ldots, y_t, z_1, \ldots, z_s) \in [N]^{t+s}_{\mathrm{dist}}}} \left(\prod_{i = 1}^{s} \ket{z_i}\!\bra{x_i}_{\reg{A}} A^{(i)}_{\reg{AB}} \right) \ket{\psi_\mathrm{Init}(\Vec{y})}_{\reg{AB}} \\
        \otimes \ket{\{(0^n, y_i)\}_{i = 1}^{t}}_{\reg{E_1}} \ket{\{(x_j, z_j)\}_{j = 1}^{s}}_{\reg{E_2}} 
        \otimes \frac{1}{\sqrt{2^\lambda - |\set{\Vec{x}}|}} \sum_{k\notin\set{\Vec{x}}} \ket{k}_\reg{K}. \label{eq:PRS_IDEAL}
    \end{split}
    \end{equation}
    \else
    \begin{equation}
    \begin{split}
        \ket{\phi^{\mathsf{cfPR}_1,\mathsf{cfPR}_2}_{\mathrm{ideal}}}_{\reg{ABE_1E_2K}} := 
        \sqrt{\frac{1}{N^{\downarrow t+s}}} \sum_{\substack{(x_1, \ldots, x_s) \in [N]^s, \\ (y_1, \ldots, y_t, z_1, \ldots, z_s) \in [N]^{t+s}_{\mathrm{dist}}}} \left(\prod_{i = 1}^{s} \ket{z_i}\!\bra{x_i}_{\reg{A}} A^{(i)}_{\reg{AB}} \right) \ket{\psi_\mathrm{Init}(\Vec{y})}_{\reg{AB}} \\
        \otimes \ket{\{(0^n, y_i)\}_{i = 1}^{t}}_{\reg{E_1}} \ket{\{(x_j, z_j)\}_{j = 1}^{s}}_{\reg{E_2}} \frac{1}{\sqrt{2^\lambda - |\set{\Vec{x}}|}} \sum_{k\notin\set{\Vec{x}}} \ket{k}_\reg{K}. \label{eq:PRS_IDEAL}
    \end{split}
    \end{equation}
    \fi

    By~\Cref{thm:ind:pcfpr:pr}, the trace distance between $\ket{\psi^{\mathsf{cfPR}_1,\mathsf{cfPR}_2}_{\mathrm{ideal}}}$ in $\hybrid~4$ and $\ket{\psi^{\mathsf{PR}_1,\mathsf{PR}_2}_{\mathrm{ideal}}}$ in $\hybrid~5$ is $O((t+s)^2/\sqrt{2^n})$. \\
    
    Finally, we show that the inner product between the sub-normalized state $\ket{\psi_{\mathrm{good}}}$ in $\hybrid~3$ (\Cref{eq:PRS_GOOD}) and the normalized state $\ket{\phi^{\mathsf{cfPR}_1,\mathsf{cfPR}_2}_{\mathrm{ideal}}}$ in $\hybrid~4$ (\Cref{eq:PRS_IDEAL}) is negligibly close to $1$. 
    Notice that 
    \begin{equation*}
    \begin{split}
        \ket{\phi^{\mathsf{cfPR}_1,\mathsf{cfPR}_2}_{\mathrm{ideal}}}_{\reg{ABE_1E_2K}} := 
        \sqrt{\frac{1}{N^{\downarrow t+s}}} \sum_{\substack{(x_1, \ldots, x_s) \in [N]^s, \\ (y_1, \ldots, y_t, z_1, \ldots, z_s) \in [N]^{t+s}_{\mathrm{dist}}}} \left(\prod_{i = 1}^{s} \ket{z_i}\!\bra{x_i}_{\reg{A}} A^{(i)}_{\reg{AB}} \right) \ket{\psi_\mathrm{Init}(\Vec{y})}_{\reg{AB}} \\
        \otimes \ket{\{(0^{n}, y_i)\}_{i = 1}^{t}}_{\reg{E_1}} \ket{\{(x_j, z_j)\}_{j = 1}^{s}}_{\reg{E_2}} \frac{1}{\sqrt{2^\lambda - |\set{\Vec{x}}|}} \sum_{k\notin\set{\Vec{x}}} \ket{k}_\reg{K},
    \end{split}
    \end{equation*}
    and 
    \begin{equation*}
    \begin{split}
        \ket{\psi_{\mathrm{good}}}_{\reg{ABE_1E_2K}} := 
        \sqrt{\frac{1}{N^{\downarrow t+s}}} \sum_{\substack{(x_1, \ldots, x_s) \in [N]^s, \\ (y_1, \ldots, y_t, z_1, \ldots, z_s) \in [N]^{t+s}_{\mathrm{dist}}}}
        \sqrt{\frac{2^\secp - |\set{\Vec{x}}|}{2^{\secp}}}\left(\prod_{i = 1}^{s} \ket{z_i}\!\bra{x_i}_{\reg{A}} A^{(i)}_{\reg{AB}} \right) \ket{\psi_\mathrm{Init}(\Vec{y})}_{\reg{AB}} \\
        \otimes \ket{\{(0^{n}, y_i)\}_{i = 1}^{t}}_{\reg{E_1}} \ket{\{(x_j, z_j)\}_{j = 1}^{s}}_{\reg{E_2}} \frac{1}{\sqrt{2^\lambda - |\set{\Vec{x}}|}} \sum_{k\notin\set{\Vec{x}}} \ket{k}_\reg{K}.
    \end{split}
    \end{equation*}

    \noindent Since for all $(x_1, \ldots, x_s) \in [N]^s$, $$\sqrt{\frac{2^\secp - |\set{\Vec{x}}|}{2^{\secp}}}\geq \sqrt{\frac{2^\secp - s}{2^{\secp}}}.$$
    Hence, by~\Cref{lem:norm:sub-state}, the inner product between $\ket{\psi_{\mathrm{good}}}$ and $\ket{\phi^{\mathsf{cfPR}_1,\mathsf{cfPR}_2}_{\mathrm{ideal}}}$ is $\sqrt{1 - \frac{s}{2^\lambda}}$ and the trace distance between $\Tr_{\reg{EK}}(\ketbra{\psi_{\mathrm{real}}}{\psi_{\mathrm{real}}})$ in $\hybrid~2$ and $\Tr_{\reg{E_1E_2K}}(\ketbra{\psi_{\mathrm{good}}}{\psi_{\mathrm{good}}}))$ in $\hybrid~3$ (\Cref{eq:PRS_hybrid}) is $O(\sqrt{s/2^\lambda})$. \\

    Collecting the bounds, the trace distance between the adversary's density matrix in $\hybrid~1$ and that in $\hybrid~6$ is upper bounded by $O\qty(\sqrt{\frac{s}{2^\secp}} + \frac{(t+s)^2}{\sqrt{2^n}})$ as desired.
\end{proof}

    \section{Adaptively Secure PRFS in the iQHROM}

In this section, we extend the construction of PRS to PRFS. Our construction of a PRFS generator with key length $\secp$, input length $m = m(\secp)$, and output length $n = n(\secp)$ such that $n \geq \lambda + m$ is intuitive and simple. Given oracle access to an $n$-qubit Haar random oracle $U$, on input $w\in\bit^m$ and key $k\in\bit^\lambda$, the PRFS generator $G$ outputs $U \ket{k||w||0^{n-\lambda-m}}$. Namely,
\begin{equation*}
    \oracle_{\mathsf{PRFS}}(k,\cdot): \ket{w} \ket{0^n} \mapsto \ket{w} \otimes U \ket{k||w||0^{n-\lambda-m}}.
\end{equation*}

\noindent We will prove that $G$ is secure against any adversary that makes arbitrary polynomial adaptive quantum queries to $U$ and arbitrary polynomial adaptive \emph{classical} queries to $\oracle$, where $\oracle$ is either the PRFS construction $\oracle_{\mathsf{PRFS}}$ or a Haar random state generator $\oracle_{\mathsf{Haar}}$ defined in~\Cref{def:prfs}.

\begin{theorem} \label{thm:prfs}
$G$ is an APRFS generator in the $\mathsf{iQHROM}$ if $n$ and $m$ satisfy $n \geq \lambda + m$. Formally, for any $t\in\N$ and adversary that makes $t$ adaptive quantum queries to $U$ and $t$ adaptive classical queries to $\oracle$, the distinguishing advantage of $\Adversary$ is at most $O\qty(\frac{t^2}{2^{n-m}} + \frac{t^2}{\sqrt{2^n}} + \sqrt{\frac{t}{2^{\secp}}})$.
\end{theorem}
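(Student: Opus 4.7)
The plan is to extend the PRS proof of \Cref{thm:multi_copy_prs} to the adaptive PRFS setting by tracking each PRFS query's classical label $w$ inside the path-recording relation. First, I would apply \Cref{thm:MH24} to replace the Haar oracle $U$ by the path-recording isometry $\mathsf{PR}$, paying $O(t^2/2^n)$ for the $2t$ total invocations of $U$ ($t$ via $\oracle_{\mathsf{PRFS}}$ and $t$ direct). Using a deferred-measurement argument on the adaptive classical PRFS queries, the adversary's post-interaction state can be written as
\[
    \ket{\psi_{\mathrm{real}}}_{\reg{ABEK}} \;\propto\;
    \sum_{k, \vec w, \vec x, \vec y, \vec z}
    \ket{\mathrm{adv}(\vec w, \vec x, \vec y, \vec z)}_{\reg{AB}}
    \otimes \ket{R}_{\reg{E}}
    \otimes \ket{k}_{\reg{K}},
\]
where $R = \{(k||w_i||0^{n-\lambda-m}, y_i)\}_{i=1}^{t} \cup \{(x_j, z_j)\}_{j=1}^{t}$ is an injective relation produced by the $t$ PRFS calls and the $t$ direct $U$-calls together.

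Next, I would introduce a good projector $\Pi_{\mathrm{good}}$ that keeps only those $(R, k)$ for which no direct input $x_j$ equals $k||w||0^{n-\lambda-m}$ for any $w \in \{0,1\}^m$ and for which the PRFS and direct outputs are pairwise collision-free in the $\mathsf{cfPR}$ sense of \Cref{sec:cfPR}. A union bound combined with the gentle-measurement lemma (\Cref{lem:norm:proj}) shows that replacing $\ket{\psi_{\mathrm{real}}}$ by $\Pi_{\mathrm{good}}\ket{\psi_{\mathrm{real}}}$ introduces trace distance $O(t^2/2^{n-m} + \sqrt{t/2^\lambda})$: the $\sqrt{t/2^\lambda}$ term comes from the residual event that the $\lambda$-bit prefix of some $x_j$ happens to coincide with the Haar-uniform $k$, while the $t^2/2^{n-m}$ term comes from the birthday-style collision between the $t$ direct inputs and the $t$ PRFS inputs, which for random $k$ lie in the $2^{n-m}$-element prefix-structured set $\{?||w||0^{n-\lambda-m}\}$.

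On the good event, the PRFS pairs and direct pairs in $R$ are cleanly distinguished by their prefix structure, so I would apply a sequence of $\mathsf{Partition}$, $\mathsf{Apply}$, and $\mathsf{Pair}$ isometries from \Cref{sec:multiset} --- analogous to the $V^2$ construction in the PRU proof of \Cref{sec:pru:shortkeys} --- to: (i) split $R$ into $R_{\mathsf{PRFS}}$ and $R_{\mathsf{dir}}$; (ii) uncompute the $k||\cdot||0^{n-\lambda-m}$ wrapping in $R_{\mathsf{PRFS}}$ against the register $\reg{K}$, replacing each pair by $(w_i, y_i)$; and (iii) regroup the $w$-indexed pairs into a family of relations, one per distinct queried $w$. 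The target of this reshaping is the adversary's ideal-game state, modelled by one $\mathsf{cfPR}$ oracle per distinct queried $w$ for $\oracle_{\mathsf{Haar}}$ plus an independent $\mathsf{cfPR}$ for the direct $U$-queries; the natural multi-oracle generalisation of \Cref{cor:ind:twopr:shared} then closes the gap with the remaining $O(t^2/\sqrt{2^n})$ term, giving the claimed bound by the triangle inequality.

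The main obstacle I anticipate is maintaining the isometric bookkeeping under full adaptivity: direct and PRFS queries may be freely interleaved and the same $w$ may be queried many times, so the split of $R$ into $R_{\mathsf{PRFS}}$ and $R_{\mathsf{dir}}$ becomes well-defined only at the end of the interaction rather than query-by-query, exactly as in the PRU analysis of \Cref{sec:pru:shortkeys}. A secondary subtlety is that repeated $w$-queries --- which the $\mathsf{PR}$ modelling treats as producing \emph{fresh distinct} outputs --- must line up with the ideal side, where a repeated $w$ returns the \emph{same} Haar random state; this is handled transparently via the Haar symmetry baked into \Cref{thm:MH24}, since the ``many copies of a single state'' case is already the setting of the PRS proof.
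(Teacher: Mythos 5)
Your overall architecture matches the paper's: replace $U$ by the path-recording isometry, project onto keys $k$ that do not collide with the prefixes of the direct queries (costing $O(\sqrt{t/2^\secp})$ via the gentle-measurement lemma), and then use partition/uncompute/split isometries on the purifying register to reshape the real-game relation into one relation per queried $w$ plus one for the direct queries, finally comparing against collision-free path-recording oracles. The subtlety you flag about repeated $w$'s is indeed handled automatically by the path-recording formalism on both sides, exactly as you say.

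The gap is in the error accounting for the \emph{ideal} game. The oracle $\oracle_{\mathsf{Haar}}$ is built from $2^m$ i.i.d.\ Haar unitaries $\{U_w\}_{w\in\{0,1\}^m}$, and turning it into your ``one $\mathsf{cfPR}$ per queried $w$'' model requires first replacing each $U_w$ by its own independent path-recording oracle. The paper does this with a hybrid over all $M=2^m$ values of $w$ (its $\hybrid~2.i$ sequence), paying $O(t^2/2^n)$ per step by \Cref{thm:MH24} and hence $O(t^2\cdot 2^m/2^n)=O(t^2/2^{n-m})$ in total; this is precisely where the dominant $t^2/2^{n-m}$ term of the theorem comes from. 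Your proposal omits this step entirely: the ``natural multi-oracle generalisation of \Cref{cor:ind:twopr:shared}'' only covers the passage from independent path-recording oracles to globally collision-free ones (the $O(t^2/\sqrt{2^n})$ term), not the passage from $2^m$ independent Haar unitaries to $2^m$ independent path-recording oracles. Meanwhile, the place where you do charge $t^2/2^{n-m}$ --- a birthday collision between direct inputs and PRFS inputs in the real game --- is not a real cost: a single path-recording oracle handles repeated inputs without error, and the only bad event needed for your partition to be well-defined is that some direct input has the form $k||w||0^{n-\secp-m}$, which is already subsumed by the $O(\sqrt{t/2^\secp})$ bound on the key-collision event. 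So your final bound happens to match the theorem only because a term has been dropped where it is needed and inserted where it is not; as written, the argument does not establish the claimed bound.
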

\begin{proof}
Fix $\secp,m,n,t\in\N$ and let $M:=2^m$ and $N:=2^n$. We assume that $\Adversary$ alternatively makes queries to $U$ and $\oracle$. That is, the adversary makes all odd-indexed queries to $U$ and all even-indexed queries to $\oracle$. This is without loss of generality, as we can pad dummy queries and will later prove that the asymptotic bound remains unchanged. 

From~\Cref{def:prfs}, we need to show that classical query access to $\oracle_{\mathsf{PRFS}}(k,\cdot)$ is indistinguishable from classical query access to $\oracle_{\mathsf{Haar}}$. Here ${\cal O}_{\sf Haar}(\cdot)$, on input $x \in \{0,1\}^{m(\secparam)}$, outputs $\ket{\vartheta_x}$, where, for every $w \in \{0,1\}^{m(\secparam)}$, $\ket{\vartheta_w} := U_w\ket{0^n}$ and $\set{U_w}_{w\in\bit^m}$ is a set of i.i.d.\,$n$-qubit Haar unitaries. Notice that by the unitary invariance of the Haar measure, we can equivalently define $\ket{\vartheta_w} := U_w\ket{\overline{w}}$, where $\overline{w} := 0^{\secp}||w||0^{n-\secp-m}$. 

To model that adversary $\Adversary$ makes classical queries, we introduce the transcript register $\reg{Q} = (\reg{Q}_1,\dots,\reg{Q}_t)$ where $\reg{Q}_i$ stores $\Adversary$'s $i$-th classical query. We allow $\Adversary$ to send the input register $\reg{A}$, but then register $\reg{A}$ is immediately measured in the computational basis. Moreover, the adversary $\Adversary$ does not have access to register $\reg{Q}$. By the deferred measurement principle, equivalently, we can assume that right after $\Adversary$ makes a classical query, the content in $\reg{A}$ is copied to $\reg{Q}$. At the end, register $\reg{Q}$ is measured in the computational basis. Further note that whenever $\Adversary$ queries on an input, the oracle appends an $n$-qubit answer register $\reg{D}$ to the system. See~\Cref{fig:PRFS} for an exposition.\footnote{Register $\reg{A}$ contains the first $m$ qubits of $\Adversary$, register $\reg{B}$ contains the $(m+1)$-th qubit to the $n$-th qubit of $\Adversary$, and register $\reg{C}$ contains all the other qubits of $\Adversary$. Recall that $\oracle$ is an isometry with an input length $m$ and an output length $m+n$. Hence, the number of $\Adversary$'s qubits increases by $n$ after the $i$-th query to $\oracle$ for all $i \in [t]$, and the next internal unitary $A_{2i-1}$ is defined over the extended space. Although this is not reflected in~\Cref{fig:PRFS}, it should not cause confusion.} \\

\begin{figure}[h!]
\centering
\resizebox{0.8\textwidth}{!}{
    \begin{tikzpicture}

    \tikzset{meter/.append style={draw, inner sep=10, rectangle, font=\vphantom{A}, minimum width=30, line width=.8,
    path picture={\draw[black] ([shift={(.1,.3)}]path picture bounding box.south west) to[bend left=50] ([shift={(-.1,.3)}]path picture bounding box.south east);\draw[black,-latex] ([shift={(0,.1)}]path picture bounding box.south) -- ([shift={(.3,-.1)}]path picture bounding box.north);}}}

    \draw (0,4) node[left] {$\ket{\bot}_{\reg{Q}_t}$} -- (6,4) ;
    \draw (0,2) node[left] {$\ket{\bot}_{\reg{Q}_1}$} -- (6,2) ;
    \draw (0,0) node[left] {$\ket{0}_{\reg{A}}$} -- (6,0) ;
    \draw (0,-1) node[left] {$\ket{0}_{\reg{B}}$} -- (6,-1) ;
    \draw (0,-2) node[left] {$\ket{0}_{\reg{C}}$} -- (6,-2) ;

    \draw[fill=white] (0.5,-2.5) rectangle (1.5,0.5) node[midway] {$A_1$};

    \draw[fill=white] (2,-1.5) rectangle (3,0.5) node[midway] {$U$};

    \draw[fill=white] (3.5,-2.5) rectangle (4.5,0.5) node[midway] {$A_2$};

    \draw (5,2) -- (5,0);
    \filldraw (5,0) circle (2pt); 
    \draw (5,2) circle (5pt); 
    \draw (4.8,2) -- (5.2,2); 
    \draw (5,2.2) -- (5,1.8); 

    \draw[fill=white] (5.5,-0.5) rectangle (6.5,0.5) node[midway] {$\oracle$};

    \node at (7,1) {$\cdots$};

    \draw[dotted] (6.5,4) -- (8,4);
    \draw[dotted] (6.5,2) -- (8,2);
    \draw[dotted] (6.5,0) -- (8,0);
    \draw[dotted] (6.5,-1) -- (8,-1);
    \draw[dotted] (6.5,-2) -- (8,-2);

    \node at (1.75, 3) {$\vdots$};

    \draw (8.5,4) -- (16,4);
    \draw (8.5,2) -- (16,2);
    \draw (8.5,0) -- (16,0);
    \draw (8.5,-1) -- (16,-1);
    \draw (8.5,-2) -- (16,-2);

    \draw[fill=white] (9,-2.5) rectangle (10,0.5) node[midway] {$A_{2t-1}$};

    \draw[fill=white] (10.5,-1.5) rectangle (11.5,0.5) node[midway] {$U$};

    \draw[fill=white] (12,-2.5) rectangle (13,0.5) node[midway] {$A_{2t}$};

    \draw (13.5,4) -- (13.5,0);
    \filldraw (13.5,0) circle (2pt); 
    \draw (13.5,4) circle (5pt); 
    \draw (13.3,4) -- (13.7,4); 
    \draw (13.5,4.2) -- (13.5,3.8); 

    \draw[fill=white] (14,-0.5) rectangle (15,0.5) node[midway] {$\oracle$};

    \draw[fill=white] (14.5,1.5) rectangle (15.5,2.5) node[midway] {};
    \node[meter] at (15,2) {};

    \draw[fill=white] (14.5,3.5) rectangle (15.5,4.5) node[midway] {};
    \node[meter] at (15,4) {};
    
    \end{tikzpicture}
}
\caption{Modeling classical queries to $\oracle$.} 
\label{fig:PRFS}
\end{figure}

\noindent For ease of notation, going forward we define $$\oracle_{\mathsf{PRFS}}(k,\cdot):\ket{w}_{\reg{A}}\mapsto \ket{w}_{\reg{A}} \otimes U\ket{k||w||0^{n-\secp-m}}_{\reg{D}}, $$
and 
$$\oracle_{\mathsf{Haar}}: \ket{w}_{\reg{A}}\mapsto \ket{w}_{\reg{A}} \otimes U_w\ket{\overline{w}}_{\reg{D}}.$$
Here for the $i$-th query, the register appended is called $\reg{D}_i$ and is appended to the system.

\noindent We refer to the experiment as $\mathsf{Ideal}$ if $\oracle = \oracle_{\mathsf{Haar}}$ and as $\mathsf{Real}$ if $\oracle = \oracle_{\mathsf{PRFS}}$, respectively. \\

\noindent We complete the proof by hybrid arguments. \\

\noindent $\bullet~\mathsf{Ideal}$: The adversary gets access to the Quantum Haar Random Oracle $U$ and $\oracle_{\mathsf{Haar}}$. \\

\noindent $\bullet~\hybrid~1$: The adversary gets access to a path recording isometry $\mathsf{PR}_{\reg{ABE}_1}$ and $\oracle_{\mathsf{Haar}}$.

\begin{lemma} \label{lem:PRFS_Ideal_Hyb1}
    $\qty|\Pr_{\hybrid\,1}[1 \gets \Adversary^{\mathsf{PR}_{\reg{ABE}_1},\oracle_{\mathsf{Haar}}}] - \Pr_{\mathsf{Ideal}}[1 \gets \Adversary^{U,\oracle_{\mathsf{Haar}}}]| = O\qty(\frac{t^2}{2^n})$.
\end{lemma}
\begin{proof}[Proof of~\Cref{lem:PRFS_Ideal_Hyb1}]
    This is true by~\Cref{thm:MH24}.
\end{proof}

\noindent Before defining $\hybrid~2.i$, we define the path-recording isometries indexed by $w\in\bit^m$, $$\mathsf{PR}^w_{\reg{DE}^w_2}:\ket{x}_{\reg{D}}\ket{R_1}_{\reg{E}_1}\bigotimes_{w'\in\bit^m}\ket{R^{w'}_2}_{\reg{E}^{w'}_2}\mapsto
\frac{1}{\sqrt{N-|R^w_2|}}\sum_{z\in[N]\setminus\Image(R^{w}_2)}\ket{z}_{\reg{D}}\ket{R_1}_{\reg{E}_1}\ket{R^{w}_2\cup (\overline{x},z)}_{\reg{E}^{w}_2}\bigotimes_{\substack{w'\in\bit^m\\ w'\neq w}}\ket{R^{w'}_2}_{\reg{E}^{w'}_2}.\footnote{Here we assume that each $\reg{E}^{w}_2$ to be of the size $2n2^n$, with the first $|R^{w'}_2|$ qubits to hold $R^{w'}_2$ and the rest $2n2^n - |R^{w'}_2|$ to be $\bot$. Whenever we add something to $R^{w'}_2$, we update the number of $\bot$ such that the total size of $\reg{E}^{w}_2$ remains $2n2^n$.}$$

\noindent Hence, for $i\in\bit^m$, we define $\oracle_{\mathsf{Hyb}2.i}$ as 
\begin{equation*}
    \oracle_{\mathsf{Hyb}2.i}:
    \ket{w}_A\ket{R_1}_{\reg{E}_1}\bigotimes_{w'\in\bit^m}\ket{R^{w'}_2}_{\reg{E}^{w'}_2}\mapsto 
    \begin{cases}
        \ket{w}_{\reg{A}}\mathsf{PR}^w_{\reg{DE}^w_2}\ket{w}_{\reg{D}}\ket{R_1}_{\reg{E}_1}\bigotimes_{w'\in\bit^m}\ket{R^{w'}_2}_{\reg{E}^{w'}_2} &\text{~, if }w\leq i\\
        
        \ket{w}_{\reg{A}}\ket{\vartheta_w}_{\reg{D}}\ket{R_1}_{\reg{E}_1}\bigotimes_{w'\in\bit^m}\ket{R^{w'}_2}_{\reg{E}^{w'}_2} &\text{~, otherwise,}\\
        
    \end{cases}
\end{equation*}
where, for every $w \in \{0,1\}^{m(\secparam)}$, $\ket{\vartheta_w}  = U_w\ket{\overline{w}}$ and $U_w$ is an i.i.d.\,$n$-qubit Haar unitary.

\noindent $\bullet~\hybrid~2.i$ for $0 \leq i \leq M$: The adversary gets access to a path recording isometry $\mathsf{PR}_{\reg{ABE}_1}$ and $\oracle_{\mathsf{Hyb}2.i}$. We assume that the initial state is $$\ket{0}_{\reg{ABC}} 
    \ket{\bot,\dots,\bot}_{\reg{Q}} 
    \ket{\set{}}_{\reg{E}_1} \bigotimes_{w\in\bit^m}\ket{\set{}}_{\reg{E}^w_2},$$
    where registers $\reg{E}_1,\reg{E}_2$ are purifying registers.

\begin{lemma} \label{lem:PRFS_Hyb1_Hyb2.0}
    $\Pr_{\hybrid\,2.0}[1 \gets \Adversary^{\mathsf{PR}_{\reg{ABE}_1},\oracle_{\mathsf{Hyb}2.0}}] = \Pr_{\hybrid\,1}[1 \gets \Adversary^{\mathsf{PR}_{\reg{ABE}_1},\oracle_{\mathsf{Haar}}}].$
\end{lemma}

\begin{proof}[Proof of~\Cref{lem:PRFS_Hyb1_Hyb2.0}]
    Note that $\oracle_{\mathsf{Hyb}2.0}$ and $\oracle_{\mathsf{Haar}}$ are identically distributed. Hence the above lemma holds.
\end{proof}
\begin{lemma} \label{lem:PRFS_Hyb2i_i+1}
    For all $0 \leq i < M$, $\qty|\Pr_{\hybrid\,2.i}[1 \gets \Adversary^{\mathsf{PR}_{\reg{ABE}_1},\oracle_{\mathsf{Hyb}2.i}}] - \Pr_{\hybrid\,2.i+1}[1 \gets \Adversary^{\mathsf{PR}_{\reg{ABE}_1},\oracle_{\mathsf{Hyb}2.i+1}}]| = O\qty(\frac{t^2}{2^n})$.
\end{lemma}

\begin{proof}[Proof of~\Cref{lem:PRFS_Hyb2i_i+1}]
    Let there is an adversary $\Adversary$ such that $$\qty|\Pr_{\hybrid\,2.i}[1 \gets \Adversary^{\mathsf{PR}_{\reg{ABE}_1},\oracle_{\mathsf{Hyb}2.i}}] - \Pr_{\hybrid\,2.i+1}[1 \gets \Adversary^{\mathsf{PR}_{\reg{ABE}_1},\oracle_{\mathsf{Hyb}2.i+1}}]| = \eps,$$ for some $\eps\in [0,1]$. We give the following (information-theoretic) reduction $\cR$ given access to an oracle $\cP$ that is either an $n$-qubit Haar random oracle $U$ or a path-recoding oracle $\mathsf{PR}$ such that $$\qty|\Pr[1 \gets \cR^{\mathsf{PR},\Adversary}] - \Pr[1 \gets \cR^{U,\Adversary}]| = \eps.$$ 
    
    \noindent First, $\cR$ initializes $\reg{Q}$, $\reg{E}_1$ and $\reg{E}^w_2$ for all $w\in\bit^m, w\neq i+1$. Next, $\cR$ samples i.i.d. $n$-qubit Haar random unitaries $U_w$ for $i+2 \leq w \leq M$. 
    \begin{itemize}
        \item Whenever $\Adversary$ queries $\mathsf{PR}_{\reg{ABE}_1}$, $\cR$ locally simulates an independent path-recording oracle to respond to $\Adversary$'s quantum queries to $\mathsf{PR}$ using $\reg{E}_1$.
        \item To respond to $\Adversary$'s classical queries, the reduction $\cR$ locally simulates $\mathsf{PR}^w_{\reg{AE}_2^w}$ for $1\leq w \leq i$, embeds $\cP$ as the $(i+1)$-th oracle, and returns with $U_w\ket{\overline{w}}$, for $i+2\leq w\leq M$.
    \end{itemize}  
    By the above, $\cR$ perfectly simulates $\Adversary$, hence $$\qty|\Pr[1 \gets \cR^{\mathsf{PR},\Adversary}] - \Pr[1 \gets \cR^{U,\Adversary}]| = \eps.$$ 
    By~\Cref{thm:MH24}, it holds that $\eps = O\qty(\frac{t^2}{2^n})$. Hence, $\qty|\Pr_{\hybrid\,2.i}[1 \gets \Adversary^{\mathsf{PR},\oracle_{\mathsf{Hyb}2.i}}] - \Pr_{\hybrid\,2.i+1}[1 \gets \Adversary^{\mathsf{PR},\oracle_{\mathsf{Hyb}2.i+1}}]| = O\qty(\frac{t^2}{2^n})$ as desired.
\end{proof}

\noindent Before defining $\hybrid~3$, we define 
$$\oracle_{\mathsf{Hyb}3}:\ket{w}_{\reg{A}}\ket{R_1}_{\reg{E}_1}\bigotimes_{w'\in\bit^m}\ket{R^{w'}_2}_{\reg{E}^{w'}_2}\mapsto 
\ket{w}_{\reg{A}}\mathsf{PR}^w_{\reg{DE}^w_2}\ket{w}_{\reg{D}}\ket{R_1}_{\reg{E}_1}\bigotimes_{w'\in\bit^m}\ket{R^{w'}_2}_{\reg{E}^{w'}_2}.$$

\noindent $\bullet~\hybrid~3$: The adversary gets access to a path recording isometry $\mathsf{PR}_{\reg{ABE}_1}$ and $\oracle_{\mathsf{Hyb}3}$.

\begin{lemma} \label{lem:PRFS_Hyb2M_Hyb3}
    $\Pr_{\hybrid\,3}[1 \gets \Adversary^{\mathsf{PR}_{\reg{ABE}_1},\oracle_{\mathsf{Hyb}3}}] = \Pr_{\hybrid\,2.M}[1 \gets \Adversary^{\mathsf{PR}_{\reg{ABE}_1},\oracle_{\mathsf{Hyb}2.M}}].$
\end{lemma}

\begin{proof}[Proof of~\Cref{lem:PRFS_Hyb2M_Hyb3}]
    Note that $\oracle_{\mathsf{Hyb}2.M}$ and $\oracle_{\mathsf{Hyb}3}$ are identical. Hence the above lemma holds.
\end{proof}

\noindent Before defining $\hybrid~4$, we define oracles $\mathsf{cfPR}$ and $\oracle_{\mathsf{Hyb}4}$ that maintain the ``global'' injectivity of all purifying registers:
\begin{equation*}
\begin{split}
& \mathsf{cfPR} \ket{x}_{\reg{AB}} \ket{R_1}_{\reg{E}_1} \bigotimes_{w'\in\bit^m}\ket{R^{w'}_2}_{\reg{E}^{w'}_2} := \\ 
& \frac{1}{\sqrt{N - \qty|\Im\left( R_1 \cup \bigcup_{w'\in\bit^m}R^{w'}_2 \right)|}} 
\sum_{\substack{ y \in [N]: \\ y \notin \Im\left( R_1 \cup \bigcup_{w'\in\bit^m}R^{w'}_2 \right) }} 
\ket{y}_{\reg{AB}} 
\otimes \ket{R_1 \cup \set{(x,y)}}_{\reg{E}_1} \bigotimes_{w'\in\bit^m}\ket{R^{w'}_2}_{\reg{E}^{w'}_2}
\end{split}
\end{equation*}
and
\begin{equation*}
\begin{split}
& \oracle_{\mathsf{Hyb}4} \ket{w}_{\reg{A}} \ket{R_1}_{\reg{E}_1} \bigotimes_{w'\in\bit^m}\ket{R^{w'}_2}_{\reg{E}^{w'}_2} := \\ 
& \frac{1}{\sqrt{N - \qty|\Im\left( R_1 \cup \bigcup_{w'\in\bit^m}R^{w'}_2 \right)|}} 
\sum_{\substack{ y \in [N]: \\ y \notin \Im\left( R_1 \cup \bigcup_{w'\in\bit^m}R^{w'}_2 \right) }} 
\ket{w}_{\reg{A}}\ket{y}_{\reg{D}} \ket{R_1}_{\reg{E}_1} \ket{R^{w}_2\cup (\overline{w},y)}_{\reg{E}^{w}_2}\bigotimes_{\substack{w'\in\bit^m\\ w'\neq w}}\ket{R^{w'}_2}_{\reg{E}^{w'}_2}.
\end{split}
\end{equation*}

\noindent $\bullet~\hybrid~4$: The adversary gets access to a path recording isometry $\mathsf{cfPR}$ and $\oracle_{\mathsf{Hyb}4}$.

\begin{lemma} \label{lem:PRFS_Hyb3_Hyb4}
    $\qty|\Pr_{\hybrid\,4}[1 \gets \Adversary^{\mathsf{cfPR}_{\reg{ABE}_1},\oracle_{\mathsf{Hyb}4}}] - \Pr_{\hybrid\,3}[1 \gets \Adversary^{\mathsf{PR}_{\reg{ABE}_1},\oracle_{\mathsf{Hyb}3}}]| = O\qty(\frac{t^2}{\sqrt{2^n}})$.
\end{lemma}

\begin{proof}[Proof of~\Cref{lem:PRFS_Hyb3_Hyb4}]
    By a similar hybrid as in~\Cref{thm:ind:pcfpr:pr} and calculation of inner product as in~\Cref{lem:pcfpr:hybrids}, we have $\qty|\Pr_{\hybrid\,4}[1 \gets \Adversary^{\mathsf{cfPR}_{\reg{ABE}_1},\oracle_{\mathsf{Hyb}4}}] - \Pr_{\hybrid\,3}[1 \gets \Adversary^{\mathsf{PR}_{\reg{ABE}_1},\oracle_{\mathsf{Hyb}3}}]| = O\qty(\frac{t^2}{\sqrt{2^n}})$.
\end{proof}

For ease of notation, we assume that $A_{2i+1}$ maps $\reg{ABCD}_{i}$ to $\reg{ABC}$ where $\reg{C}$ is updated to be increased by the size of $\reg{D}_i$. \\

\noindent Hence, the final state in $\hybrid~4$ is
\begin{multline} \label{eq:PRFS_PRw}
\ket{\Adversary^{\mathsf{cfPR},\oracle_{\mathsf{Hyb}4}}_{t,t}} 
:= \sum_{\substack{\vec{x}\in[N]^t, \vec{w}\in[M]^t, \\ (y_1,\dots,y_t,z_1,\dots,z_t) \in [N]^{2t}_\dist}}
\frac{1}{\sqrt{N^{\downarrow 2t}}} 
\qty(\prod_{i=1}^t \ketbra{w_i}{w_i}_{\reg{A}} \otimes \ket{z_i}_{\reg{D}_i} \cdot A_{2i} \ketbra{y_i}{x_i}_{\reg{AB}} A_{2i-1})
\ket{0}_{\reg{ABC}} \\
\otimes \ket{\vec{w}}_{\reg{Q}} \ket{\set{(x_j,y_j)}_{j\in[t]}}_{\reg{E_1}} \bigotimes_{w'\in\bit^m} \ket{\set{(\overline{w}_i,z_i):w_i=w'}}_{\reg{E}_2^{w'}}.
\end{multline}

\noindent Finally, $\hybrid~5$ is the same as $\mathsf{Real}$ except that $U$ is replaced with $\mathsf{PR}$. \\

\noindent $\bullet~\hybrid~5$: $\Adversary$ makes $t$ quantum queries to $\mathsf{PR}_{\reg{ABE}}$ and $t$ classical queries to $\oracle_{\mathsf{Hyb}5}$ defined as follows.
\begin{itemize}
    \item Initial state: $\ket{0}_{\reg{ABC}} \ket{\bot,\dots,\bot}_{\reg{Q}} \ket{\set{}}_{\reg{E}} \otimes \frac{1}{\sqrt{2^\lambda}}\sum_{k \in \bit^\lambda} \ket{k}_{\reg{K}}$.
    \item Oracle $\mathsf{PR}_{\reg{ABE}}$: 
    \begin{equation*}
        \ket{x}_{\reg{AB}} \ket{R}_{\reg{E}} 
        \mapsto \frac{1}{\sqrt{N - |\Im(R)|}}\sum_{\substack{y\in[N]: \\ y \notin \Im(R) }}
        \ket{y}_{\reg{AB}} \ket{R\cup\set{(x,y)}}_{\reg{E}}.
    \end{equation*}
    \item Oracle $\oracle_{\mathsf{Hyb}5}$:
    \begin{equation*}
    \ket{w}_{\reg{A}}\ket{R}_{\reg{E}} \ket{k}_{\reg{K}}\mapsto
    \ket{w}_{\reg{A}}\otimes \frac{1}{\sqrt{N - |\Im(R)|}}\sum_{\substack{z\in[N]: \\ z \notin \Im(R) }}
        \ket{z}_{\reg{D}_i} \ket{R\cup\set{(k||w||0^{n-\lambda-m},z)}}_{\reg{E}} \ket{k}_{\reg{K}}.
    \end{equation*}
\end{itemize}

\noindent Hence, the final state in $\hybrid~5$ is
\begin{multline} \label{eq:PRFS_PR}
\ket{\Adversary^{\mathsf{PR},\oracle_{\mathsf{Hyb}5}}_{t,t}} 
= \sum_{
\substack{
k\in\bit^\lambda, \\ \vec{x}\in[N]^t, \vec{w}\in[M]^t, \\ 
(y_1,\dots,y_t,z_1,\dots,z_t) \in [N]^{2t}_\dist
}
}
\frac{1}{\sqrt{2^\lambda \cdot N^{\downarrow 2t}}} 
\qty( \prod_{i=1}^t \ketbra{w_i}{w_i}_\reg{A} \otimes \ket{z_i}_{\reg{D}_i} \cdot 
A_{2i} \ketbra{y_i}{x_i}_{\reg{AB}} A_{2i-1} )
\ket{0}_{\reg{ABC}} \\
\otimes \ket{\vec{w}}_{\reg{Q}} \ket{\set{(x_j,y_j),(k||w_j||0^{n-\lambda-m},z_j)}_{j\in[t]}}_{\reg{E}} \ket{k}_{\reg{K}}.
\end{multline}

\begin{lemma} \label{lem:PRFS_Hyb4_Hyb5}
    $\qty|\Pr_{\hybrid\,5}[1 \gets \Adversary^{\mathsf{PR},\oracle_{\mathsf{Hyb}5}}] - \Pr_{\hybrid\,4}[1 \gets \Adversary^{\mathsf{cfPR}_{\reg{ABE}_1},\oracle_{\mathsf{Hyb}4}}]|
    = O\qty(\sqrt{\frac{t}{2^{\secp}}})$.
\end{lemma}
\begin{proof}[Proof of~\Cref{lem:PRFS_Hyb4_Hyb5}]

\noindent The structure of the proof is similar to that of~\Cref{thm:multi_copy_prs}. We will show that $\ket{\Adversary^{\mathsf{cfPR},\oracle_{\mathsf{Hyb}4}}_{t,t}}$ (\Cref{eq:PRFS_PRw}) and $\ket{\Adversary^{\mathsf{PR},\oracle_{\mathsf{Hyb}5}}_{t,t}}$ (\Cref{eq:PRFS_PR}) are negligibly close after partially tracing out their purifying registers. Specifically, we will define a projector $\Pi^{\mathsf{good}}$ and isometries $V^{\mathsf{part}}, V^{\mathsf{func},\oplus}, V^{\mathsf{split}}$ and $W^{\mathsf{AppK}}$ such that 
\begin{equation*}
    W^{\mathsf{AppK}} \ket{\Adversary^{\mathsf{cfPR},\oracle_{\mathsf{Hyb}4}}_{t,t}} 
    \approx V^{\mathsf{split}} \cdot V^{\mathsf{func},\oplus} \cdot V^{\mathsf{part}} \cdot \Pi^{\mathsf{good}} \ket{\Adversary^{\mathsf{PR},\oracle_{\mathsf{Hyb}5}}_{t,t}}.
\end{equation*}

\noindent We say that a pair of an injective relation and a key $(R,k) \in \rel_{2t}^{\inj} \times \bit^\secp$ is \emph{good} if
\begin{equation*}
    |\set{(x,y)\in R: x_{[1:\lambda]} = k}| = t,
\end{equation*}
where $x_{[1:\lambda]}$ denotes the first $\lambda$ bits of $x\in\bit^n$. For any good $(R,k)$, we denote $R_k := \set{(x,y)\in R: x_{[1:\lambda]} = k}$. Define the projector $\Pi^{\mathsf{good}}_{\reg{EK}}$ onto the subspace spanned by $\set{\ket{R}_{\reg{E}}\ket{k}_{\reg{K}}}$ for all good $(R,k)$. Then we have 

\begin{multline} \label{eq:PRFS_PR_PROJ}
\Pi^{\mathsf{good}} \ket{\Adversary^{\mathsf{PR},\oracle_{\mathsf{Hyb}5}}_{t,t}}
= \sum_{
\substack{
\vec{x}\in[N]^t, \vec{w}\in[M]^t, \\ 
(y_1,\dots,y_t,z_1,\dots,z_t) \in [N]^{2t}_\dist
}
}
\frac{1}{\sqrt{N^{\downarrow 2t}}} 
\qty( \prod_{i=1}^t \ketbra{w_i}{w_i}_{\reg{A}} \otimes \ket{z_i}_{\reg{D}_i} A_{2i} \ketbra{y_i}{x_i}_{\reg{AB}} A_{2i-1} )
\ket{0}_{\reg{ABC}}
\ket{\vec{w}}_{\reg{Q}} \\
\otimes \frac{1}{\sqrt{2^\lambda}} \sum_{k \notin \set{\vec{x}_{[1:\lambda]}} }
\ket{\set{(x_j,y_j),(k||w_j||0^{n-\lambda-m},z_j)}_{j\in[t]}}_{\reg{E}} \ket{k}_{\reg{K}},
\end{multline}
where $\set{\vec{x}_{[1:\lambda]}}$ denotes the union of the first $\lambda$ bits of all coordinates of $\vec{x}$, \ie $\set{\vec{x}_{[1:\lambda]}} := \bigcup_{i\in[t]} \set{(x_i)_{[1:\lambda]}} \subseteq \bit^\lambda$ for any $\vec{x} = (x_1,\dots,x_t) \in [N]^t$. \\

\noindent By~\Cref{lem:norm:proj,lem:norm:sub-state}, it is sufficient to show the closeness between \Cref{eq:PRFS_PRw} and \Cref{eq:PRFS_PR_PROJ} up to isometries. First, we define the following partial isometry on all good $(R,k)$
\begin{equation*}
    V^{\mathsf{part}}: \ket{R}_{\reg{E}} \ket{k}_{\reg{K}}
    \mapsto \ket{R \setminus R_k}_{\reg{E}_1} \ket{R_k}_{\reg{E}_2} \ket{k}_{\reg{K}}.
\end{equation*}
Then applying $V^{\mathsf{part}}$ on~\Cref{eq:PRFS_PR_PROJ}, we have
\begin{multline}    \label{eq:part_good_hyb5}
V^{\mathsf{part}} \Pi^{\mathsf{good}} \ket{\Adversary^{\mathsf{PR},\oracle_{\mathsf{Hyb}5}}_{t,t}} = 
\sum_{
\substack{
\vec{x}\in[N]^t, 
\vec{w}\in[M]^t, \\ 
(y_1,\dots,y_t,z_1,\dots,z_t) \in [N]^{2t}_\dist
}
}
\frac{1}{\sqrt{N^{\downarrow 2t}}} 
\qty( \prod_{i=1}^t \ketbra{w_i}{w_i}_{\reg{A}} \otimes \ket{z_i}_{\reg{D}_i} \cdot A_{2i} \ketbra{y_i}{x_i}_{\reg{AB}} A_{2i-1} )
\ket{0}_{\reg{ABC}} 
\ket{\vec{w}}_{\reg{Q}} \\
\otimes 
\frac{1}{\sqrt{2^\lambda}} 
\sum_{k \notin \set{\vec{x}_{[1:\lambda]}} }
\ket{\set{(x_j,y_j)}_{j\in[t]}}_{\reg{E}_1}
\otimes
\ket{\set{(k||w_\ell||0^{n-\lambda-m},z_\ell)}_{\ell\in[t]}}_{\reg{E}_2} 
\ket{k}_{\reg{K}}.
\end{multline}

\noindent Next, define a partial isometry isometry $V^{\mathsf{func},\oplus}$ on all $\vec{w}\in[M]^t, \vec{z}\in[N]^t_\dist$ and $k\in\bit^\lambda$ such that
\begin{equation*}
    V^{\mathsf{func},\oplus}: 
    \ket{\set{(k||w_\ell||0^{n-\lambda-m},z_\ell)}_{\ell\in[t]}}_{\reg{E}_2} \ket{k}_{\reg{K}}
    \mapsto \ket{\set{(\overline{w_\ell},z_\ell)}_{\ell\in[t]}}_{\reg{E}_2} \ket{k}_{\reg{K}}.
\end{equation*}

\noindent Applying $V^{\mathsf{func},\oplus}$ to~\Cref{eq:part_good_hyb5} to uncompute $k$ from each $k||w_\ell||0^{n-\lambda-m}$ on register $\reg{E}_2$, we have
\begin{multline} \label{eq:PRFS_SPLIT_1}
V^{\mathsf{func},\oplus} V^{\mathsf{part}} \Pi^{\mathsf{good}} \ket{\Adversary^{\mathsf{PR},\oracle_{\mathsf{Hyb}5}}_{t,t}} = 
\sum_{
\substack{
\vec{x}\in[N]^t, 
\vec{w}\in[M]^t, \\ 
(y_1,\dots,y_t,z_1,\dots,z_t) \in [N]^{2t}_\dist
}
}
\frac{1}{\sqrt{N^{\downarrow 2t}}} 
\qty( \prod_{i=1}^t \ketbra{w_i}{w_i}_{\reg{A}} \otimes \ket{z_i}_{\reg{D}_i} \cdot
A_{2i} \ketbra{y_i}{x_i}_{\reg{AB}} A_{2i-1} )
\ket{0}_{\reg{ABC}} \\
\otimes 
\ket{\vec{w}}_{\reg{Q}} 
\ket{\set{(x_j,y_j)}_{j\in[t]}}_{\reg{E}_1}
\ket{\set{(\overline{w_\ell},z_\ell)}_{\ell\in[t]}}_{\reg{E}_2}
\otimes 
\frac{1}{\sqrt{2^\lambda}} 
\sum_{k \notin \set{\vec{x}_{[1:\lambda]}} }
\ket{k}_{\reg{K}}.
\end{multline}

\noindent Define a partial isometry $V^{\mathsf{split}}$ defined on all $\vec{w}\in[M]^t$ and $\vec{z}\in[N]^t_\dist$ such that
\begin{equation*}
\begin{split}
V^{\mathsf{split}}: \ket{\set{(\overline{w_i},z_i)}_{i\in[t]}}_{\reg{E}_2} 
\mapsto \bigotimes_{w'\in\bit^m} \ket{\set{(\overline{w}_i,z_i):w_i=w'}}_{\reg{E}_2^{w'}}.
\end{split}
\end{equation*}

\noindent Applying $V^{\mathsf{split}}$ to~\Cref{eq:PRFS_SPLIT_1}, we have the following subnormalized state $\ket{\psi_5}$,
\begin{multline}
\ket{\psi_5} =
\sum_{
\substack{
\vec{x}\in[N]^t, 
\vec{w}\in[M]^t, \\ 
(y_1,\dots,y_t,z_1,\dots,z_t) \in [N]^{2t}_\dist}
}
\frac{1}{\sqrt{N^{\downarrow 2t}}}
\qty( \prod_{i=1}^t \ketbra{w_i}{w_i}_{\reg{A}} \otimes \ket{z_i}_{\reg{D}_i}
\cdot A_{2i} \ketbra{y_i}{x_i}_{\reg{AB}} A_{2i-1} )
\ket{0}_{\reg{ABC}} \\
\otimes 
\ket{\vec{w}}_{\reg{Q}} 
\ket{\set{(x_j,y_j)}_{j\in[t]}}_{\reg{E}_1}
\bigotimes_{w'\in\bit^m} \ket{\set{(\overline{w}_i,z_i):w_i=w'}}_{\reg{E}_2^{w'}}
\otimes 
\frac{1}{\sqrt{2^\lambda}} 
\sum_{k \notin \set{\vec{x}_{[1:\lambda]}} }
\ket{k}_{\reg{K}}.
\end{multline}
\noindent Now, we define the following isometry
\begin{equation*}
    W^{\mathsf{AppK}}: \ket{R}_{\reg{E}_1} 
    \mapsto \ket{R}_{\reg{E}_1} \otimes \frac{1}{\sqrt{2^\lambda - |\set{\vec{x}_{[1:\lambda]}}|}} \sum_{\substack{k \in \bit^\lambda: \\ k \notin \set{\vec{x}_{[1:\lambda]}}}} \ket{k}_{\reg{K}}.
\end{equation*}

\noindent Applying $W^{\mathsf{AppK}}$ to \Cref{eq:PRFS_PRw}, we get $\ket{\psi_4}$ as

\begin{multline} 
\ket{\psi_4} =
\sum_{
\substack{
\vec{x}\in[N]^t, 
\vec{w}\in[M]^t, \\ 
(y_1,\dots,y_t,z_1,\dots,z_t) \in [N]^{2t}_\dist}
}
\frac{1}{\sqrt{N^{\downarrow 2t}}}
\qty( \prod_{i=1}^t \ketbra{w_i}{w_i}_{\reg{A}} \otimes \ket{z_i}_{\reg{D}_i}
\cdot A_{2i} \ketbra{y_i}{x_i}_{\reg{AB}} A_{2i-1} )
\ket{0}_{\reg{ABC}} \\
\otimes 
\ket{\vec{w}}_{\reg{Q}} 
\ket{\set{(x_j,y_j)}_{j\in[t]}}_{\reg{E}_1}
\bigotimes_{w'\in\bit^m} \ket{\set{(\overline{w}_i,z_i):w_i=w'}}_{\reg{E}_2^{w'}}
\otimes 
\frac{1}{\sqrt{2^\lambda - |\set{\vec{x}_{[1:\lambda]}}|}} 
\sum_{k \notin \set{\vec{x}_{[1:\lambda]}} }
\ket{k}_{\reg{K}}.
\end{multline}
\noindent Notice that 

\begin{multline}
\ket{\psi_5} = 
\sum_{
\substack{
\vec{x}\in[N]^t, 
\vec{w}\in[M]^t, \\ 
(y_1,\dots,y_t,z_1,\dots,z_t) \in [N]^{2t}_\dist}
}\sqrt{\frac{2^\secp - |\set{\vec{x}_{[1:\lambda]}}|}{2^{\secp}}}
\frac{1}{\sqrt{N^{\downarrow 2t}}}
\qty( \prod_{i=1}^t \ketbra{w_i}{w_i}_{\reg{A}} \otimes \ket{z_i}_{\reg{D}_i}
\cdot A_{2i} \ketbra{y_i}{x_i}_{\reg{AB}} A_{2i-1} )
\ket{0}_{\reg{ABC}} \\
\otimes 
\ket{\vec{w}}_{\reg{Q}} 
\ket{\set{(x_j,y_j)}_{j\in[t]}}_{\reg{E}_1}
\bigotimes_{w'\in\bit^m} \ket{\set{(\overline{w}_i,z_i):w_i=w'}}_{\reg{E}_2^{w'}}
\otimes 
\frac{1}{\sqrt{2^\lambda - |\set{\vec{x}_{[1:\lambda]}}|}} 
\sum_{k \notin \set{\vec{x}_{[1:\lambda]}} }
\ket{k}_{\reg{K}}.
\end{multline}

\noindent Since for all $\vec{x}\in[N]^t$, $$\sqrt{\frac{2^\secp - |\set{\vec{x}_{[1:\lambda]}}|}{2^{\secp}}}\geq \sqrt{\frac{2^\secp - t}{2^{\secp}}},$$
by~\Cref{lem:norm:sub-state}, the inner product between $\ket{\psi_4}$ and $\ket{\psi_5}$ is at most $\sqrt{1- t/2^{\secp}}$. Hence the required trace distance is at most $O\qty(\sqrt{\frac{t}{2^\secp}})$.
\end{proof}

\noindent $\bullet~\mathsf{Real}$: The adversary gets access to the Quantum Haar Random Oracle $U$ and $\oracle_{\mathsf{PRFS}}$. \\

\begin{lemma} \label{lem:PRFS_Hyb5_Real}
    $\qty|\Pr_{\hybrid\,5}[1 \gets \Adversary^{\mathsf{PR},\oracle_{\mathsf{Hyb}5}}] - \Pr_{\mathsf{Real}}[1 \gets \Adversary^{U,\oracle_{\mathsf{PRFS}}}]| = O\qty(\frac{t^2}{2^n})$.
\end{lemma}
\begin{proof}[Proof of~\Cref{lem:PRFS_Hyb5_Real}]
    This is true by~\Cref{thm:MH24}.
\end{proof}

\noindent Collecting the probabilities, the distinguishing advantage of $\Adversary$ is at most $O\qty(\frac{t^2}{2^{n-m}} + \frac{t^2}{\sqrt{2^n}} + \sqrt{\frac{t}{2^{\secp}}})$ as desired. This completes the proof of~\Cref{thm:prfs}.
\end{proof}

\fi

\section*{Acknowledgements}
\noindent This work is supported by the National Science Foundation under Grant No. 2329938 and Grant No.
2341004.
\noindent JB is supported by Henry Yuen's AFORS (award FA9550-21-1-036) and NSF CAREER (award CCF2144219). This work was done in part when JB was visiting the Simons Institute for the Theory of Computing, supported by NSF QLCI Grant No. 2016245.

\printbibliography
\appendix

\ifllncs

    \section{Omitted Proofs}

\ifllncs
\subsection{Omitted Proofs in~\Cref{sec:prelim}}    \label{app:prelim}
\begin{proof}[Proof of~\Cref{lem:norm:cs}]
    By Cauchy-Schwarz inequality, $$|\braket{\psi}{\phi}|^2\leq \|\ket{\psi}\|^2\|\ket{\phi}\|^2$$
    Since, $\ket{\psi}$ is a quantum state, $\|\ket{\psi}\|=1$, hence, $$|\braket{\psi}{\phi}|^2\leq \|\ket{\psi}\|^2$$
    Hence, $$\|\ket{\psi}\|^2\geq |\braket{\psi}{\phi}|^2. $$
\end{proof}

\begin{proof}[Proof of~\Cref{lem:norm:sub-state}]
    Notice that $$\braket{\psi}{\phi} = \sum_i \alpha_i\braket{\psi_i}{\psi_i}.$$ Since $\alpha_i\geq \beta$ and $\braket{\psi_i}{\psi_i} \geq 0$, hence $$\braket{\psi}{\phi} \geq \beta\sum_i\braket{\psi_i}{\psi_i}.$$ Since $\ket{\psi}$ is a quantum state, $\braket{\psi}{\psi} = 1$, hence $\sum_i \braket{\psi_i}{\psi_i} = 1$. Hence $$\braket{\psi}{\phi} \geq \beta.$$
\end{proof}
\else
\fi

\subsection{Omitted Proofs in~\Cref{sec:PR_New}}    \label{app:PR_New}

\begin{proof}[Proof of~\Cref{lem:cf_set_size}]
    The total number of constraints impossed by $i$-fold collision-freeness is ${|S| \choose i}{|S| \choose i-1}$. Each constraint removes at most $2^{n-\secp}$ elements from $\cfreeset_{\ell,\secp}(S)$. Hence the total number of elements not in $\cfreeset_{\ell,\secp}(S)$ is less than
    \begin{equation*}
    \begin{split}
        & 2^{n-\secp} \cdot \left(\sum_{i=1}^{\ell}{|S| \choose i}{|S| \choose i-1}\right)
        \leq 2^{n-\secp} \cdot \left(\sum_{i=1}^{\ell}\left(\frac{e|S|}{i}\right)^{2i}\right) \\
        & \leq 2^{n-\secp} \cdot \left(\sum_{i=1}^{\ell}\left(\frac{e|S|}{\ell}\right)^{2\ell}\right)\leq 2^{n-\secp}\cdot\ell|S|^{2\ell}.
    \end{split}
    \end{equation*}
\end{proof}

\ifllncs

\else

\fi

\subsection{Omitted Proofs in~\Cref{sec:pru:shortkeys}}     \label{app:pru:shortkeys}

\begin{proof}[Proof of~\Cref{cor:proj:key}]
Fix $\Vec{x}\in [N]^t$, $\Vec{y}\in [N]^t_{\dist}$ and $ \Vec{z}\in ([N]\setminus\set{\Vec{y}})^\ell_{\dist}$.
Let for any $k\in\bit^n$, 
$$R_k := 
\set{(x_i,z_i)}_{i\in\bfa} \uplus 
\set{(z_i\oplus k,y_i)}_{i\in\bfa} \uplus 
\set{(x_i,y_i)}_{i\in\bfb}.$$
Then, we want to show that 
        \begin{equation*}
        \sum_{\substack{k\in\bit^n}} \Pi^{\mathrm{good}}_{\reg{E}_1\reg{K}} \ket{R_k}_{\reg{E}_1}\ket{k}_{\reg{K}} = \sum_{\substack{k\in\bit^n\setminus \left(\set{\Vec{x}}\oplus\set{\Vec{y}}\right)\cup\left(\set{\Vec{x}}\oplus\set{\Vec{z}}\right)}}\ket{R_k}_{\reg{E}_1}\ket{k}_{\reg{K}}
        \end{equation*}

        \noindent We show this in 3 parts, 
        \begin{enumerate}
            \item If $k\in\set{\Vec{x}}\oplus\set{\Vec{y}}$, then $\Pi^{\mathrm{good}}_{\reg{E}_1\reg{K}} \ket{R_k}_{\reg{E}_1}\ket{k}_{\reg{K}} = 0$.
            \item If $k\in\set{\Vec{x}}\oplus\set{\Vec{z}}$, then $\Pi^{\mathrm{good}}_{\reg{E}_1\reg{K}} \ket{R_k}_{\reg{E}_1}\ket{k}_{\reg{K}} = 0$.
            \item If $k\in\bit^n\setminus \left(\set{\Vec{x}}\oplus\set{\Vec{y}}\right)\cup\left(\set{\Vec{x}}\oplus\set{\Vec{z}}\right)$, then $\Pi^{\mathrm{good}}_{\reg{E}_1\reg{K}} \ket{R_k}_{\reg{E}_1}\ket{k}_{\reg{K}} = \ket{R_k}_{\reg{E}_1}\ket{k}_{\reg{K}}$.
        \end{enumerate}

        \begin{myclaim}
            If $k\in\set{\Vec{x}}\oplus\set{\Vec{y}}$, then $\Pi^{\mathrm{good}}_{\reg{E}_1\reg{K}} \ket{R_k}_{\reg{E}_1}\ket{k}_{\reg{K}} = 0$.
        \end{myclaim}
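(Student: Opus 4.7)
My plan is to count correlated pairs in $R_k$ directly from its combinatorial structure and show that assuming $k=x_i\oplus y_j$ forces a correlated pair beyond the ``intended'' $\ell$ ones. Since $\Vec{y}\in[N]^t_\dist$ and $\Vec{z}\in([N]\setminus\{\Vec{y}\})^\ell_\dist$, I first observe that $\Im(R_k)=\{\Vec{z}\}\uplus\{\Vec{y}\}$ is a set of $t+\ell$ distinct elements, so $R_k$ is an injective relation of size $t+\ell$. This is the setup I need in order to talk about $\CorX(R_k,k)$.

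Next, I want to identify $\ell$ ``intended'' correlated pairs, namely
\[
\mathcal{I}:=\big\{\bigl((x_i,z_i),(z_i\oplus k,y_i)\bigr)\,:\,i\in\bfa\big\}\subseteq \CorX(R_k,k).
\]
These are pairwise distinct (because the $z_i$'s are distinct) and each is a genuine $k$-correlated pair by construction, so $|\CorX(R_k,k)|\geq\ell$. The goal of the rest of the argument is to produce at least one element of $\CorX(R_k,k)\setminus\mathcal{I}$.

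Now suppose $k=x_i\oplus y_j$ for some $i,j\in[t]$. I will split on whether $j\in\bfa$ or $j\in\bfb$ to locate $y_j$ as an output in $R_k$, and on whether $i\in\bfa$ or $i\in\bfb$ to locate $x_i$ as an input in $R_k$. In every case I get a pair of entries in $R_k$ whose ``output'' of the first equals $y_j$ and ``input'' of the second equals $x_i$; since $y_j\oplus x_i=k$, this is a $k$-correlated pair. The key check is that this new pair is not already in $\mathcal{I}$: every pair in $\mathcal{I}$ has its first entry of the form $(x_{i'},z_{i'})$ with output in $\{\Vec{z}\}$, whereas my new pair has first entry with output $y_j\in\{\Vec{y}\}$, and $\{\Vec{y}\}\cap\{\Vec{z}\}=\emptyset$. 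So the new pair lies outside $\mathcal{I}$, giving $|\CorX(R_k,k)|\geq\ell+1$.

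The conclusion is that $k\notin\good(R_k,\ell)$, so $\ketbra{R_k}{R_k}_{\reg{E}_1}\otimes\ketbra{k}{k}_{\reg{K}}$ is annihilated by $\Pi^{\mathrm{good}}_{\reg{E}_1\reg{K}}$, proving the claim. The only mildly delicate step is the case analysis on the location of $i,j$ in $\bfa$ vs.\ $\bfb$ and checking each subcase actually produces a valid entry of $R_k$; the disjointness of $\{\Vec{y}\}$ and $\{\Vec{z}\}$ keeps the bookkeeping clean and is what I expect to use repeatedly.
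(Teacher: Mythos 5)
Your proposal is correct and takes essentially the same route as the paper: both exhibit the $\ell$ ``intended'' correlated pairs $\bigl((x_i,z_i),(z_i\oplus k,y_i)\bigr)_{i\in\bfa}$ and then, from $k=x_i\oplus y_j$, produce one additional correlated pair via a case split on whether $i,j$ lie in $\bfa$ or $\bfb$, concluding $|\CorX(R_k,k)|\ge\ell+1$ so that $k\notin\good(R_k,\ell)$. Your uniform justification that the extra pair is new --- its first component has output $y_j\in\{\Vec{y}\}$, disjoint from $\{\Vec{z}\}$ --- is a slightly cleaner packaging of the distinctness check the paper performs case by case, but the argument is the same.
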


        \begin{proof}
            Let $k=y_{j_1}\oplus x_{j_2}$ for some $j_1,j_2\in [t]$, then we can divide this into the following cases:
            \begin{itemize}
                \item Let $j_1,j_2\in\bfa$, then $$\set{((x_i,z_i),(z_i\oplus k,y_i))}_{i\in\bfa}\cup\set{((z_{j_1}\oplus k,y_{j_1}), (x_{j_2},z_{j_2}))}\subseteq\CorX(R_k,k),$$ hence $|\CorX(R_k,k)|\geq\ell+1$. 
                \item Let $j_1\in\bfa,j_2\in\bfb$, then $$\set{((x_i,z_i),(z_i\oplus k,y_i))}_{i\in\bfa}\cup\set{((z_{j_1}\oplus k,y_{j_1}), (x_{j_2},y_{j_2}))}\subseteq\CorX(R_k,k),$$ hence $|\CorX(R_k,k)|\geq\ell+1$. 
                \item Let $j_1\in\bfa,j_2\in\bfb$, then symmetric to the above case $|\CorX(R_k,k)|\geq\ell+1$. 
                \item Let $j_1,j_2\in\bfb$, then $$\set{((x_i,z_i),(z_i\oplus k,y_i))}_{i\in\bfa}\cup\set{((x_{j_1},y_{j_1}), (x_{j_2},y_{j_2}))}\subseteq\CorX(R_k,k),$$ hence $|\CorX(R_k,k)|\geq\ell+1$.
            \end{itemize}
        \end{proof}

        \begin{myclaim}
            If $k\in\set{\Vec{x}}\oplus\set{\Vec{z}}$, then $\Pi^{\mathrm{good}}_{\reg{E}_1\reg{K}} \ket{R_k}_{\reg{E}_1}\ket{k}_{\reg{K}} = 0$.
        \end{myclaim}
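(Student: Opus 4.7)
The plan is to show that if $k = x_{j_2}\oplus z_{j_1}$ for some $j_1 \in \bfa$ and $j_2 \in [t]$, then we can exhibit at least $\ell + 1$ mutually distinct $k$-correlated pairs in $R_k$, which forces $k \notin \good(R_k,\ell)$ and hence $\Pi^{\mathrm{good}}_{\reg{E}_1\reg{K}}\ket{R_k}_{\reg{E}_1}\ket{k}_{\reg{K}} = 0$.

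First, I would identify the $\ell$ "base" $k$-correlated pairs that are present for every value of $k$: for each $i\in\bfa$, the pair $\bigl((x_i,z_i),(z_i\oplus k, y_i)\bigr)$ lies in $\CorX(R_k,k)$ because $z_i \oplus (z_i \oplus k) = k$. These are pairwise distinct since $\Vec{z}$ has distinct entries. Next, using the hypothesis $k = x_{j_2}\oplus z_{j_1}$, I exhibit an extra $k$-correlated pair whose second coordinate comes from $R_k$ with $x_{j_2}$ as its first entry: namely $\bigl((x_{j_1},z_{j_1}),(x_{j_2},z_{j_2})\bigr)$ when $j_2\in\bfa$, or $\bigl((x_{j_1},z_{j_1}),(x_{j_2},y_{j_2})\bigr)$ when $j_2\in\bfb$. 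In both cases the $v\oplus u' = z_{j_1}\oplus x_{j_2} = k$ condition is satisfied.

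The step I expect to be most delicate — and the only place where the hypotheses on $\Vec{x},\Vec{y},\Vec{z}$ are really used — is showing that this extra pair is \emph{not already} one of the $\ell$ base pairs. Here I would use the two key disjointness facts baked into the setup: $\Vec{y} \in [N]^t_{\dist}$ and $\Vec{z} \in ([N]\setminus\set{\Vec{y}})^{\ell}_{\dist}$, so in particular $\set{\Vec{y}} \cap \set{\Vec{z}} = \emptyset$. Concretely, for the extra pair to coincide with a base pair $\bigl((x_i,z_i),(z_i\oplus k, y_i)\bigr)$, one would need either $y_i = z_{j_2}$ (in the $j_2\in\bfa$ case) or $y_i = y_{j_2}$ with $i\in\bfa$ and $j_2\in\bfb$ (in the other case), and the analogous equations if one tries to match the extra pair's first coordinate with the base pair's second coordinate. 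Each of these forces a collision ruled out by $\set{\Vec{y}}\cap\set{\Vec{z}}=\emptyset$ or by distinctness of $\Vec{y}$.

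Combining, we get $|\CorX(R_k,k)| \geq \ell + 1$, so $k \notin \good(R_k,\ell)$, and the projector annihilates $\ket{R_k}_{\reg{E}_1}\ket{k}_{\reg{K}}$, as claimed. The overall structure mirrors the preceding claim (the $\set{\Vec{x}}\oplus\set{\Vec{y}}$ case), but the crucial twist is that here the "extra" collision is witnessed by an element of $\set{\Vec{z}}$ that is already sitting inside the relation $R_k$ as an output of some pair in the first subset $\set{(x_i,z_i)}_{i\in\bfa}$, so care is needed in the case split $j_2\in\bfa$ versus $j_2\in\bfb$.
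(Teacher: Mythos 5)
Your proof follows the paper's argument exactly: the same $\ell$ base correlated pairs $\set{((x_i,z_i),(z_i\oplus k,y_i))}_{i\in\bfa}$ plus the same single extra pair, with the same case split on $j_2\in\bfa$ versus $j_2\in\bfb$. The only differences are that you explicitly justify why the extra pair is distinct from the base pairs (via $\set{\Vec{y}}\cap\set{\Vec{z}}=\emptyset$ and the distinctness of $\Vec{y}$), a step the paper leaves implicit, and that your second-case pair $((x_{j_1},z_{j_1}),(x_{j_2},y_{j_2}))$ is the correct element of $R_k\times R_k$ where the paper's displayed pair carries a stray $\oplus\, k$ in its first component.
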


        \begin{proof}
            Let $k=z_{j_1}\oplus x_{j_2}$ for some $j_2\in [t],j_1\in\bfa$, then we can divide this into the following cases:
            \begin{itemize}
                \item Let $j_1,j_2\in\bfa$, then $$\set{((x_i,z_i),(z_i\oplus k,y_i))}_{i\in\bfa}\cup\set{((x_{j_1},z_{j_1}), (x_{j_2},z_{j_2}))}\subseteq\CorX(R_k,k),$$ hence $|\CorX(R_k,k)|\geq\ell+1$. 
                \item Let $j_1\in\bfa,j_2\in\bfb$, then $$\set{((x_i,z_i),(z_i\oplus k,y_i))}_{i\in\bfa}\cup\set{((x_{j_1}\oplus k,z_{j_1}), (x_{j_2},y_{j_2}))}\subseteq\CorX(R_k,k),$$ hence $|\CorX(R_k,k)|\geq\ell+1$. 
            \end{itemize}
        \end{proof}

        \begin{myclaim}
            If $k\in\bit^n\setminus \left(\set{\Vec{x}}\oplus\set{\Vec{y}}\right)\cup\left(\set{\Vec{x}}\oplus\set{\Vec{z}}\right)$, then 
            \begin{equation*}
                \Pi^{\mathrm{good}}_{\reg{E}_1\reg{K}} \ket{R_k}_{\reg{E}_1}\ket{k}_{\reg{K}} = \ket{R_k}_{\reg{E}_1}\ket{k}_{\reg{K}}.
            \end{equation*}
        \end{myclaim}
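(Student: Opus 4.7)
The plan is to show that under the hypothesis on $k$, the relation $R_k$ has exactly $\ell$ correlated pairs under key $k$, so that $k \in \good(R_k,\ell)$ and therefore $\ket{R_k}\ket{k}$ lies in the range of $\Pi^{\mathrm{good}}$. First I would observe that the pairs $((x_i,z_i),(z_i\oplus k,y_i))_{i\in\bfa}$ are automatically in $\CorX(R_k,k)$, so $|\CorX(R_k,k)|\geq\ell$. The task then reduces to ruling out any additional $k$-correlated pair.

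To do this I would perform a case analysis over the nine possible type-pair combinations of elements of $R_k$, classifying each element as one of
\[
\text{A: }(x_i,z_i),\ i\in\bfa,\quad \text{B: }(z_i\oplus k, y_i),\ i\in\bfa,\quad \text{C: }(x_i,y_i),\ i\in\bfb.
\]
For a pair $((u,v),(u',v'))$ the correlation condition $v\oplus u'=k$ translates in each of the nine cases into one of three types of equations: (i) $z_i\oplus x_j=k$ or $y_i\oplus x_j=k$, which are excluded by the hypothesis $k\notin(\set{\Vec{x}}\oplus\set{\Vec{y}})\cup(\set{\Vec{x}}\oplus\set{\Vec{z}})$; (ii) $z_i=z_j$, which by distinctness of $\Vec{z}$ forces $i=j$ and recovers exactly the $\ell$ natural pairs in case A-B; or (iii) $y_i=z_j$, which is excluded because $\Vec{z}\in([N]\setminus\set{\Vec{y}})^\ell_{\dist}$. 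The main (and only) obstacle is making sure this enumeration is exhaustive and that the distinctness constraints carried through the expansion of $\ket{\psi_2}$ (namely $\Vec{y}$ distinct and $\Vec{z}$ disjoint from $\Vec{y}$ with distinct coordinates) are strong enough to kill cases (ii) with $i\neq j$ and all instances of (iii).

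Once the case analysis shows $|\CorX(R_k,k)|=\ell$ exactly, we have $k\in\good(R_k,\ell)$, so by the definition
\[
\Pi^{\mathrm{good}}_{\reg{E}_1\reg{K}} = \sum_{R}\ketbra{R}{R}_{\reg{E}_1}\otimes\sum_{k'\in\good(R,\ell)}\ketbra{k'}{k'}_{\reg{K}},
\]
we conclude $\Pi^{\mathrm{good}}_{\reg{E}_1\reg{K}}\ket{R_k}_{\reg{E}_1}\ket{k}_{\reg{K}} = \ket{R_k}_{\reg{E}_1}\ket{k}_{\reg{K}}$, as required.
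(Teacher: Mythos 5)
Your proposal is correct and follows essentially the same route as the paper: the paper's proof is exactly the nine-case table over which of the three sub-multisets each element of the pair belongs to, with the $\ell$ pairs $((x_i,z_i),(z_i\oplus k,y_i))$ surviving via distinctness of $\Vec{z}$ and all other cases killed by the hypothesis on $k$ or by $\set{\Vec{z}}\cap\set{\Vec{y}}=\emptyset$. Your three-way classification of the resulting equations matches the paper's treatment case for case, so there is nothing to add.
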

        \begin{proof}
        We complete the proof by a case analysis. For any $(p_0,p_1) \in R_k \times R_k$, there are $9$ cases depending on the sets which $p_0$ and $p_1$ respectively belong to (see~\Cref{tab:4x4_table}).
            \begin{table}[H]
            \centering
            \begin{tabular}{|c|c|c|c|}  
            \hline
                 \diagbox{$p_0$ \\ $\rotatebox[origin=c]{270}{$\in$}$}{$p_1\in$} & $\set{(x_j,z_j)}_{j\in\bfa}$ & $\set{(z_j\oplus k,y_j)}_{j\in\bfa}$ & $\set{(x_j,y_j)}_{j\in\bfb}$ \\  
            \hline
                $\set{(x_i,z_i)}_{i\in\bfa}$ & $E_1:= \set{z_i}_{i\in\bfa}\oplus\set{x_j}_{j\in\bfa}$ & $E_2:= \set{z_i}_{i\in\bfa}\oplus\set{z_j\oplus k}_{j\in\bfa}$ & $E_3:= \set{z_i}_{i\in\bfa}\oplus\set{x_j}_{j\in\bfb}$ \\ 
            \hline
                $\set{(z_i\oplus k,y_i)}_{i\in\bfa}$ & $E_4:= \set{y_i}_{i\in\bfa}\oplus\set{x_j}_{j\in\bfa}$ & $E_5:= \set{y_i}_{i\in\bfa}\oplus\set{z_j\oplus k}_{j\in\bfa}$ & $E_6:= \set{y_i}_{i\in\bfa}\oplus\set{x_j}_{j\in\bfb}$ \\
            \hline
                $\set{(x_i,y_i)}_{i\in\bfb}$ & $E_7:= \set{y_i}_{i\in\bfb}\oplus\set{x_j}_{j\in\bfa}$ & $E_8:= \set{y_i}_{i\in\bfb}\oplus\set{z_j\oplus k}_{j\in\bfa}$ & $E_9:= \set{y_i}_{i\in\bfb}\oplus\set{x_j}_{j\in\bfb}$ \\ 
            \hline
        \end{tabular}   \caption{Cases analysis for $(p_0, p_1)$.}    \label{tab:4x4_table}
        \end{table}
        We will show that $|E_2| = \ell$ and $|E_i| = 0$ for $i \neq 2$.
        \begin{itemize}
            \item Since $\Vec{z}$ has pairwise distinct coordinates, $z_i \oplus (z_j \oplus k) = k$ if and only if $i = j$. Hence, $|E_2| = |\bfa| = \ell$.
            \item Since $k\not\in\set{\Vec{x}}\oplus\set{\Vec{y}}$, then $E_4=E_6=E_7=E_9=\set{}$.
            \item Since $k\not\in\set{\Vec{x}}\oplus\set{\Vec{z}}$, then $E_1=E_3=\set{}$.
            \item Since for any $i\in [t], j\in\bfa$, $y_i\neq z_j$, $y_i\oplus z_j\oplus k\neq k$, hence $E_2=E_5=\set{}$.
        \end{itemize}
        As a result, we have $|\CorX(R_k,k)| = \sum_{i=1}^9 |E_i| = \ell$.
        \end{proof}
        Combining the above claims completes the proof of~\Cref{cor:proj:key}.
\end{proof}

\ifllncs
\subsection{Omitted Proofs in~\Cref{sec:unitarydesign:shortkeys}}    \label{app:unitarydesign:shortkeys}

\else

\fi

\else
    
\fi

\end{document}